\documentclass[11pt]{article}
\usepackage{fullpage}

\usepackage{amsmath,amssymb}
\usepackage{bm}
\usepackage[numbers]{natbib}

\usepackage{amssymb}
\usepackage{mathrsfs}
\usepackage{graphicx}
\usepackage{subcaption}
\usepackage{pdfpages}
\usepackage{afterpage}
\usepackage{caption}
\usepackage{multirow}
\usepackage{rotating}
\usepackage[flushleft]{threeparttable}
\usepackage{multirow}
\usepackage{enumitem} 
\usepackage{algorithm2e}
\SetAlgorithmName{Algorithm}{Algorithm}{List of Algorithms}
\usepackage{svg}
\usepackage{stackengine}
\usepackage{float}
\RestyleAlgo{ruled}
\usepackage{tikz} 
\usetikzlibrary{matrix}

\usepackage{preamble/star}
\usepackage[group-separator={,},group-minimum-digits={3}]{siunitx}

\usepackage[colorlinks,linkcolor=blue,anchorcolor=blue,citecolor=blue,urlcolor=black]{hyperref}
\usepackage{cleveref}

\def\T{{ \mathrm{\scriptscriptstyle T} }}

\newcolumntype{P}[1]{>{\centering\arraybackslash}p{#1}}

\def\T{{ \mathrm{\scriptscriptstyle T} }} 

\newcommand{\Rom}[1]{\text{\uppercase\expandafter{\romannumeral #1\relax}}}
\def\T{{ \mathrm{\scriptscriptstyle T} }}

\def\##1\#{\begin{align}#1\end{align}}
\def\$#1\${\begin{align*}#1\end{align*}}

\newcommand{\var}{\textrm{var}}

\newcommand\blfootnote[1]{%
  \begingroup
  \renewcommand\thefootnote{}\footnote{#1}%
  \addtocounter{footnote}{-1}%
  \endgroup
}

\renewcommand{\max}{\mathop{\mathrm{max}}}

\usepackage{color}

\usepackage{subcaption}
\usepackage{float}
\usepackage{multirow}
\usepackage{setspace}
\newcommand{\ex}{\mathrm{e}}

\definecolor{wjs}{RGB}{200,0,50}

\begin{document}


\def\scititle{
    A Sieve-Accelerated Quadrature Method for Exact Privacy Accounting in the 2020 U.S.\ Decennial Census
}

\title{\LARGE \scititle}


\author{Buxin Su\thanks{University of Pennsylvania; Email: \texttt{subuxin@sas.upenn.edu}.} \and Weijie Su\thanks{University of Pennsylvania; Email: \texttt{suw@wharton.upenn.edu}.} \and Chendi Wang\thanks{Xiamen University; Email: \texttt{chendi.wang@xmu.edu.cn}.}
\blfootnote{Authors are listed in alphabetical order.}
}



\date{\today}

\maketitle

\begin{abstract}

In 2020, the U.S.\ Census Bureau adopted differential privacy for the Decennial Census by injecting integer-valued Gaussian noise into published census tabulations. Exactly evaluating the privacy guarantees of these data releases would enable the Bureau to determine the absolute minimum noise required to satisfy a given privacy budget, preventing the injection of unnecessary excess noise and thereby substantially enhancing the statistical utility of the data for downstream applications such as federal funding allocation and political redistricting. In this paper, we introduce a computationally efficient and mathematically rigorous quadrature method to evaluate the exact privacy profile of practical, large-scale census releases under the composition of heterogeneous discrete Gaussian mechanisms. Mathematically, this problem reduces to evaluating the tail probabilities of high-dimensional convolutions of integer-valued random variables sampled from heterogeneous discrete Gaussian distributions under exceptionally stringent numerical error tolerances (e.g., $10^{-35}$). By recasting the exact privacy accounting as a numerical integration problem via the discrete Fourier transform, we explicitly exploit the exponential convergence of the trapezoidal rule for complex analytic, periodic characteristic functions. Furthermore, to overcome the computational bottleneck of evaluating highly oscillatory integrands in high dimensions, we develop a sieve algorithm that identifies and prunes negligible quadrature nodes, accelerating the computation by three orders of magnitude. Taken together, these numerical innovations enable the first exact, assumption-free privacy accounting for the 2020 Census Demographic and Housing Characteristics File, achieving a 1,824-fold speedup over prior methods while maintaining census-mandated error tolerances.

\end{abstract}

\section{Introduction}
\label{sec:1}

The U.S.\ Decennial Census is among the most influential statistical programs in the world, shaping federal funding allocation \citep{hotchkiss2017uses, funding2023census}, political redistricting \citep{cohen2021census, Kenny2023Comment}, congressional apportionment \citep{eckman2021apportionment,bureau2021approtionment}, and a wide range of labor and economic research \citep{autor2003rise,bureau2023guidance}. Decennial Census products, such as the 2020 Census Demographic and Housing Characteristics (DHC) File, contain highly sensitive demographic information \citep{haney2025safetab}. Consequently, the direct release of exact census counts creates severe privacy risks through reconstruction and re-identification attacks \citep{Abowd_2019,dick2023confidence,Hawes2022census,kulynych2025unifying}. To rigorously mitigate these concerns, the Census Bureau adopted differential privacy \citep{dwork2006our,dwork2006calibrating} as the official confidentiality protection framework for the 2020 Census. Differential privacy provides a mathematical guarantee that changing any single individual's record has only a strictly bounded effect on the released output. For implementation, the Census Bureau utilized the TopDown algorithm \citep{abowd20222020,abowd2022Census,cumings2023disclosure} within its Disclosure Avoidance System (DAS). The DAS injects integer-valued noise, sampled from discrete Gaussian distributions, into a massive collection of counting queries across hierarchical geographic levels, followed by post-processing to ensure data consistency and non-negativity.

The Census Bureau currently provides formal privacy guarantees for the 2020 Census DHC File using zero-concentrated differential privacy (zCDP) \citep{bun2016concentrated, mironov2017renyi}. However, the privacy guarantees derived from zCDP are upper bounds on the (unknown) exact privacy loss. The gap between the bounds and the exact privacy loss leads to a suboptimal privacy--utility trade-off, resulting in the injection of unnecessarily high levels of noise into the census counts for a given privacy budget. This degrades the statistical accuracy of the data and, in turn, impacts downstream applications such as redistricting and social science research \citep{Phil2023bureau,Kenny2021use,Kenny2024census,anderson2015american,boyd2022Differential}. In addition, zCDP parameters lack a straightforward statistical interpretation, often requiring technical expertise to inform public policy discussions \citep{kifer2022bayesian}.

A crucial fact that is frequently overlooked in the literature is that evaluating the \textit{exact} privacy loss for the Census DHC File can be cast as a well-defined mathematical problem. Indeed, recent work by \cite{su20242020} demonstrates that the total privacy loss can be expressed exactly as a composition of trade-off curves associated with the census queries across hierarchical geographic levels under the $f$-differential privacy ($f$-DP) framework \citep{Dong2022Gaussian}. This equivalently reduces the exact accounting problem to computing the tail probabilities of high-dimensional convolutions of heterogeneous discrete Gaussian distributions, that is, discretized counterparts of the continuous Gaussian distribution supported solely on the integers.

While the exact privacy level is conceptually straightforward to express from a mathematical perspective, numerically evaluating it via quadrature presents significant computational challenges. To achieve numerical tractability, prior methods \citep{su20242020} relied on theoretical simplifications of the dataset's sensitivity by adopting the add-remove model instead of the replacement model for neighboring datasets \citep{Cumings2024geographic} and assuming homogeneity of the injected integer noise under composition. While this effort resolved an open question posed by the Census Bureau \citep{kifer2022bayesian}, demonstrating that noise variances could be reduced while maintaining (nearly) the same level of privacy protection, the computational cost proved prohibitive. Specifically, evaluating a single privacy budget allocation required approximately 1{,}000 CPU hours.\footnote{All CPU times in this paper are measured on an AWS EC2 c5.metal instance.} Even worse, the 2020 Census DHC File involves 946 distinct pairs of privacy budget allocations. Requiring 1{,}000 CPU hours for each of the 946 evaluations renders comprehensive, full-scale accounting computationally impractical.

From a mathematical viewpoint, the difficulty of this numerical evaluation stems from several intertwined issues. Unlike continuous Gaussian mechanisms, where convolutions remain Gaussian and admit closed-form expressions, convolutions of discrete Gaussian random variables yield probability mass functions that oscillate rapidly around their continuous approximations. This oscillatory behavior is compounded by high-dimensional, multi-fold compositions (e.g., $80$-fold compositions along geographic paths in the DHC File).\footnote{When the compositional dimension is extremely large, continuous approximations suffice; when the dimension is very small, naive numerical approximations remain viable. An intermediate dimension such as $80$ falls into a difficult regime where neither approach succeeds without substantial error.} This renders standard continuous analytic composition approaches \citep{kairouz2021distributed,zhu2022optimal} inapplicable. Moreover, the discrete Gaussian mechanism used in the 2020 Census is heterogeneous, because the injected noise variances are not identical across queries and geographic levels. This heterogeneity makes accounting substantially more difficult, as it destroys the algebraic structure that would otherwise permit closed-form or low-complexity approximations. Furthermore, the design of the 2020 Census DHC File demands exceptionally stringent numerical error tolerances (e.g., bounded by $10^{-25}$ or $10^{-35}$). Such extreme precision requirements render standard numerical approaches, including fast Fourier transform (FFT)-based quadrature methods \citep{Koskela2020computing, gopi2021numerical}, infeasible due to accumulated truncation errors.

\subsection{Contributions}

In this paper, we develop a highly efficient, mathematically rigorous quadrature-based numerical integration method for the exact privacy accounting of the 2020 Census DHC File that explicitly accommodates fully practical settings. Under identical problem settings, including the same numerical precision requirements, our proposed method requires only 0.5 CPU hours to evaluate a single privacy budget allocation, compared to the roughly 1{,}000 CPU hours required by the prior state-of-the-art \citep{su20242020}.

This acceleration by over three orders of magnitude enables the first exact, assumption-free privacy accounting for all 946 pairs of privacy budget allocations in the 2020 Census DHC File. Consequently, we provide a comprehensively improved privacy--utility trade-off compared to the Bureau's zCDP-based approach \citep{abowd20222020}, demonstrating that exact accounting permits noise variance reductions of 15.08\% to 24.82\% without sacrificing privacy guarantees.
This addresses the open problem posed by the Census Bureau in a substantially more comprehensive manner \citep{kifer2022bayesian}.

Our methodological and computational advances are enabled by three key numerical innovations, each focused on resolving the complexities inherent in composing heterogeneous mechanisms:

\vspace{-1em}
\paragraph{Exponentially convergent quadrature via the trapezoidal rule.}
By reformulating the discrete privacy accounting problem as a numerical integration task via the discrete Fourier transform, we explicitly leverage the exponential convergence of the trapezoidal rule for complex analytic, periodic integrands. This approach achieves the required precision accuracy using exponentially fewer quadrature nodes. Notably, this exponential convergence rate is not automatic; we establish it only after a careful, rigorous complex-analytic analysis of the discrete Gaussian characteristic function, explicitly bounding its growth within a strip in the complex plane.

\vspace{-1em}
\paragraph{Sub-Gaussian tail bounds via lattice mapping for heterogeneous weights.}
To exploit the exponential convergence of the trapezoidal rule, we must first truncate the infinite sum involved in the discrete Fourier transform. An obstacle in doing so arises from the heterogeneity of discrete Gaussian mechanisms under composition, where the weighted sum of the integer-valued noise no longer naturally resides on a simple lattice support. We overcome this by recognizing that the privacy budgets allocated by the Census Bureau are \textit{rational} numbers. By deriving a unified scaling factor that maps the heterogeneous weighted sum onto a common integer lattice via B\'ezout's identity, we preserve the discrete probability mass function without resorting to lossy binning or continuous interpolation. This enables us to establish precise sub-Gaussian concentration inequalities for heterogeneous discrete Gaussian mechanisms, with bounds that guarantee the tail mass falls below the prescribed error margins.

\vspace{-1em}
\paragraph{Accelerated integration via a sieve algorithm.}
Even with exponential convergence, achieving a $10^{-35}$ tolerance in high-dimensional settings demands the evaluation of millions of quadrature nodes. Inspired by sieve methods in number theory, we develop a novel algorithmic procedure that exploits the periodic peak structure of the characteristic functions to rigorously identify and discard nodes where the highly oscillatory integrand is guaranteed to be negligible. This sieve drastically reduces the number of required integrand evaluations from over ${\sim}\,10^6$ nodes down to just $203$.

As an aside, we also provide a theoretical explanation, grounded in the central limit theorem, for why exact privacy accounting for compositions of the discrete Gaussian mechanism strictly diverges from that of the continuous Gaussian mechanism. Specifically, using an Edgeworth expansion, we prove that a central-limit-theorem-type approximation for the convolution of discrete Gaussian distributions involves a non-negligible, higher-order oscillatory discrepancy term. We show that this term is explicitly bounded by $O(m^{-3/2})$, where $m$ denotes the number of folds in the composition.
Our code is available at \url{https://github.com/BuxinSu/Exact-Privacy-Accounting-for-2020-U.S.-Census.git}.

\subsection{Organization of the Paper}
The remainder of the paper is organized as follows.
Section \ref{sec:prelim} reviews preliminaries on differential privacy and the 2020 U.S. Census, with particular emphasis on the 2020 Census DHC File.
Section~\ref{sec:foundation} reformulates the privacy accounting problem as a numerical computation problem.
Section \ref{sec:method} presents our new privacy accounting method, which enables exact computation of the $(\varepsilon, \delta)$-DP and $f$-DP curves for the 2020 Census DHC File; the corresponding empirical results are reported in Section \ref{sec:results}.
Finally, Section \ref{sec:theory} provides theoretical insight into why the discrete Gaussian mechanism differs from the continuous one from the perspective of the central limit theorem.

\section{Preliminaries}
\label{sec:prelim}
In this section, we review the differential privacy frameworks used in our analysis and in the 2020 Census DHC File. 
We then outline how differential privacy is implemented in the Census Bureau's DAS via the discrete Gaussian mechanism and hierarchical privacy budget allocation.

\subsection{Differential Privacy}
\label{sec:dp}

We first introduce three different variants of differential privacy. The most well-known and standard definition is $(\varepsilon, \delta)$-DP \citep{dwork2006our, dwork2006calibrating}. For ease of computing compositions, the method currently adopted by the Census Bureau uses $\rho$-zCDP \citep{bun2016concentrated, dwork2016concentrated, mironov2017renyi, Canonne2020discrete, abowd20222020} to determine the privacy level.
Our improved approach is largely based on $f$-DP \citep{Dong2022Gaussian}, which is defined from a statistical perspective and provides a natural interpretation of the trade-off between privacy level and the power of re-identification inferences \citep{Dong2022Gaussian, kifer2022bayesian}.

\paragraph{$(\varepsilon, \delta)$-DP.} The $(\varepsilon, \delta)$-DP notion, introduced by \cite{dwork2006our, dwork2006calibrating}, is the first formal definition of differential privacy. A randomized mechanism $\widetilde{M}$ is said to satisfy $(\varepsilon, \delta)$-DP for $\varepsilon \geq 0$ and $0 \leq \delta \leq 1$ if, for any pair of neighboring datasets $D$ and $D'$—which differ in a single individual record—and any event $S$, the following holds:
\begin{equation}\label{eq:dp_def}
\PP(\widetilde{M}(D)\in S)\leq \ex^{\varepsilon}\cdot \PP(\widetilde{M}(D')\in S)+\delta.
\end{equation}
$\widetilde{M}$ is regarded as more private when $\varepsilon$ and $\delta$ are small. In particular, when $\varepsilon = \delta = 0$, the distributions of $\widetilde{M}(D)$ and $\widetilde{M}(D')$ are identical, meaning that the mechanism achieves perfect privacy.

\paragraph{$\rho$-zCDP.}
The privacy budget for the 2020 Census DHC File is measured using zero-Concentrated Differential Privacy \citep{bun2016concentrated}, which is based on Rényi divergence. For two distributions $P$ and $Q$ with probability density functions $p$ and $q$, respectively, the Rényi divergence of order $\alpha > 1$ is defined as
$
R_{\alpha}(P\|Q) = \frac{1}{\alpha - 1} \log \int p(x)^{\alpha} q(x)^{1 - \alpha}dx.
$
The quantities $R_{1}(P\|Q)$ and $R_{\infty}(P\|Q)$ are defined as the limits of $R_{\alpha}(P\|Q)$ as $\alpha \to 1$ and $\alpha \to \infty$, respectively.
Based on Rényi divergence, one obtains the definition of zCDP, where the divergence between two random variables is understood as the divergence between their corresponding distributions. 
\begin{definition}[zCDP, \citep{bun2016concentrated}]
    A randomized mechanism $\widetilde{M}$ is said to satisfy $\rho$-zCDP if 
    \begin{align*}
        R_{\alpha}(\widetilde{M}(D)\|\widetilde{M}(D')) \leq \rho \alpha, \qquad \text{for all } \alpha > 1,
    \end{align*}
    for any neighboring datasets $D$ and $D'$.
\end{definition}
When $\rho$ is small, the distributions of $\widetilde{M}(D)$ and $\widetilde{M}(D')$ are closer to each other.
The privacy budget allocated by the Census Bureau \citep{privacyallocation2022} is $\rho = 4.9622$ for the 2020 Census DHC File.

\paragraph{$f$-DP.}
In this paper, we employ the recently developed $f$-DP framework \citep{Dong2022Gaussian}, which has been shown to be well suited for privacy analysis under composition \citep{bu2020deep, wang2024unified, su2025statistical, li2025mitigating}.

To define $f$-DP, consider formulating the problem of distinguishing between a pair of neighboring datasets $D$ and $D'$ as a hypothesis testing problem:
\[
H_0: \text{the true dataset is } D \qquad\text{ versus}\qquad H_1: \text{the true dataset is } D'.
\]
Let $0 \le \phi \le 1$ be any (possibly randomized) rejection rule, and denote by $\alpha_{\phi} = \mathbb{E}_{H_0}[\phi]$ and $\beta_{\phi} = 1 - \mathbb{E}_{H_1}[\phi]$ the type I and type II errors, respectively. The trade-off function $T(\widetilde{M}(D), \widetilde{M}(D')): [0,1] \to [0,1]$ between $D$ and $D'$ is defined as
\[
T(\widetilde{M}(D), \widetilde{M}(D'))(\alpha) = \inf_{\phi}\{\beta_{\phi}: \alpha_{\phi} \le \alpha\}
\]
for any $0 \le \alpha \le 1$ \citep{Dong2022Gaussian}.\footnote{Let $P$ and $Q$ denote the probability distributions of $\widetilde{M}(D)$ and $\widetilde{M}(D')$, correspondingly. Formally, the trade-off function $T(\widetilde{M}(D), \widetilde{M}(D'))$ should be defined through $P$ and $Q$, thereby being expressed as $T(P, Q)$.}
That is, it characterizes the minimal achievable type II error for a given level of type I error.
We say that a mechanism $\widetilde{M}$ satisfies $f$-DP if 
\[
T(\widetilde{M}(D), \widetilde{M}(D'))(\alpha) \geq f(\alpha)
\]
for any neighboring $D$ and $D'$ and any $\alpha \in [0,1]$.
Typically, we require $f:[0,1] \to [0,1]$ to be a valid trade-off function for some pair of distributions. This holds if and only if $f$ is continuous, convex, non-increasing, and satisfies $f(\alpha) \le 1 - \alpha$.

A larger trade-off function $f$ indicates that it is more difficult to distinguish between $H_0$ and $H_1$, and hence the mechanism provides stronger privacy.
It is worth noting that $f$-DP not only enables exact privacy analysis, but also offers a more meaningful semantic interpretation of differential privacy \citep{kifer2022bayesian, gomez2025varepsilon} compared with $(\varepsilon,\delta)$-DP.
Mathematically, an $f$-DP guarantee is equivalent to an infinite collection of guarantees specified by the $(\varepsilon, \delta)$ privacy curve for all $\varepsilon > 0$ \citep{Dong2022Gaussian}.
However, in practice, computing both the exact trade-off function $f$ and the corresponding $(\varepsilon, \delta)$-DP curve is generally challenging, as the distribution of $\widetilde{M}$ may be complex, especially under composition.
In this paper, we present an efficient and practical method for computing the exact $f$-DP curve and the corresponding $(\varepsilon,\delta)$-DP curve.

\paragraph{Composition.}
Composition describes the cumulative privacy loss incurred when multiple mechanisms are applied to the same dataset. For $(\varepsilon,\delta)$-DP, if mechanisms $\{\widetilde{M}_i\}_{i=1}^{k}$ satisfy $(\varepsilon_i,\delta_i)$-DP individually, then their sequential composition satisfies $(\sum_{i=1}^{k}\varepsilon_i,\sum_{i=1}^{k}\delta_i)$-DP.
For zCDP, R\'enyi divergences are additive under composition, so the total privacy parameter $\rho$ is the sum of the individual RDP parameters $\rho_i$ of each mechanism, i.e., $\rho = \sum_{i=1}^k\rho_i$. For $f$-DP, composition corresponds to the tensor product of privacy trade-off functions.
Precisely, we have
\[
T(\{\widetilde{M}_i(D)\}_{i=1}^{k}, \{\widetilde{M}_i(D')\}_{i=1}^{k})
= T\left(\prod_{i=1}^kP_i,\prod_{i=1}^kQ_i\right),
\]
where $P_i$ and $Q_i$ are the underlying distributions of $\widetilde{M}(D)$ and $\widetilde{M}(D')$, correspondingly.
This yields an exact characterization of cumulative privacy through hypothesis-testing trade-offs.
Parallel composition captures the privacy loss when multiple mechanisms are applied to
disjoint subsets of the data.
Suppose the dataset $D$ is partitioned into
$k$ disjoint blocks $D^{(1)},\ldots,D^{(k)}$, and each mechanism $\widetilde{M}_i$
accesses only $D^{(i)}$. The composition of the
sequence
$\widetilde{M}_1(D^{(1)}),\ldots,\widetilde{M}_k(D^{(k)})$ is characterized in Proposition \ref{prop:parallel_comp}.

\paragraph{Sensitivity.}
For a deterministic query $M$ taking values in $\mathbb{R}^{d}$, the $\ell_2$-sensitivity of $M$ is defined as
\begin{align*}
\Delta_2 = \sup_{D,D'} \left\{ \left\| M(D) - M(D') \right\|_{\ell_2} \right\},
\end{align*}
where $\|\cdot\|_{\ell_2}$ denotes the $\ell_2$ norm of a vector, and the supremum is taken over all neighboring datasets $D$ and $D'$ that differ in at most one data record.

In the 2020 Census DHC File, a \emph{bounded} (replacement) sensitivity is used for each marginal query. This reflects the difference between two neighboring datasets under a single-record replacement. 
The DAS \citep{Census2023implementation, abowd20222020} considers the sensitivity of coarsened counting queries with binary categories (e.g., ``18 and older'' vs.\ ``17 and younger''), as detailed in Section~\ref{sec:2020_census}. Replacing one individual with another whose attributes fall into different categories can change each of the counts (``18 and older'' and ``17 and younger'') by 1.

\subsection{The 2020 Census DHC File}
\label{sec:2020_census}

\paragraph{Data structure.}
The 2020 Census DHC File is a major official Census product released by the U.S.\ Census Bureau \citep{census2020CensusDHC}. It is a collection of integer-valued counting statistics, with each corresponding to an item in \{geographic units\} $\times$ \{marginal queries\} \citep{Cumings2024geographic}.

Geographically, the 2020 Census DHC File partitions the United States into eight geographic levels:
\begin{align} \label{eqn:geolevel}
{\text{US},\ \text{State},\ \text{County},\ \text{PRIM},\ \text{Tract Subset},\ \text{Tract Subset Group},\ \text{Block Group},\ \text{Block}}.
\end{align}
Each node, or unit, within a geographic level is referred to as a geographic unit. For example, Pennsylvania is a geographic unit at the State level.
The smallest geographic unit in the 2020 Census DHC File is the Block, which typically corresponds to a very small area, such as a few street blocks in a city. All other geographic units are defined as aggregations of blocks. Therefore, for any fixed block, there is at most one geographic unit at each geographic level \eqref{eqn:geolevel} that contains this block.
Using this property, we define the directed path $\mathcal{P}_b$ of a block $b$ as the collection of all unique geographic units containing this block across the geographic levels \eqref{eqn:geolevel}. For example, the directed path of a block near the University of Pennsylvania would begin with ${\text{US},\ \text{Pennsylvania},\ \text{Philadelphia},\ \cdots}$.

For each geographic unit, the 2020 Census DHC File applies the same set of ten marginal queries, including topics on age, sex, race, Hispanic or Latino origin, household type, family type, relationship to householder, group quarters population, housing occupancy, and housing tenure. 
The collection of integer-valued counting statistics for all items in \{geographic units\} $\times$ \{marginal queries\} constitutes the unprivatized data. 
An example of the counting statistics for items in \{geographic units\} $\times$ \{marginal queries\} is the number of Hispanic people aged 18 or older in Pennsylvania.
The final step in producing the 2020 Census DHC File is to add noise to each entry in the unprivatized data to protect privacy using the DAS~\citep{abowd20222020}, which we introduce later in this section.

\paragraph{Discrete Gaussian Mechanism.}
The noise distribution used in the DAS is the discrete Gaussian distribution \citep{micciancio2007worst, Canonne2020discrete}, a distribution on the integers $\mathbb{Z}$ whose probability mass function is proportional to the Gaussian density. The discrete Gaussian distribution with noise parameter $\sigma^2$, denoted by $\mathcal{N}_{\mathbb{Z}}(0,\sigma^2)$, has probability mass function
\begin{align} \label{eqn:DG}
p_{\sigma}(x) = \frac{\ex^{-x^2/2\sigma^2}}{{\sum_{i\in\mathbb{Z}}\ex^{-i^2/2\sigma^2}}}
\end{align}
for any $x \in \mathbb{Z}$. For any parameter vector $\bsigma^2 = \left[ \sigma_1^2, \cdots, \sigma_n^2 \right]^{\T}$, we define the multivariate discrete Gaussian distribution with parameters $\bsigma^2$ as
$\Nb(0, \bsigma^2) = \left[ {\cal N}_{\ZZ}(0, \sigma_1^2), \cdots, {\cal N}_{\ZZ}(0, \sigma_n^2) \right]^{\T}$.

Let $\mathcal{X}$ denote the sample space, and let $D \subset \mathcal{X}$ be a dataset.
Consider a $d$-dimensional deterministic query $\Mb:\mathcal{X} \rightarrow \mathbb{Z}^d$ that takes integer values.
The discrete Gaussian mechanism adds noise to $\Mb(D)$ as
\begin{align}
\label{eq:DG-mechanisms}
  \widetilde{\Mb}(D) = \Mb(D) + \Nb_{\mathbb{Z}}(0,\bsigma^2).
\end{align}
where $\Nb_{\mathbb{Z}}(0,\bsigma^2)$ denotes the $d$-dimensional multivariate discrete Gaussian distribution.

\paragraph{Disclosure Avoidance System.}
Recall that the unprivatized data form a collection of integer-valued counting statistics for each item in \{geographic units\} $\times$ \{marginal queries\}.
Mathematically, for any fixed directed path $\mathcal{P}_b$, we denote
\begin{align} \label{eqn:counting}
    \mathbf{M}_b = \{M_{(g, q)_1}, M_{(g, q)_2}, \cdots, M_{(g, q)_n}\}
\end{align}
as the collection of integer-valued counting statistics for each $(g,q)$, where $g$ denotes the unique geographic unit containing block $b$ at each geographic level, $q$ denotes the marginal query, and \begin{align} \label{eqn:mech_dim}
    n=|\text{geographic units containing block } b| \times |\text{number of marginal query}|=80.
\end{align}
A typical pair $(g,q)$ could be $(\text{the state containing } b,\ \text{``Hispanic people aged 18 and older''})$, in which case $M_{(g,q)}$ denotes the number of people aged 18 and older in the tract containing block $b$.

The DAS protects data privacy by adding independent discrete Gaussian noise with noise parameter $\sigma_i^2$ to each $M_{(g,q)_i}$ in \eqref{eqn:counting}. The noise parameter $\sigma_i^2$ may vary across different $(g,q)_i$ and is determined as follows.
Along each directed path $\mathcal{P}_b$, the 2020 Census DHC File allocates a privacy budget $\rho_i$ to each item in \eqref{eqn:counting}, such that the sum of all $\rho_i$ equals a fixed overall budget $\rho$. Mathematically, along each directed path $\mathcal{P}_b$, the 2020 Census DHC File assigns a sequence of privacy budgets corresponding one-to-one with \eqref{eqn:counting}:
\begin{align}\label{eqn:rho_i}
    {\boldsymbol{\rho}}_b = \{\rho_1, \rho_2, \cdots, \rho_n\},
\end{align}
where $\rho = \sum_i \rho_i = 4.9622$, and the sequence ${\boldsymbol{\rho}}_b$ is referred to as the privacy budget allocation along $\mathcal{P}_b$.
Note that the allocation ${\boldsymbol{\rho}}_b$ may depend on $b$ and differ across directed paths. One example of ${\boldsymbol{\rho}}_b$ is shown in Table~\ref{tab:PLB_typical}. Among millions of directed paths \citep{abowd20222020}, a total of 43 distinct privacy budget allocations are used in the 2020 Census DHC File.

Given the privacy budget $\rho_i$ for each item in \eqref{eqn:counting}, the 2020 Census DHC File injects independent discrete Gaussian noise with noise parameter $\sigma_i^2$ into each $M_{(g,q)_i}$ as
\begin{align} \label{eqn:noise}
    \widetilde{M}_{(g,q)_i} = M_{(g,q)_i} + \cN_{\ZZ}(0, \sigma_i^2), \quad \sigma_i^2 = \frac{1}{\rho_i}.
\end{align} 
This produces the privatized counting statistics
\begin{align} \label{eqn:private_counting}
    \widetilde{\Mb}_b = \{\widetilde{M}_{(g, q)_1}, \widetilde{M}_{(g, q)_2}, \cdots, \widetilde{M}_{(g, q)_n}\} \quad \text{with noise parameters } \bsigma_{b}^2 = [\sigma_1^2, \sigma_2^2, \cdots, \sigma_n^2].
\end{align}
Under the above mechanism, the 2020 Census DHC File satisfies $\rho$-zCDP, as established in Theorem 2 of \cite{abowd20222020}, due to the composition properties of the discrete Gaussian mechanism \citep{Canonne2020discrete}.\footnote{A post-processing step is applied before releasing the 2020 Census DHC File (see \cite{abowd20222020}). However, since privacy guarantees are invariant under post-processing \citep{dwork2014algorithmic}, we omit the details to avoid confusion.}
In the remainder of the paper, we use $\widetilde{\Mb}_0$ to denote the privatized counting statistics with $\sigma_i^2$ determined by Table~\ref{tab:PLB_typical} and \eqref{eqn:noise}.

\begin{table}[!htbp]
\centering
\caption{One of the $\rho_i$ allocations used in the 2020 Census DHC File is shown here. Each row represents a marginal query, and each column corresponds to a geographic level. The remaining allocations in \eqref{eqn:rho_i} are provided in Section \ref{sec:supp_fig} of the Supplementary Material.}
\label{tab:PLB_typical}
\resizebox{\textwidth}{!}{%
\begin{tabular}{lccccccc}
\hline
Block & Block Group & County & PRIM & State & Tract Subset & Tract Subset Group & US \\
\hline
$11/10000$ & $43/1000$ & $31/1000$ & $239/5000$ & $999/10000$ & $217/2500$ & $239/5000$ & $73/10000$ \\
$11/10000$ & $43/1000$ & $31/1000$ & $239/5000$ & $999/10000$ & $217/2500$ & $239/5000$ & $73/10000$ \\
$11/10000$ & $43/1000$ & $31/1000$ & $239/5000$ & $999/10000$ & $217/2500$ & $239/5000$ & $73/10000$ \\
$11/10000$ & $43/1000$ & $31/1000$ & $239/5000$ & $999/10000$ & $217/2500$ & $239/5000$ & $73/10000$ \\
$11/10000$ & $43/1000$ & $31/1000$ & $239/5000$ & $999/10000$ & $217/2500$ & $239/5000$ & $73/10000$ \\
$11/10000$ & $43/1000$ & $31/1000$ & $239/5000$ & $999/10000$ & $217/2500$ & $239/5000$ & $73/10000$ \\
$11/10000$ & $43/1000$ & $31/1000$ & $239/5000$ & $999/10000$ & $217/2500$ & $239/5000$ & $73/10000$ \\
$11/10000$ & $43/1000$ & $31/1000$ & $239/5000$ & $999/10000$ & $217/2500$ & $239/5000$ & $73/10000$ \\
$11/10000$ & $43/1000$ & $31/250$  & $239/5000$ & $999/10000$ & $217/625$  & $239/5000$ & $73/10000$ \\
$11/10000$ & $43/1000$ & $31/250$  & $239/1250$ & $999/2500$  & $217/625$  & $239/1250$ & $73/2500$ \\
\hline
\end{tabular}%
}
\end{table}

\subsection{Notation}
We summarize here the notation used throughout the paper.

We denote by $\PP$ the probability measure, and by $\ZZ$, $\NN$, $\QQ_{+}$, and $\RR$ the sets of integers, natural numbers, positive rational numbers, and real numbers, respectively.
Bold symbols such as $\Xb$ or $\xb$ denote matrices or vectors, while plain symbols such as $x$ denote scalar quantities.
${\cal N}_{\ZZ}(0, \sigma^2)$ denotes the univariate discrete Gaussian distribution \eqref{eqn:DG} with noise parameter $\sigma^2$.
${\bf N}_{\ZZ}(0, \bsigma^2) = \left[ {\cal N}_{\ZZ}(0, \sigma_1^2), \cdots, {\cal N}_{\ZZ}(0, \sigma_n^2) \right]^{\T}$ denotes the multivariate discrete Gaussian distribution with parameter vector $\bsigma^2 = \left[ \sigma_1^2, \cdots, \sigma_n^2 \right]^{\T}$.

\paragraph{The 2020 Census DHC File.}
We consider the 2020 Census DHC File \citep{census2020CensusDHC}.
Let $b$ denote an individual census block, the smallest geographic unit in the 2020 Census DHC File.
The 2020 Census DHC File includes eight geographic levels, as listed in \eqref{eqn:geolevel}, where at each level there is at most one geographic unit containing block $b$.
We define the directed path $\mathcal{P}_b$ as the collection of all geographic units that contain block $b$.
For any fixed directed path $\mathcal{P}_b$, the same set of marginal queries is applied to every geographic unit, producing the integer-valued counting statistics $\Mb_b$ in \eqref{eqn:counting}.
The 2020 Census DHC File assigns a privacy budget allocation $\boldsymbol{\rho}_b$ as in \eqref{eqn:rho_i} along $\mathcal{P}_b$, with the total budget equal to the overall privacy level $\rho = 4.9622$.
Finally, the 2020 Census DHC File injects independent discrete Gaussian noise with noise parameter $\sigma_i^2$ into each $M_{(g,q)_i}$ as in \eqref{eqn:noise}, producing the privatized counting statistics $\widetilde{\Mb}_b$ in \eqref{eqn:private_counting}.

\paragraph{Privacy accounting.}
For a mechanism $\mathbf{M}$, we write $\widetilde{\mathbf{M}}$ for its privatized (noise-added) version and $\mathbf{M}$ for the corresponding unprivatized mechanism.
We use $\rho$ to denote the overall privacy budget under the zCDP framework.
The pair $(\varepsilon, \delta)$ denotes the privacy parameters in $(\varepsilon, \delta)$-DP.
The pair $(\alpha, \beta)$ denotes the type I and type II errors on the $f$-DP (trade-off) curve.

\section{Privacy Accounting Formulation}
\label{sec:foundation}

In this section, we formulate privacy accounting for the 2020 Census DHC File as a statistical hypothesis testing and numerical computation problem. We first express the overall privacy guarantee through pairwise compositions of neighboring datasets, showing that exact privacy levels reduce to evaluating hypothesis-testing trade-off functions under composed discrete Gaussian mechanisms. We then transform both the $(\varepsilon,\delta)$-DP and $f$-DP privacy profiles into a unified numerical task—accurately computing tail probabilities of weighted convolutions of discrete Gaussian distributions—which serves as the foundation for our methodology.

\subsection{From preliminaries to privacy accounting}
We now reformulate the preliminaries as a privacy accounting problem.
Consider two neighboring datasets $D$ and $D'$ obtained by replacing a record $x$ on directed path $\mathcal{P}_k$ with another record $x'$ on directed path $\mathcal{P}_l$.
For example, this corresponds to replacing an individual living in (US, Pennsylvania, Philadelphia, $\cdots$, the University of Pennsylvania) with another individual living in (US, Massachusetts, Cambridge, $\cdots$, Harvard University). 
Let $\widetilde{\mathbf{M}}_k$ and $\widetilde{\mathbf{M}}_l$ denote the privatized counting statistics along directed path $\mathcal{P}_k$ and $\mathcal{P}_l$, respectively.
Let $m = 2n = 160$ denote the number of dimensions of the composed mechanism $[\widetilde{\mathbf{M}}_k, \widetilde{\mathbf{M}}_l]$, with $n$ as defined in \eqref{eqn:mech_dim}.
\begin{align*}
\begin{split}
    \begin{bmatrix}
        \widetilde{\mathbf{M}}_k(D) \\
        \widetilde{\mathbf{M}}_l(D)
    \end{bmatrix}
    =
    \begin{bmatrix}
        {\mathbf{M}}_k(D) \\
        {\mathbf{M}}_l(D)
    \end{bmatrix}
    +
    \begin{bmatrix}
        \cN_{\ZZ}(0, \sigma_{1}^2) \\
        \cN_{\ZZ}(0, \sigma_{2}^2) \\
        \cdots \\
        \cN_{\ZZ}(0, \sigma_{m}^2)
    \end{bmatrix},
\end{split}
\end{align*}
where $[\sigma_1^2, \sigma_2^2, \cdots, \sigma_m^2] = [\bsigma_{k}^2, \bsigma_{l}^2]$ is the concatenation of the two sequences of noise parameters used in $\widetilde{\mathbf{M}}_k$ and $\widetilde{\mathbf{M}}_l$, as specified in \eqref{eqn:noise} and \eqref{eqn:private_counting}.

Let $\mu = 1$ denote the sensitivity of each coordinate of $\Mb_k$ and $\Mb_l$. Then the composed mechanism $[\widetilde{\mathbf{M}}_k, \widetilde{\mathbf{M}}_l]$ satisfies $f_{kl}$-DP, where $f_{kl}$ is the trade-off function associated with the following hypothesis testing problem:
\begin{align} \label{eqn:hypo_test}
\begin{split}
    H_0: \begin{bmatrix}
        \cN_{\ZZ}(0, \sigma_{1}^2) 
        \\
        \cN_{\ZZ}(0, \sigma_{2}^2)
        \\
        \cdots, 
        \\
        \cN_{\ZZ}(0, \sigma_{m}^2)
    \end{bmatrix}, \qquad H_1: \begin{bmatrix}
        \cN_{\ZZ}(\mu, \sigma_{1}^2) 
        \\
        \cN_{\ZZ}(\mu, \sigma_{2}^2)
        \\
        \cdots, 
        \\
        \cN_{\ZZ}(\mu, \sigma_{m}^2)
    \end{bmatrix}.
\end{split}
\end{align}
The following proposition characterizes the overall privacy level of the 2020 Census DHC File. Its proof follows directly from the fact that the overall privacy guarantee of the 2020 Census DHC File is given by the parallel composition (cf., \cite{Mcsherry2010privacy} and \cite{Smith2022making}) over all pairs of composed mechanisms $[\widetilde{\mathbf{M}}_k, \widetilde{\mathbf{M}}_l]$.
Note that the parallel composition requires taking minimum over all $f_{kl}$ while taking maximum over $\varepsilon_{kl}$. 
\begin{proposition}
    \label{prop:parallel_comp}
    The overall privacy level of the 2020 Census DHC File is $(\min_{k,l} f_{kl})^{**}$ where $(\min_{k,l} f_{kl})^{**}$ is the lower convex envelope \citep{vu:tel-02965421} of $\min_{k,l} f_{kl}$.
    Moreover, suppose the composition of two mechanisms $[ \widetilde{\mathbf{M}}_k, \widetilde{\mathbf{M}}_l ]$ satisfies $(\varepsilon_{kl}, \delta)$-DP, then overall privacy budget in the 2020 Census DHC File is $(\max_{k,l} \varepsilon_{kl}, \delta)$-DP.
\end{proposition}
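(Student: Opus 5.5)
The plan is to reduce the overall guarantee to a worst case over pairs of neighboring datasets, and to identify each pair with exactly one composed mechanism $[\widetilde{\mathbf{M}}_k,\widetilde{\mathbf{M}}_l]$. Under the replacement model, a neighboring pair $(D,D')$ is obtained by replacing a record $x$ on path $\mathcal{P}_k$ with $x'$ on $\mathcal{P}_l$. The released DHC File consists of noisy counts for every item in \{geographic units\} $\times$ \{marginal queries\}. The first step is to observe that only the counts $M_{(g,q)}$ with $g\in\mathcal{P}_k\cup\mathcal{P}_l$ can differ between $D$ and $D'$. Every other coordinate has the same mean under $D$ and $D'$ and receives independent noise, so it is identically distributed under both hypotheses.

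Next I invoke the tensor-product composition rule for $f$-DP. Write the output distributions as $P=P_{kl}\times R$ and $Q=Q_{kl}\times R$, where $R$ is the common law of the untouched coordinates. Then $T(P,Q)=T(P_{kl},Q_{kl})\otimes \mathrm{Id}=T(P_{kl},Q_{kl})$, since $\mathrm{Id}(\alpha)=1-\alpha$ is the identity for $\otimes$. Each affected coordinate has replacement sensitivity at most $\mu=1$ and noise $\mathcal{N}_{\mathbb{Z}}(0,\sigma_i^2)$. By translation invariance, together with monotonicity of the trade-off function in the shift (a shift of magnitude $<1$ is either $0$ or irrelevant on $\mathbb{Z}$), $T(P_{kl},Q_{kl})\ge f_{kl}$, the trade-off function of \eqref{eqn:hypo_test}, with equality for a worst-case pair. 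Two edge cases need separate handling. If the paths share units, a shared count has shift $0$ and only increases $T$. In the symmetric direction $\mu=-1$, symmetry of the discrete Gaussian gives the same $f_{kl}$.

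Taking the infimum over all neighboring pairs gives $\inf_{D,D'}T=\min_{k,l}f_{kl}$, where the minimum is finite because only 43 allocations occur. A mechanism is $f$-DP exactly when $f\le T(P,Q)$ for all pairs, and the definition requires $f$ to be a valid trade-off function, so the tightest admissible $f$ is the largest convex minorant of $\min f_{kl}$. That minorant is $(\min_{k,l}f_{kl})^{**}$. A pointwise minimum of trade-off functions is continuous, non-increasing and $\le 1-\alpha$, and its biconjugate keeps these properties. For $(\varepsilon,\delta)$-DP, the definition \eqref{eq:dp_def} must hold for every neighboring pair. By the reduction above, the pair $(D,D')$ satisfies it iff $[\widetilde{\mathbf{M}}_k,\widetilde{\mathbf{M}}_l]$ does, and since $\delta_{kl}(\varepsilon)$ is non-increasing in $\varepsilon$, the smallest common $\varepsilon$ at level $\delta$ is $\max_{k,l}\varepsilon_{kl}$.

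The main obstacle is the convexification step. One must show that the exact overall trade-off is the convex envelope and not merely bounded by it, which amounts to the standard fact (Dong et al.) that a family of trade-off functions is jointly dominated by $g$ iff it is dominated by $(\min g)^{**}$. This holds because the definition of $f$-DP ranges only over convex $f$. A second, smaller check is that unequal path lengths and overlapping units do not break the identification with \eqref{eqn:hypo_test}. There I would argue coordinatewise that zero-shift coordinates are removable, again using $\mathrm{Id}$ as the identity for $\otimes$.
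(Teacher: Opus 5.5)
Your overall route is the paper's own argument made explicit. The paper simply appeals to parallel composition over the pairs $[\widetilde{\mathbf{M}}_k,\widetilde{\mathbf{M}}_l]$. Your reduction $T(P_{kl}\times R,Q_{kl}\times R)=T(P_{kl},Q_{kl})\otimes\mathrm{Id}$, the coordinatewise reflection for signs, the biconjugate step and the $\max_{k,l}\varepsilon_{kl}$ argument are all fine.

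The gap is in how you identify the affected coordinates. Your claim ``with equality for a worst-case pair'', and hence $\inf_{D,D'}T=\min_{k,l}f_{kl}$, rests on that identification. You treat each $M_{(g,q)}$ as a single scalar count, so a unit $g\in\mathcal{P}_k\cap\mathcal{P}_l$ gives a shift-$0$ coordinate, which you then remove. But every pair of directed paths shares at least the US node, and a same-block replacement (a $k=l$ pair) shares all eight units. So in your model no neighboring pair ever produces the $m=2n$ unit shifts with variances $[\bsigma_k^2,\bsigma_l^2]$ that define \eqref{eqn:hypo_test}. Every pair becomes a test with fewer shifted coordinates, whose trade-off function lies above $f_{kl}$; for a same-block replacement it would even be $\mathrm{Id}$. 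As written, your argument therefore only shows that the File is $(\min_{k,l}f_{kl})^{**}$-DP, not that this is its exact privacy level. ``Removing zero-shift coordinates'' moves you away from \eqref{eqn:hypo_test}, not toward it.

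The missing ingredient is the replacement (bounded) sensitivity convention from the Sensitivity paragraph of Section~\ref{sec:dp}:
\begin{itemize}
\item Each marginal query at a unit is a vector of category counts, not a single count.
\item Replacing $x$ by $x'$ decrements the category of $x$ at every unit of $\mathcal{P}_k$ and increments the category of $x'$ at every unit of $\mathcal{P}_l$.
\item At a shared unit these are two different cells whenever $x$ and $x'$ fall in different categories. Each is shifted by one and each carries that unit's noise variance.
\end{itemize}
This is exactly why \eqref{eqn:hypo_test} uses the concatenation $[\bsigma_k^2,\bsigma_l^2]$ of length $2n$, with shared-unit variances appearing twice, rather than a union over $\mathcal{P}_k\cup\mathcal{P}_l$.

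Equality $T(P_{kl},Q_{kl})=f_{kl}$ is then attained by taking $x$ and $x'$ on paths with allocations $k$ and $l$ that differ in every marginal category. A coincident category produces a zero-shift cell and a larger $T$; that is the correct form of your edge case. With this correction the rest of your proof goes through.
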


Proposition \ref{prop:parallel_comp} indicates that, in order to obtain the exact privacy guarantees for the 2020 Census DHC File, it suffices to compute the exact privacy accounting quantities $f_{kl}$ and the $(\varepsilon_{kl}, \delta)$ curve for all pairs of directed paths $k$ and $l$. To this end, we leverage the hypothesis testing formulation \eqref{eqn:hypo_test} of differential privacy \citep{Dong2022Gaussian, wang2022analytical}.

\subsection{From privacy accounting to numerical computing}
\label{sec:to_numerical}
For any given $k$ and $l$, we now cast the privacy accounting of $\varepsilon_{kl}$ and $f_{kl}$ as a numerical computation problem, as stated in Propositions \ref{prop:eps_delta_curve} and \ref{prop:f_dp_curve}. This formulation serves as the cornerstone of our overall privacy accounting analysis for the 2020 Census DHC File.
Section \ref{sec:method}, which contains the main contribution of this paper, introduces a novel method for computing the $(\varepsilon_{kl}, \delta)$ and $f_{kl}$ curves; the corresponding results and practical recommendations are presented in Section \ref{sec:results}.

We first leverage Proposition 3.2 in \cite{wang2022analytical} to characterize the $(\varepsilon, \delta)$-DP curve in terms of the distribution functions of the privacy-loss log-likelihood ratios.
\begin{lemma}[Proposition 3.2 in \cite{wang2022analytical}]
\label{lemma:privacy-profile}
    Let $X_i \sim P_i = \mathcal{N}_{\ZZ}(0,\sigma_i^2)$ and $Y_i \sim Q_i = \mathcal{N}_{\ZZ}(\mu,\sigma_i^2).$ 
    The composition of two mechanisms $[ \widetilde{\mathbf{M}}_k, \widetilde{\mathbf{M}}_l ]$ satisfy  $(\varepsilon_{kl}, \delta)$-DP curve satisfying 
    \begin{align*}
        \delta(\varepsilon_{kl}) = 
        \mathbb{P}\left[\sum_{i=1}^{m} \log \frac{\text{d} Q_i(Y_i)}{\text{d} P_i(Y_i)} > \varepsilon_{kl}\right] - \ex^{\varepsilon_{kl}} \cdot \mathbb{P}\left[\sum_{i=1}^{m} \log \frac{\text{d} Q_i(X_i)}{\text{d} P_i(X_i)} > \varepsilon_{kl}\right]. 
    \end{align*}
\end{lemma}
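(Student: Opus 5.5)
The plan is to derive the formula from the tight $(\varepsilon,\delta)$ profile of a single pair of distributions, $\delta(\varepsilon)=\sup_S \{Q(S)-\ex^{\varepsilon}P(S)\}$ with $P=\prod_i P_i$ and $Q=\prod_i Q_i$ on $\ZZ^m$. Two observations reduce the problem to this case.

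\textbf{Reduction to one pair.} By \eqref{eqn:hypo_test}, after subtracting the deterministic shift $\Mb(D)$, every neighboring pair for the composed mechanism $[\widetilde{\mathbf{M}}_k,\widetilde{\mathbf{M}}_l]$ reduces to distinguishing $P$ from $Q$. Both are product measures because the noise coordinates are independent.

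\textbf{Direction of the test.} We must also check the reverse pair $(Q,P)$. The discrete Gaussian with integer shift $\mu=1$ is symmetric: the map $x\mapsto \mu-x$ applied coordinatewise sends $P$ to $Q$ and $Q$ to $P$. So the $(P,Q)$ and $(Q,P)$ profiles coincide, and the one-sided formula suffices.

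\textbf{Neyman--Pearson step.} For fixed $\varepsilon$, the supremum of $Q(S)-\ex^{\varepsilon}P(S)$ over events $S$ is attained by the likelihood-ratio set $S^*=\{z: q(z)>\ex^{\varepsilon}p(z)\}$. This holds since $Q(S)-\ex^{\varepsilon}P(S)=\sum_{z\in S}(q(z)-\ex^{\varepsilon}p(z))$, which is maximized by including exactly the points with a positive summand. Everything is discrete and both pmfs are strictly positive on $\ZZ^m$, so no measure-theoretic subtleties arise and the log-likelihood ratio is everywhere finite.

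\textbf{Product factorization.} Because of the product structure,
\[
\log\frac{q(z)}{p(z)}=\sum_{i=1}^m \log\frac{\text{d}Q_i(z_i)}{\text{d}P_i(z_i)}.
\]
Evaluating $Q(S^*)$ with $z=(Y_1,\dots,Y_m)\sim Q$ and $P(S^*)$ with $z=(X_1,\dots,X_m)\sim P$ then gives exactly the two probabilities in the statement.

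\textbf{Main obstacle.} The only subtlety is the tightness claim, i.e.\ that the mechanism satisfies $(\varepsilon,\delta)$-DP with this $\delta$ and no smaller one. Sufficiency follows because for every $S$ we have $Q(S)-\ex^{\varepsilon}P(S)\le Q(S^*)-\ex^{\varepsilon}P(S^*)$. Necessity follows by taking $S=S^*$ itself.

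Ties, where $q(z)=\ex^{\varepsilon}p(z)$, contribute zero to the sum, so using strict inequality loses nothing. The equality can alternatively be cited from Proposition 3.2 of \cite{wang2022analytical} or the $f$-DP/privacy-profile duality in \cite{Dong2022Gaussian}, which is what the paper does.
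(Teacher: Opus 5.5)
Your proof is correct, but it takes a different route from the paper. The paper gives no argument for this lemma. It imports it from Proposition 3.2 of \cite{wang2022analytical}, i.e.\ from the $f$-DP framework: the composed mechanism is $T(\prod_i P_i,\prod_i Q_i)$-DP by \eqref{eqn:hypo_test}, and the privacy profile is read off from that trade-off function through the $f$-DP/$(\varepsilon,\delta)$ duality of \cite{Dong2022Gaussian}.

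You instead maximize $Q(S)-\ex^{\varepsilon}P(S)$ directly over subsets of $\ZZ^m$. You observe that the maximizer is the likelihood-ratio set $\{q>\ex^{\varepsilon}p\}$, and then factor $\log(q/p)$ over coordinates. This is fully elementary, with no trade-off function or convex conjugate needed. It also makes explicit a point the paper leaves implicit: the reflection $x\mapsto\mu-x$, legitimate because $\mu=1$ is an integer, swaps $P$ and $Q$. Hence the one-sided quantity in the statement really is the two-sided DP parameter.

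What the citation buys is uniformity with the rest of Section~\ref{sec:to_numerical}. The same Neyman--Pearson test that gives $\delta(\varepsilon)$ also gives the parametrization $(\alpha_{kl}(\zeta),\beta_{kl}(\zeta))$ of $f_{kl}$ in Proposition~\ref{prop:f_dp_curve}.

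One small imprecision, inherited from the paper's framing: your ``reduction to one pair'' is really a domination statement. An actual replacement shifts each coordinate by $-1$, $0$, or $+1$, rather than all by $+1$. Coordinatewise reflection $x\mapsto -x$, which fixes each $P_i$, removes the signs. Data processing under projection shows that leaving some coordinates unshifted cannot increase $\sup_S\{Q(S)-\ex^{\varepsilon}P(S)\}$. So $(P,Q)$ is indeed the worst case that \eqref{eqn:hypo_test} presupposes.
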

A straightforward calculation yields the following proposition, which reduces the privacy accounting problem to a numerical computation problem involving the accurate evaluation of tail probabilities of weighted convolutions of discrete Gaussian distributions. Section \ref{sec:method} focuses on addressing this numerical challenge.
\begin{proposition}[Formulation of the exact $(\varepsilon, \delta)$-DP curve]
\label{prop:eps_delta_curve}
    Recall that $\rho=4.9622$ is the total privacy budget in the 2020 Census DHC File. Let $a_i = \rho_i/\rho$. Then, the $(\varepsilon_{kl}, \delta)$-DP curve is exactly characterized by the following equation.
    \begin{align*}
        \delta(\varepsilon_{kl}) =\ & \PP_{X_{i} \sim \cN_{\ZZ}(0, \sigma_{i}^2)} \left( \sum_{i=1}^{m} a_{i} X_{i} > \frac{\varepsilon_{kl}}{\rho} - 1 \right) 
        - \ex^{\varepsilon_{kl}} \cdot \PP_{X_{i} \sim \cN_{\ZZ}(0, \sigma_{i}^2)} \left( \sum_{i=1}^{m} a_{i} X_{i} > \frac{\varepsilon_{kl}}{\rho} + 1 \right).
    \end{align*}
\end{proposition}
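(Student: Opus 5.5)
The plan is to start from Lemma~\ref{lemma:privacy-profile} and compute the privacy-loss log-likelihood ratio of each coordinate in closed form. Two facts make this work: the shift $\mu=1$ is an integer, and the budgets along each directed path sum to $\rho$.

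\textbf{Step 1 (per-coordinate log-likelihood ratio).} For $P_i=\cN_{\ZZ}(0,\sigma_i^2)$ and $Q_i=\cN_{\ZZ}(1,\sigma_i^2)$, the mass function of $Q_i$ is $p_{\sigma_i}(y-1)$. Because $1\in\ZZ$, the map $j\mapsto j-1$ is a bijection of $\ZZ$, so $\sum_{j\in\ZZ}\ex^{-(j-1)^2/2\sigma_i^2}=\sum_{j\in\ZZ}\ex^{-j^2/2\sigma_i^2}$. The normalizing constants in \eqref{eqn:DG} therefore cancel. For every $y\in\ZZ$ we get
\[
\log\frac{\text{d}Q_i(y)}{\text{d}P_i(y)}=\frac{y^2-(y-1)^2}{2\sigma_i^2}=\frac{2y-1}{2\sigma_i^2}=\rho_i y-\frac{\rho_i}{2},
\]
using $\sigma_i^2=1/\rho_i$ from \eqref{eqn:noise}. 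Both $X_i$ and $Y_i$ are integer-valued, so this formula applies to both.

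\textbf{Step 2 (summing over the composition).} The $m=2n$ coordinates are the concatenation of the allocations $\boldsymbol{\rho}_k$ and $\boldsymbol{\rho}_l$. By \eqref{eqn:rho_i}, each of these sums to $\rho$, so $\sum_{i=1}^m\rho_i=2\rho$. Hence
\[
\sum_{i=1}^m\log\frac{\text{d}Q_i}{\text{d}P_i}(Z_i)=\sum_{i=1}^m\rho_iZ_i-\rho
\]
for integer-valued $Z_i$.

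\textbf{Step 3 (the two tail probabilities).} For the second term of Lemma~\ref{lemma:privacy-profile}, we have $X_i\sim P_i$. The event $\sum_i\rho_iX_i-\rho>\varepsilon_{kl}$ is, after dividing by $\rho>0$ and writing $a_i=\rho_i/\rho$, exactly the event $\sum_i a_iX_i>\varepsilon_{kl}/\rho+1$.

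For the first term, we write $Y_i=X_i+1$ with $X_i\sim P_i$, which is valid since $Q_i$ is the unit translate of $P_i$. Then $\sum_i\rho_iY_i-\rho=\sum_i\rho_iX_i+2\rho-\rho=\sum_i\rho_iX_i+\rho$. So the event becomes $\sum_ia_iX_i>\varepsilon_{kl}/\rho-1$. Substituting both expressions into Lemma~\ref{lemma:privacy-profile} gives the displayed formula.

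There is no real obstacle here; the computation is routine. The two points that need care are these:
\begin{itemize}
\item the cancellation of the normalizing constants, which relies on $\mu$ being an integer and would fail for a non-integer shift;
\item the bookkeeping $\sum_{i=1}^m\rho_i=2\rho$, which relies on each path's allocation summing to the total budget $\rho$.
\end{itemize}
The second point is what produces the clean thresholds $\varepsilon_{kl}/\rho\mp1$.
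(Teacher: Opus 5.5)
Your proposal is correct and follows the paper's proof essentially step for step. Both arguments compute the per-coordinate log-likelihood ratio $\rho_i y-\rho_i/2$, with the normalizing constants cancelling because $\mu=1$ is an integer, substitute $Y_i \overset{d}{=} X_i+1$, and use $\sum_{i=1}^m\rho_i=2\rho$ (equivalently $\sum_i a_i=2$) to obtain the thresholds $\varepsilon_{kl}/\rho\mp 1$ in Lemma~\ref{lemma:privacy-profile}.
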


By the Neyman--Pearson lemma (see \cite{lehmann2005testing} or Lemma A.1 in \cite{Dong2022Gaussian}), we can reduce the privacy accounting of $f_{kl}$ to a similar numerical computation problem without loss of optimality.
\begin{proposition}[Numerical formulation of the exact $f$-DP curve]
\label{prop:f_dp_curve}
The trade-off function $f_{kl}$ is uniquely determined by the type I error $\alpha_{kl}(\zeta)$ and the type II error $\beta_{kl}(\zeta)$.
\begin{align*}
    \alpha_{kl}(\zeta) =\ & \PP_{X_{i} \sim \cN_{\ZZ}(0, \sigma_{i}^2)} \left( \sum_{i=1}^m a_{i} X_{i} > \zeta + 1 \right) + c \cdot \PP _{X_{i} \sim \cN_{\ZZ}(0, \sigma_{i}^2)} \left( \sum_{i=1}^m a_{i} X_{i} = \zeta + 1 \right)
    \\
    \beta_{kl}(\zeta) =\ & \PP_{X_{i} \sim \cN_{\ZZ}(0, \sigma_{i}^2)} \left( \sum_{i=1}^m a_{i} X_{i} \leq \zeta - 1 \right) - c \cdot \PP _{X_{i} \sim \cN_{\ZZ}(0, \sigma_{i}^2)} \left( \sum_{i=1}^m a_{i} X_{i} = \zeta - 1 \right)
\end{align*}
for some constant $c$. 
\end{proposition}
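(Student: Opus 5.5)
The plan is to compute the likelihood ratio of the testing problem \eqref{eqn:hypo_test} explicitly, show that it is a strictly increasing affine function of the single statistic $T=\sum_{i=1}^m a_iX_i$, and then read off the optimal errors from the Neyman--Pearson lemma.

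\textbf{Step 1: the likelihood ratio.} Take a single coordinate with $P_i=\cN_{\ZZ}(0,\sigma_i^2)$ and $Q_i=\cN_{\ZZ}(\mu,\sigma_i^2)$. Because $\mu$ is an integer, the normalizing constant $\sum_{j\in\ZZ}\ex^{-(j-\mu)^2/2\sigma_i^2}$ equals that of $P_i$. The ratio of the two probability mass functions is therefore exactly
\[
\ex^{(2\mu x-\mu^2)/(2\sigma_i^2)}, \qquad x\in\ZZ,
\]
with no discretization correction. Substituting $\sigma_i^2=1/\rho_i$ from \eqref{eqn:noise} and $\rho_i=a_i\rho$, and summing over the independent coordinates, the composed log-likelihood ratio is
\[
\rho\mu\, T-\frac{\rho\mu^2}{2}\sum_{i=1}^m a_i,
\]
which is strictly increasing in $T$. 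This is the same affine change of variables already used in Proposition~\ref{prop:eps_delta_curve}. I will follow exactly the sensitivity and normalization conventions that produced the $\pm1$ offsets there, so that the thresholds here match that proposition.

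\textbf{Step 2: Neyman--Pearson.} By the Neyman--Pearson lemma (Lemma A.1 in \cite{Dong2022Gaussian}), for each type I level the most powerful tests are likelihood-ratio tests. By Step 1 these take the form: reject when $T$ exceeds a threshold $t$, and randomize with probability $c\in[0,1]$ when $T=t$. Under $H_0$ the statistic is $T$ itself. Under $H_1$, write $Y_i=X_i+\mu$ with $X_i\sim\cN_{\ZZ}(0,\sigma_i^2)$, so $\sum_i a_iY_i=T+\mu\sum_i a_i$, a deterministic shift of the same variable.

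The type I error is $\PP(T>t)+c\,\PP(T=t)$. The type II error is $\PP(T+\text{shift}<t)+(1-c)\,\PP(T+\text{shift}=t)$, which rearranges to
\[
\PP(T\le t-\text{shift})-c\,\PP(T=t-\text{shift}).
\]
Writing the threshold in terms of a parameter $\zeta$ placed midway between the two shifted locations (as was done for $\varepsilon_{kl}/\rho$ in Proposition~\ref{prop:eps_delta_curve}) gives the displayed pair $(\alpha_{kl}(\zeta),\beta_{kl}(\zeta))$. Both errors are expressed through $X_i\sim\cN_{\ZZ}(0,\sigma_i^2)$ alone.

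\textbf{Step 3: uniqueness of $f_{kl}$.} $T$ takes values in the lattice $\frac1L\ZZ$ for a common denominator $L$, since the $a_i$ are rational. As $\zeta$ ranges over the atoms of $T$ and $c$ over $[0,1]$, the points $(\alpha_{kl},\beta_{kl})$ trace the boundary of the Neyman--Pearson region. Between consecutive atoms this boundary consists of linear interpolation segments. It is therefore a continuous, convex, non-increasing curve, and it equals $f_{kl}=T(\prod_i P_i,\prod_i Q_i)$ by definition of the trade-off function. Hence $f_{kl}$ is determined by these two expressions.

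\textbf{Main obstacle.} The delicate part is not conceptual but bookkeeping: matching the offsets $\zeta\pm1$ to the paper's conventions. The offsets depend on the replacement sensitivity ($\mu$ and how it enters along both paths $\mathcal P_k$ and $\mathcal P_l$) and on the normalization $\sum_i a_i$ over the $m=2n$ concatenated coordinates. I would fix these by checking consistency with Proposition~\ref{prop:eps_delta_curve}, whose thresholds $\varepsilon_{kl}/\rho\mp1$ arise from the same likelihood-ratio computation. I would also verify that the constant $c$ entering $\alpha_{kl}$ and $\beta_{kl}$ is the same randomization probability, applied to the atoms that correspond to one another under the shift.
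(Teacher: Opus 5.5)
Your proposal is correct and follows essentially the paper's route: compute the composed log-likelihood ratio, note that it is an increasing affine function of $\sum_i a_i X_i$, apply the Neyman--Pearson lemma, and express the $H_1$ statistic as a shift of the $H_0$ statistic by $\mu\sum_i a_i$. The bookkeeping you flag as the main obstacle is immediate, just as in the paper: $\mu=1$ and $\sum_{i=1}^m a_i = 2$ because the budgets along each of the two concatenated paths sum to $\rho$, so the shift is $2$ and the midpoint threshold $t=\zeta+1$ gives exactly the offsets $\zeta\pm 1$.
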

The proofs of Propositions \ref{prop:eps_delta_curve} and \ref{prop:f_dp_curve} are provided in Section \ref{sec:proof_foundation}.
A direct consequence of Propositions \ref{prop:eps_delta_curve} and \ref{prop:f_dp_curve} is that privacy accounting for the composition of discrete or continuous Gaussian mechanisms is mathematically equivalent to computing the tail probability of a convolution of discrete Gaussian distributions, namely
\begin{align} \label{eqn:conv}
    \mathbb{P}_{X_{i}\sim\mathcal{N}_{\mathbb{Z}}(0,\sigma_i^2)}\left(\sum_{i=1}^{m} a_i X_{i} > t_0 \right)
\end{align}
for any threshold $t_0$. In the next section, we propose an efficient method that can compute the $(\varepsilon_{kl}, \delta)$-DP and $f_{kl}$-DP curves under a prescribed error tolerance.

Before introducing our method, we emphasize the difficulty of numerically computing \eqref{eqn:conv}.
The first challenge arises from the stringent accuracy requirement. In the 2020 Census DHC File \citep{abowd20222020}, an illustrative privacy level is given by $(\varepsilon, \delta) = (26.34, 10^{-10})$. According to Proposition \ref{prop:eps_delta_curve}, an effective numerical error tolerance should be much smaller than $\delta / \ex^{\varepsilon}$.
In Section \ref{sec:epsilondp}, we therefore set the tolerance level to $\Delta = 10^{-35}$ to ensure that numerical errors do not affect the accuracy of the resulting $(\varepsilon, \delta)$-DP curve for the 2020 Census DHC File.
Although the $f$-DP curve imposes less stringent accuracy requirements, we set $\Delta = 10^{-25}$ in Section \ref{sec:fdp} to maintain high precision, particularly near the extreme points of the curve.

The second difficulty arises from the distinction between discrete and continuous Gaussian distributions. Empirically, Figure \ref{fig:discrete-continuous} illustrates the probability mass function of the convolution of two discrete Gaussian distributions and compares it with the corresponding continuous Gaussian density, both in the form of \eqref{eqn:conv}. Notably, the probability mass function of the discrete Gaussian convolution oscillates around the continuous Gaussian density, and a visible discrepancy can be observed between the two curves.
Adopting $\rho$-zCDP for privacy accounting, as in the 2020 Census DHC File, is intuitively equivalent to approximating the discrete Gaussian mechanism by a continuous Gaussian one, which generally yields only an upper bound on the true privacy level \citep{kifer2022bayesian}. Section \ref{sec:theory} provides a more detailed discussion from a theoretical perspective.

\begin{figure}[!htp] 
\centering 
\includegraphics[width=0.7\textwidth]{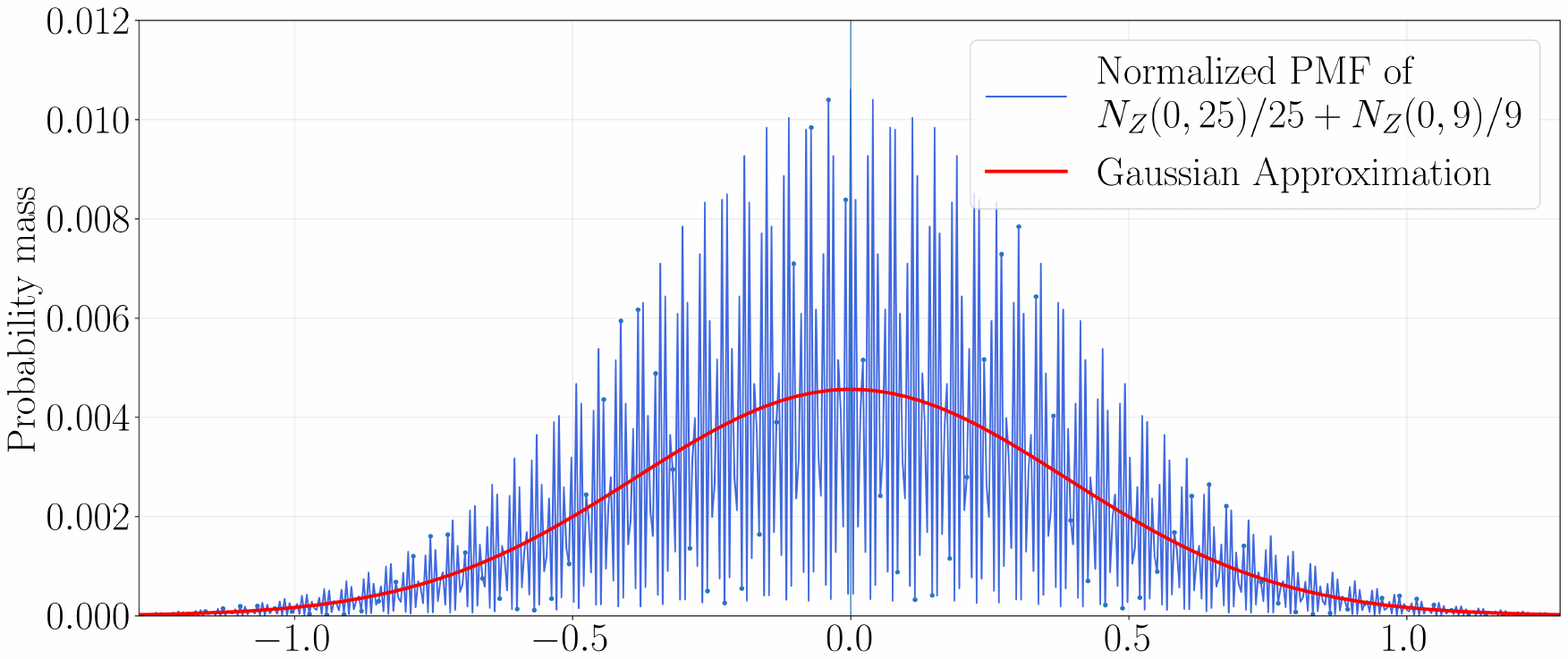}
\caption{Probability mass function of the convolution of two discrete Gaussian distributions and its continuous Gaussian approximation.} 
\label{fig:discrete-continuous} 
\end{figure}

\section{Method}
\label{sec:method}

We now propose a new method for computing the $(\varepsilon_{kl}, \delta)$-DP curve and the $f_{kl}$-DP curve to any prescribed error tolerance for fixed $k$ and $l$. By Propositions \ref{prop:eps_delta_curve} and \ref{prop:f_dp_curve}, this is equivalent to developing a method that computes the upper tail probability of a weighted sum of independent, heterogeneous discrete Gaussian random variables with a uniform error tolerance $\Delta$, as in \eqref{eqn:conv}, for any threshold $t_0$.
Our method consists of four steps, each with a provable guarantee that the overall error does not exceed $\Delta$.
At the end of this section, we summarize the complete method in Theorem \ref{thm:method}.

\paragraph{Step 1: Tail bound.}
The goal of this step is to truncate $\sum_{i=1}^{m} a_i L X_i$ so that the tail probability beyond the truncation is at most $\Delta/4$.
To this end, note that all $a_i \in \QQ_+$ in the 2020 Census DHC File. Hence, there exists a unique $L \in \QQ_+$ such that $a_i L \in \NN$ for all $i$ and $\gcd(a_1 L, \cdots, a_m L) = 1$.
By Bézout's identity,
it follows that
\begin{align*}
\text{support} \left( \sum_{i=1}^{m} a_i L X_i \right) = \ZZ.
\end{align*}
We now state a practical choice of the truncation level $U$ depending on $L$. The proof, given in Section \ref{sec:proof_choice_U}, is based on a sub-Gaussian concentration inequality.
\begin{proposition}[Practical choice of truncation $U$]
\label{prop:choice_U}
    Let $U$ be the smallest integer larger than
    \begin{align} \label{eqn:choice_U}
         L \cdot \sqrt{ - 2 \left( \sum_{i=1}^{m} a_i^2 \sigma_{i}^2 \right) \cdot \log \left( \Delta/8 \right)},
    \end{align}  
    we have
    \begin{align*}
        \left| \mathbb{P}_{X_{i}\sim\mathcal{N}_{\mathbb{Z}}(0,\sigma_i^2)}\left(\sum_{i=1}^{m} a_i X_{i} > t_0 \right) - \PP_{X_{i}\sim\mathcal{N}_{\mathbb{Z}}(0,\sigma_i^2)} \left( U \geq \sum_{i=1}^{m} a_i L X_{i} > t_0 \cdot L \right) \right| \leq \Delta/4,
    \end{align*}
\end{proposition}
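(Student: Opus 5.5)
The difference between the two probabilities is exactly $\PP\big(\sum_{i=1}^m a_i L X_i > \max(U, t_0 L)\big)$, which is at most $\PP\big(\sum_{i=1}^m a_i L X_i > U\big)$. We get this by multiplying the event $\{\sum a_i X_i > t_0\}$ by $L>0$ and splitting according to whether $\sum a_i L X_i \le U$. It therefore suffices to show
\[
\PP\Big(\sum_{i=1}^m a_i X_i > U/L\Big) \le \Delta/4 .
\]
We will in fact prove the stronger bound $\Delta/8$. The factor $8$ inside the logarithm in \eqref{eqn:choice_U} suggests this is the intended slack, possibly because the same bound is also used for the lower tail elsewhere.

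The key ingredient is the sub-Gaussianity of the discrete Gaussian. Canonne et al.\ show that $X\sim\cN_{\ZZ}(0,\sigma^2)$ satisfies $\EE[\ex^{\lambda X}] \le \ex^{\lambda^2\sigma^2/2}$ for all real $\lambda$. If we are not allowed to cite this, we prove it via Poisson summation. Write
\[
\EE[\ex^{\lambda X}] = \frac{\sum_{x\in\ZZ}\ex^{-(x-\lambda\sigma^2)^2/2\sigma^2}}{\sum_{x\in\ZZ}\ex^{-x^2/2\sigma^2}}\,\ex^{\lambda^2\sigma^2/2}.
\]
The ratio is at most $1$, because by Poisson summation
\[
\sum_{x\in\ZZ}\ex^{-(x-c)^2/2\sigma^2} = \sqrt{2\pi}\,\sigma\sum_{k\in\ZZ}\ex^{-2\pi^2\sigma^2k^2}\cos(2\pi k c),
\]
and every Fourier coefficient is positive, so the sum is maximized at $c=0$. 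I expect this step to be the only nontrivial one, and it is where the discreteness of the distribution enters.

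With this in hand, independence gives
\[
\EE\big[\ex^{\lambda\sum a_i X_i}\big] \le \exp\big(\lambda^2 s^2/2\big), \qquad s^2=\sum_{i=1}^m a_i^2\sigma_i^2 .
\]
The Chernoff bound optimized at $\lambda = u/s^2$ then yields
\[
\PP\Big(\sum a_i X_i > u\Big) \le \exp\big(-u^2/2s^2\big).
\]
Take $u = U/L$. Since $U > L\sqrt{-2s^2\log(\Delta/8)}$, we have $u^2/(2s^2) > -\log(\Delta/8)$, so the tail probability is below $\Delta/8 \le \Delta/4$. This concludes the argument.

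The hypothesis $\Delta<8$ is needed so that the logarithm is negative; it holds trivially. The lattice structure (Bézout) plays no role in this proposition; it matters only for the later steps.
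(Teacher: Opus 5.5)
Your proof is correct and follows the paper's route: the discrete-Gaussian MGF bound (Lemma 16 of Canonne et al.), multiplied across the independent coordinates, then a Chernoff tail bound at the threshold $U$. The differences are cosmetic. The paper bounds the two-sided tail $\PP\bigl(|\sum_{i} a_i L X_i| > U\bigr) \le 2\exp\bigl(-U^2/(2L^2\sum_i a_i^2\sigma_i^2)\bigr)$, which is where the factor $8 = 2\cdot 4$ comes from; you instead note that only the upper tail enters the difference, which gives the sharper $\Delta/8$, and you prove by Poisson summation the MGF bound that the paper only cites.
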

With this truncation, instead of computing the probability mass function of $\sum_{i=1}^{m} a_i L X_i$ over an infinite support beyond $t_0$, it suffices to consider only the finite set of values for which $\sum_{i=1}^{m} a_i L X_i$ lies between $t_0 L$ and $U$.

\paragraph{Step 2: Discrete Fourier transform.}

The goal of this step is to express the truncated cumulative probability of $\sum_{i=1}^{m} a_i L X_i$ as a single integral, so that it can be evaluated numerically in later steps. We begin with the following decomposition:
\begin{align*}
\begin{split}
    \PP_{X_{i}\sim\mathcal{N}_{\mathbb{Z}}(0,\sigma_i^2)} \left( U \geq \sum_{i=1}^{m} a_i L X_{i} > t_0 \cdot L \right) =\ & \sum_{ t = \lceil t_0 \cdot L \rceil}^{U} \PP_{X_{i}\sim\mathcal{N}_{\mathbb{Z}}(0,\sigma_i^2)} \left( \sum_{i=1}^{m} a_i L X_{i} = t \right) 
    \\
    \overset{(a)}{=}\ & \frac{1}{2 \pi} \sum_{ t = \lceil t_0 \cdot L \rceil}^{U} \int_{- \pi}^{\pi} \ex^{- i \zeta t} \prod_{i=1}^{m} f_{a_i L X_{i}}(\zeta) d\zeta,
\end{split}
\end{align*}
where $f_{a_i L X_i}(\zeta) = \EE \ex^{i\zeta a_i L X_i}$ denotes the characteristic function of $a_i L X_i$, and step (a) follows from the inverse discrete Fourier transform (see Exercise 3.3.2(iii) in \cite{durrett2019probability}).
The characteristic function has the closed-form expression
\begin{align*}
    f_{a_i L X_{i}}(\zeta) = \EE \ex^{i\zeta \cdot a_i L X_{i}}  = f_{X_{i}}(a_i L \zeta) = \ & \frac{\sum_{u=-\infty}^{\infty} \ex^{i u \cdot a_i L \zeta} \cdot \ex^{ - \frac{u^2}{2 \sigma_{i}^{2}} }}{ \sum_{u=-\infty}^{\infty}  \ex^{ - \frac{u^2}{2 \sigma_{i}^{2}} }}.
\end{align*}
Let $X = \sum_{i=1}^{m} a_i L X_i$. Since $\EE \ex^{i\zeta X} = \prod_{i=1}^{m} f_{a_i L X_i}(\zeta)$ is an even function of $\zeta$, we obtain
\begin{align*}
    \frac{1}{2 \pi} \sum_{t = \lceil t_0 \cdot L \rceil}^{U} \int_{- \pi}^{\pi} \ex^{- i \zeta t} \prod_{i=1}^{m} f_{a_i L X_{i}}(\zeta) d\zeta 
    =\ & \int_{- \pi}^{\pi} \frac{1}{2 \pi} \sum_{t = \lceil t_0 \cdot L \rceil}^{U} \ex^{- i \zeta t} \prod_{i=1}^{m} f_{a_i L X_{i}}(\zeta) d\zeta 
    \\
    =\ & \int_{- \pi}^{\pi} \frac{1}{2 \pi} \sum_{ t = \lceil t_0 \cdot L \rceil}^{U} \cos(\zeta t) \cdot \EE \ex^{i \zeta X}.
\end{align*}
Define
\begin{align} \label{eqn:F}
    F(\zeta)
    =\ & \frac{1}{2 \pi} \cdot \left[ \sum_{ t = \lceil t_0 \cdot L \rceil}^{U} \cos(\zeta t) \right] \cdot \EE \ex^{i \zeta X}.
\end{align}
We conclude Step 2 with the following result.
\begin{proposition}
\label{prop:fourier}
With $F(\zeta)$ defined in \eqref{eqn:F}, we have
\begin{align*} 
    \PP_{X_{i}\sim\mathcal{N}_{\mathbb{Z}}(0,\sigma_i^2)} \left( U \geq \sum_{i=1}^{m} a_i L X_{i} > t_0 \cdot L \right) =\ & \int_{-\pi}^{\pi} F(\zeta) d \zeta.
\end{align*}
where $U$ is chosen according to Proposition \ref{prop:choice_U}.
\end{proposition}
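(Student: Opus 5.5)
The plan is to prove the identity directly from Fourier inversion on the integer lattice, in the order the derivation preceding the statement already suggests. Set $X = \sum_{i=1}^m a_i L X_i$. By the choice of $L$ in Step 1, every $a_i L$ is a positive integer, so $X$ is integer-valued.

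\textbf{Step 1: Fourier inversion for a single atom.} Since $\sum_{t\in\ZZ}\PP(X=t)=1$, the series $\EE \ex^{i\zeta X}=\sum_{t}\PP(X=t)\ex^{i\zeta t}$ converges absolutely and uniformly in $\zeta$. Independence gives $\EE\ex^{i\zeta X}=\prod_{i=1}^m f_{a_iLX_i}(\zeta)$. Because $\int_{-\pi}^{\pi}\ex^{i\zeta(s-t)}d\zeta = 2\pi\,\mathbf 1\{s=t\}$ for integers $s,t$, we may integrate term by term (uniform convergence, or Fubini with the dominating mass $1$). This yields
\[
\PP(X=t)=\frac{1}{2\pi}\int_{-\pi}^{\pi}\ex^{-i\zeta t}\,\EE\ex^{i\zeta X}\,d\zeta,
\]
which is exactly step (a) in the text.

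\textbf{Step 2: sum over the finite window.} The event $\{U\ge X> t_0L\}$ equals $\{X\in\{\lceil t_0L\rceil,\dots,U\}\}$ when $t_0L\notin\ZZ$. If $t_0L$ is an integer, the lower endpoint must instead be $t_0L+1$. I would flag this boundary convention and either adopt the paper's indexing as written, reading $\lceil\cdot\rceil$ as the smallest integer strictly exceeding $t_0L$, or note that the same argument applies verbatim with the correct lower index. Because the sum is finite, it can be interchanged with the integral freely.

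\textbf{Step 3: reduce to the real cosine integrand.} Each $X_i$ is symmetric, since $p_\sigma(x)=p_\sigma(-x)$. Hence each $f_{X_i}$ is real and even, and so $\phi(\zeta):=\EE\ex^{i\zeta X}$ is real-valued and even. Write $\ex^{-i\zeta t}=\cos(\zeta t)-i\sin(\zeta t)$. The sine part contributes $\int_{-\pi}^{\pi}\sin(\zeta t)\phi(\zeta)\,d\zeta=0$, because the integrand is odd on a symmetric interval. What remains is $\frac{1}{2\pi}\int_{-\pi}^{\pi}\sum_t\cos(\zeta t)\,\phi(\zeta)\,d\zeta=\int_{-\pi}^{\pi}F(\zeta)\,d\zeta$.

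\textbf{Main obstacle.} The analysis is routine; the only real content is justifying the interchange of the infinite sum with the integral in Step 1 and the parity argument in Step 3. Two side points are worth recording. The role of $U$ from Proposition~\ref{prop:choice_U} is only that it fixes a finite upper limit, and the identity holds for any integer $U$. The Bézout/$\gcd=1$ normalization is not needed for this identity itself; integrality of the $a_iL$ suffices.
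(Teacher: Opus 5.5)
Your proposal is correct and follows essentially the paper's own derivation. The paper also writes the probability as a finite sum of atoms, applies the lattice inversion formula (citing Durrett's Exercise 3.3.2(iii), which your Step 1 proves directly), swaps the finite sum with the integral, and uses evenness of $\mathbb{E}\,\mathrm{e}^{i\zeta X}$ to replace $\mathrm{e}^{-i\zeta t}$ by $\cos(\zeta t)$. Your boundary remark is also legitimate: when $t_0L\in\mathbb{Z}$, the paper's lower index $\lceil t_0L\rceil$ includes the atom $t=t_0L$, which the strict inequality excludes, so the index should be $\lfloor t_0L\rfloor+1$.
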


\paragraph{Step 3: Trapezoidal rule.}
With the integral representation in Proposition \ref{prop:fourier}, we next compute $\int_{-\pi}^{\pi} F(\zeta) d\zeta$ numerically. Since $F(\zeta)$ is $2\pi$-periodic, we may equivalently write
\begin{align*}
    \int_{-\pi}^{\pi} F(\zeta) d \zeta = \int_{0}^{2 \pi} F(\zeta) d\zeta.
\end{align*}
We adopt the trapezoidal rule, together with the error bound in Lemma \ref{lemma:trapezoidal_error}. For a number of nodes $N \in \NN$, the trapezoidal approximation takes the form
\begin{equation}
    I_N = \frac{2 \pi}{N} \sum_{k=1}^{N} F(\zeta_{k}),
\end{equation}
where $\zeta_{k} = 2 \pi k /N$.
The following lemma, due to \cite{Trefethen2014exponentially}, shows that for periodic complex analytic functions the trapezoidal rule achieves exponential convergence in $N$.
\begin{lemma}[Theorem 3.2 in \cite{Trefethen2014exponentially}]
\label{lemma:trapezoidal_error}
Consider a $2\pi$-periodic complex analytic function $F(\zeta)$ satisfying $|F(\zeta)| \le M$ on the strip $-a < \mathrm{Im}\, \zeta < a$ for some $a>0$. Then, for any $N \ge 1$,
\begin{equation*}
    \left|\frac{2 \pi}{N} \sum_{k=1}^{N} F(\zeta_{k}) - \int_{0}^{2 \pi} F(\zeta) d\zeta\right| \leq \frac{4\pi M}{\ex^{aN} -1},
\end{equation*}
and the constant $4\pi$ is optimal.
\end{lemma}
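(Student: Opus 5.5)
This is the classical contour-shift argument: expand $F$ in its Fourier series, show that the trapezoidal sum aliases all frequencies that are multiples of $N$, and bound those coefficients by moving the integration line into the complex strip.

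\emph{Fourier expansion.} Since $F$ is analytic on the strip $|\mathrm{Im}\,\zeta|<a$, it is in particular smooth and $2\pi$-periodic on the real line. We therefore write
\[
F(\zeta)=\sum_{k\in\mathbb{Z}} c_k \ex^{ik\zeta}, \qquad c_k=\frac{1}{2\pi}\int_0^{2\pi}F(\zeta)\ex^{-ik\zeta}\,d\zeta .
\]
Smoothness makes the coefficients decay rapidly, so the series converges absolutely and uniformly.

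\emph{Aliasing identity.} For $\zeta_j=2\pi j/N$ we have $\frac{1}{N}\sum_{j=1}^N \ex^{ik\zeta_j}$ equal to $1$ when $N\mid k$ and $0$ otherwise. Absolute convergence lets us exchange sums. Since $\int_0^{2\pi}F=2\pi c_0$, this gives
\[
I_N-\int_0^{2\pi}F\,d\zeta = 2\pi\sum_{\ell\neq 0} c_{\ell N}.
\]

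\emph{Coefficient decay via contour shift.} Fix $0<b<a$ and $k>0$. Shift the contour $[0,2\pi]$ to $[0,2\pi]-ib$ in the formula for $c_k$. The vertical sides cancel by periodicity, and Cauchy's theorem applies on the closed rectangle, which lies inside the strip. On the shifted line $|\ex^{-ik\zeta}|=\ex^{-kb}$, so $|c_k|\le M\ex^{-kb}$. For $k<0$ shift in the opposite direction to get $|c_k|\le M\ex^{-|k|b}$. Letting $b\uparrow a$ is legitimate because $|F|\le M$ holds on the whole open strip, with the bound uniform. Hence $|c_k|\le M\ex^{-a|k|}$.

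\emph{Summation.} Combining the previous two steps,
\[
\Bigl|I_N-\int_0^{2\pi}F\Bigr|\le 2\pi\cdot 2M\sum_{\ell\ge1}\ex^{-a\ell N}=\frac{4\pi M}{\ex^{aN}-1}.
\]

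\emph{Optimality of $4\pi$.} The extremal function I would try first is $F(\zeta)=M\,\frac{\ex^{aN}-\ex^{-aN}}{\ldots}$-type, namely a function built so that its coefficients $c_{\pm\ell N}$ all equal $M\ex^{-a\ell N}$ with the same sign. Concretely, I would take the periodic Poisson-kernel-type function
\[
F(\zeta)=\sum_k \ex^{-b|k|N}\ex^{ikN\zeta}
\]
with $b\to a$ and $M$ normalized, so that every aliased term contributes with the same sign and the bound is attained asymptotically.

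The main obstacle is this sharpness claim, since normalizing $M$ exactly on the strip requires some care. The upper bound, by contrast, is routine once the contour-shift step is written out cleanly. Since the paper cites this as Theorem 3.2 of \cite{Trefethen2014exponentially}, the constant's optimality may reasonably be deferred to that reference, with the upper bound as the part actually used later.
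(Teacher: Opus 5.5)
Your upper bound is correct, and it is the standard argument behind the cited Theorem 3.2 of Trefethen--Weideman. The paper itself gives no proof of this lemma, only the citation. Aliasing gives $I_N-\int_0^{2\pi}F=2\pi\sum_{\ell\neq0}c_{\ell N}$, the contour shift gives $|c_k|\le M\mathrm{e}^{-a|k|}$, and summing the two geometric tails gives $4\pi M/(\mathrm{e}^{aN}-1)$. Nothing needs changing there, and this is the only part the paper actually uses (in the proof of Proposition~\ref{prop:choice_N}).

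The optimality argument, however, does not work. Your Poisson-kernel function is $F(\zeta)=\frac{1-r^2}{(1-r\mathrm{e}^{iN\zeta})(1-r\mathrm{e}^{-iN\zeta})}$ with $r=\mathrm{e}^{-bN}$. It has poles on $|\mathrm{Im}\,\zeta|=b$, so letting $b\to a$ sends $M\to\infty$. For $b>a$ one computes, with the supremum approached as $\zeta\to\pm ia$,
\[
\text{error}=\frac{4\pi}{\mathrm{e}^{bN}-1},\qquad M=\frac{1-\mathrm{e}^{-2bN}}{(1-\mathrm{e}^{-(b+a)N})(1-\mathrm{e}^{-(b-a)N})}.
\]
Hence error/bound $\le \mathrm{e}^{-(b-a)N}(1-\mathrm{e}^{-(b-a)N})\le 1/4$, so this family only shows the constant cannot be smaller than $\pi$.

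The underlying idea, making all aliased coefficients $c_{\pm\ell N}$ attain $M\mathrm{e}^{-a|\ell|N}$ at once, is unattainable: that would be the Poisson kernel at $b=a$, which is unbounded on the strip. The constant $4\pi=2\pi\cdot 2$ comes only from the two aliases $\ell=\pm1$. When $aN$ is large the others are negligible, so an extremal only needs $c_{\pm N}$ to nearly attain $M\mathrm{e}^{-aN}$.

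The standard extremal is $F(\zeta)=\cos(N\zeta)$. Every node gives $F(\zeta_k)=1$, so $I_N=2\pi$ while $\int_0^{2\pi}F=0$. Since $|\cos(N(x+iy))|^2=\cosh^2(Ny)-\sin^2(Nx)$, we get $M=\cosh(aN)$. The bound then equals $2\pi(\mathrm{e}^{aN}+\mathrm{e}^{-aN})/(\mathrm{e}^{aN}-1)$, and error/bound $=(\mathrm{e}^{aN}-1)/(\mathrm{e}^{aN}+\mathrm{e}^{-aN})\to1$ as $aN\to\infty$. So no constant below $4\pi$ works for all $a$ and $N$.
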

The trapezoidal rule is simple to implement and converges dramatically faster. Empirically, the convergence is often even faster than exponential (see, e.g., \cite{TCS-065, Trefethen2014exponentially}). To our knowledge, this phenomenon has not been explicitly exploited in differential privacy, and it suggests a general acceleration strategy for previous numerical accountants \citep{Koskela2020computing, gopi2021numerical}.
Based on Lemma \ref{lemma:trapezoidal_error}, we obtain the following practical choice of $N$.
\begin{proposition}[Practical Choice of $N$]
\label{prop:choice_N}
For any $N \in \NN$ satisfying
\begin{align} \label{eqn:choice_N}
    N \geq 2 \cdot (U + 1) + dL \log(2 \cdot dL) \quad \text{with} \quad d = \sqrt{- \frac{ \sum_{i=1}^{m} a_i^2 \sigma_{i}^2}{2 \log (\Delta/8)}},  
\end{align}
we have
\begin{equation*}
    \left|\frac{2 \pi}{N} \sum_{k=1}^{N} F (\zeta_{k}) - \int_{0}^{2\pi} F(\zeta) d \zeta \right| \leq \Delta/4, \quad \text{with} \quad \zeta_{k} = \frac{2 \pi k}{N}.
\end{equation*}
\end{proposition}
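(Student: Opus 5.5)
We will apply Lemma~\ref{lemma:trapezoidal_error} to $F$ in \eqref{eqn:F}. The plan has three parts: extend $F$ analytically to a horizontal strip $|\mathrm{Im}\,\zeta|<a$, bound $|F|$ there by some $M(a)$, and then pick $a$ so that $4\pi M(a)/(\ex^{aN}-1)\le \Delta/4$ under \eqref{eqn:choice_N}.

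\textbf{Analytic extension.} Each factor $f_{X_i}(a_iL\zeta)$ is a normalized theta series $\sum_u \ex^{iua_iL\zeta}\ex^{-u^2/(2\sigma_i^2)}$. For $\zeta=x+iy$ we have $|\ex^{iua_iL\zeta}|=\ex^{-ua_iLy}$, and this growth is dominated by the Gaussian weight, so the series converges locally uniformly on all of $\mathbb{C}$. Hence each factor is entire and $2\pi$-periodic, because $a_iL\in\NN$. The cosine sum is a trigonometric polynomial, so $F$ is entire and periodic.

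\textbf{Bounding $|F|$ on the strip.} For the cosine factor, $|\cos(\zeta t)|\le \cosh(ty)\le \ex^{|t|a}$. We will use $|t|\le U$ for every index in the sum, which holds as long as $|\lceil t_0L\rceil|\le U$; thresholds outside this range are trivial or are handled by symmetry. This gives a factor of at most $\frac{1}{2\pi}(U+1)\ex^{Ua}$, up to the count of terms, which is at most $2U+1$. For the characteristic function, the modulus is at most the real moment generating function:
\[
|\EE \ex^{i\zeta X}|\le \EE \ex^{-yX}=\prod_i \EE\ex^{-ya_iLX_i}\le \exp\Bigl(\tfrac{y^2L^2}{2}\sum_i a_i^2\sigma_i^2\Bigr).
\]
The last step is the sub-Gaussian mgf bound for the discrete Gaussian, $\EE \ex^{sX_i}\le \ex^{s^2\sigma_i^2/2}$ (Canonne et al.), which is the same input used for Proposition~\ref{prop:choice_U}. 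Writing $S=\sum_i a_i^2\sigma_i^2$, we obtain
\[
M\le \frac{2U+1}{2\pi}\,\exp\bigl(Ua+a^2L^2S/2\bigr).
\]

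\textbf{Choosing $a$.} The error is then at most about $2(2U+1)\exp\bigl(Ua+a^2L^2S/2-aN\bigr)$ once $aN\ge1$. We need the exponent to be at most $\log\bigl(\Delta/(8(2U+1))\bigr)$. Since $U\approx L\sqrt{-2S\log(\Delta/8)}$ and $d=\sqrt{-S/(2\log(\Delta/8))}$, we have $U=2dL\cdot(-\log(\Delta/8))$ roughly. This suggests taking $a=1$, or more precisely $a$ of order $1/(dL)$ scaled so that $a^2L^2S/2$ is comparable to $-\log(\Delta/8)$. With that choice, the condition $N-U-aL^2S/2\ge a^{-1}\bigl(-\log\Delta/8+\log(2U+1)\bigr)$ becomes, after substitution, $N\ge 2(U+1)+dL\log(2dL)$. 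Here the $2(U+1)$ absorbs both the $U$ from the cosine growth and the $L^2Sa/2$ term, and $dL\log(2dL)$ absorbs the $\log(2U+1)$ polynomial prefactor.

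\textbf{Main obstacle.} The hard part is this final bookkeeping: choosing $a$ so that the three contributions (cosine growth, mgf growth, and the polynomial prefactor) land exactly in the form of \eqref{eqn:choice_N}. We will try $a=1/(dL)$ first, i.e.\ $a^2L^2S=-2\log(\Delta/8)\cdot$const, then verify the inequality directly and adjust constants if needed. Everything else, analyticity and the mgf bound, is routine.
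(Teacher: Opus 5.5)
Your architecture matches the paper's: apply Lemma~\ref{lemma:trapezoidal_error} on the strip $|\mathrm{Im}\,\zeta|<1/(dL)$, bound the characteristic function by the real moment generating function and then by $\exp\bigl(S/(2d^2)\bigr)$ with $S=\sum_i a_i^2\sigma_i^2$, and bound each cosine by $\cosh$. The gap is in how you bound the cosine sum, and it prevents you from reaching the stated threshold. Bounding $\sum_t|\cos(\zeta t)|$ by (number of terms)$\times$(largest term) puts the prefactor $2U+1$ into $M$. With $a=1/(dL)$, the requirement $\ex^{aN}-1\ge 16\pi M/\Delta$ then becomes, up to an additive constant,
\[
N\ \ge\ U+\frac{LS}{2d}-dL\log(\Delta/8)+dL\log(2U+1)\ =\ U+L\sqrt{-2S\log(\Delta/8)}+dL\log(2U+1).
\]
The stated condition only supplies $2(U+1)+dL\log(2dL)$. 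Because $U$ is the smallest integer exceeding $L\sqrt{-2S\log(\Delta/8)}$, the only slack beyond the first two terms is $O(1)$. But $U\approx 2dL\log(8/\Delta)$, so $dL\log(2U+1)-dL\log(2dL)\approx dL\log\bigl(2\log(8/\Delta)\bigr)$, which is about $5\,dL$ at $\Delta=10^{-35}$. Your claim that $dL\log(2dL)$ ``absorbs'' the $\log(2U+1)$ prefactor is therefore false unless $dL\lesssim 1$. No other choice of $a$ rescues this for realistic parameter sizes, since the loss sits in $M$ itself. ``Adjusting constants'' is also not available, because the condition in the statement is fixed.

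The missing idea is to sum the cosine bounds as a geometric series instead of counting terms. Assume, as you do, that $|\lceil t_0L\rceil|\le U$, and split the sum into $t\ge 0$ and $t<0$. Using $\ex^z-1>z$,
\[
\sum_{t}|\cos(\zeta t)|\ \le\ 2\sum_{t=0}^{U}\ex^{t/(dL)}\ <\ \frac{2\,\ex^{(U+1)/(dL)}}{\ex^{1/(dL)}-1}\ <\ 2dL\,\ex^{(U+1)/(dL)}.
\]
The effective number of terms is thus $dL$ rather than $2U+1$, which gives $M=\frac{dL}{\pi}\exp\bigl(\frac{U+1}{dL}+\frac{S}{2d^2}\bigr)$. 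Then $\ex^{N/(dL)}\ge 1+16\pi M/\Delta$ holds once $N\ge U+2+\frac{LS}{2d}-dL\log(\Delta/8)+dL\log(2dL)$. With the stated $d$, the two middle terms are equal and sum to $L\sqrt{-2S\log(\Delta/8)}\le U$, which yields exactly \eqref{eqn:choice_N}. This is the paper's proof. The rest of your plan is sound: the analytic extension (which you justify more explicitly than the paper does), the sub-Gaussian mgf bound, and the choice $a=1/(dL)$.
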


\paragraph{Step 4: Sieve-type truncation.}
The remaining challenge is that evaluating $F(\zeta)$ at all $N$ nodes may be computationally expensive. With the stringent tolerance level $\Delta = 10^{-35}$ used in our experiments in Section \ref{sec:epsilondp}, the required $N$ is often $\gg 10^{6}$, depending on the choice of $k$ and $l$. To improve efficiency, we introduce an additional truncation of the trapezoidal sum by identifying a subset $\cC \subseteq [N]$ that satisfies the following desiderata:
First, $F(\zeta_k)$ is negligible for all $k \in [N]\setminus \cC$. Specifically, we want $$\frac{2 \pi}{N} \sum_{k \in [N] \setminus \cC} F (\zeta_{k}) \leq \Delta/4,$$ so that $\frac{2\pi}{N}\sum_{k \in \cC} F(\zeta_k)$ serves as an accurate approximation to $\frac{2\pi}{N}\sum_{k=1}^{N} F(\zeta_k)$.
Second, the set $\cC$ is as small as possible, so that evaluating $\frac{2\pi}{N}\sum_{k \in \cC} F(\zeta_k)$ is substantially faster than evaluating the full sum over $[N]$. 
Finally, the procedure for constructing $\cC$ is itself computationally efficient.

To this end, we propose Algorithm~\ref{alg:char_filter} to construct $\cC \subseteq [N]$, which is theoretically motivated by sieve methods in number theory. In our implementation, standard algorithmic tools (e.g., dynamic programming and binary search) are used to accelerate the selection of $\cC$. Let $X_{(i)} \sim \cN_{\ZZ}(0,\sigma_{(i)}^2)$ denote the random variable among $\{X_i \sim \cN_{\ZZ}(0,\sigma_i^2)\}_{i=1}^{m}$ with the $i$th largest noise parameter $\sigma_i^2$. Let $a_{(i)}$ be the corresponding weight in $\{a_i\}_{i=1}^{m}$, and write $f_{X_{(i)}}(a_{(i)}L\zeta)$ for the characteristic function of $a_{(i)} L X_{(i)}$.
Algorithm~\ref{alg:char_filter} proceeds as follows. Starting from $X_{(1)}$ with noise parameter $\sigma_{(1)}^2$, we discard all nodes $k \in [N]$ for which the characteristic function is sufficiently small, namely those satisfying $f_{X_{(1)}}(a_{(1)} L \zeta) < \Delta/(8 \cdot U)$. 
By the argument in Section \ref{sec:proof_method}, for any $\zeta$ satisfying $\bigl|f_{X_{(1)}}(a_{(1)}L\zeta)\bigr| < \Delta/(8 \cdot U)$, we have $2\pi \cdot |F(\zeta)| \le \Delta/4$. The algorithm then repeats the same filtering step sequentially for $X_{(2)},\ldots,X_{(m)}$, and the surviving indices constitute the final set $\cC$. 

\vspace{1mm}
\begin{algorithm}[!htp]
\SetAlgoLined
\KwIn{Number of nodes $N$; collection of all nodes $\{{2 \pi k}/{N}: k \in [N]\}$; discrete Gaussian distribution $X_{(i)}$ with $i$-th largest parameter $\sigma_{(i)}^2$; $f_{X_{(i)}}(a_{(i)} L \zeta)$ to be characteristic function of $a_{(i)} L X_{(i)}$.}
\vspace{1.5mm}

Candidate Nodes = [N] \hfill \texttt{//} initialization
\vspace{1.5mm}

\For{$i = 1$ \KwTo $m$}{
    Search for the smallest $a$ such that $\zeta^{(i)} = a \cdot \frac{\pi}{a_{(i)} L}$ satisfies $\left| f_{X_{(i)}}(a_{(i)} L \zeta^{(i)}) \right| < \frac{\Delta}{8 U}$\;
    \tcp{If $X_{(i)}$ has $r$ i.i.d.\ copies among $\{X_{j}\}_{j=1}^{m}$, then replace the test by $\left| f_{X_{(i)}}(a_{(i)} L \zeta^{(i)}) \right|^{r} < \frac{\Delta}{8 U}$.}
    \vspace{3mm}
    Update Candidate Nodes by \hfill \tcp{Keep ONLY the nodes that are close to peak}
    \vspace{-4mm}
    \begin{equation} \label{eqn:update_node}
    \begin{split}
        &\text{Candidate Nodes} = \text{Candidate Nodes}
        \\
        &\qquad \qquad \bigcap \left\{k \in [N]: \frac{2 \pi k}{N} \in \bigcup_{0 \leq j \leq a_{(i)} L} \left[ \frac{2 \pi}{a_{(i)} L} \cdot j - \zeta^{(i)}, \frac{2 \pi}{a_{(i)} L} \cdot j + \zeta^{(i)} \right] \right\}.
    \end{split}
    \end{equation}
}
\vspace{1.5mm}

\KwOut{$\cC =$ Candidate Nodes}
\caption{\textsc{Truncation}: computationally efficient to find $\cC$.}
\label{alg:char_filter}
\end{algorithm}


\vspace{1em}
Mathematically, $\cC$ is defined to be 
\begin{align*}
    \cC = \bigcap_{i=1}^{m} \bigcup_{0 \leq j \leq a_{(i)} L} \left[ \frac{2 \pi}{a_{(i)} L} \cdot j - \zeta^{(i)}, \frac{2 \pi}{a_{(i)} L} \cdot j + \zeta^{(i)} \right]
\end{align*}
The following proposition concludes Algorithm \ref{alg:char_filter} by showing that restricting the trapezoidal sum to the indices in $\cC$ incurs an error of at most $\Delta/4$. The proof of Proposition \ref{prop:C} is deferred to Appendix \ref{sec:proof_char_filter}.

\begin{proposition} \label{prop:C}
With $\cC$ output by Algorithm \ref{alg:char_filter}, we have
\begin{align*}
    \left| \frac{2 \pi}{N} \sum_{k=1}^{N} F (\zeta_{k}) - \frac{2 \pi}{N} \sum_{k \in \cC} F (\zeta_{k}) \right| \leq \Delta/4, \quad \text{with} \quad \zeta_{k} = \frac{2 \pi k}{N}.
\end{align*}
\end{proposition}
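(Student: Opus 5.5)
The plan is to bound each discarded node individually and then sum. The discarded indices are those $k \in [N]\setminus \cC$. Every such $k$ was removed at some first stage $i$ of Algorithm~\ref{alg:char_filter}. We show that at every removed node, $|F(\zeta_k)|$ is at most $\Delta/(8\pi)$. The total error is then at most $\frac{2\pi}{N}\cdot N \cdot \frac{\Delta}{8\pi} = \Delta/4$. Note this per-node bound is what the paper calls $2\pi|F(\zeta)|\le \Delta/4$ in the algorithm description.

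The pointwise bound for $F$ follows from its structure in \eqref{eqn:F}. We have
\[
|F(\zeta)| \le \frac{1}{2\pi}\Bigl|\sum_{t=\lceil t_0 L\rceil}^{U}\cos(\zeta t)\Bigr| \prod_{j=1}^m |f_{a_jLX_j}(\zeta)|.
\]
Each characteristic function has modulus at most $1$. The cosine sum has at most $U-\lceil t_0L\rceil+1$ terms. Since $t_0$ in our applications is at least of order $-U$ (otherwise the lower truncation is handled symmetrically by Proposition~\ref{prop:choice_U}), this is at most $2U$; one should check this constant against the $8U$ in the threshold. Hence
\[
|F(\zeta)| \le \frac{2U}{2\pi}\,|f_{X_{(i)}}(a_{(i)}L\zeta)|,
\]
keeping only the $i$-th factor, or the $r$-th power of it when there are $r$ i.i.d.\ copies. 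So it suffices to show that at a node removed at stage $i$, $|f_{X_{(i)}}(a_{(i)}L\zeta)| < \Delta/(8U)$, with the power-$r$ variant. That gives $2\pi|F|\le \Delta/4$ as claimed.

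The key step, and the main obstacle, is showing that the characteristic-function threshold transfers from the single point $\zeta^{(i)}$ to the whole complement of the peak windows. Here $\phi(\theta)=f_{X_{(i)}}(\theta)$ is $2\pi$-periodic in $\theta=a_{(i)}L\zeta$, so in $\zeta$ it has period $2\pi/(a_{(i)}L)$ with peaks at $2\pi j/(a_{(i)}L)$. A node outside every window $[2\pi j/(a_{(i)}L)-\zeta^{(i)},\,2\pi j/(a_{(i)}L)+\zeta^{(i)}]$ for $0\le j\le a_{(i)}L$ corresponds, modulo $2\pi$, to $\theta$ with $a_{(i)}L\zeta^{(i)}\le|\theta|\le\pi$ on $[-\pi,\pi]$; the windows cover $[0,2\pi]$ endpoints. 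We therefore need $|\phi(\theta)|$ to be nonincreasing in $|\theta|$ on $[0,\pi]$.

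To obtain this monotonicity, use Poisson summation (Jacobi theta). The function $\phi(\theta)$ is proportional to $\sum_{k\in\ZZ}\exp(-\sigma^2(\theta-2\pi k)^2/2)$, which is positive, even, and decreasing on $[0,\pi]$. This follows from the product formula for $\vartheta_3$, $\prod_n(1+2q^{2n-1}\cos\theta+q^{4n-2})$ with $0<q<1$, each factor being positive and decreasing in $\theta\in[0,\pi]$. Consequently $|\phi(\theta)|\le|\phi(a_{(i)}L\zeta^{(i)})|<\Delta/(8U)$ there, by the choice of $\zeta^{(i)}$ in the algorithm, which is searched as the smallest grid point meeting the threshold. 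Combining this with the pointwise bound and summing over discarded nodes completes the proof.
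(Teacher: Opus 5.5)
Your proposal is correct and follows the same route as the paper's proof. Both arguments establish a per-node bound $2\pi|F(\zeta_k)|\le\Delta/4$ by bounding the cosine sum by $2U$ and the product of characteristic functions by the factor (or its $r$-th power) that caused removal, and then sum over the discarded nodes. The differences are that you prove the needed monotonicity of $|f_{X_{(i)}}|$ on $[0,\pi]$ yourself via the Jacobi triple product, where the paper leans on Lemma~\ref{lemma:char_property}, and that you state explicitly the condition $\lceil t_0L\rceil\ge 1-U$ behind the $2U$ count, which the paper's proof uses without comment.
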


We now provide further intuition behind Algorithm \ref{alg:char_filter}. We begin with several basic properties of the characteristic function $f_{X_i}(\zeta)$ for all $i$.
\begin{lemma}
\label{lemma:char_property}
The characteristic function $f_{X_i}(\zeta)$ satisfies the following properties:
\begin{enumerate}
    \item $|f_{X_i}(\zeta)| \le 1$, and $f_{X_i}(\zeta)$ attains its maximum at $\zeta=0$ with $f_{X_i}(0)=1$.
    \item $f_{X_i}(\zeta)$ is $2\pi$-periodic; consequently, $f_{X_i}(a_i L \zeta)$ is periodic with period $2\pi/(a_i L)$.
    \item $f_{X_i}(\zeta)$ is strictly increasing on $(-\pi,0)$ and strictly decreasing on $(0,\pi)$.
\end{enumerate}
\end{lemma}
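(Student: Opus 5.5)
The plan is to prove the three properties separately. Items 1 and 2 follow directly from the series definition. For item 3, I will rewrite the characteristic function as a product of factors that are each monotone in $\cos\zeta$.

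Write $q=\ex^{-1/(2\sigma_i^2)}\in(0,1)$ and $Z=\sum_{u\in\ZZ}q^{u^2}$. Then
\[
f_{X_i}(\zeta)=Z^{-1}\sum_{u\in\ZZ}q^{u^2}\ex^{iu\zeta}=Z^{-1}\Bigl(1+2\sum_{u\ge1}q^{u^2}\cos(u\zeta)\Bigr).
\]
This shows $f_{X_i}$ is real and even. The series converges absolutely and uniformly, so all manipulations below are justified.

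For item 1, the triangle inequality gives $|f_{X_i}(\zeta)|\le Z^{-1}\sum_u q^{u^2}=1$. Setting $\zeta=0$ gives $f_{X_i}(0)=1$.

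For item 2, every exponent $u\zeta$ has $u\in\ZZ$, so each term $\ex^{iu\zeta}$ is $2\pi$-periodic, and hence so is $f_{X_i}$. Rescaling the argument then shows that $\zeta\mapsto f_{X_i}(a_iL\zeta)$ has period $2\pi/(a_iL)$.

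Item 3 is the only real content; the other two are immediate. I would invoke the Jacobi triple product identity, taking $z=\ex^{i\zeta}$:
\[
\sum_{u\in\ZZ}q^{u^2}z^{u}=\prod_{n\ge1}(1-q^{2n})(1+q^{2n-1}z)(1+q^{2n-1}z^{-1}).
\]
Pairing the last two factors gives, for real $\zeta$,
\[
(1+q^{2n-1}z)(1+q^{2n-1}z^{-1}) = 1+2q^{2n-1}\cos\zeta+q^{4n-2}.
\]
Each such factor is at least $(1-q^{2n-1})^2>0$, and each is a strictly increasing function of $\cos\zeta$. The infinite product converges to a positive limit because $\sum_n q^{2n-1}<\infty$.

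Now compose with $\cos\zeta$, which is strictly decreasing on $(0,\pi)$. Each factor is then strictly decreasing there, and a product of positive strictly decreasing factors (with convergent product) is strictly decreasing. Hence $f_{X_i}$ is strictly decreasing on $(0,\pi)$. By evenness, $f_{X_i}$ is strictly increasing on $(-\pi,0)$.

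This also gives $f_{X_i}>0$ everywhere. Combined with item 1, it confirms that on one period the maximum is attained only at $\zeta=0$.

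The main obstacle is justifying strict monotonicity of an infinite product. I would handle it by taking logarithms, $\log f_{X_i}=\sum_n\log(1+2q^{2n-1}\cos\zeta+q^{4n-2})+\text{const}$, where the series converges uniformly. Its derivative in $\zeta$ is then
\[
-\sum_n \frac{2q^{2n-1}\sin\zeta}{1+2q^{2n-1}\cos\zeta+q^{4n-2}},
\]
which is strictly negative on $(0,\pi)$ since every term is.

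An alternative route, if one prefers to avoid the triple product, is Poisson summation. It expresses $f_{X_i}$ as a normalized periodized Gaussian $\sum_k \ex^{-\sigma_i^2(\zeta+2\pi k)^2/2}$, and one then shows its derivative is negative on $(0,\pi)$ by pairing the terms for $k$ and $-1-k$. I expect the triple-product argument to be cleaner.
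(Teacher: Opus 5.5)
Your argument is correct and self-contained, which the paper's is not: the paper gives no proof of this lemma. It restates the lemma in the appendix as Proposition A.6 of \cite{su20242020}, defers to that reference, and displays only the Poisson-summation form \eqref{eqn:char_func}.

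Items 1 and 2 are immediate from either series, and \eqref{eqn:char_func} also makes positivity obvious. Strict monotonicity on $(0,\pi)$, however, is visible from neither series. The terms $\cos(u\zeta)$ with $u\ge 2$ are not monotone there, and the periodized Gaussian bumps have derivatives of both signs. The Jacobi triple product removes this difficulty: it writes $f_{X_i}$ as a convergent product of positive factors $1+2q^{2n-1}\cos\zeta+q^{4n-2}$, each increasing in $\cos\zeta$. Your log-derivative then makes the infinite-product step rigorous. So would simply comparing partial products: for $0<\zeta_1<\zeta_2<\pi$, their ratio is bounded by the first factor's ratio, which is already $<1$.

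Keep the by-product $f_{X_i}>0$ explicit. The paper's downstream uses of item 3 need $|f_{X_i}|$, not just $f_{X_i}$, to be decreasing on $(0,\pi)$. These are the radius $\zeta^{(i)}$ in Algorithm~\ref{alg:char_filter} and the step $|f_{S_m}(\zeta)|\le|f_{S_m}(c\sqrt{\log m})|$ in the proof of Lemma~\ref{lemma: omega4}.

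The fallback you mention via Poisson summation would not work as sketched. Pairing the terms $k$ and $-1-k$ of $\sum_k \mathrm{e}^{-\sigma_i^2(\zeta+2\pi k)^2/2}$ does not make each pair's derivative negative. For $k=0$ the pair contributes $-\sigma_i^2\bigl[\zeta\,\mathrm{e}^{-\sigma_i^2\zeta^2/2}-(2\pi-\zeta)\,\mathrm{e}^{-\sigma_i^2(2\pi-\zeta)^2/2}\bigr]$. This is positive for all sufficiently small $\zeta>0$, because the bump centered at $0$ is flat there while the one centered at $2\pi$ is not. It is compensated only by the $k=1$ term. That route would therefore need a non-termwise argument, and the triple product is the right choice.
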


Lemma \ref{lemma:char_property} shows that $|f_{X_i}(a_i L \zeta)|$ exhibits periodic peaks at $\frac{2\pi}{a_{(i)} L} \cdot k$ for all integers $k$. The key idea of Algorithm \ref{alg:char_filter} is therefore to sieve (i.e., discard) those nodes at which at least one characteristic function $f_{X_{(i)}}(a_{(i)} L \zeta)$ is sufficiently small, exploiting this periodic structure.
In each iteration, we first determine a radius $\zeta^{(i)}$ such that $|f_{X_i}(a_i L \zeta)| < \Delta/(8 \cdot U)$ whenever $\zeta \notin [0,\zeta^{(i)}]$. By periodicity, any $\zeta_k$ whose distance to one of the points $\frac{2 \pi}{a_{(i)} L} \cdot j$ is greater than $\zeta^{(i)}$ satisfies $|f_{X_i}(a_i L \zeta)| < \Delta/(8 \cdot U)$, which leads to the update rule in \eqref{eqn:update_node}.
In Section \ref{sec:proof_method}, we show that removing all nodes $\zeta_k$ outside the union described in \eqref{eqn:update_node} incurs an error of at most $\Delta/4$. Hence, Algorithm \ref{alg:char_filter} enables the efficient yet rigorous elimination of nodes whose values of $F$ are negligible.

We now provide further intuition for starting the iteration with $X_{(1)}$, which has the largest noise parameter $\sigma_{(1)}^2$.
When $N$, or the number of Candidate Nodes, is large, it can be computationally expensive to identify all nodes whose distance to the set $\{\frac{2 \pi}{a_{(i)} L} \cdot j: 0 \leq j \leq a_{(i)} L \}$ is less than $\zeta^{(i)}$. We therefore begin with $X_{(1)}$, corresponding to the largest $\sigma_{(1)}^2$.
Observe that a larger noise parameter $\sigma_{(1)}^2$ corresponds to a smaller weight $a_{(1)}$. Consequently, the update rule in \eqref{eqn:update_node} involves the smallest number of periodic unions, each with the largest interval length. Empirically, this ordering provides the fastest reduction in the set of Candidate Nodes.

Algorithm \ref{alg:char_filter} substantially accelerates the computation. Empirically, the dominant computational cost lies in evaluating the function $F(\zeta)$. Under the allocation in Table \ref{tab:PLB_typical}, computing $\int_{-\pi}^{\pi} F(\zeta)d\zeta$ without Algorithm \ref{alg:char_filter} would require evaluating $F(\zeta_k)$ at all $k \in [N]$ with $N = 10^{6}$. Consequently, each subplot in Figure \ref{fig:eps_delta_paths_bypass} would involve more than $4 \times 10^{7}$ evaluations of $F(\zeta)$ at high precision, which is infeasible on a single CPU.
With Algorithm \ref{alg:char_filter}, our method evaluates $F(\zeta_k)$ only for $k \in \cC$, where $|\cC| = 203$. As a result, each subplot in Figure \ref{fig:eps_delta_paths_bypass} requires only about $8{,}120$ evaluations of $F(\zeta)$, and can be completed within approximately 0.5 hours.

We now summarize the entire procedure in the following theorem.
\begin{theorem}
\label{thm:method}
Equation \eqref{eqn:conv} can be numerically evaluated with error tolerance $\Delta$ using 
\begin{equation}
    \left| \mathbb{P}_{X_{i}\sim\mathcal{N}_{\mathbb{Z}}(0,\sigma_i^2)}\left(\sum_{i=1}^{m} a_i X_{i} > t_0 \right) - \frac{2 \pi}{N} \sum_{k \in \cC} F (\zeta_{k}) \right| \leq \Delta \quad \text{with} \quad \zeta_{k} = \frac{2 \pi k}{N},
\end{equation}
where
\begin{align*}
    F(\zeta)
    =\ & \frac{1}{2 \pi} \left[ \sum_{ t = \lceil t_0 \cdot L \rceil}^{U} \cos(\zeta t) \right] \cdot \EE \ex^{i \zeta X} = \frac{1}{2 \pi} \cdot \text{weight}(\zeta) \cdot \text{char}(\zeta),
\end{align*}
and $\text{weight}(\zeta)$ and $\text{char}(\zeta)$ have an explicit evaluable expression as the following when $\zeta \neq 0$ or $2\pi$.
\begin{align*}
    &\text{weight}(\zeta) = \sum_{ t = \lceil t_0 \cdot L \rceil}^{U} \cos(\zeta t) = \frac{1}{2} \left[\cos( \left\lceil t_0 L \right\rceil \cdot \zeta) + \cos(U \cdot \zeta) + \frac{\cos(\zeta/2)}{\sin(\zeta/2)} \left(\sin(U \cdot \zeta) - \sin(\left\lceil t_0 L \right\rceil \cdot \zeta) \right) \right],
    \\
    &\text{char}(\zeta) = \EE \ex^{i \zeta X} = \prod_{i=1}^{m} f_{X_{i}}(a_i L \zeta), \quad \text{with} \quad f_{X_{i}}(a_i L \zeta) = \left\{ \sum_{x_{i} = -\infty}^{\infty} \ex^{i a_i L \zeta x_i} \cdot \ex^{-x_{i}^2/2 \sigma_{i}^2} \right\} \cdot \left\{ \sum_{x_{i} = -\infty}^{\infty} \ex^{-x_{i}^2/2 \sigma_{i}^2} \right\}^{-1}.
\end{align*} 
Here $U$, $N$, and $\cC$ are chosen according to Propositions \ref{prop:choice_U}, \ref{prop:choice_N}, and \ref{prop:C}, respectively.
\end{theorem}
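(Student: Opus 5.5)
The plan is to chain the four approximation steps with the triangle inequality, so that each step contributes at most $\Delta/4$. Write $P=\PP(\sum_i a_iX_i>t_0)$, $P_U=\PP(U\ge \sum_i a_iLX_i>t_0L)$, $I=\int_0^{2\pi}F(\zeta)\,d\zeta$, $I_N=\frac{2\pi}{N}\sum_{k=1}^N F(\zeta_k)$ and $I_{\cC}=\frac{2\pi}{N}\sum_{k\in\cC}F(\zeta_k)$. Then
\begin{align*}
|P-I_{\cC}|\le |P-P_U|+|P_U-I|+|I-I_N|+|I_N-I_{\cC}| .
\end{align*}
The first term is at most $\Delta/4$ by Proposition \ref{prop:choice_U}. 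The second term vanishes by Proposition \ref{prop:fourier}, since $F$ is $2\pi$-periodic and the integrals over $[-\pi,\pi]$ and $[0,2\pi]$ agree. The third term is at most $\Delta/4$ by Proposition \ref{prop:choice_N}, and the fourth is at most $\Delta/4$ by Proposition \ref{prop:C}. Together these give a total of at most $3\Delta/4\le\Delta$. The slack also absorbs floating-point error in evaluating the $|\cC|$ summands, provided they are computed at sufficient precision.

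Next I verify the explicit formula for $\text{weight}(\zeta)$. Write $A=\lceil t_0L\rceil$. I would use the Dirichlet-kernel identity
\begin{align*}
\sum_{t=A}^{U}\cos(\zeta t)=\frac{\sin((U+\tfrac12)\zeta)-\sin((A-\tfrac12)\zeta)}{2\sin(\zeta/2)},
\end{align*}
which follows by telescoping $2\sin(\zeta/2)\cos(\zeta t)=\sin((t+\tfrac12)\zeta)-\sin((t-\tfrac12)\zeta)$. I then expand $\sin((U+\tfrac12)\zeta)=\sin(U\zeta)\cos(\zeta/2)+\cos(U\zeta)\sin(\zeta/2)$, and similarly for $(A-\tfrac12)\zeta$. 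Dividing by $2\sin(\zeta/2)$ gives $\tfrac12[\cos(U\zeta)+\cos(A\zeta)]+\tfrac{\cos(\zeta/2)}{2\sin(\zeta/2)}(\sin U\zeta-\sin A\zeta)$, which is the stated expression. The formula is valid whenever $\sin(\zeta/2)\neq0$, that is, for $\zeta\neq 0,2\pi$ on the node grid. At those points I use the direct value $U-A+1$.

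For $\text{char}(\zeta)$, independence of the $X_i$ gives $\EE \ex^{i\zeta X}=\prod_i f_{a_iLX_i}(\zeta)=\prod_i f_{X_i}(a_iL\zeta)$. The closed form of $f_{X_i}$ comes directly from the mass function \eqref{eqn:DG}, as already computed in Step 2.

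The only real content lies in the cited propositions, so there is no substantive obstacle here. The step most worth double-checking is that the hypotheses of the four propositions hold simultaneously for the same $U$, $N$ and $\cC$. In particular, $N$ from Proposition \ref{prop:choice_N} is defined in terms of the $U$ from Proposition \ref{prop:choice_U}, and Algorithm \ref{alg:char_filter} uses that same $U$ in its threshold $\Delta/(8U)$. I would also confirm that the sign convention $\sum_{k\in[N]\setminus\cC}$ in Proposition \ref{prop:C} is stated in absolute value, which it is.
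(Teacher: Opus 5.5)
Your proposal is correct and follows the paper's route. The paper gives no separate argument for Theorem~\ref{thm:method} beyond chaining Propositions~\ref{prop:choice_U}, \ref{prop:fourier}, \ref{prop:choice_N} and~\ref{prop:C} by the triangle inequality (Step~2 being exact), and your Dirichlet-kernel derivation of $\text{weight}(\zeta)$ fills in algebra the paper leaves implicit.

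There is one caveat, which you inherit from the paper rather than introduce. The lower summation limit that makes Proposition~\ref{prop:fourier} exact is $\lfloor t_0L\rfloor+1$, which differs from $\lceil t_0L\rceil$ precisely when $t_0L\in\ZZ$. In that case the formula gives $\PP(U\ge X\ge t_0L)$ instead of $\PP(U\ge X>t_0L)$. In addition, the bounds behind Propositions~\ref{prop:choice_N} and~\ref{prop:C} tacitly require $\lceil t_0L\rceil\ge 1-U$. Your ``hypotheses hold simultaneously'' check should record both of these restrictions on $t_0$.
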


Theorem \ref{thm:method} has value beyond the specific application to the 2020 Census DHC File. It introduces a novel numerical method for privacy accounting that provably achieves exponential convergence, representing a substantial improvement over first-order methods \citep{koskela2021tight, gopi2021numerical} and existing higher-order approaches \citep{su20242020}, making it broadly applicable to a wide range of privacy accounting problems. 
Moreover, for privacy accounting tasks that require a large number of nodes, strategies analogous to Algorithm \ref{alg:char_filter} can be employed to dramatically reduce computational cost.

\section{Results}
\label{sec:results}

We extract the privacy budget allocations for all directed paths from the 2020 Census DHC File. Although there are millions of directed paths \citep{abowd2022Census}, the 2020 Census DHC File employs only 43 distinct privacy budget allocations across all paths.
Section \ref{sec:method} provides an efficient and practical procedure for computing $f_{kl}$ and $(\varepsilon_{kl}, \delta)$ for any pair of directed paths $k$ and $l$. Recall from Proposition \ref{prop:parallel_comp} that the overall privacy guarantee is given by $(\min_{k,l} f_{kl})^{**}$-DP and $(\max_{k,l} \varepsilon_{kl}, \delta)$-DP.
To compute these quantities, we leverage parallel computation on a CPU-based infrastructure and launch $43 \times 42/2 + 43 = 946$ jobs simultaneously, where each job evaluates the privacy level $f_{kl}$ or $(\varepsilon_{kl}, \delta)$ for a given pair $k,l \in [43]$. 
For the $f$-DP guarantee, we compute the lower convex envelope of $\min_{k,l} f_{kl}$ using the monotone chain algorithm, with results presented in Section \ref{sec:fdp}.
The overall $(\varepsilon, \delta)$-DP curve of the 2020 Census DHC File is then obtained by taking the pointwise maximum over $(\varepsilon_{kl}, \delta)$ across all pairs $(k,l)$ and the results are reported in Section \ref{sec:epsilondp}.
In practice, the U.S. Census Bureau tests multiple privacy budgets during the tuning stage and evaluates their performance on downstream tasks \citep{censusDemonstrationData, censusjsm, censusAnnouncing2030, censusDecennialCensus}. Motivated by our empirical findings, Section~\ref{sec:speedup} provides practical guidance for efficiently applying our method during privacy budget tuning, thereby avoiding repeated full-scale parallel computation of the overall privacy accounting.

\begin{figure}[!htp] 
\centering 
\begin{subfigure}[b]{0.4\textwidth}
\includegraphics[width=\textwidth]{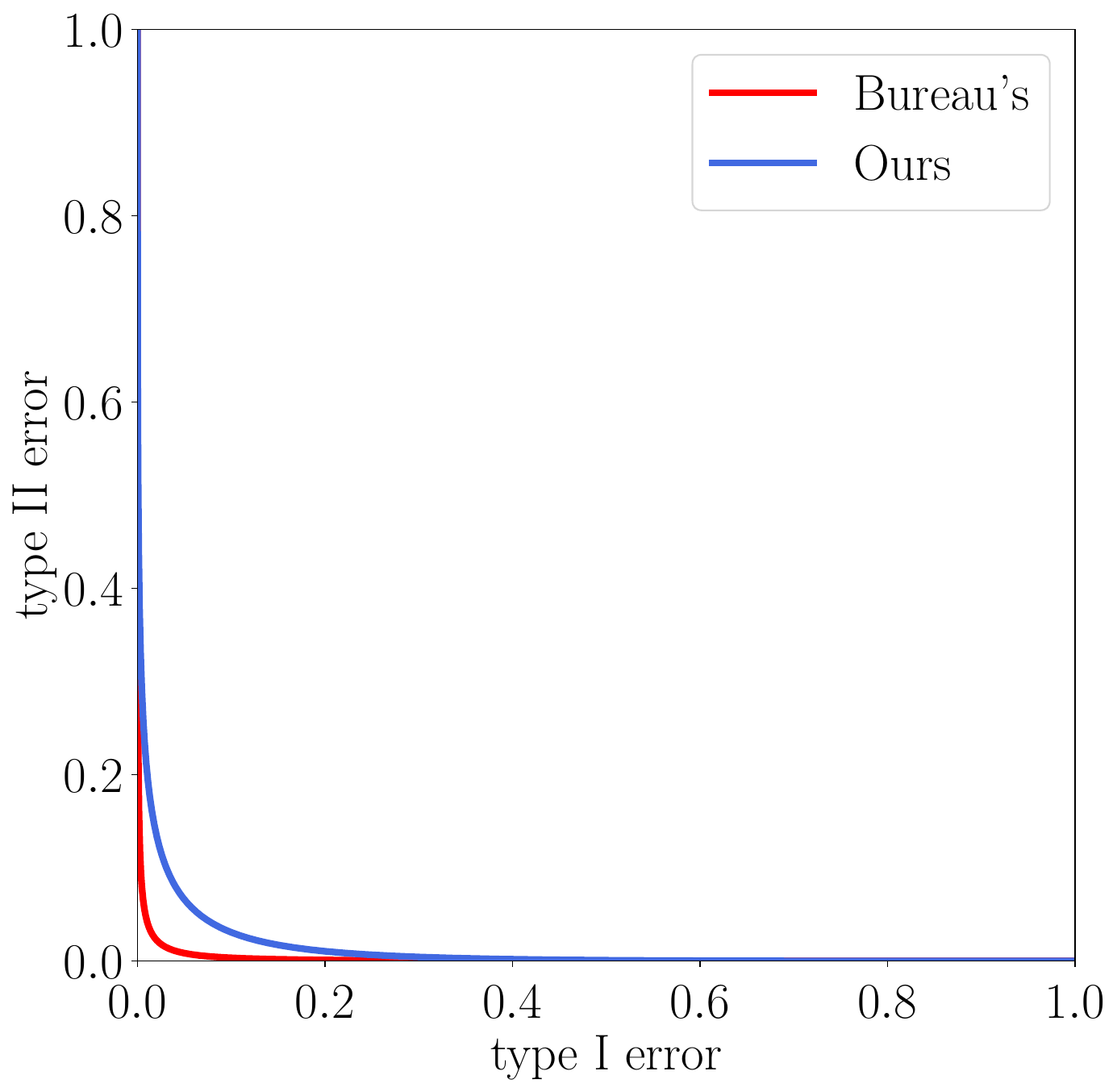} 
\end{subfigure}
\begin{subfigure}[b]{0.4\textwidth}
\includegraphics[width=\textwidth]{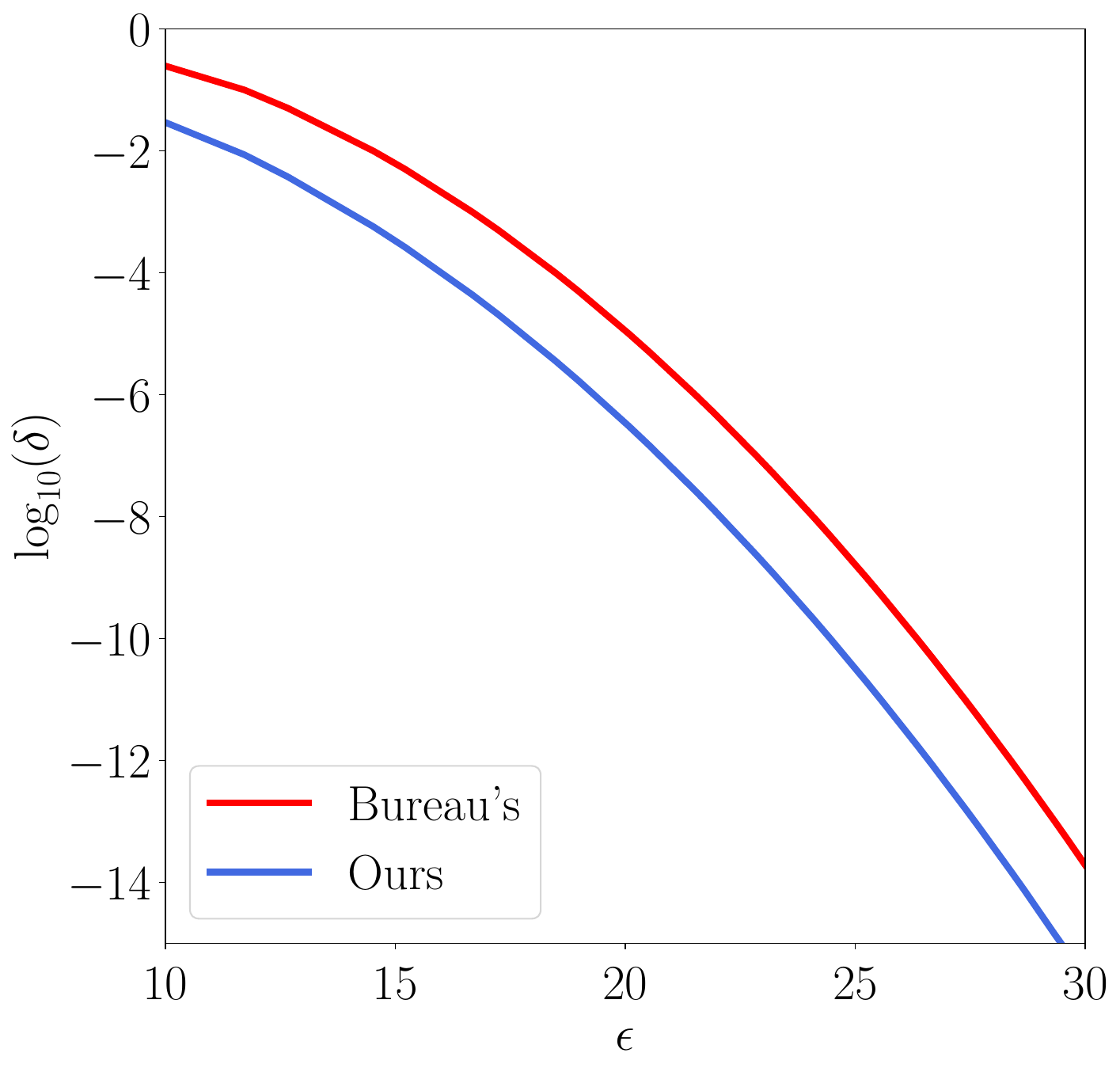}
\end{subfigure}
\caption{The $f$-DP curves and $(\varepsilon, \delta)$-DP curve for the 2020 Census DHC File computed using our method (blue) and the Census Bureau's accounting method (red).} 
\label{fig:overall-privacy} 
\end{figure}

\subsection{\texorpdfstring{$f$}{}-DP curves}
\label{sec:fdp}

The left panel of Figure \ref{fig:overall-privacy} characterizes the overall privacy guarantee of the 2020 Census DHC File in terms of $f$-DP curves. The blue curve represents the exact $f$-DP guarantee computed using our method, while the red curve is obtained by converting the $\rho$-zCDP guarantee to $f$-DP using AutoDP \citep{wang2019subsampled, zhu2019poission, zhu2020improving}. Under the same mechanism and injected noise, our method yields a strictly better privacy–utility trade-off.
Figure \ref{fig:trade_off_paths_bypass} presents 43 subplots illustrating the privacy guarantees of the composed mechanisms $[\widetilde{\mathbf{M}}_0, \widetilde{\mathbf{M}}_l]$, where $\widetilde{\mathbf{M}}_0$ corresponds to the privacy budget allocation in Table \ref{tab:PLB_typical}. The privacy guarantees of all composed mechanisms $[\widetilde{\mathbf{M}}_k, \widetilde{\mathbf{M}}_l]$ are available at \url{https://github.com/BuxinSu/Exact-Privacy-Accounting-for-2020-U.S.-Census.git}. Although there are 946 distinct composed mechanisms, their privacy profiles are different yet highly similar, as illustrated in Figure \ref{fig:trade_off_paths_bypass}.

\newlength{\cellh}
\setlength{\cellh}{2.5cm}

\newcommand{\tradeofflm}[1]{
  \parbox[c][\cellh][c]{0.16\textwidth}{\centering
    \includegraphics[width=\linewidth,height=\cellh,keepaspectratio]{Figures/v2/trade_off_curve_path_#1_to_13.pdf}
  }
}
\newcommand{\tradeoffsm}[1]{
  \parbox[c][\cellh][c]{0.16\textwidth}{\centering
    \includegraphics[width=\linewidth,height=\cellh,keepaspectratio]{Figures/v2/trade_off_curve_path_13_to_#1.pdf}
  }
}
\newcommand{\tradeoffbigfig}[1]{
  \parbox[t][2\cellh][t]{0.32\textwidth}{\centering\vspace{-0.85cm}
  \includegraphics[width=\linewidth,height=2\cellh,keepaspectratio]{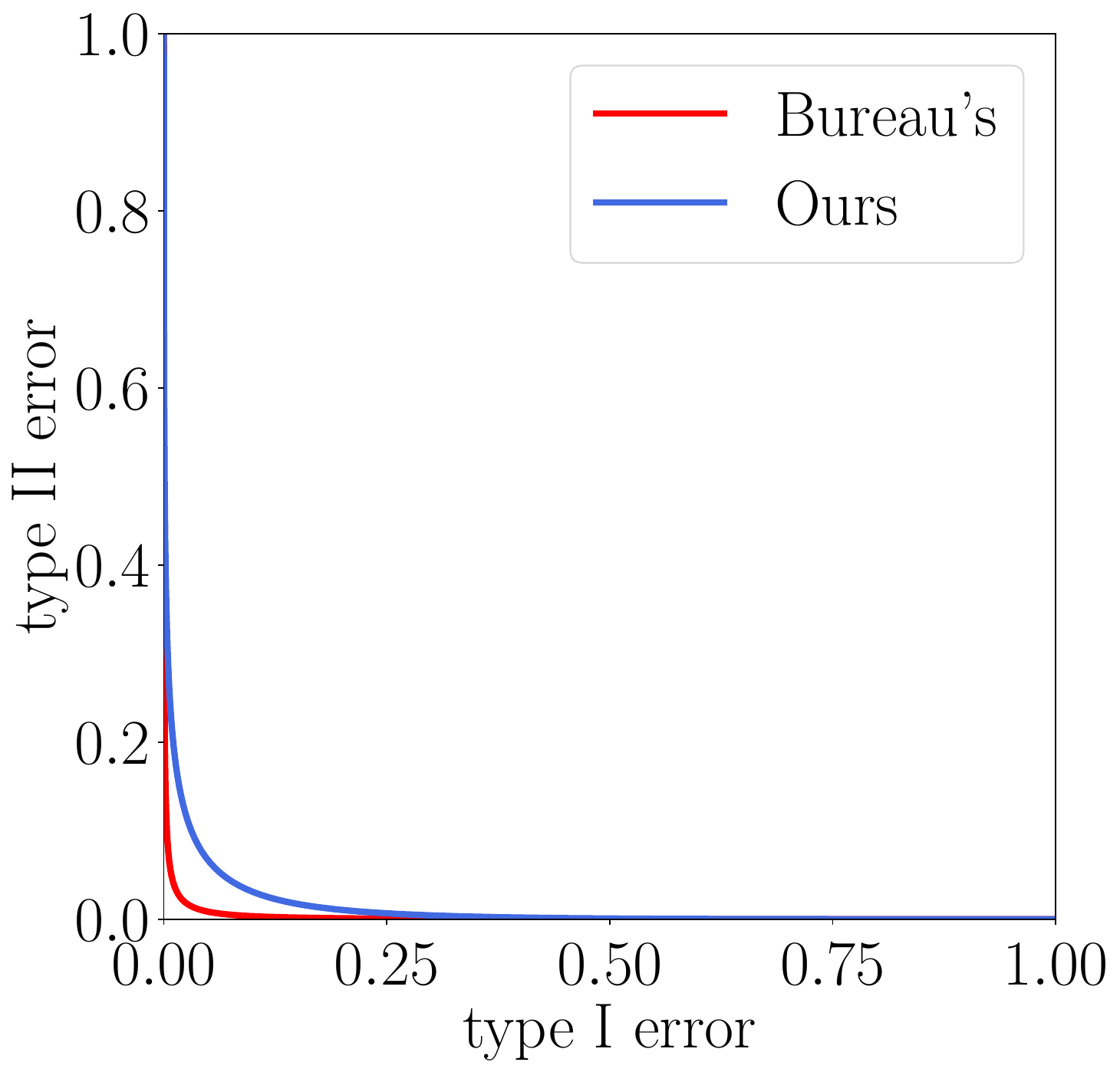}
  }
}

\begin{figure}[!htp]
    \centering
    \setlength{\tabcolsep}{2pt}
    \renewcommand{\arraystretch}{1.0}

    \begin{tabular}{@{}cccccc@{}}
        \tradeofflm{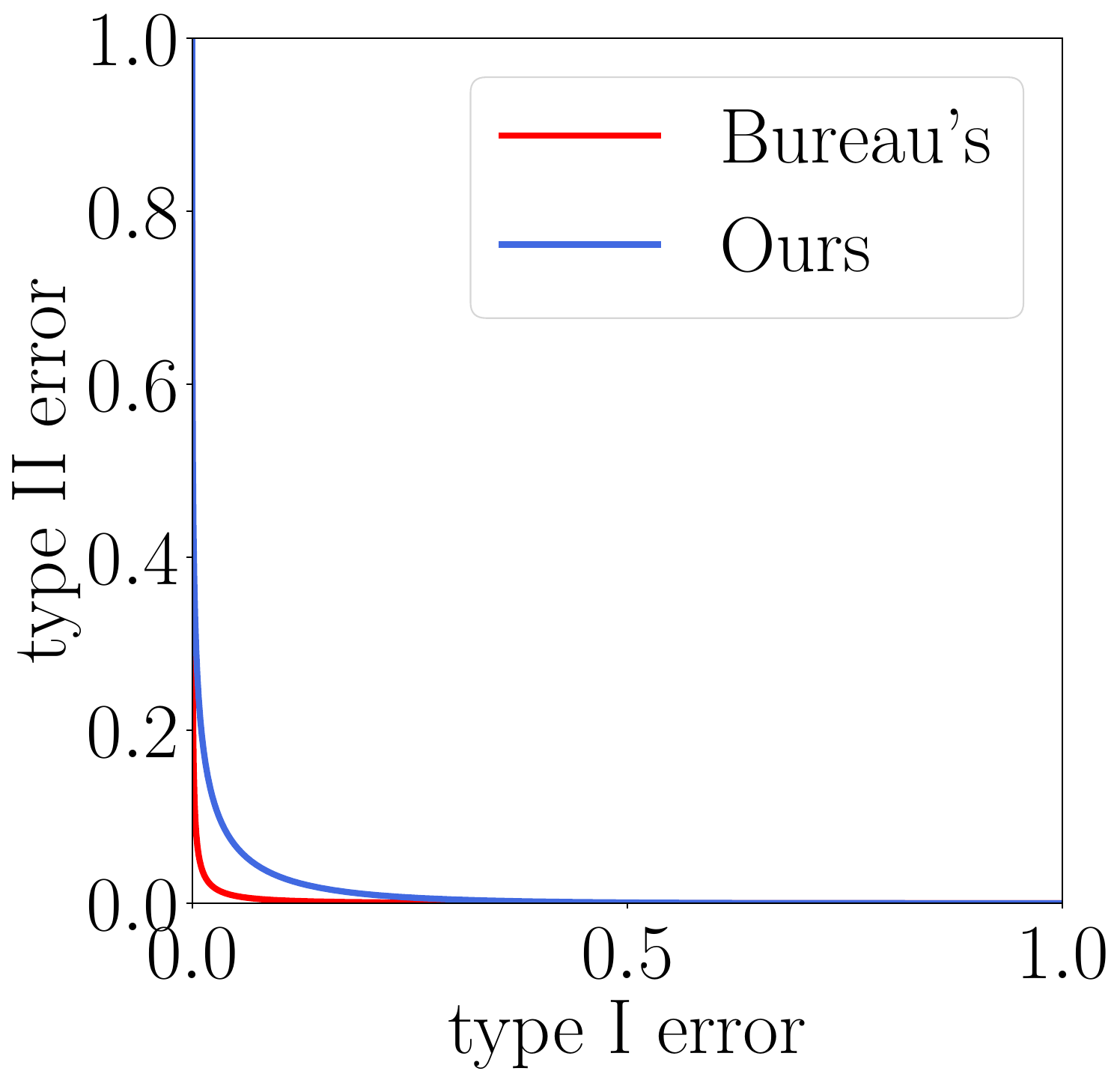}  & \tradeofflm{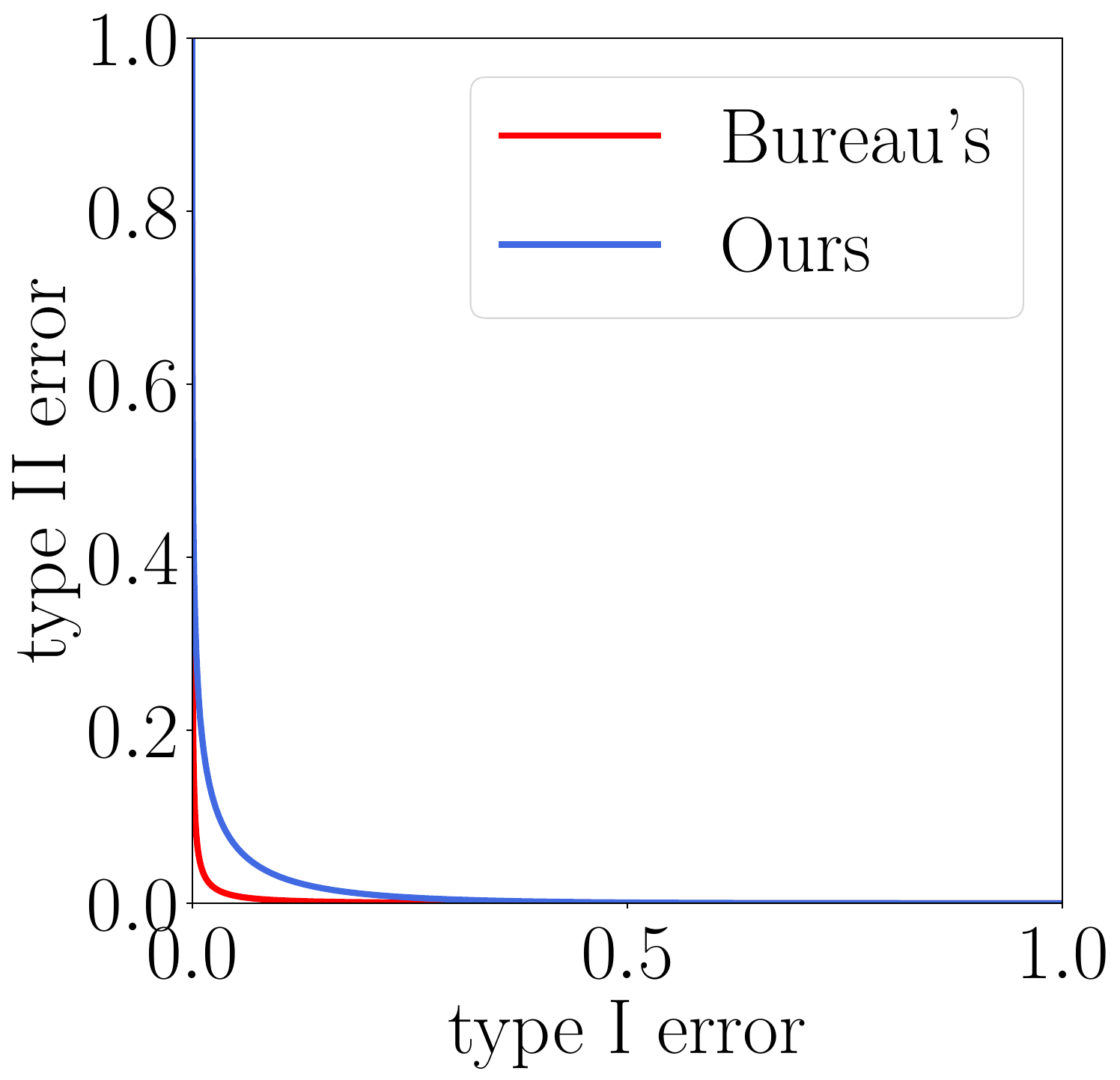}  & \tradeofflm{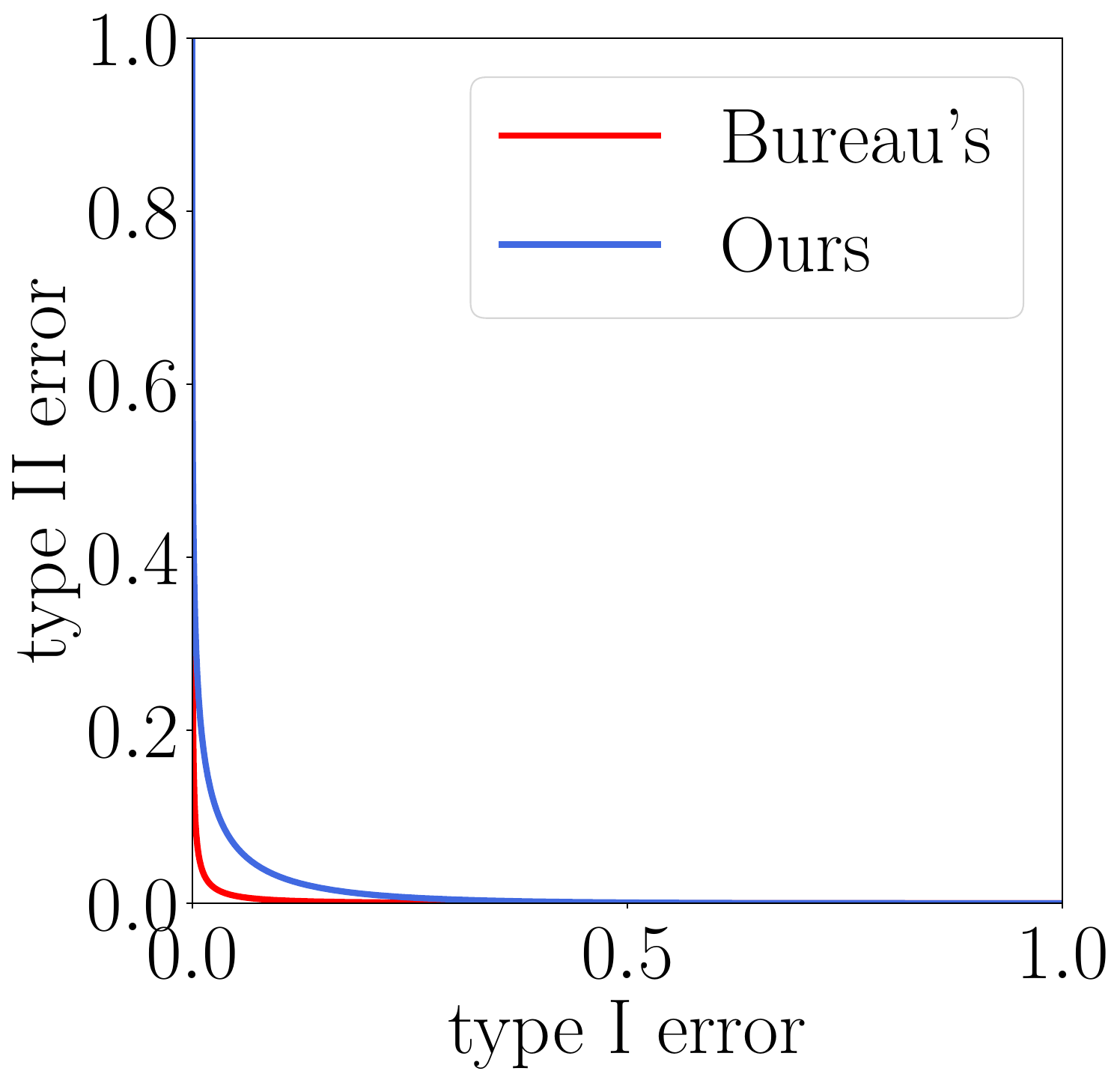}  & \tradeofflm{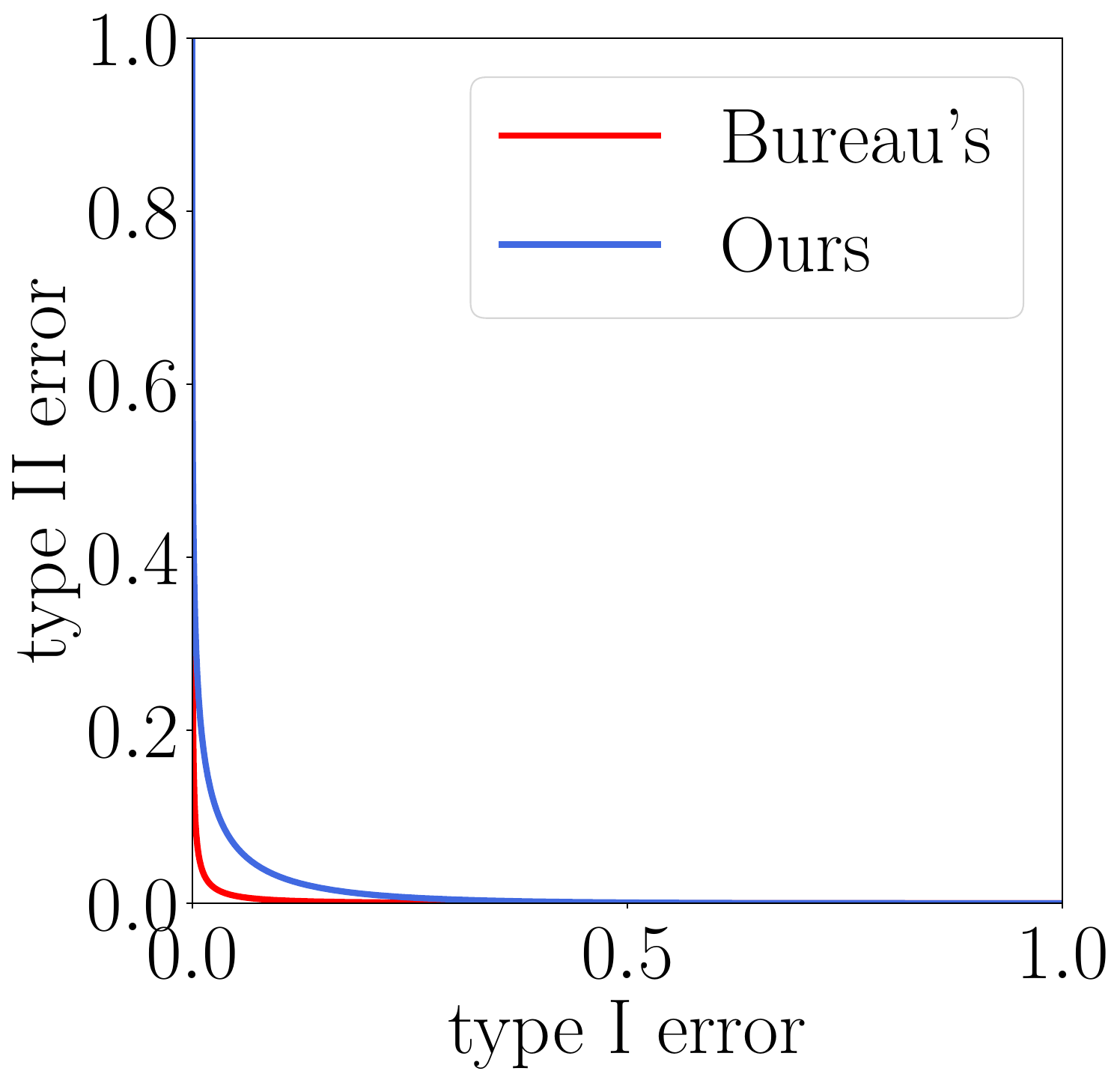}  & \tradeofflm{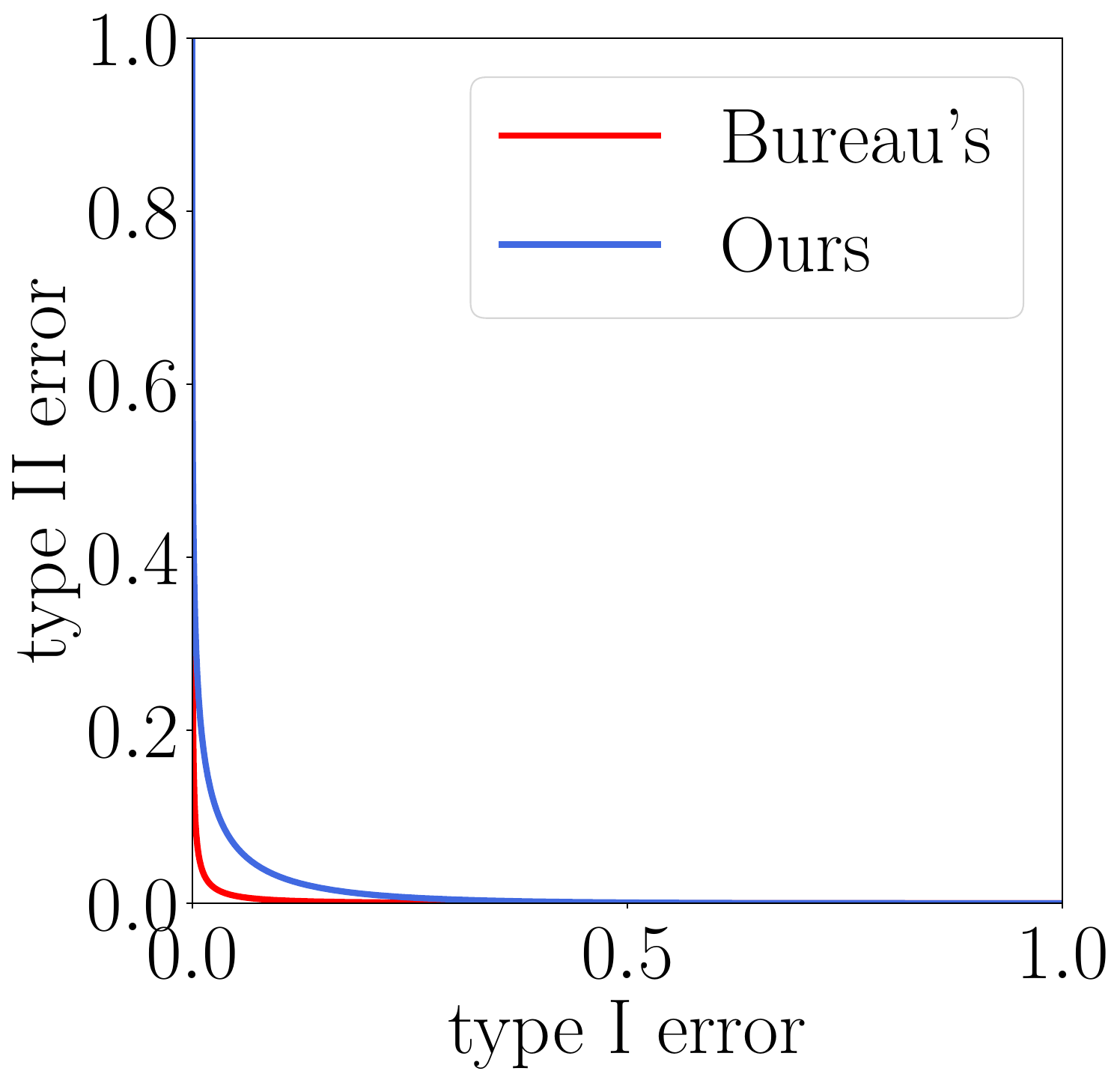}  & \tradeofflm{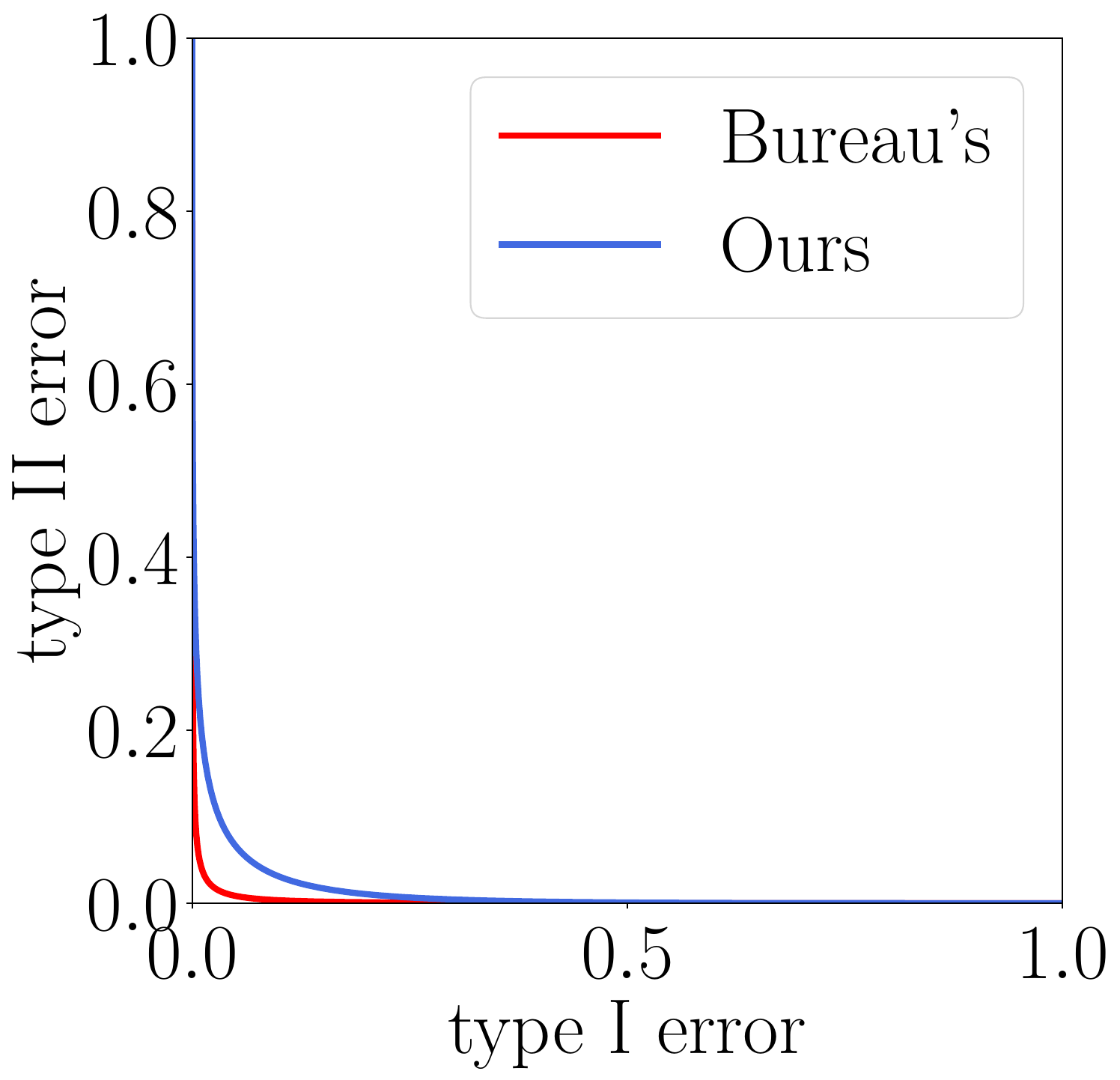}  \\
        \tradeofflm{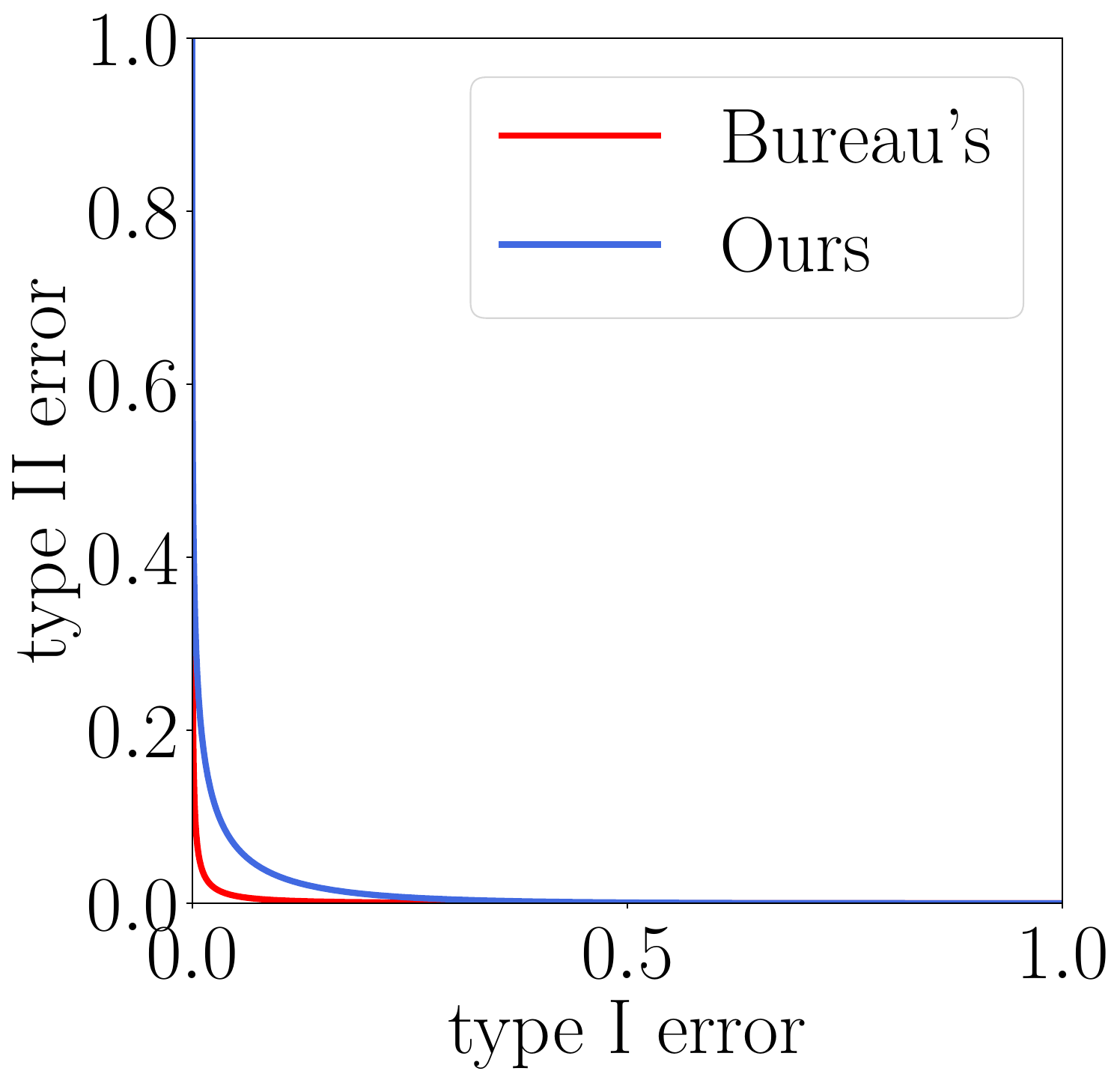}  & \tradeofflm{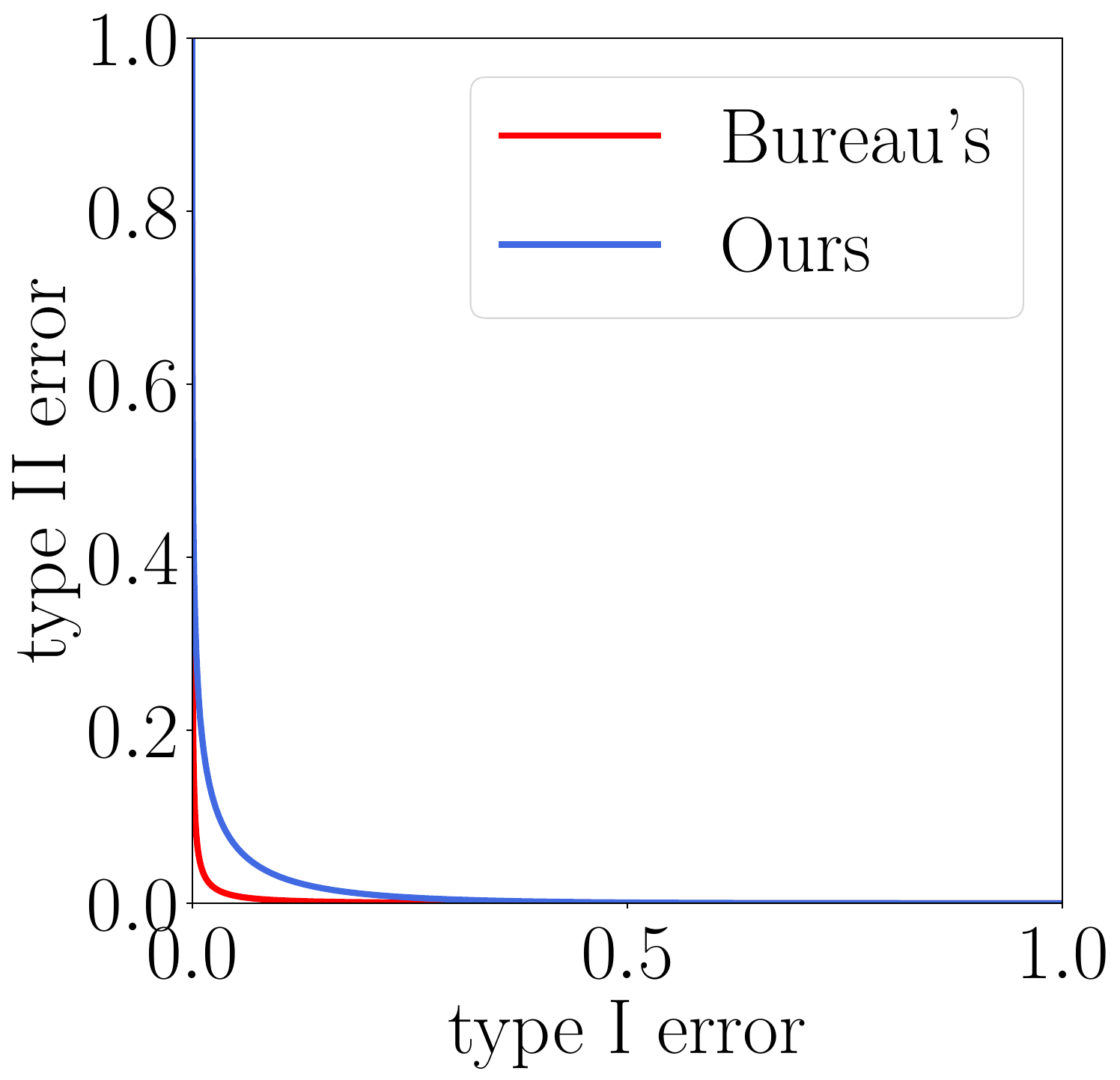}  & \tradeofflm{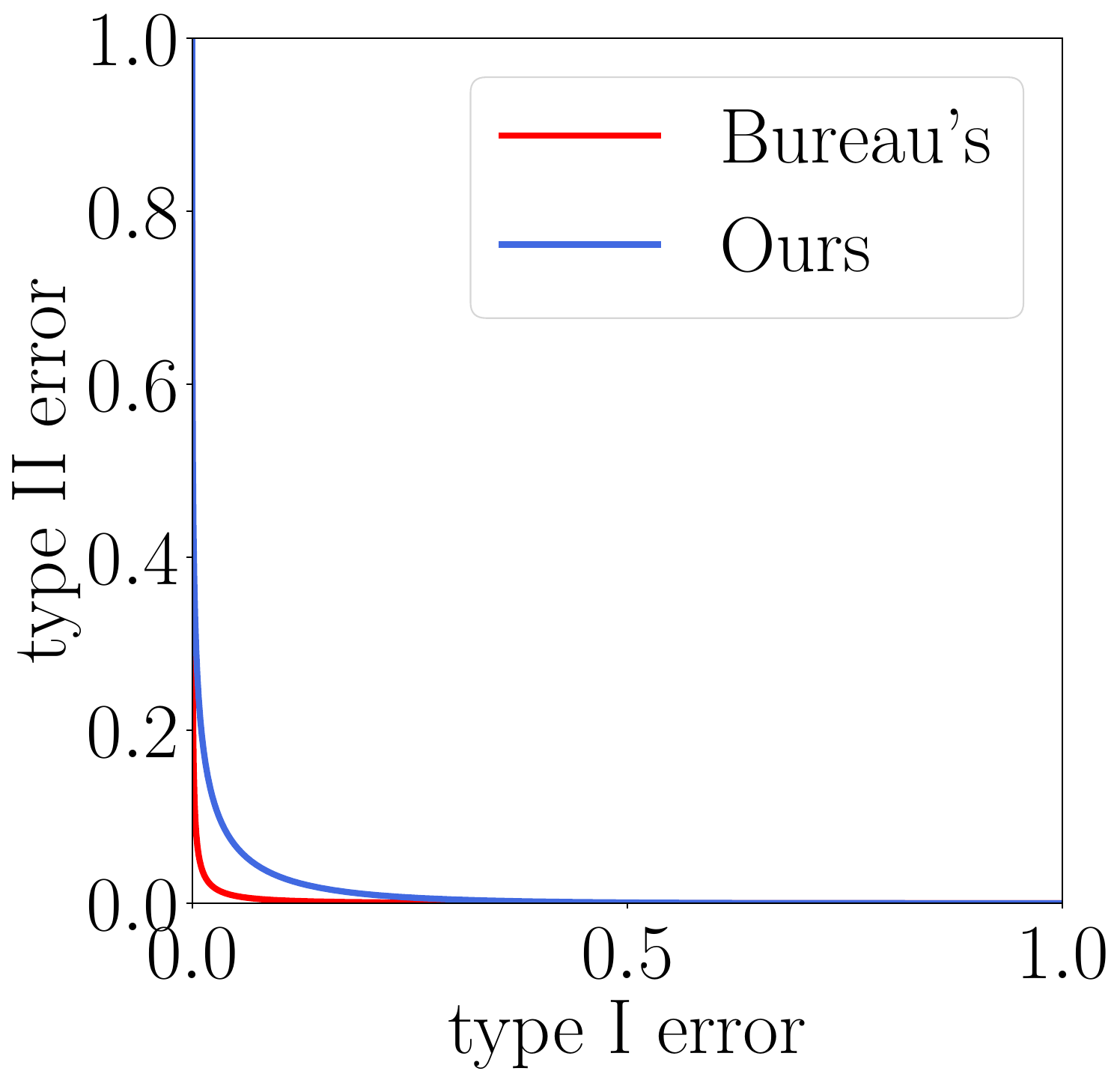}  & \tradeofflm{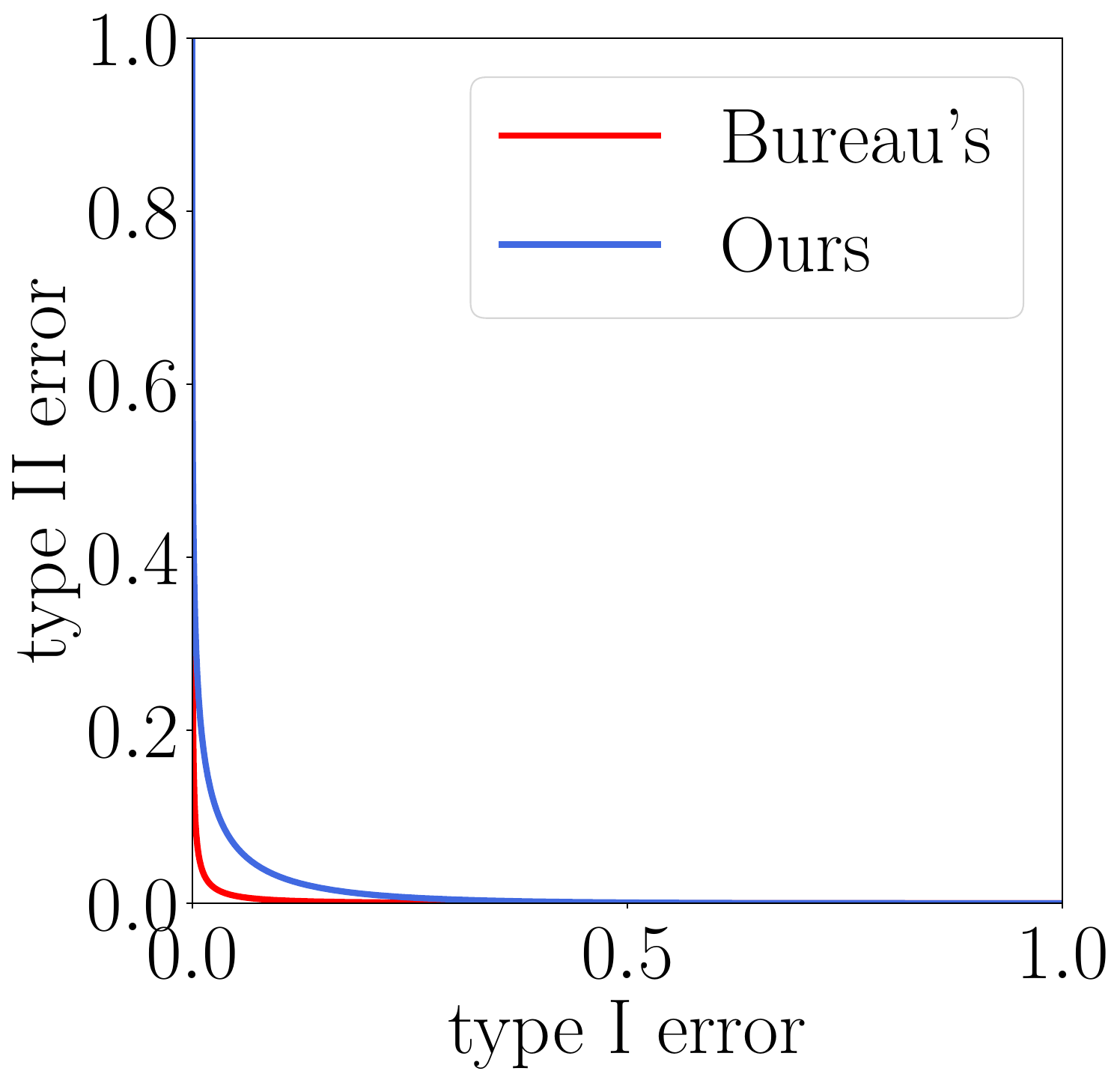} & \tradeofflm{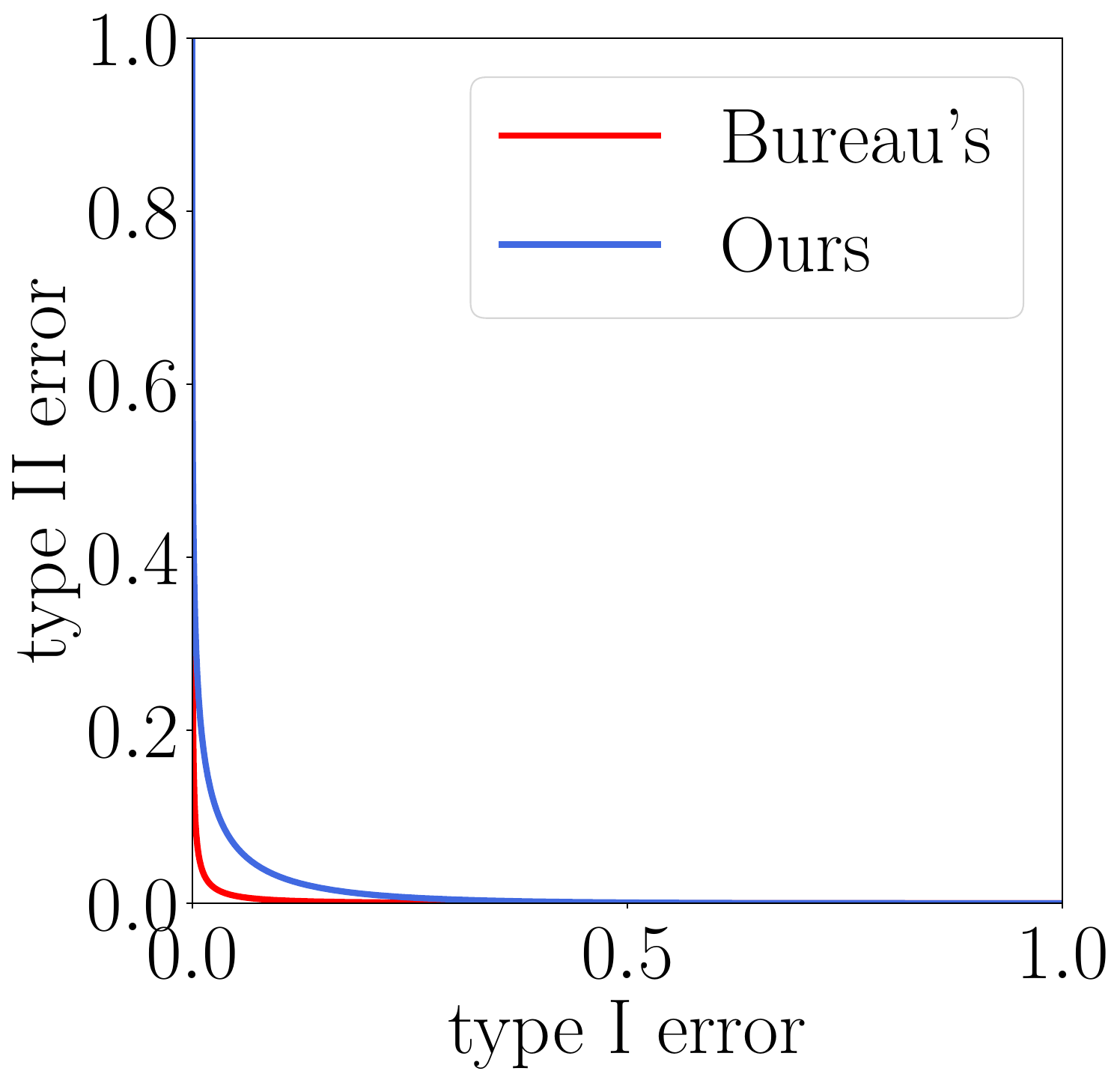} & \tradeofflm{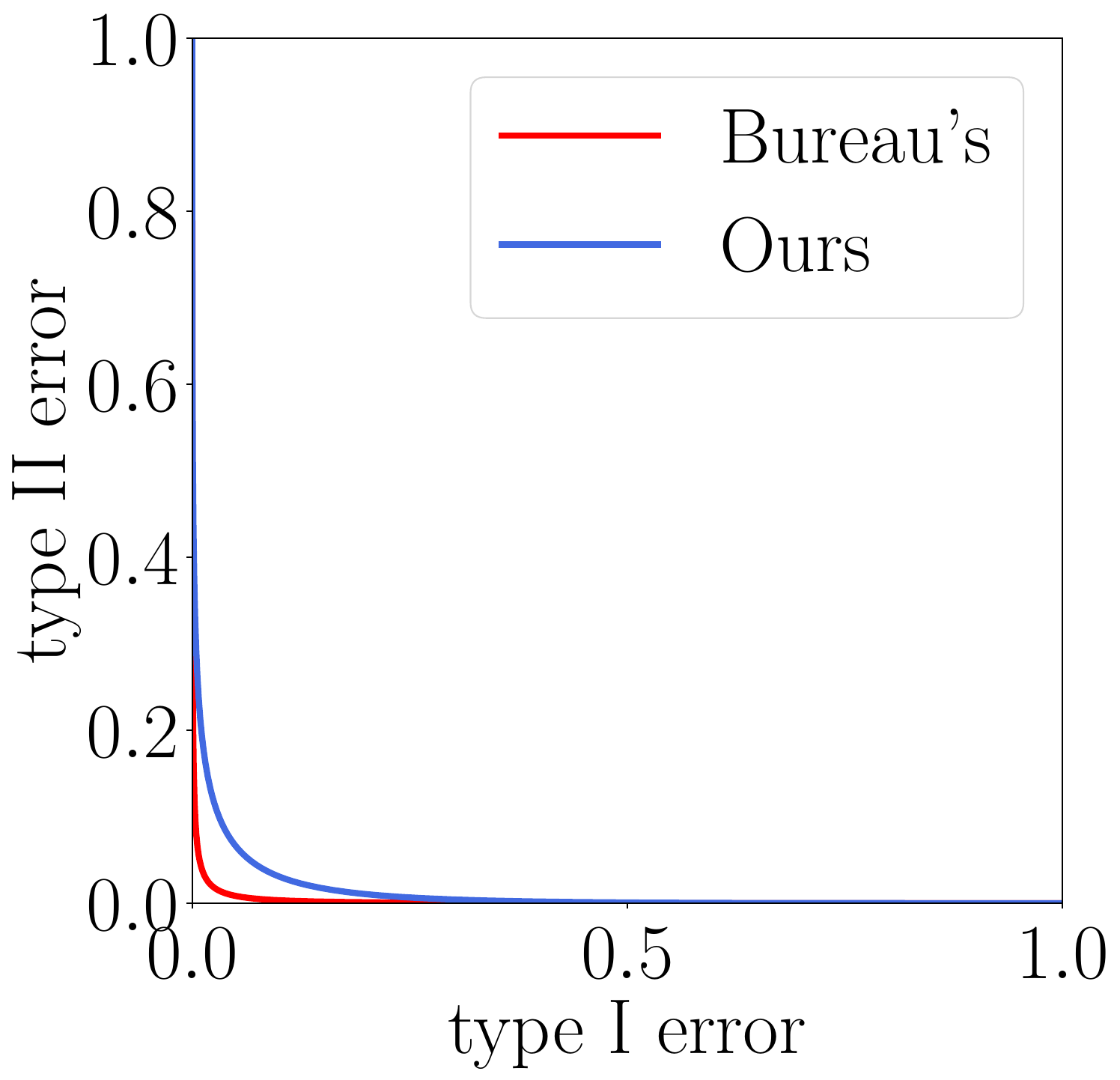} \\
        \tradeoffsm{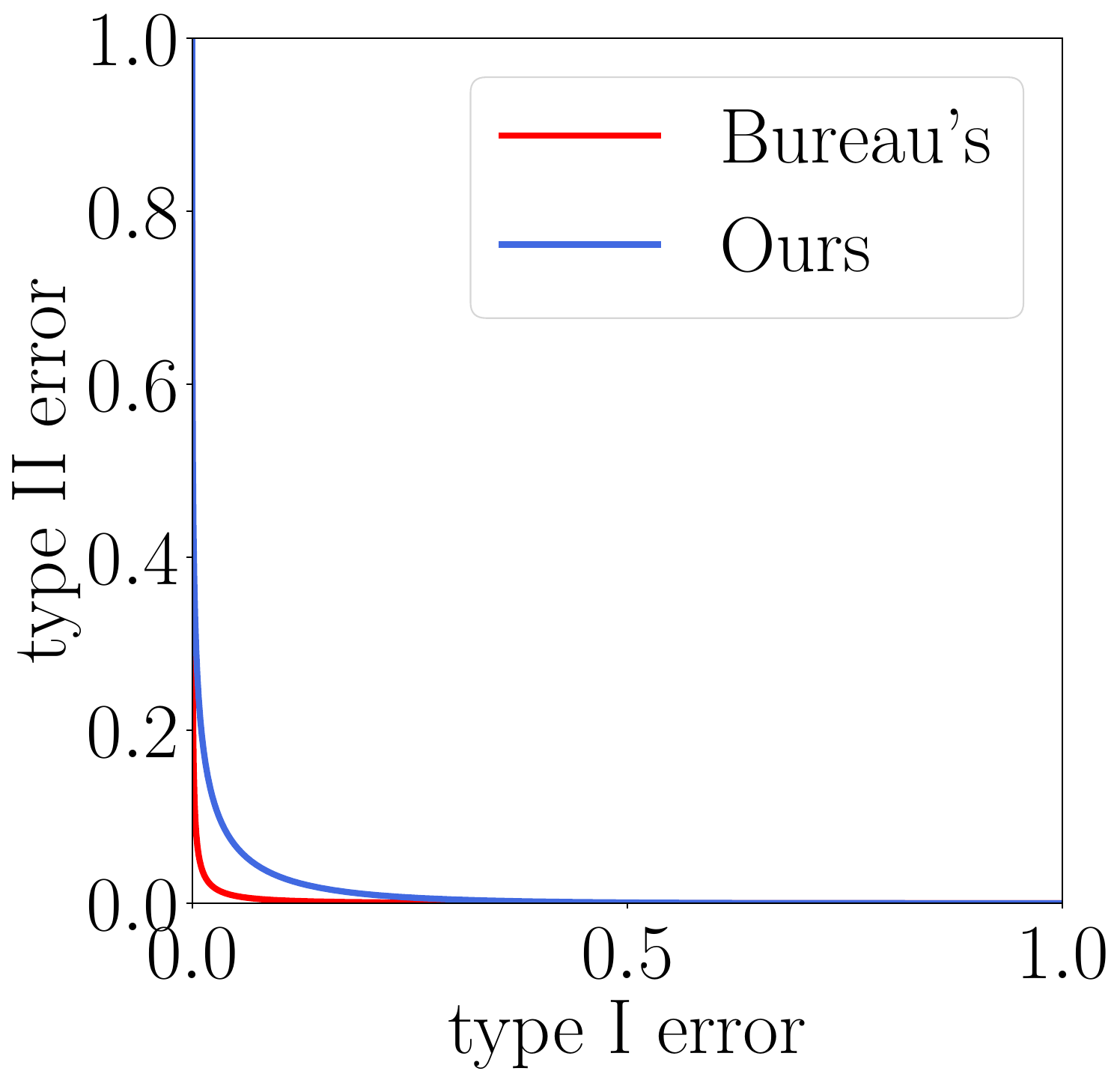} & \tradeoffsm{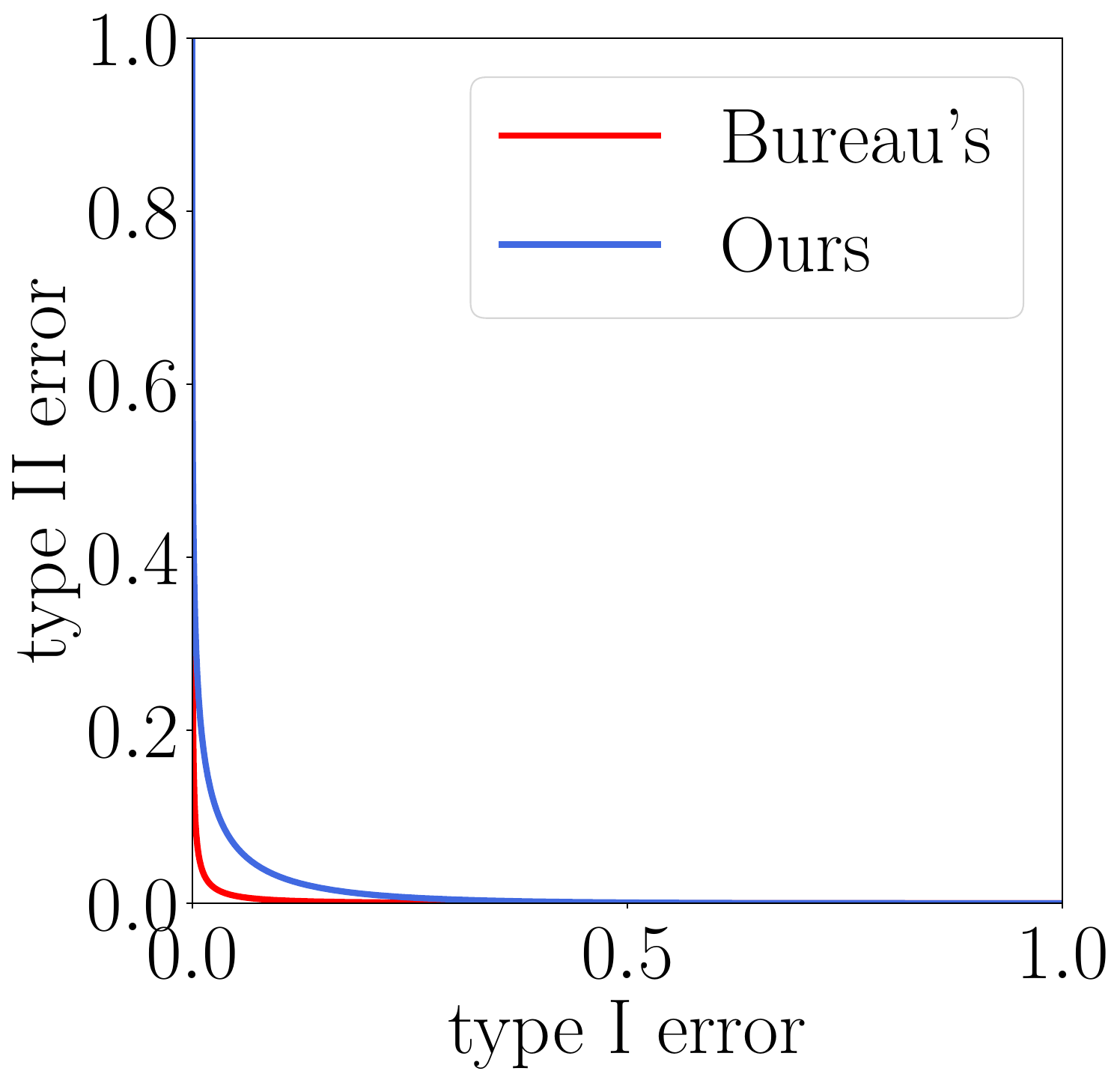} & \tradeoffsm{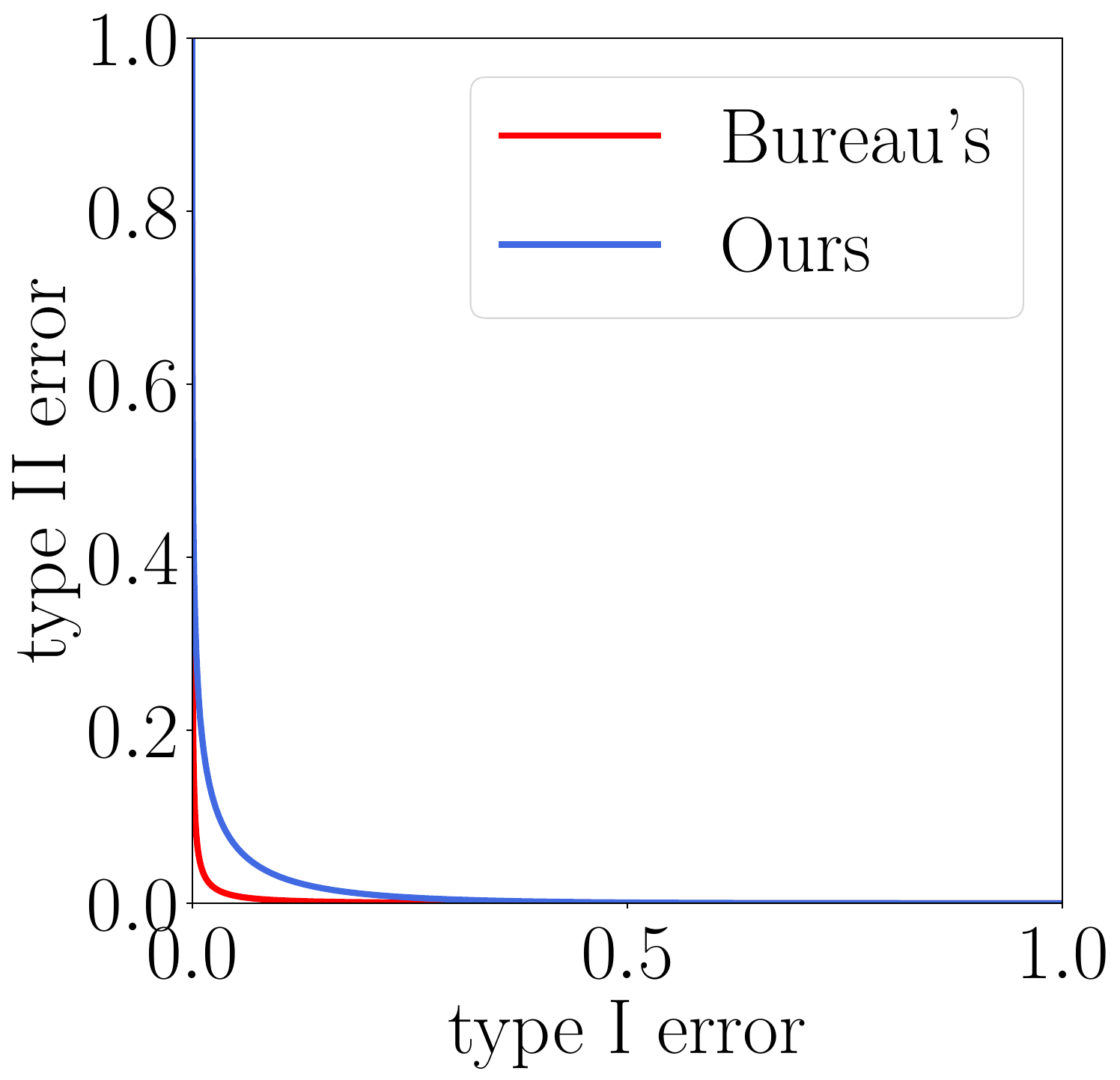} & \tradeoffsm{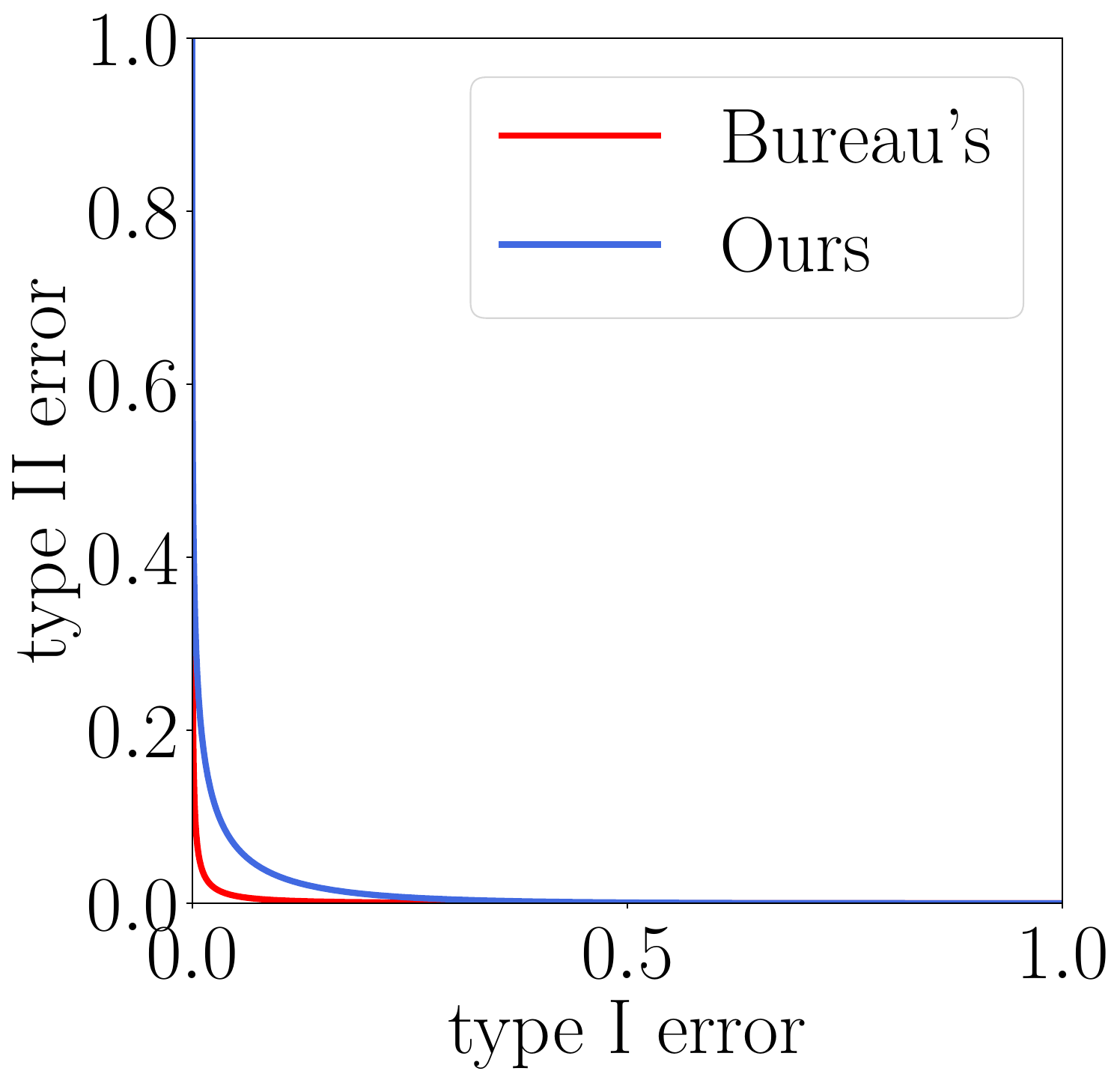} & \tradeoffsm{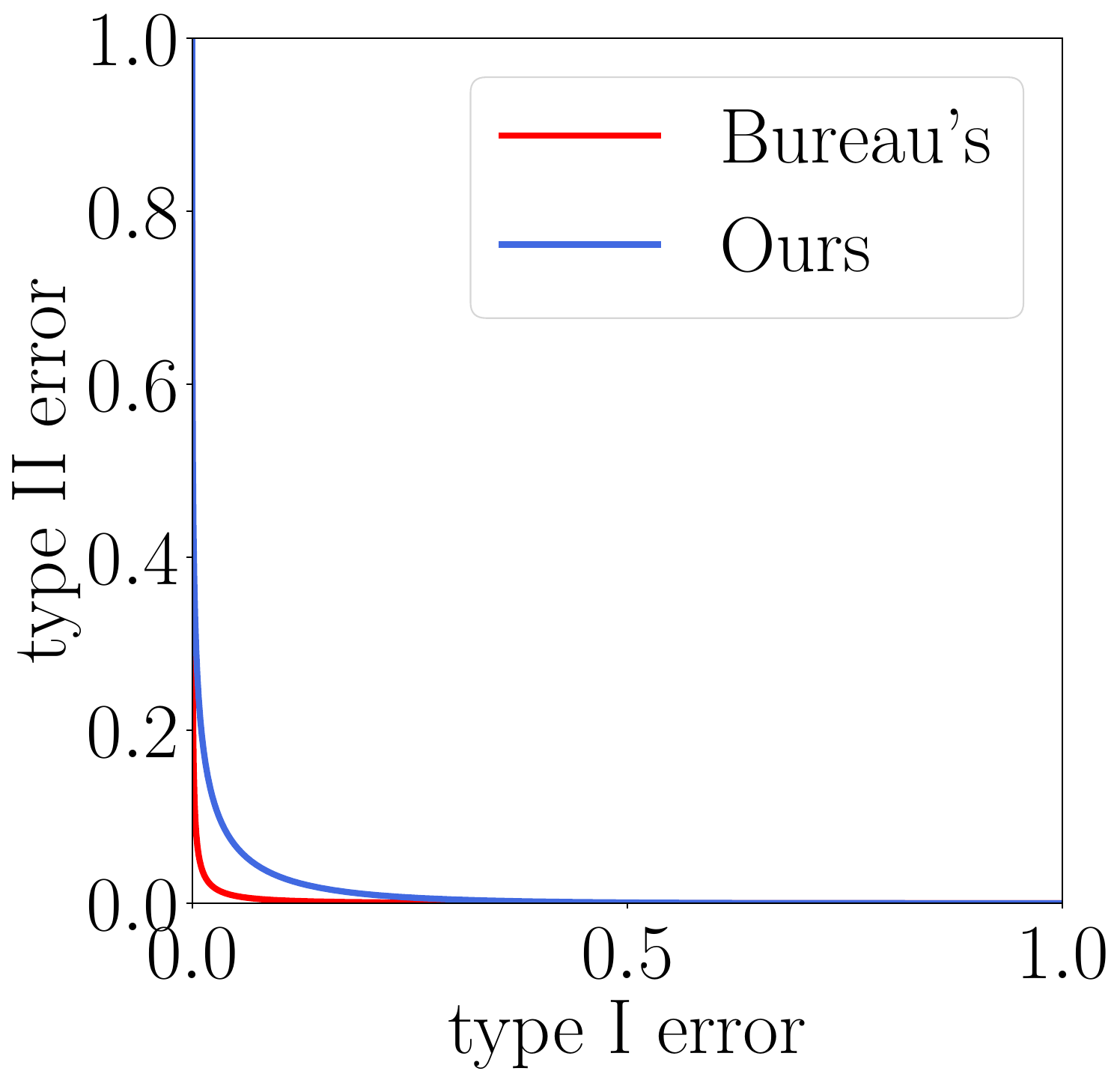} & \tradeoffsm{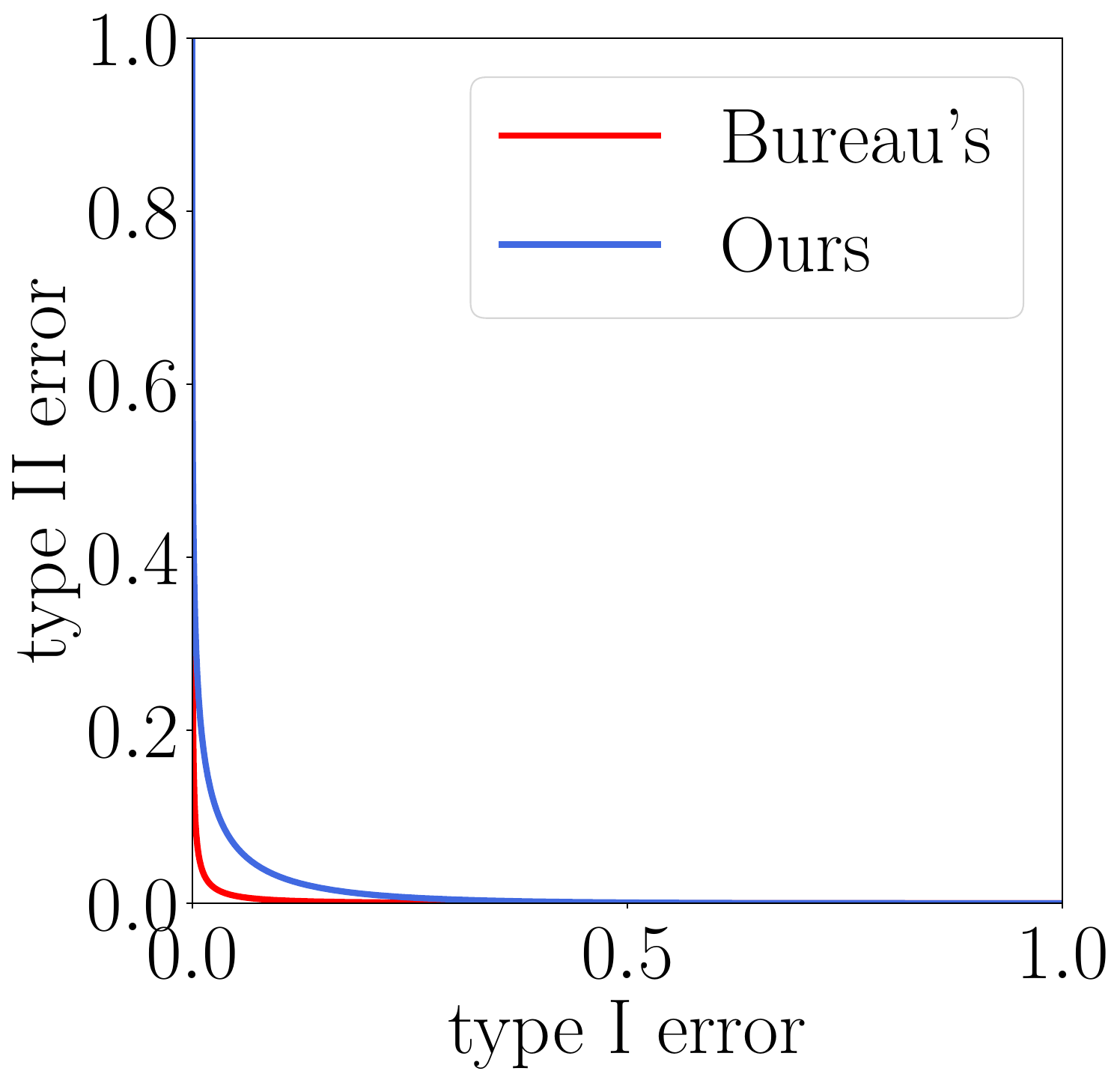} \\
        \tradeoffsm{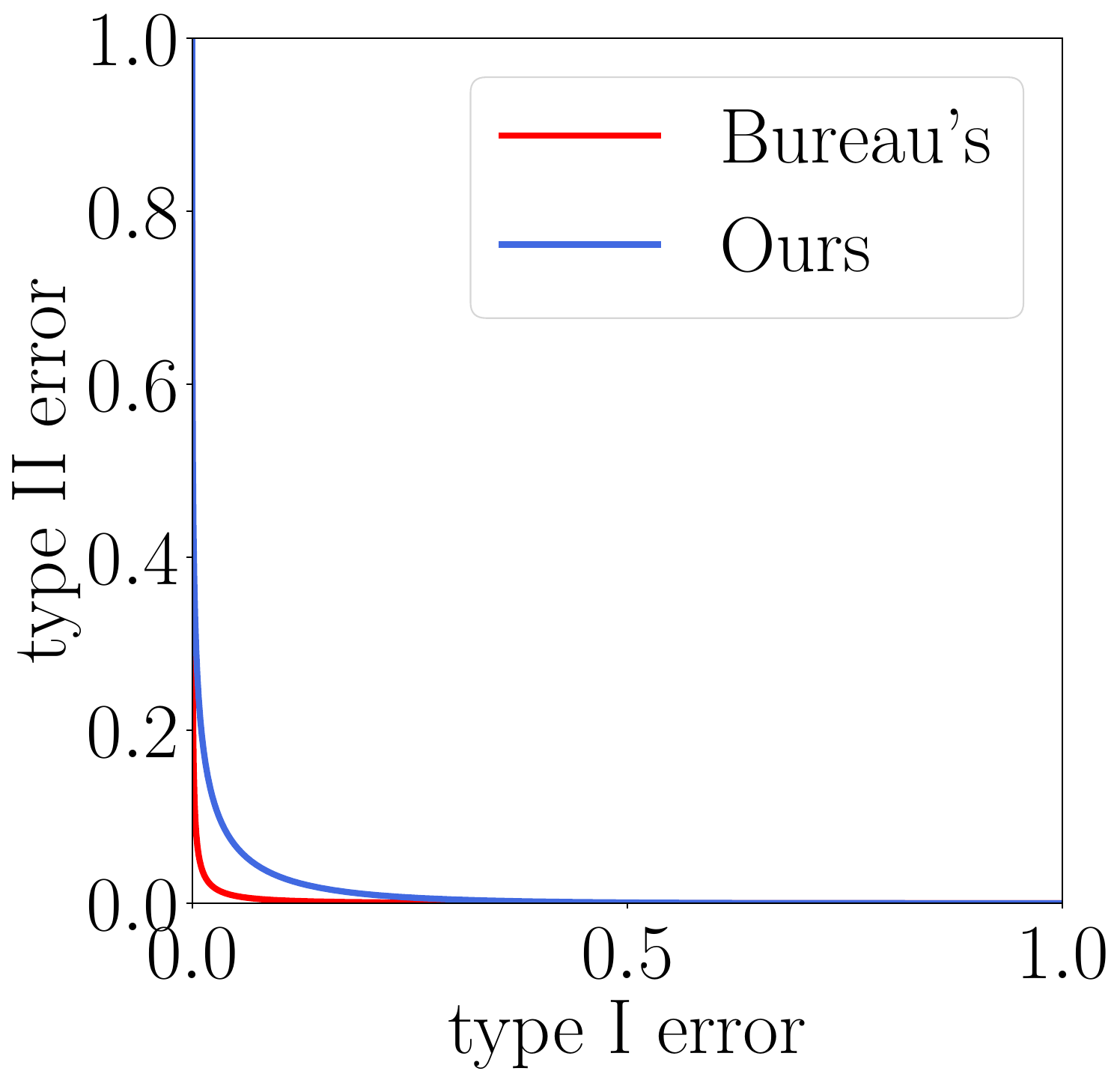} & \tradeoffsm{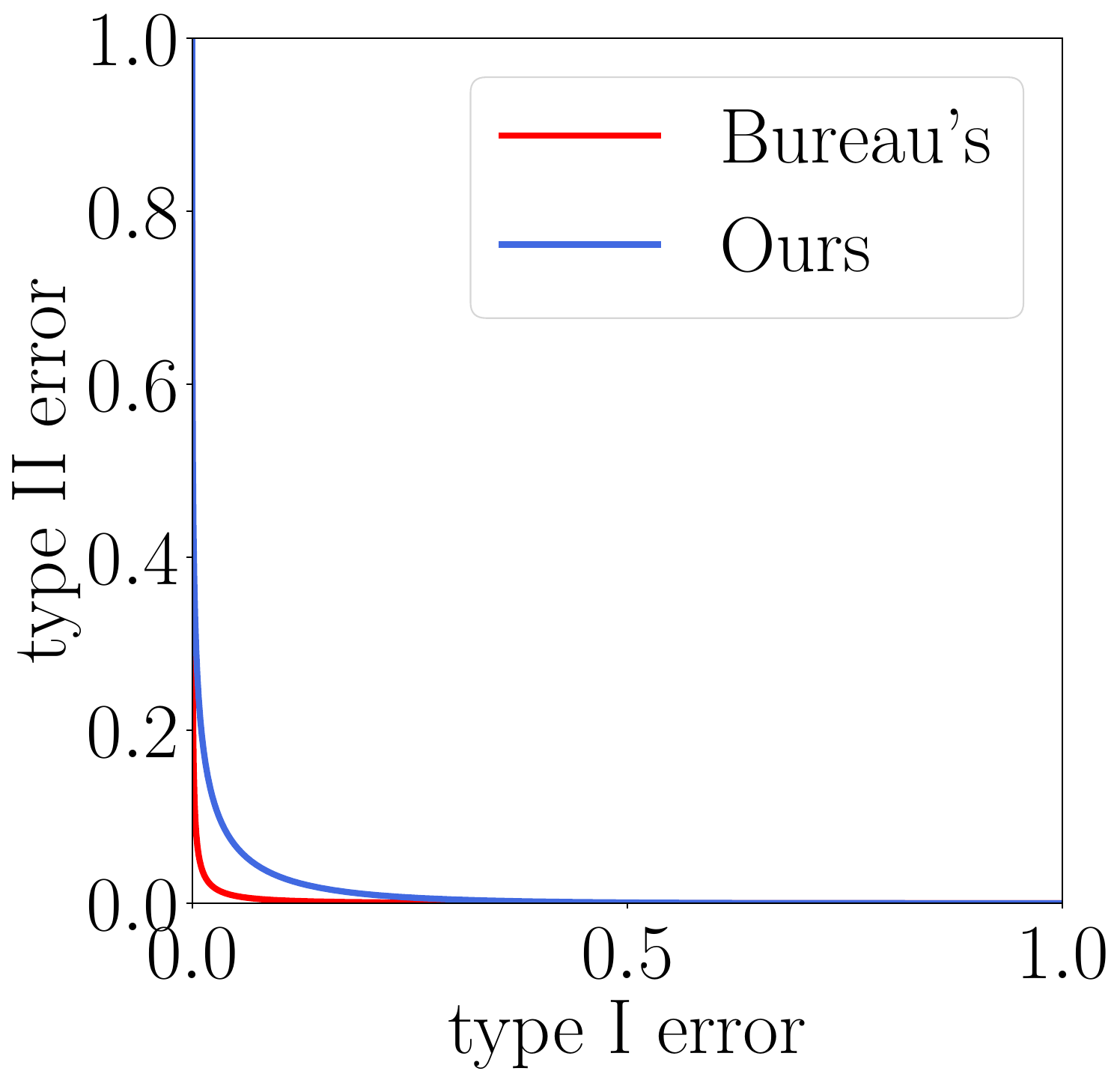} & \tradeoffsm{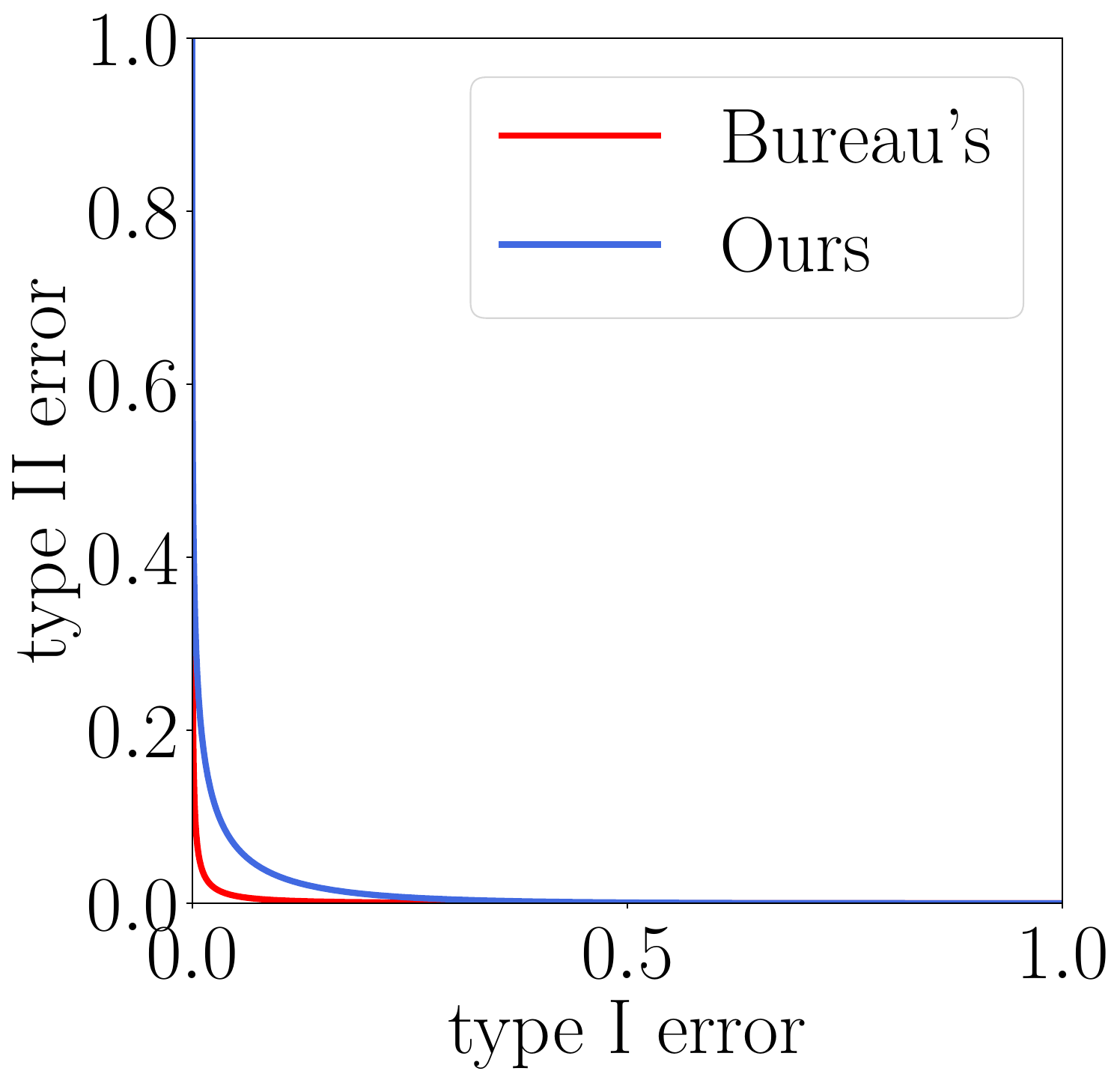} & \tradeoffsm{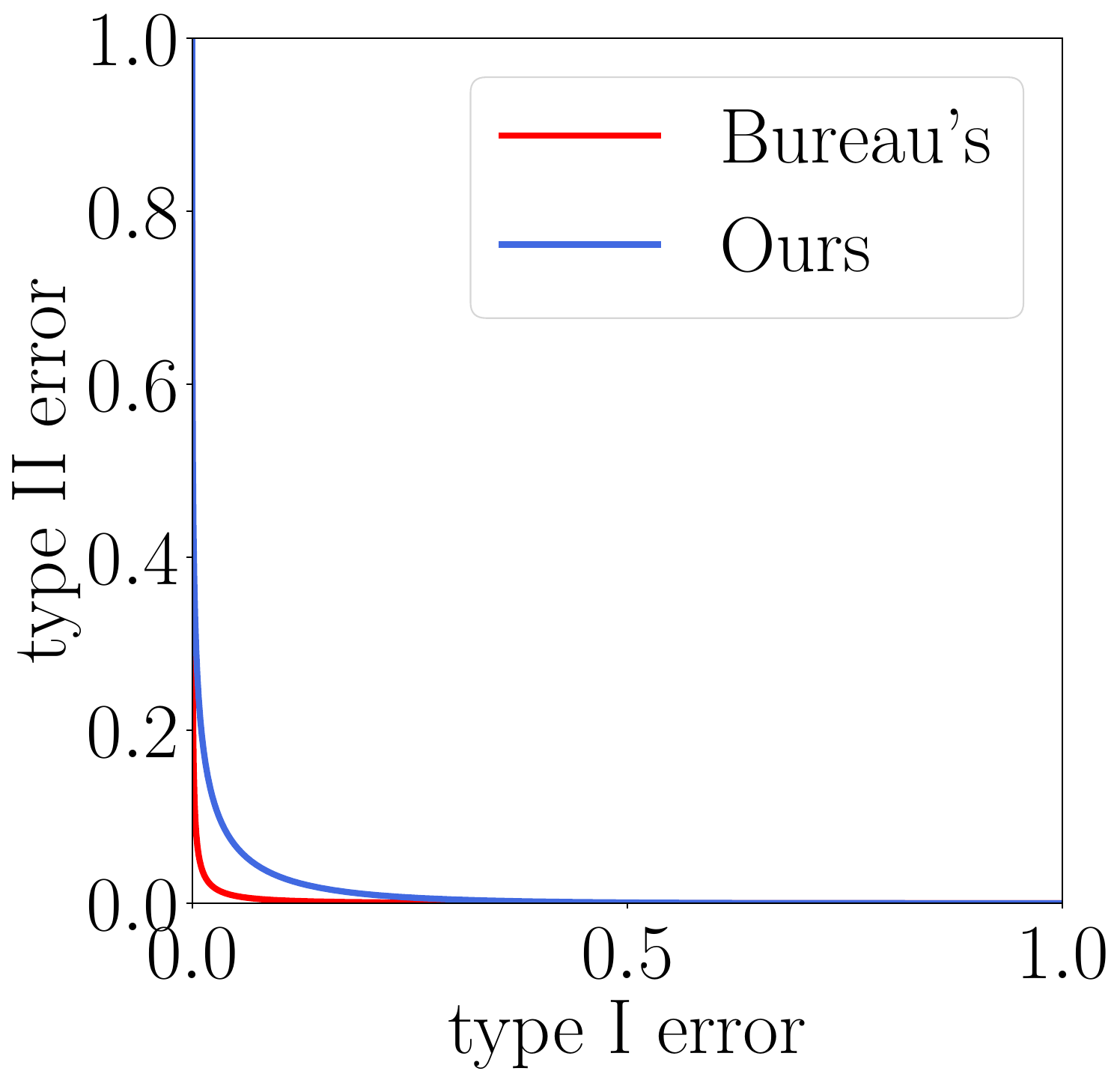} & \tradeoffsm{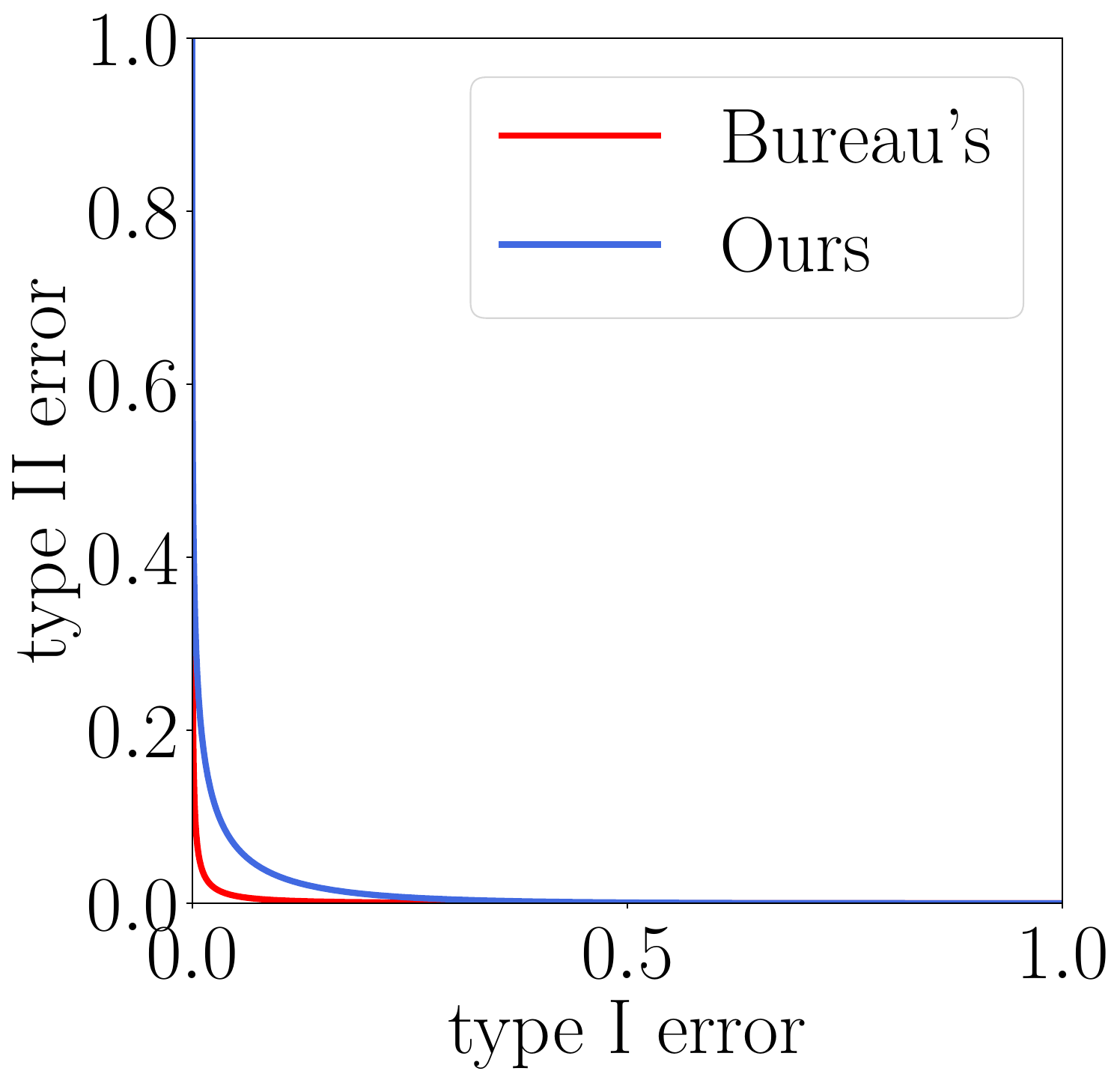} & \tradeoffsm{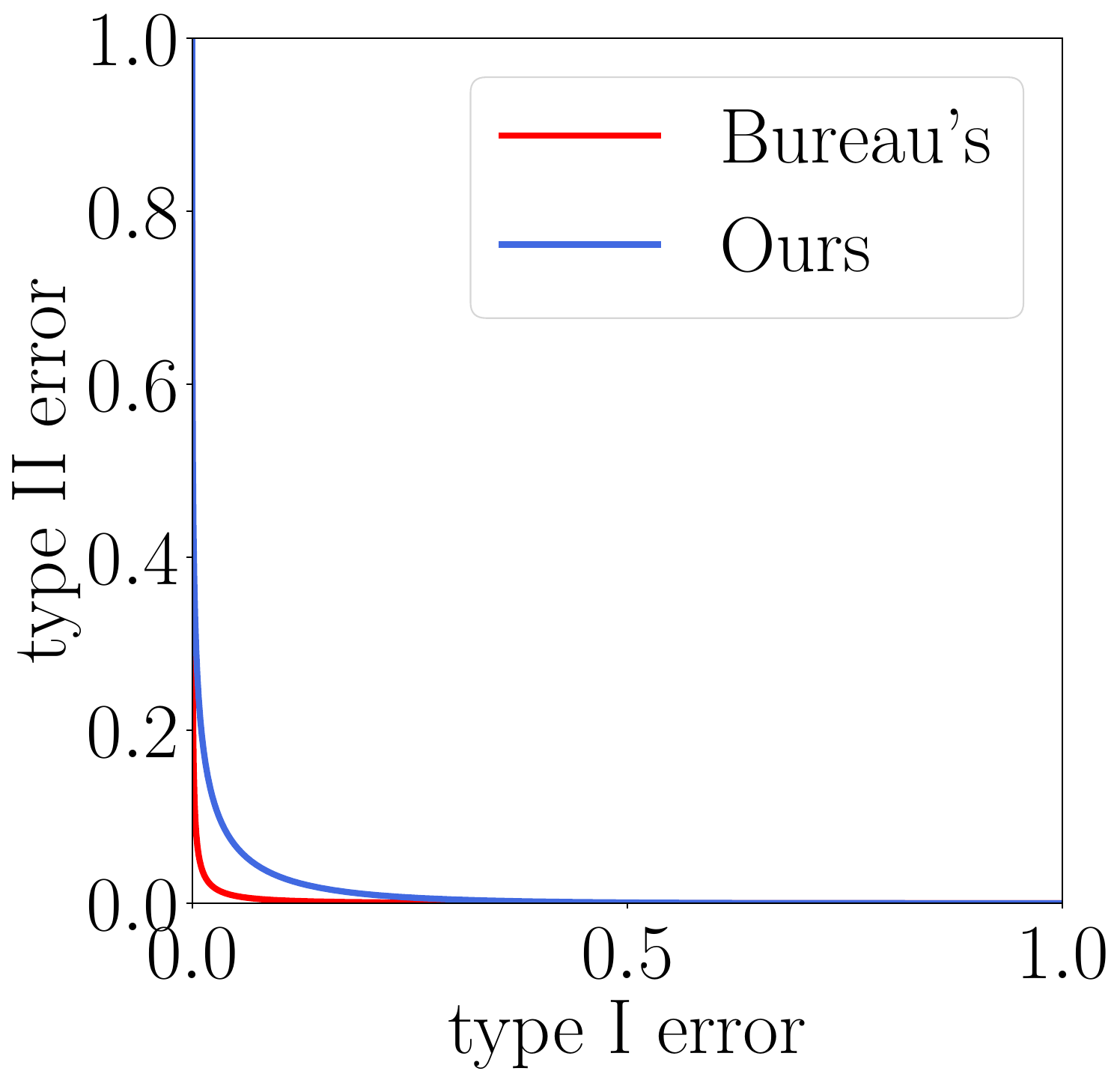} \\
        \tradeoffsm{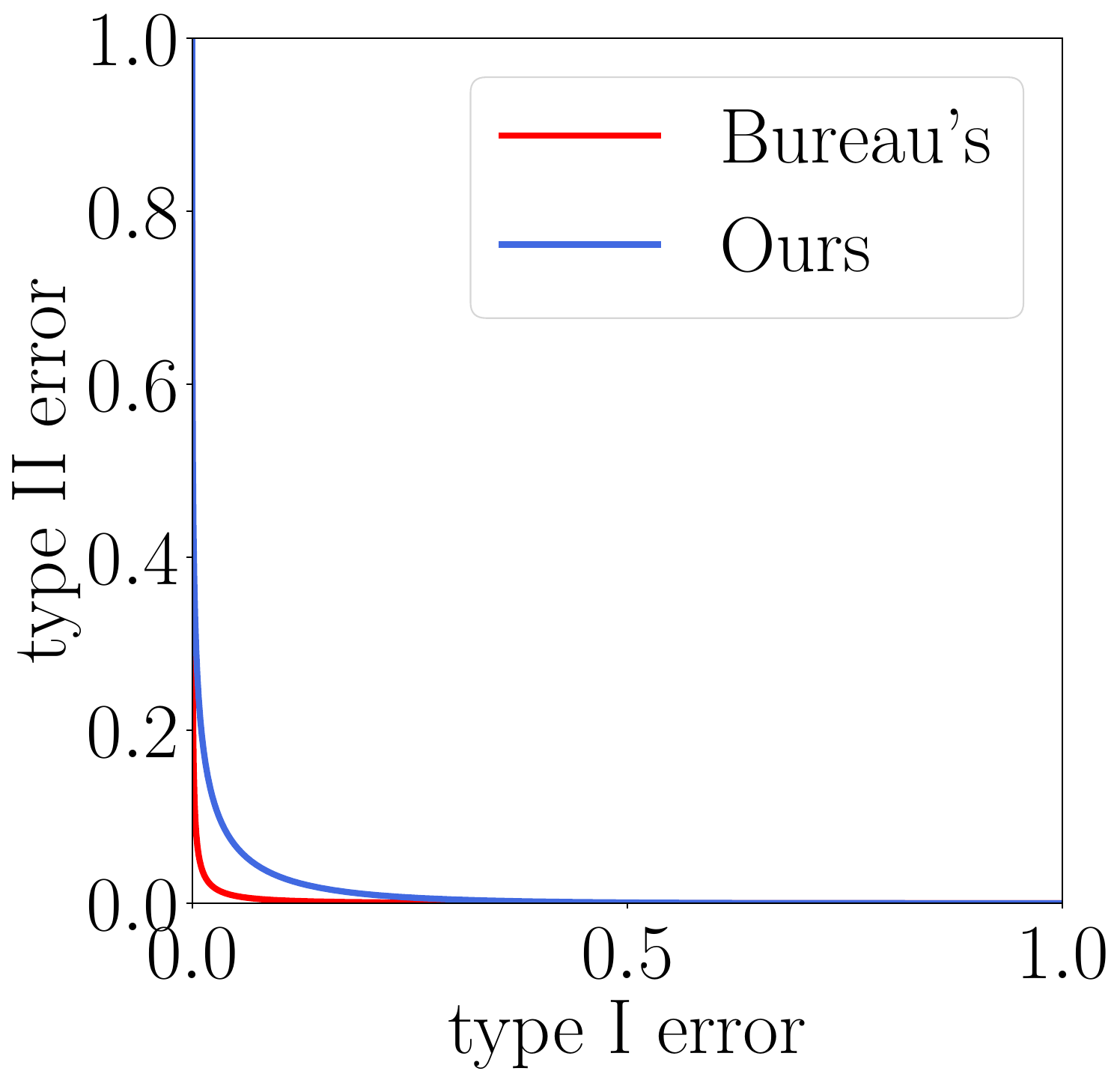} & \tradeoffsm{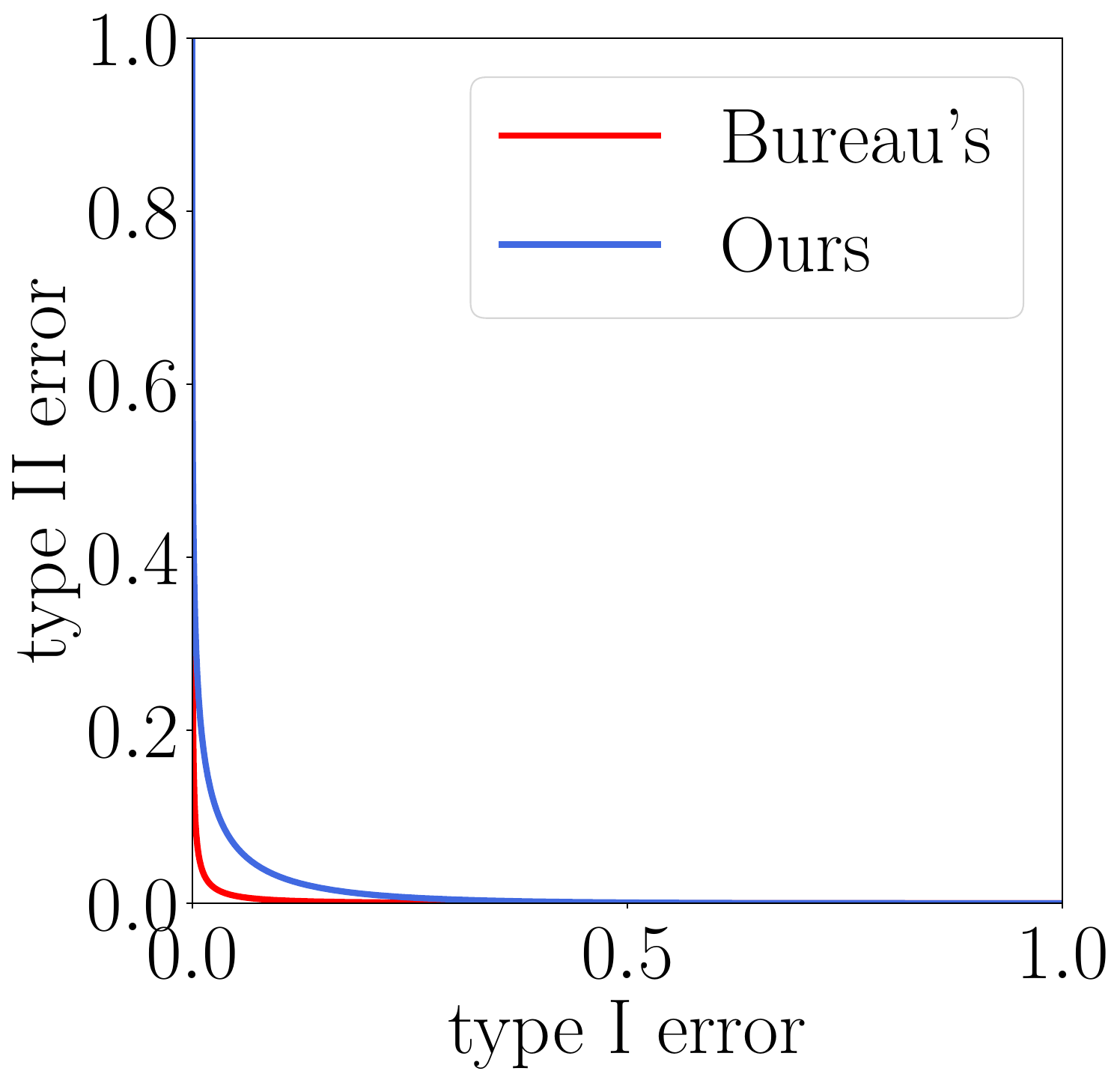} & \tradeoffsm{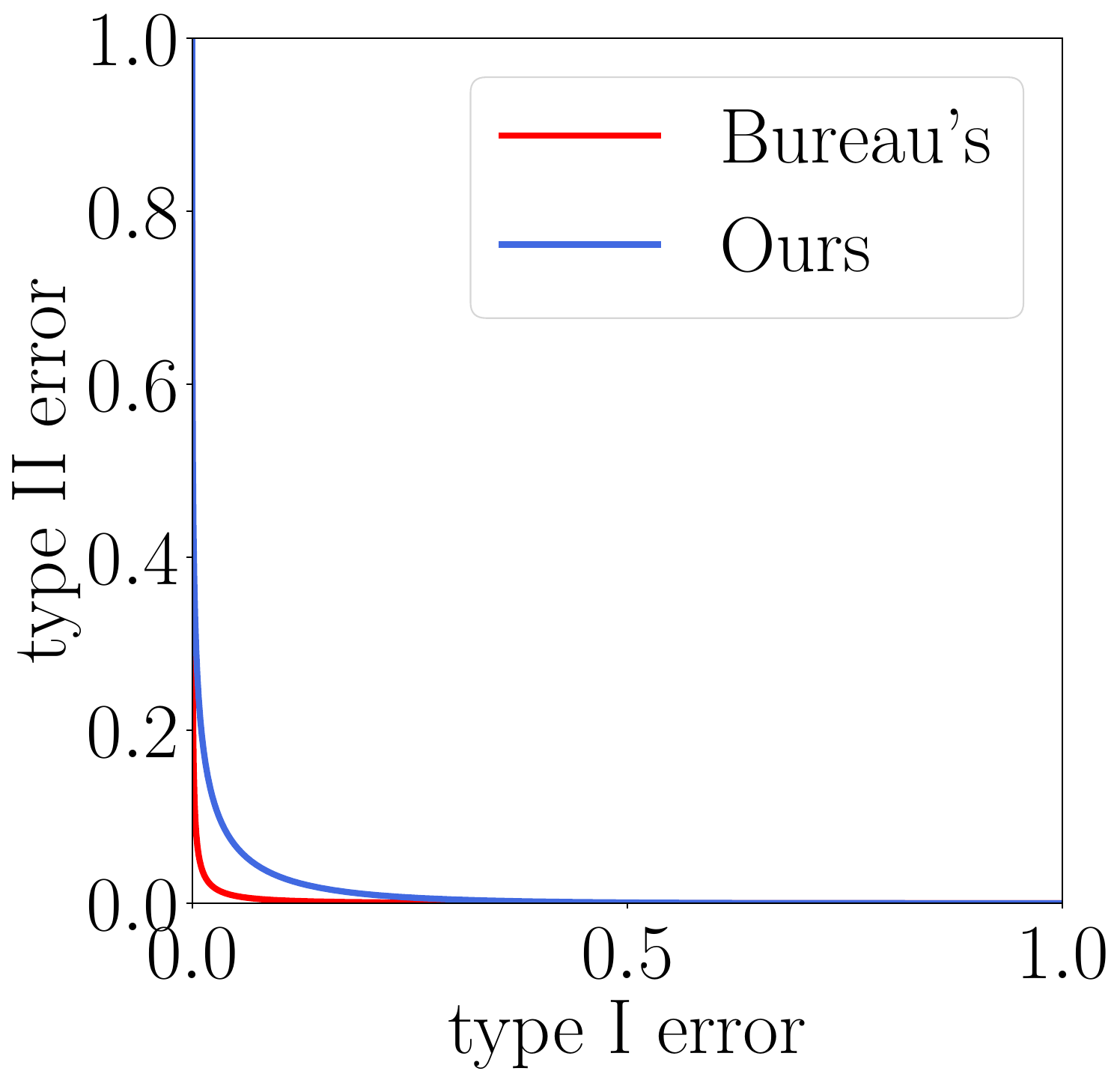} & \tradeoffsm{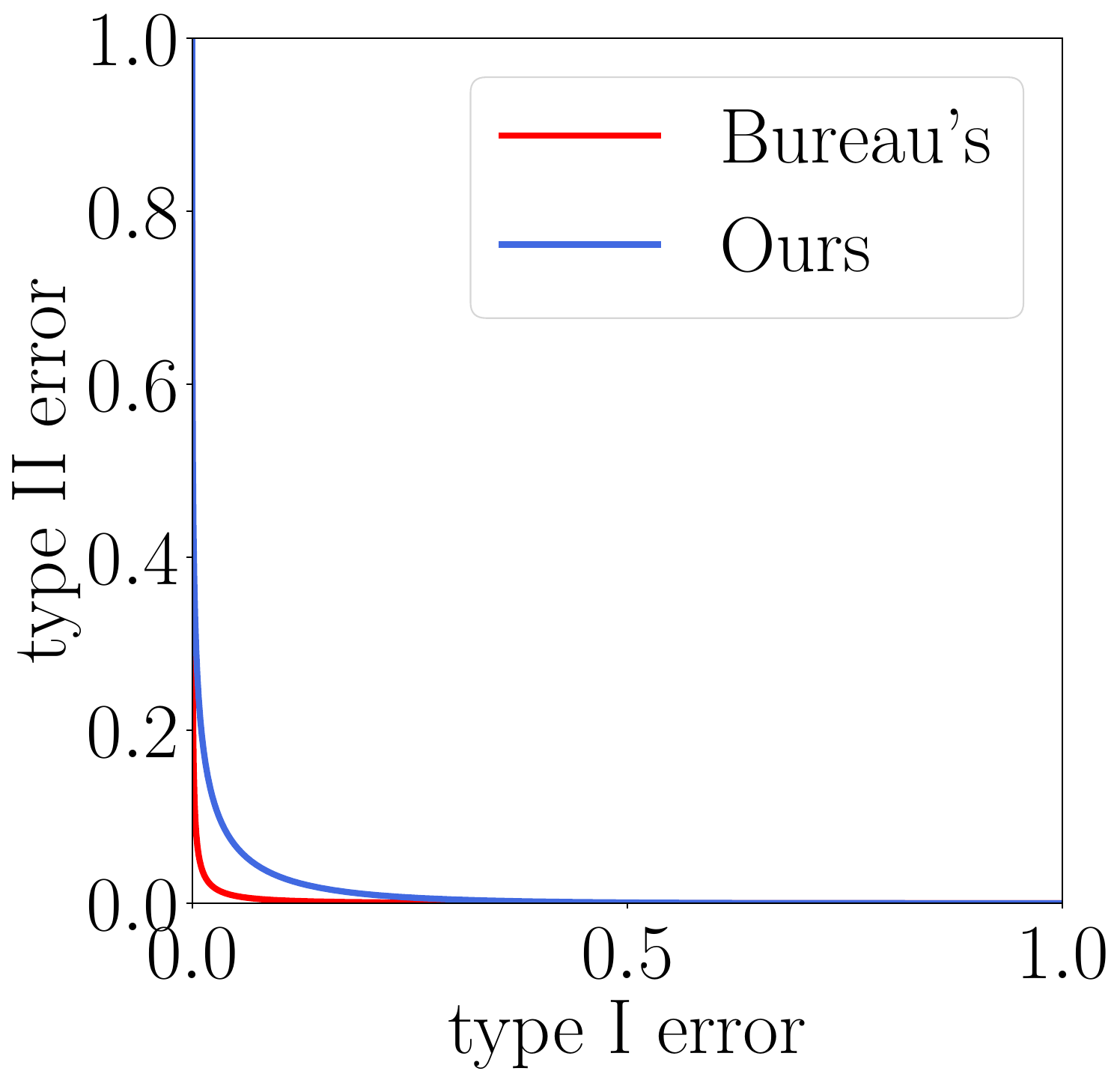} & \tradeoffsm{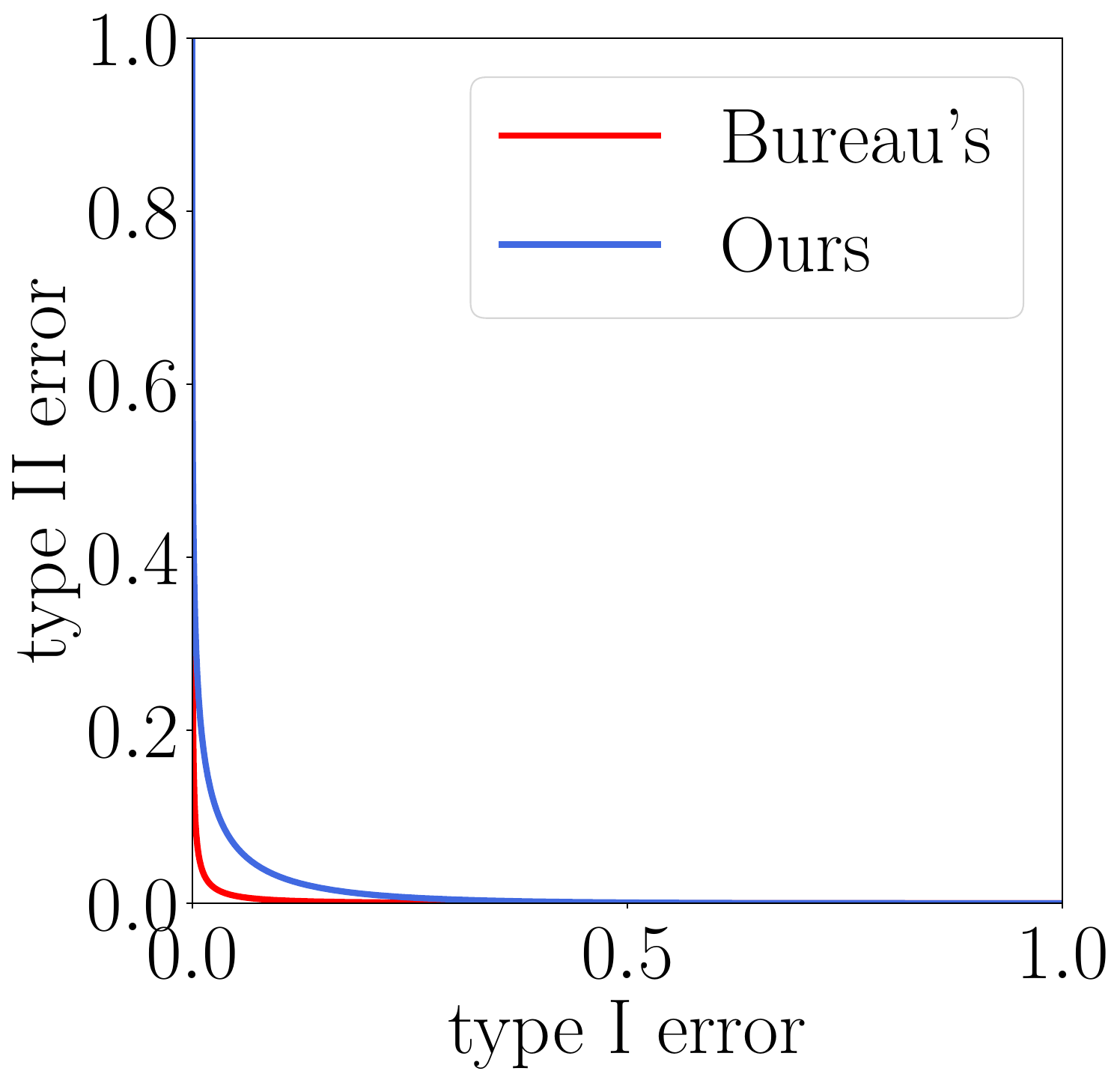} & \tradeoffsm{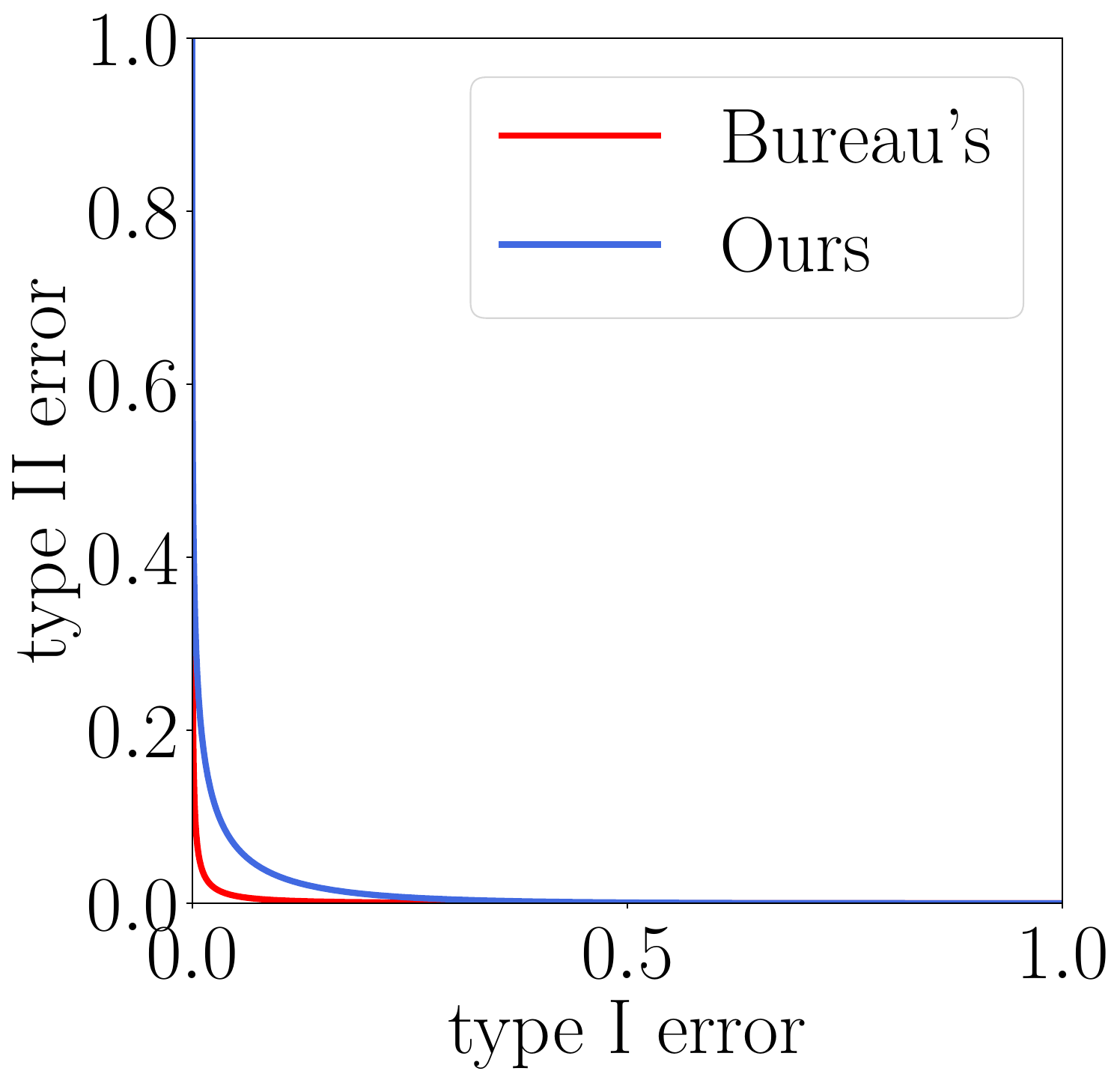} \\
        \tradeoffsm{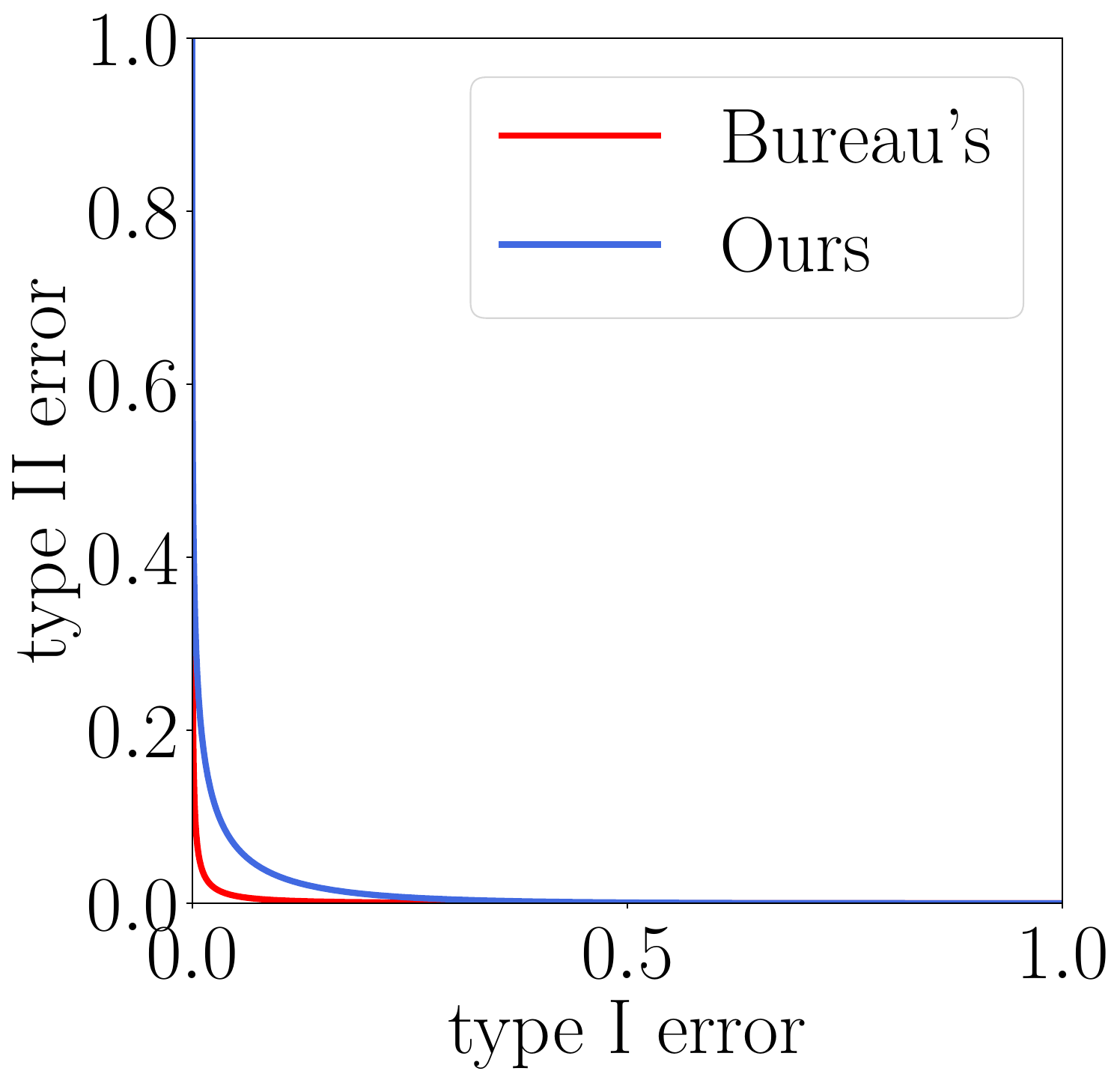} & \tradeoffsm{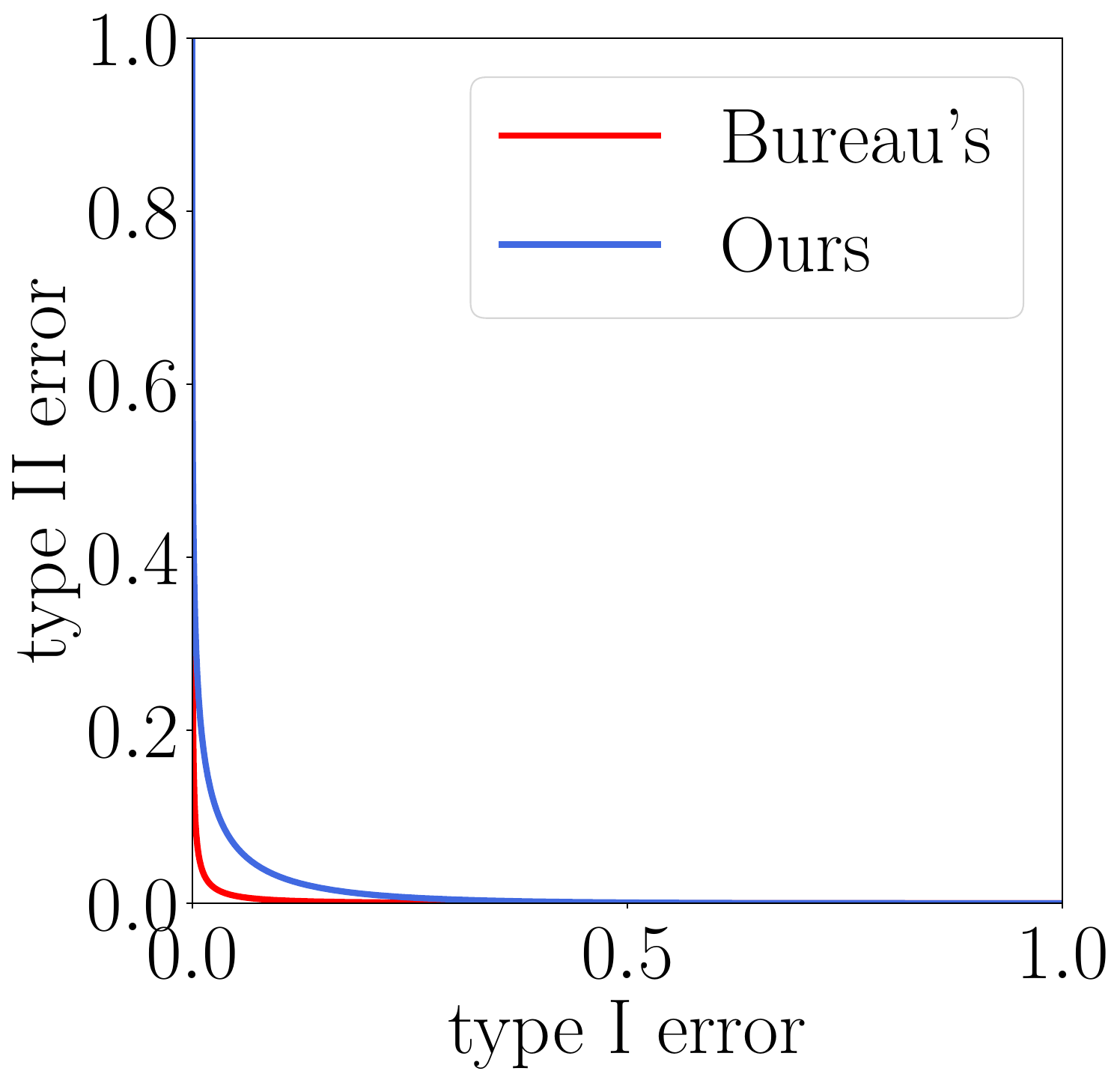} & \tradeoffsm{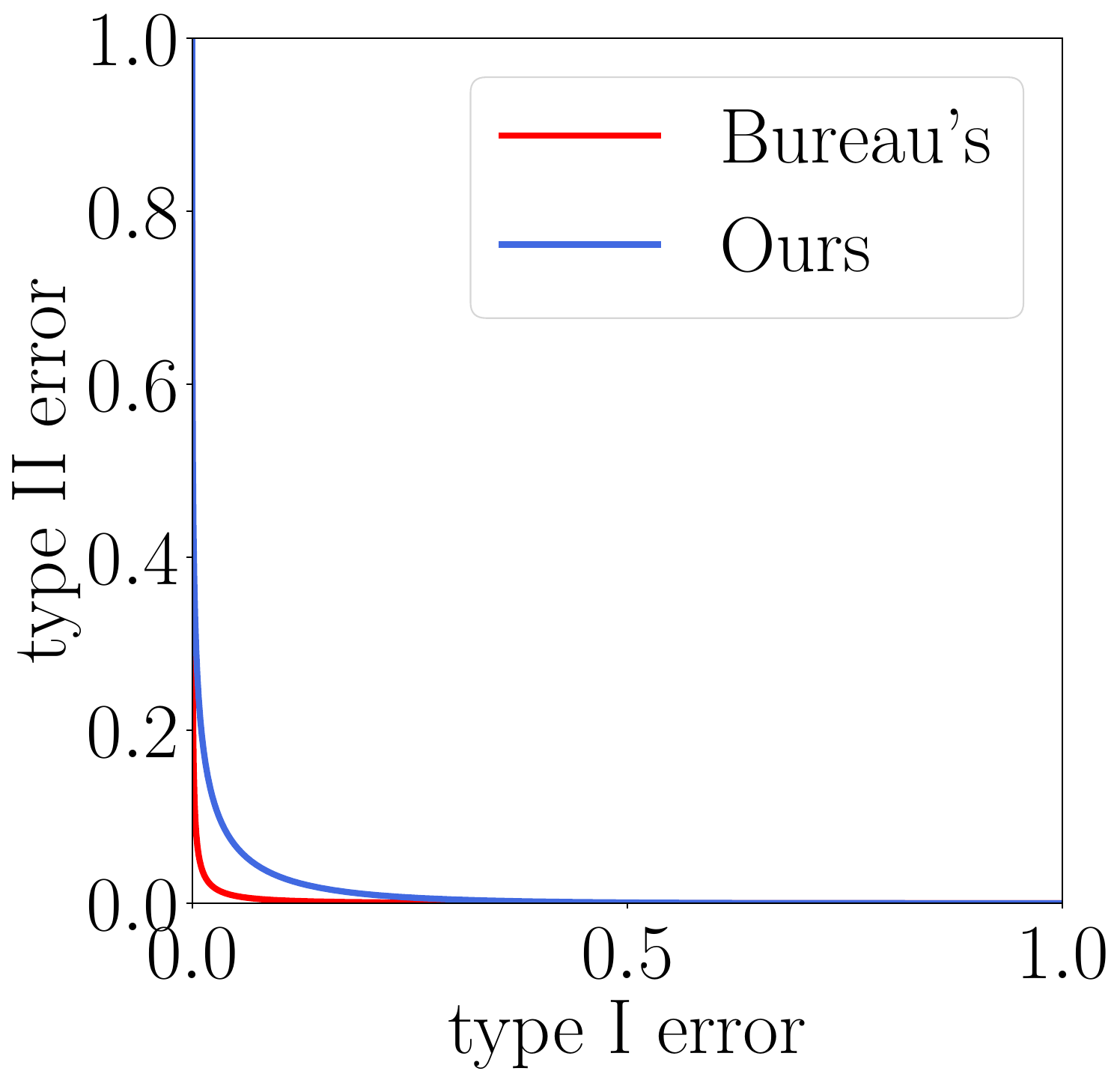} & \tradeoffsm{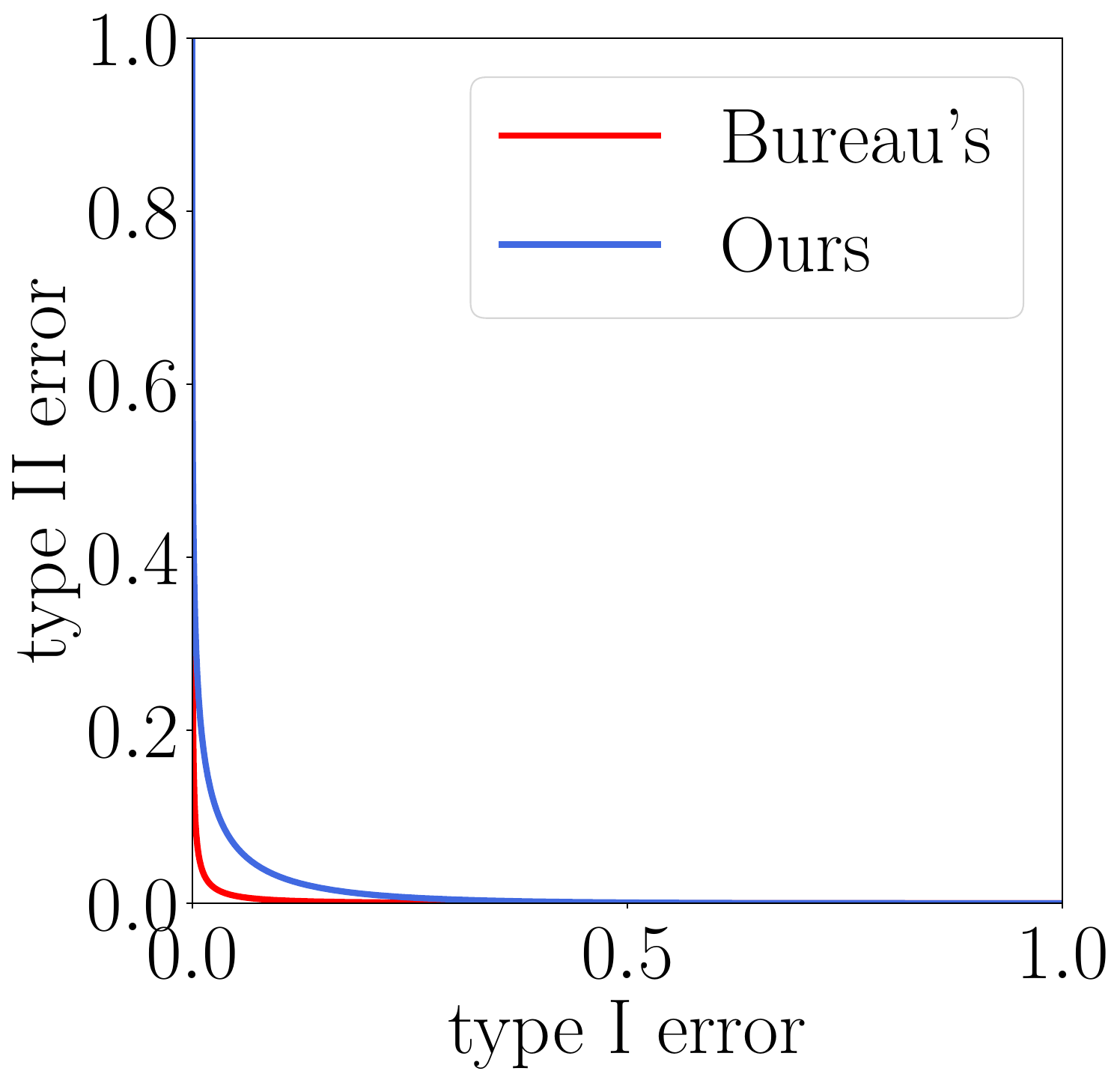} & \tradeoffsm{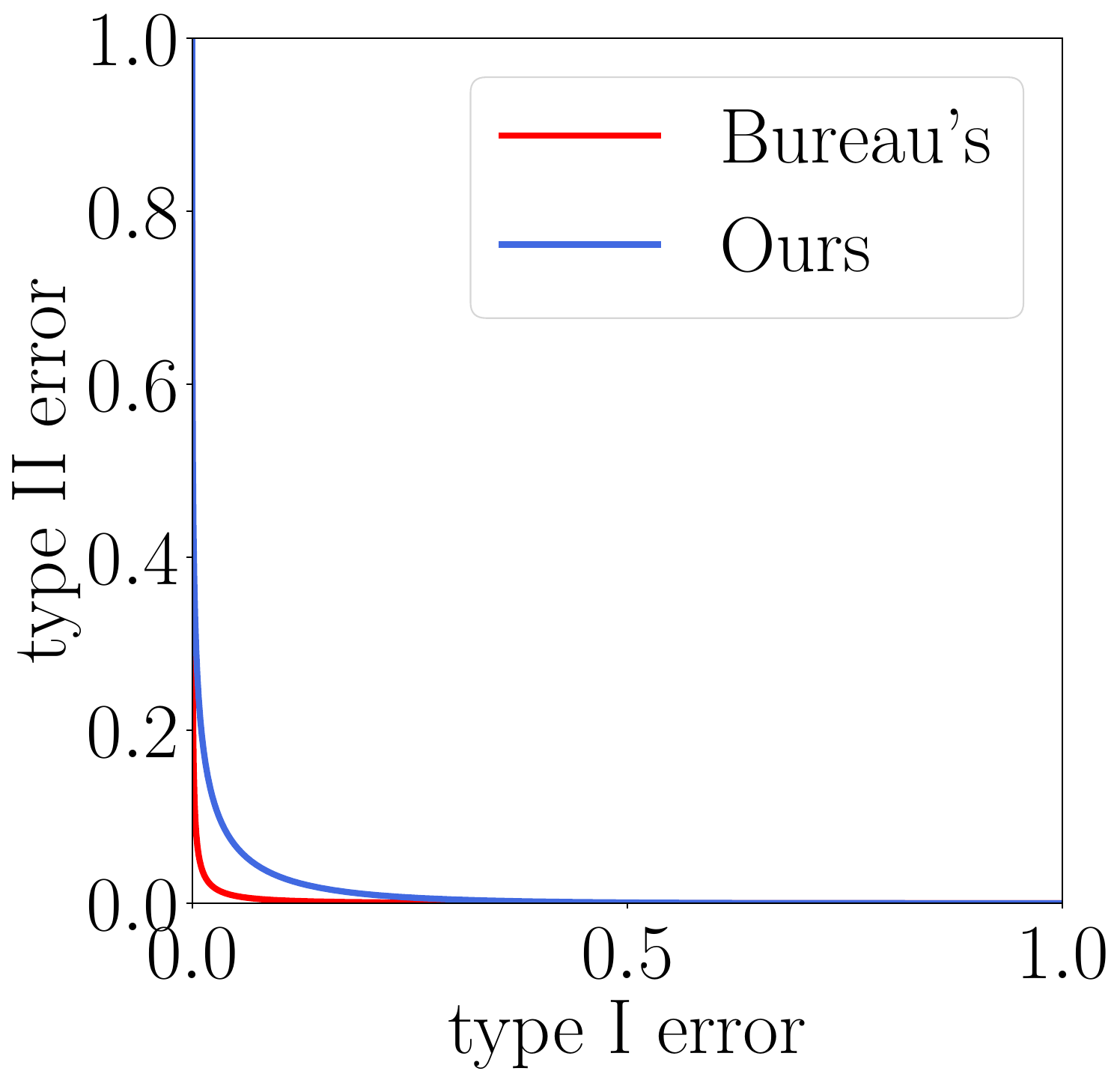} & \tradeoffsm{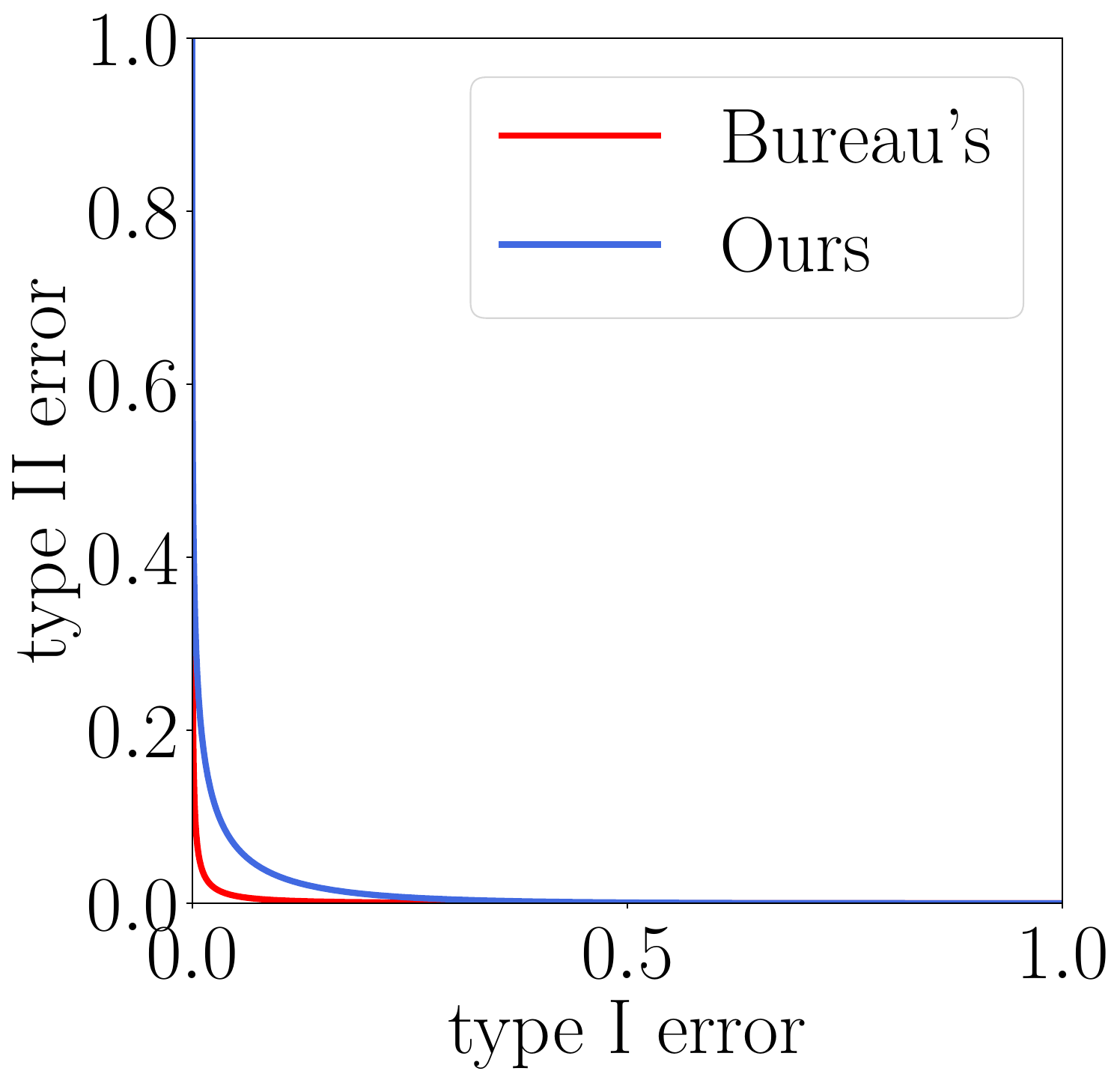} \\
        \tradeoffsm{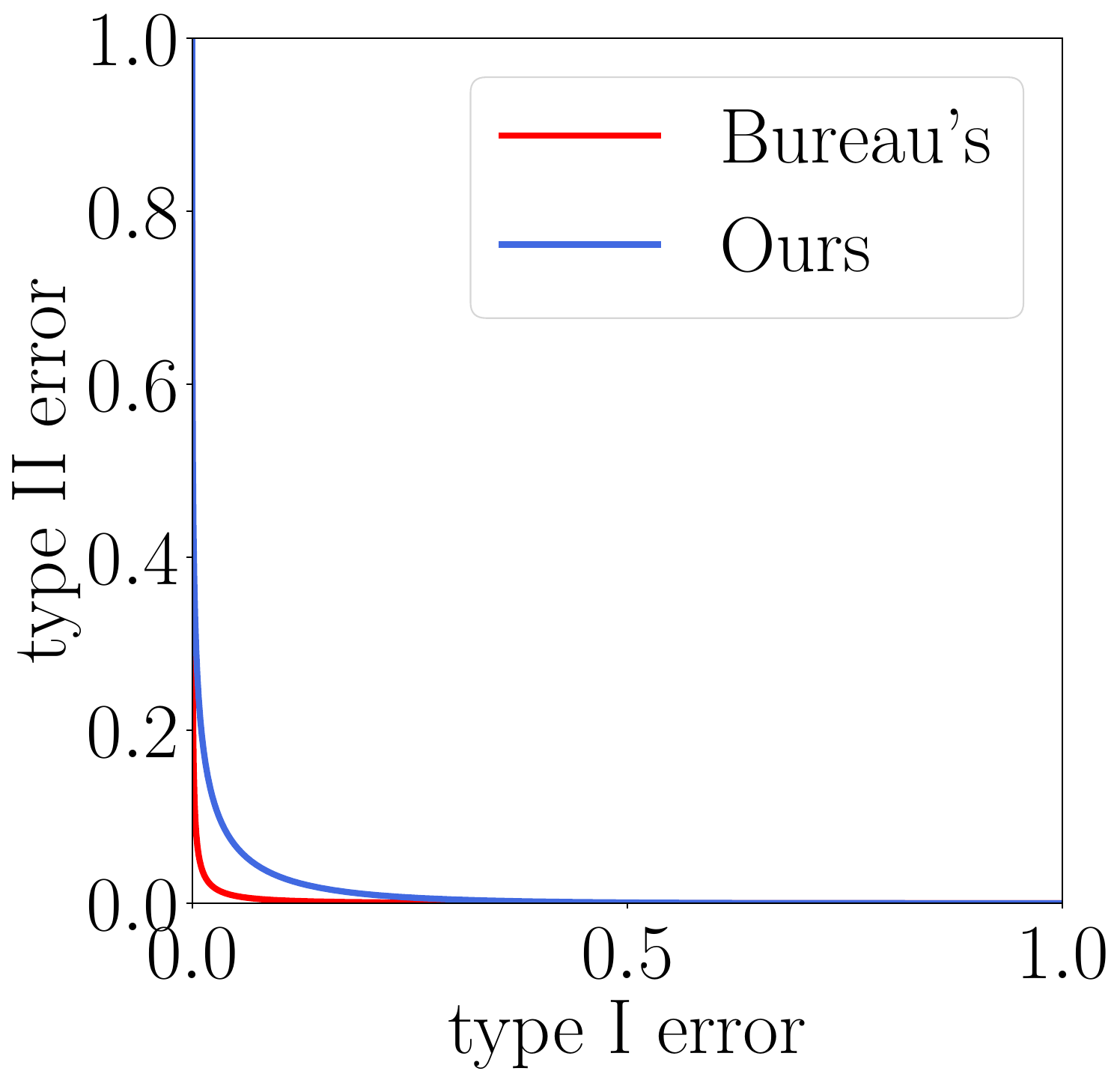} & \tradeoffsm{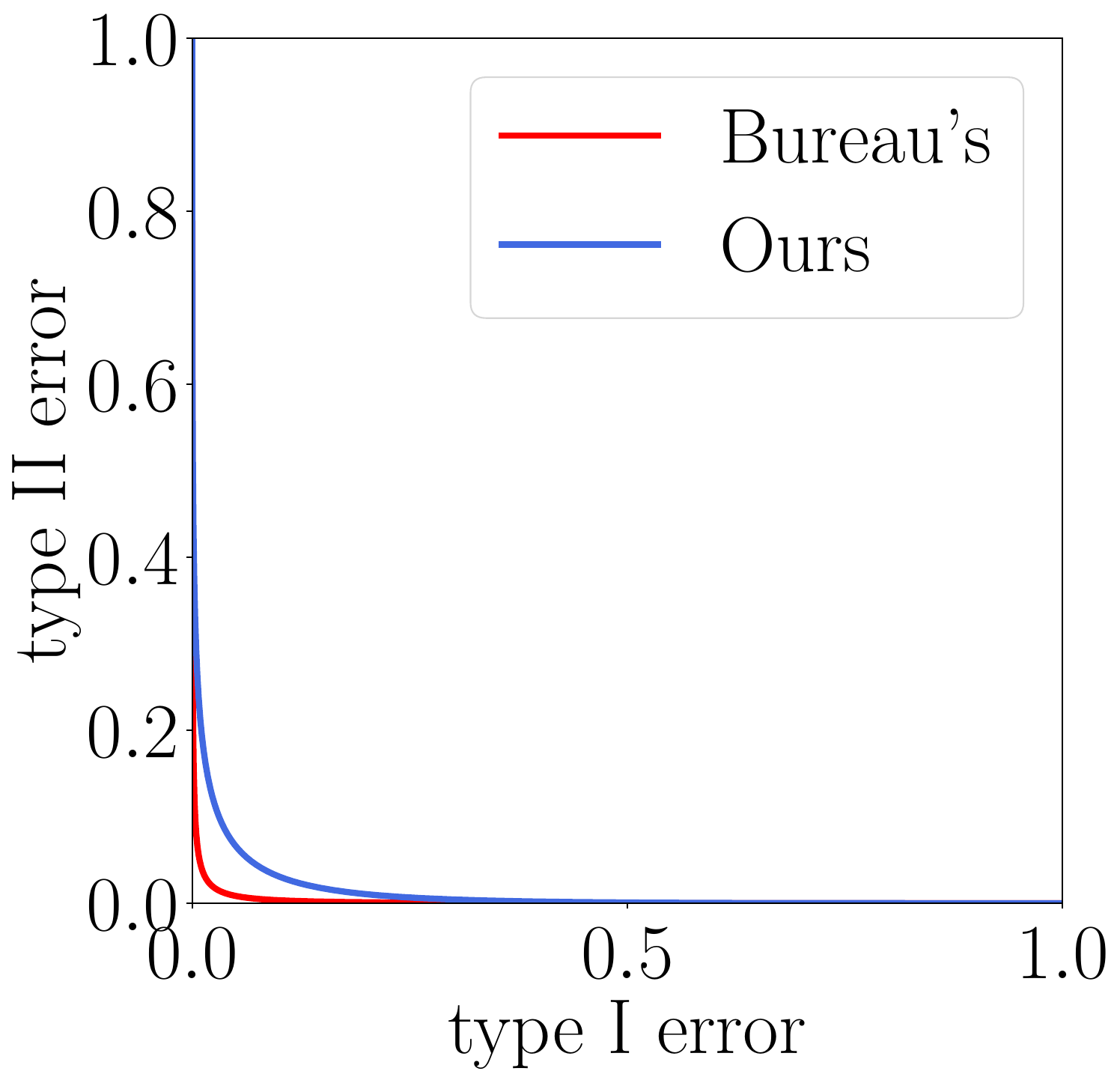} & \tradeoffsm{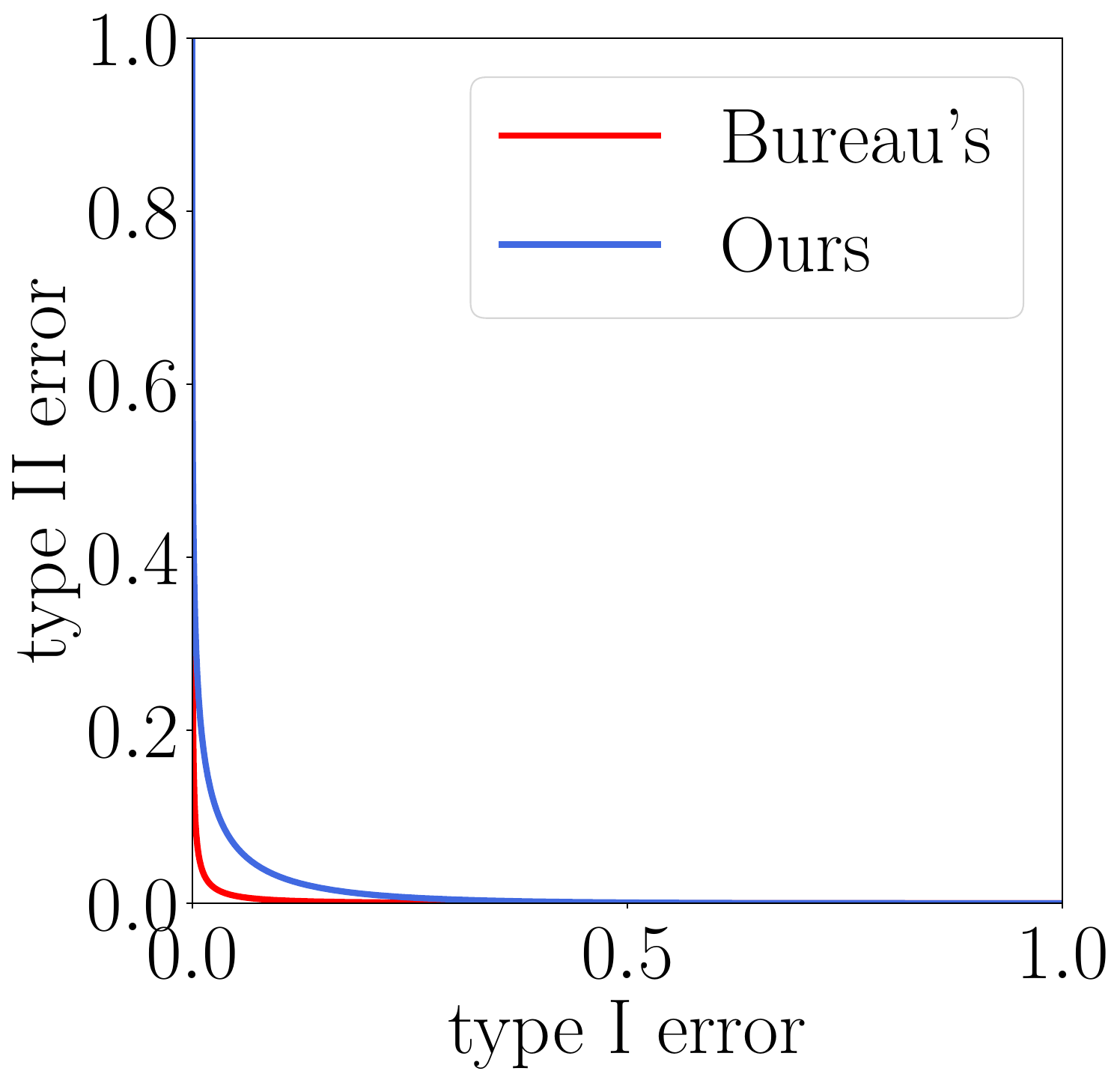} &  \multicolumn{3}{c}{\multirow[b]{2}{*}{\raisebox{0.5\cellh}{\tradeoffbigfig{13}}}} \\
        \tradeoffsm{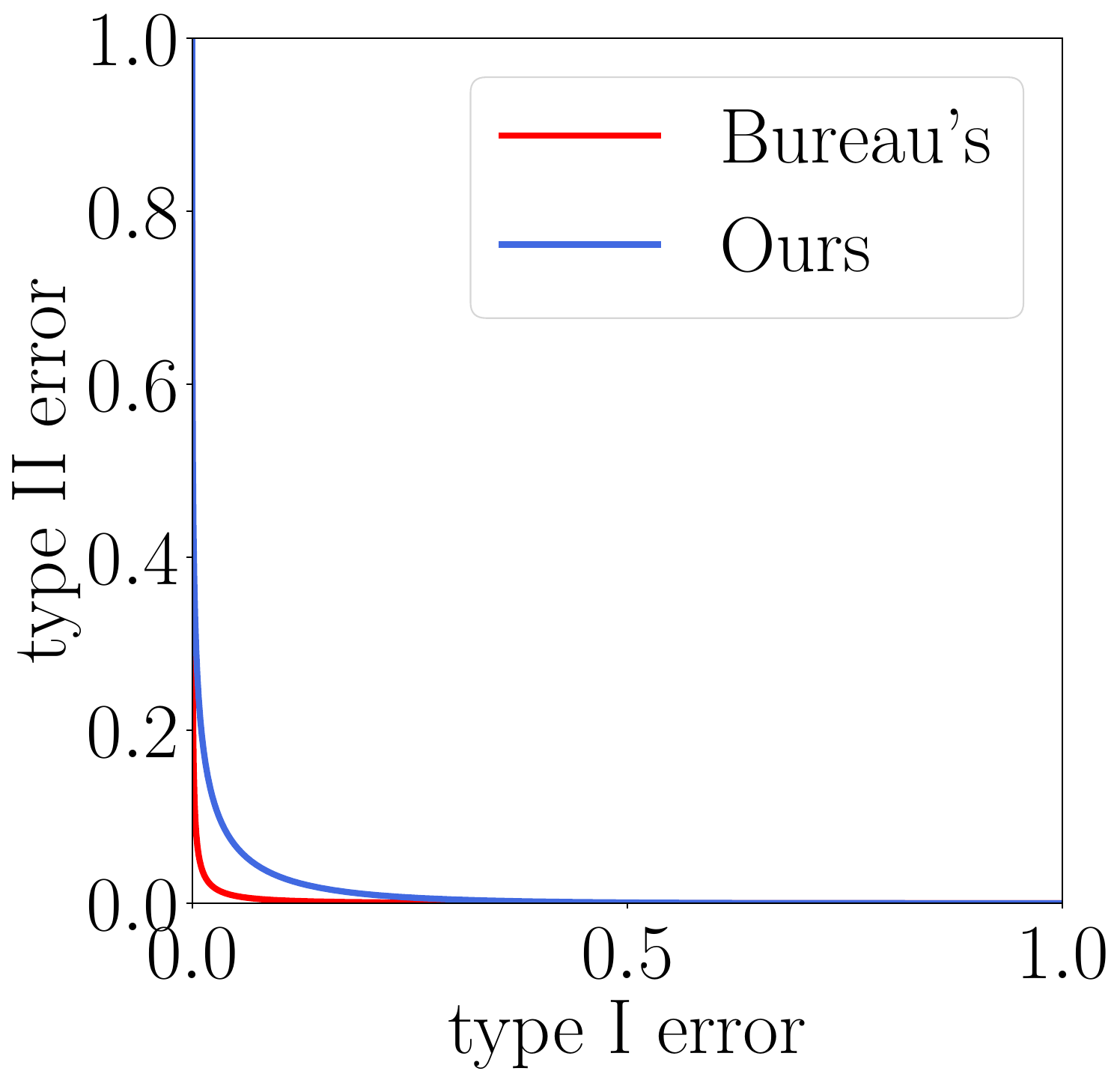} & \tradeoffsm{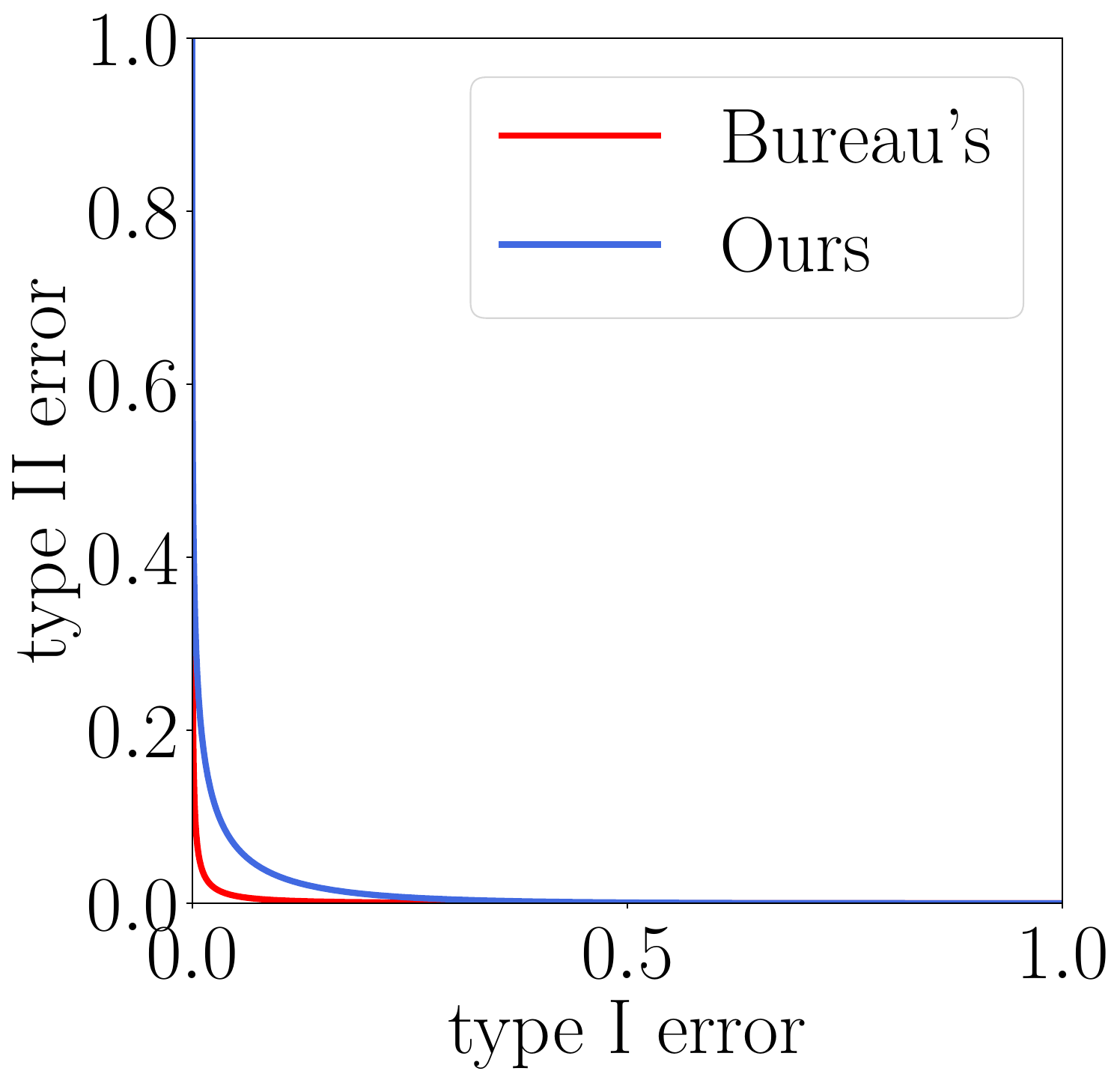} & \tradeoffsm{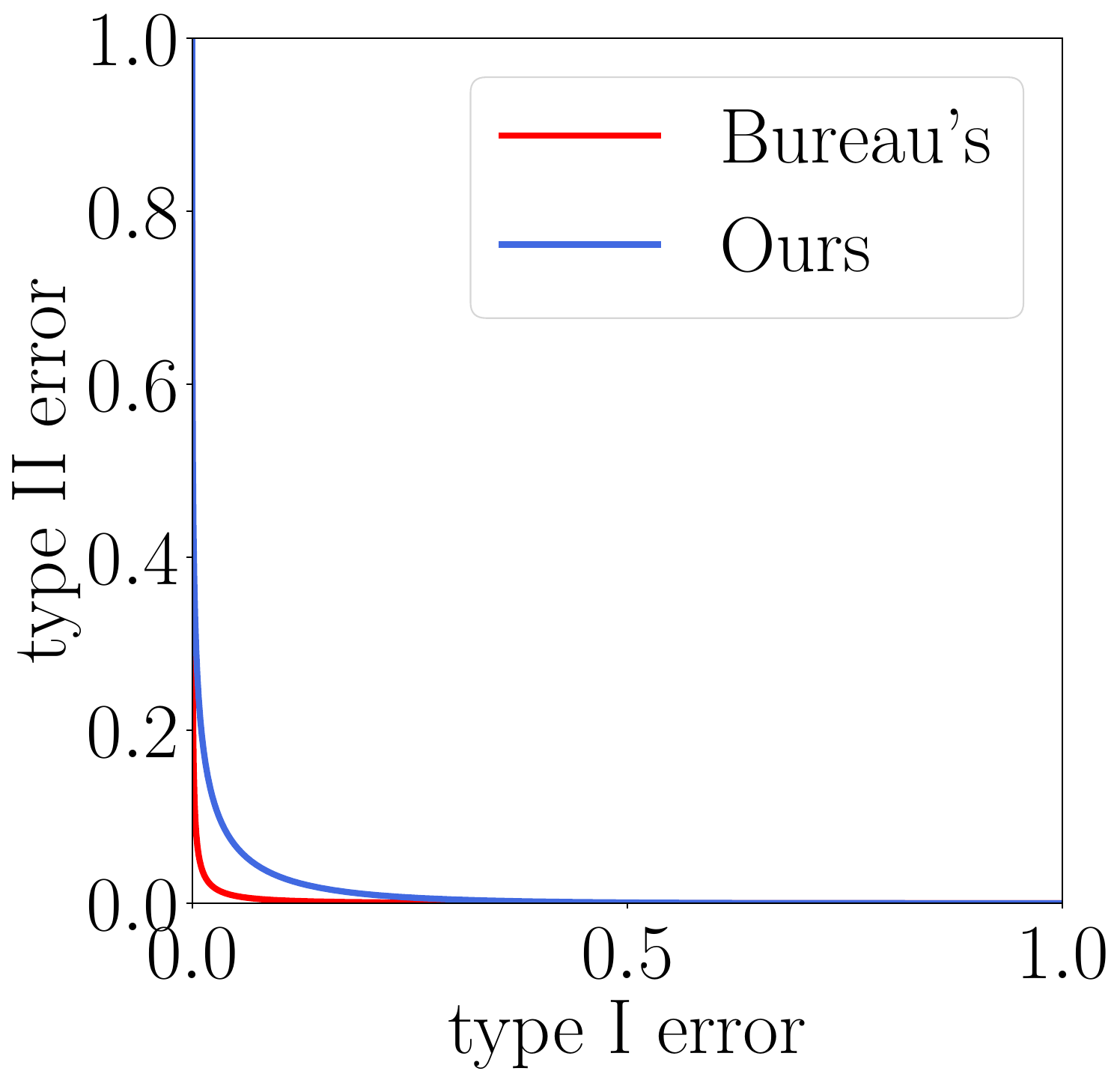} &  \multicolumn{3}{c}{}  \\
    \end{tabular}
    \vspace{-3mm}
    \caption{The $f$-DP curves for the composed mechanisms $[\widetilde{\mathbf{M}}_0, \widetilde{\mathbf{M}}_l]$ for all $l$, where $\widetilde{\mathbf{M}}_0$ corresponds to the privacy budget allocation in Table \ref{tab:PLB_typical}. Although the curves appear visually similar, they differ by amounts significantly larger than the numerical error introduced by floating-point arithmetic or the \texttt{mpmath} package, as quantified in Figure \ref{fig:sensitivity-privacy}. The enlarged panel in the lower-right corner displays the privacy profile of the composed mechanism $[\widetilde{\mathbf{M}}_0, \widetilde{\mathbf{M}}_0]$.}
    \label{fig:trade_off_paths_bypass}
\end{figure}

\subsection{\texorpdfstring{$(\varepsilon, \delta)$}{}-DP curves}
\label{sec:epsilondp}

Since the $(\varepsilon, \delta)$-DP curve is equivalent to the $f$-DP curve and has been suggested as an alternative way to report privacy guarantees \citep{gomez2025varepsilon}, we present the exact $(\varepsilon, \delta)$-DP curve for the 2020 Census DHC File in the right panel of Figure \ref{fig:overall-privacy}.
Similarly, Figure \ref{fig:eps_delta_paths_bypass} illustrates the privacy guarantees of the composed mechanisms $[\widetilde{\mathbf{M}}_0, \widetilde{\mathbf{M}}_l]$ for all $l$, expressed in terms of their $(\varepsilon, \delta)$-DP curves.

\newcommand{\epsilonlm}[1]{%
  \parbox[c][\cellh][c]{0.16\textwidth}{\centering
    \includegraphics[width=\linewidth,height=\cellh,keepaspectratio]{Figures/v2/epsilon_delta_curve_path_#1_to_13.pdf}
  }
}
\newcommand{\epsilonsm}[1]{%
  \parbox[c][\cellh][c]{0.16\textwidth}{\centering
    \includegraphics[width=\linewidth,height=\cellh,keepaspectratio]{Figures/v2/epsilon_delta_curve_path_13_to_#1.pdf}
  }
}
\newcommand{\epsilonbigfig}[1]{
  \parbox[t][2\cellh][t]{0.32\textwidth}{\centering\vspace{-0.85cm}
  \includegraphics[width=\linewidth,height=2\cellh,keepaspectratio]{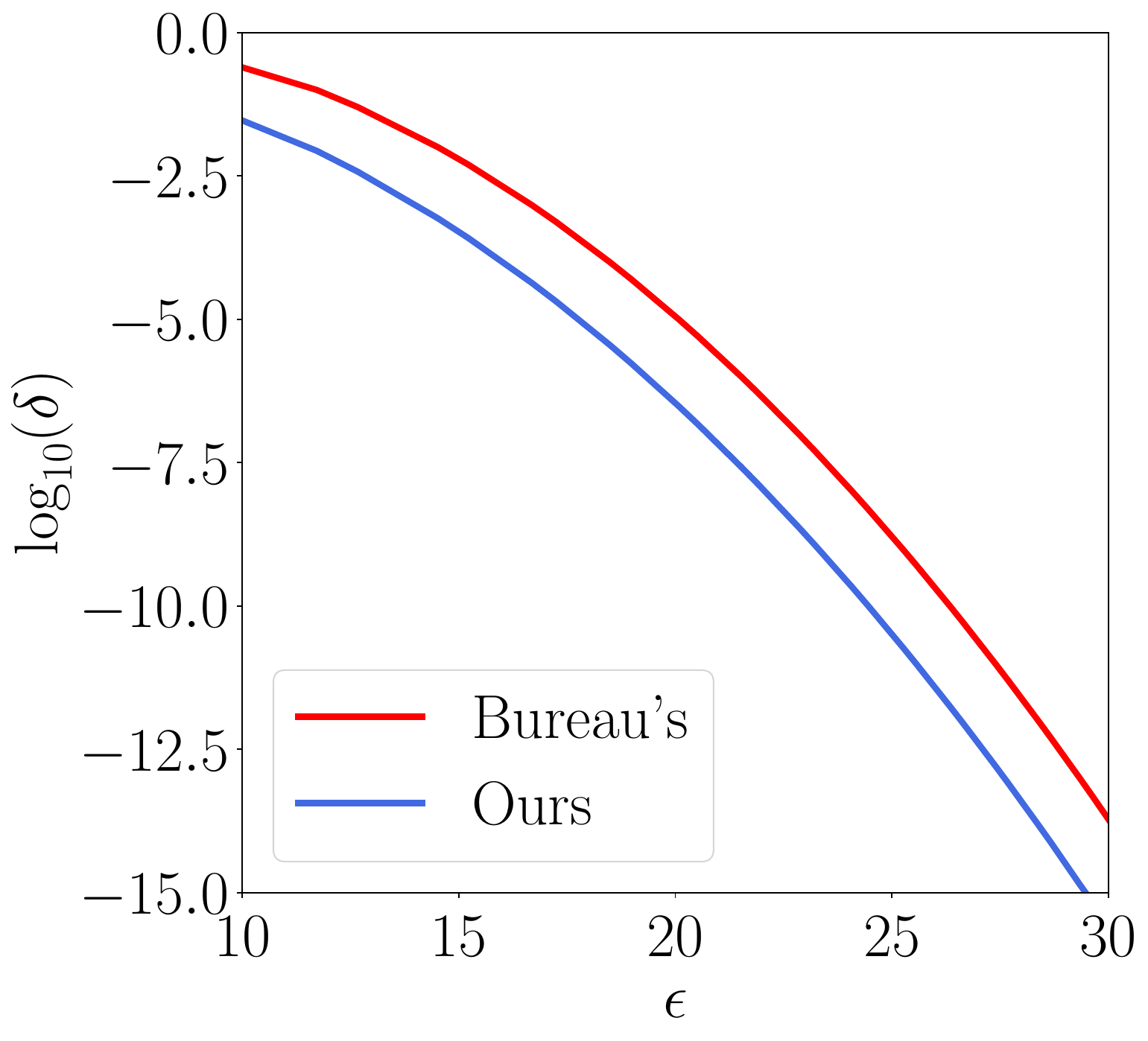}
  }
}

\begin{figure}[!htp]
    \centering
    \setlength{\tabcolsep}{2pt}
    \renewcommand{\arraystretch}{1.0}

    \begin{tabular}{@{}cccccc@{}}
        \epsilonlm{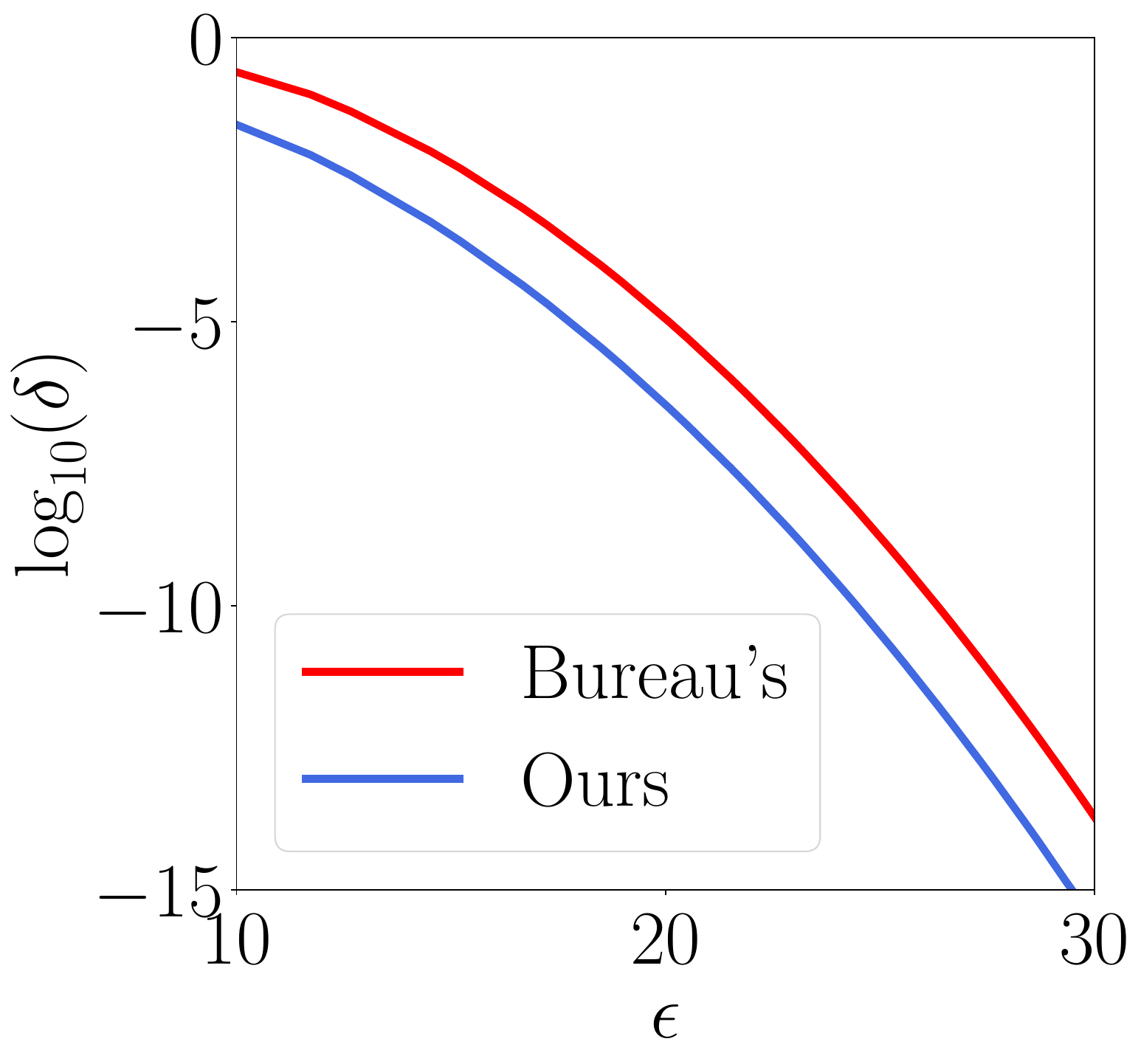}  & \epsilonlm{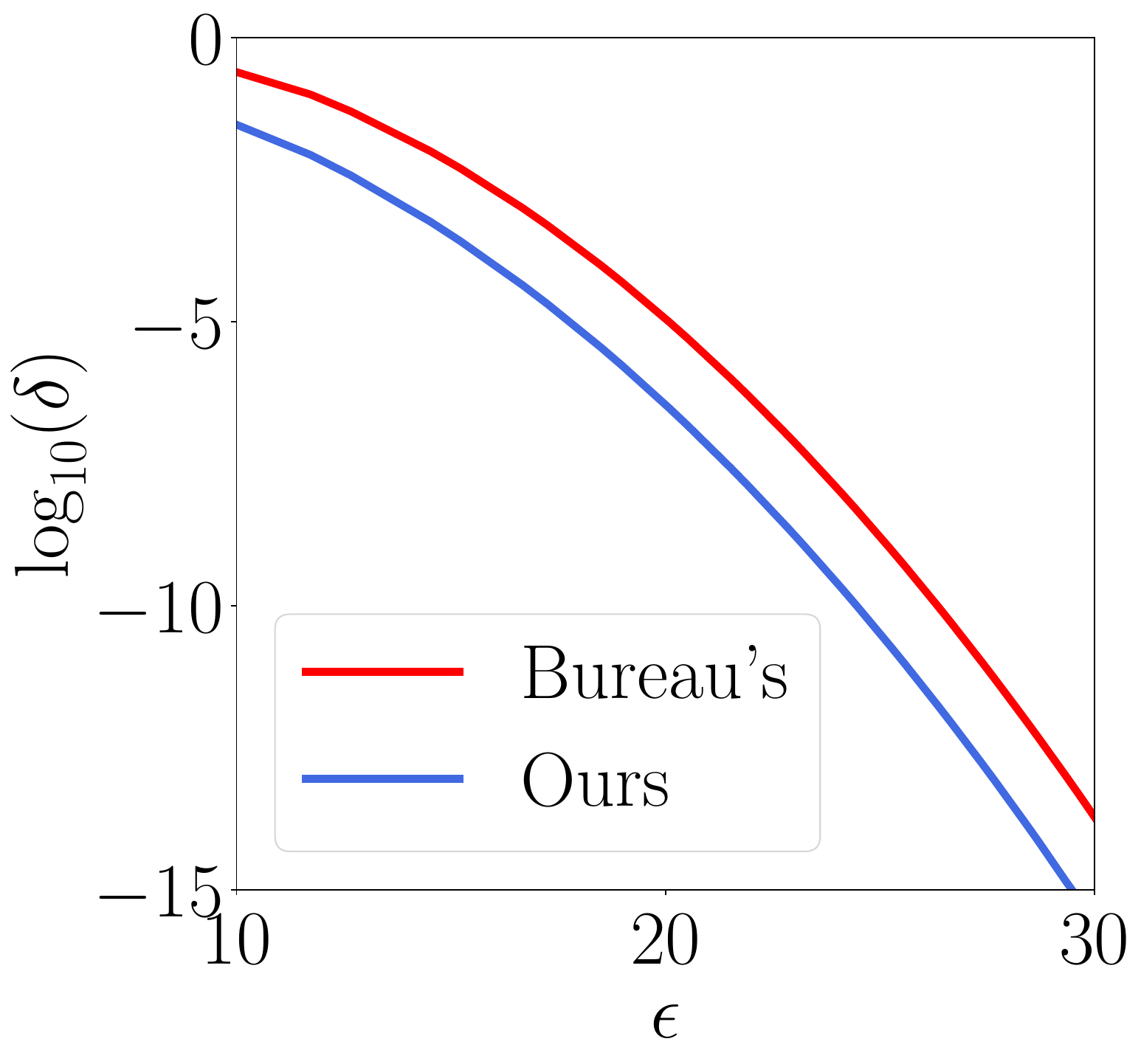}  & \epsilonlm{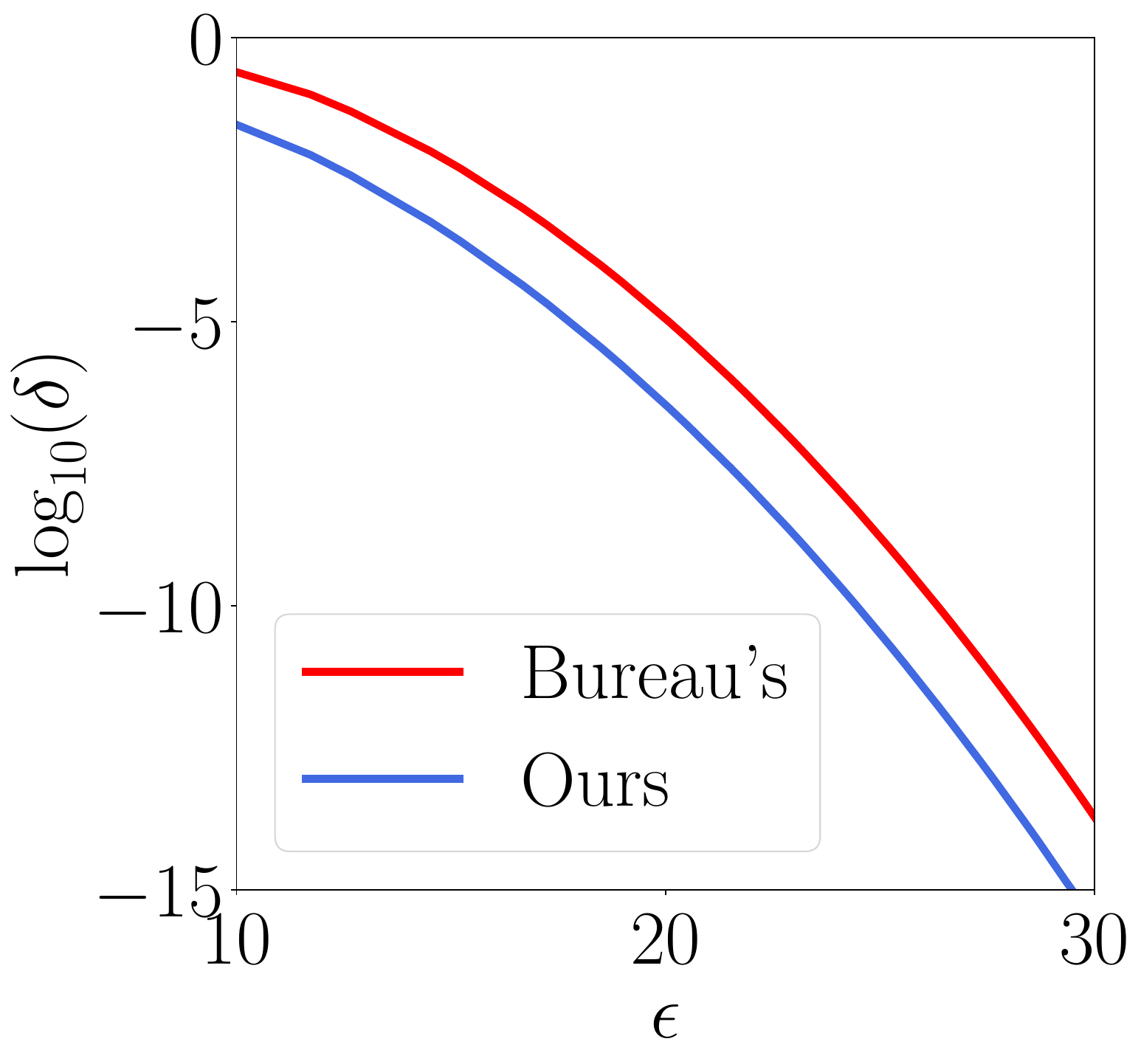}  & \epsilonlm{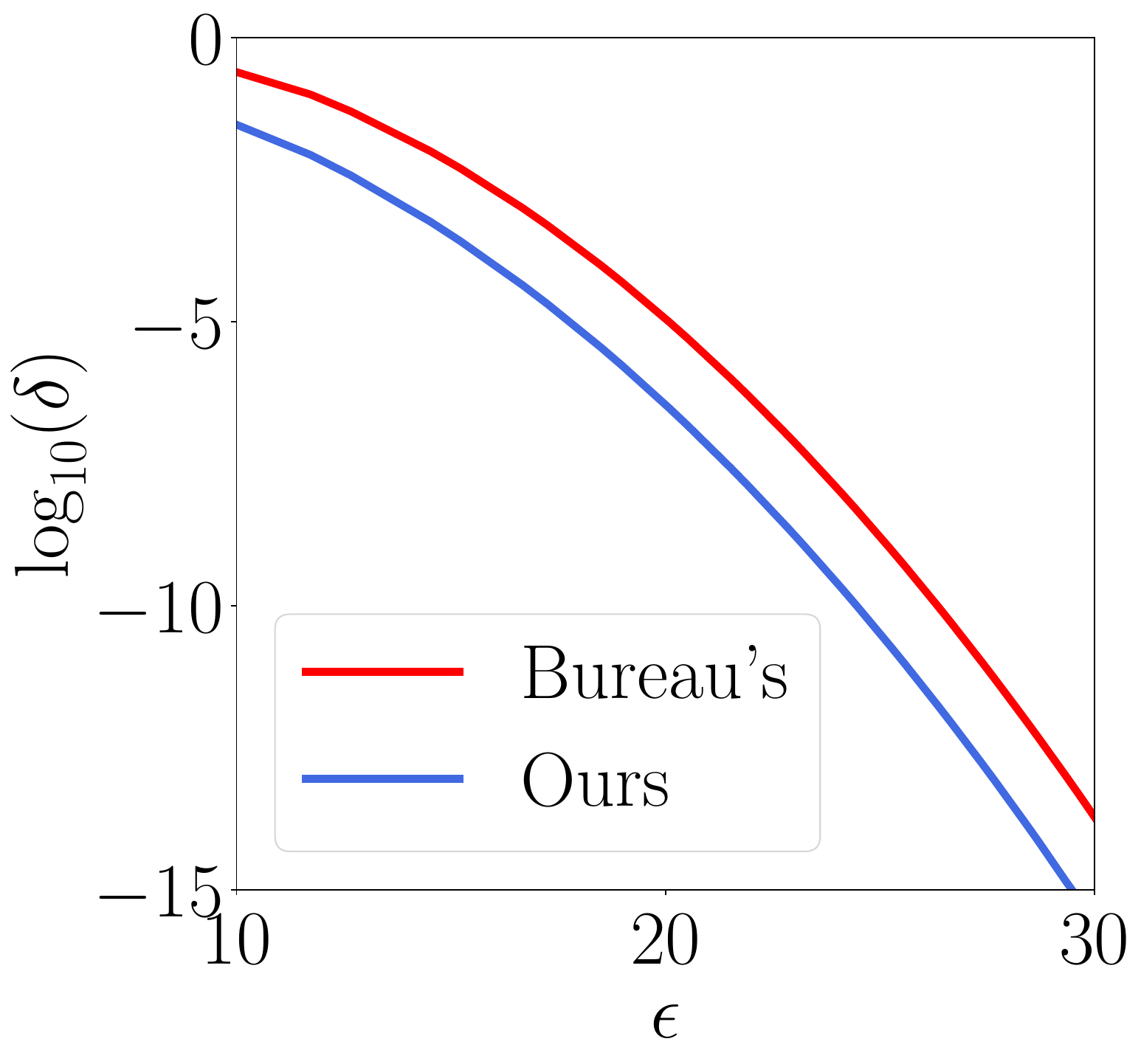}  & \epsilonlm{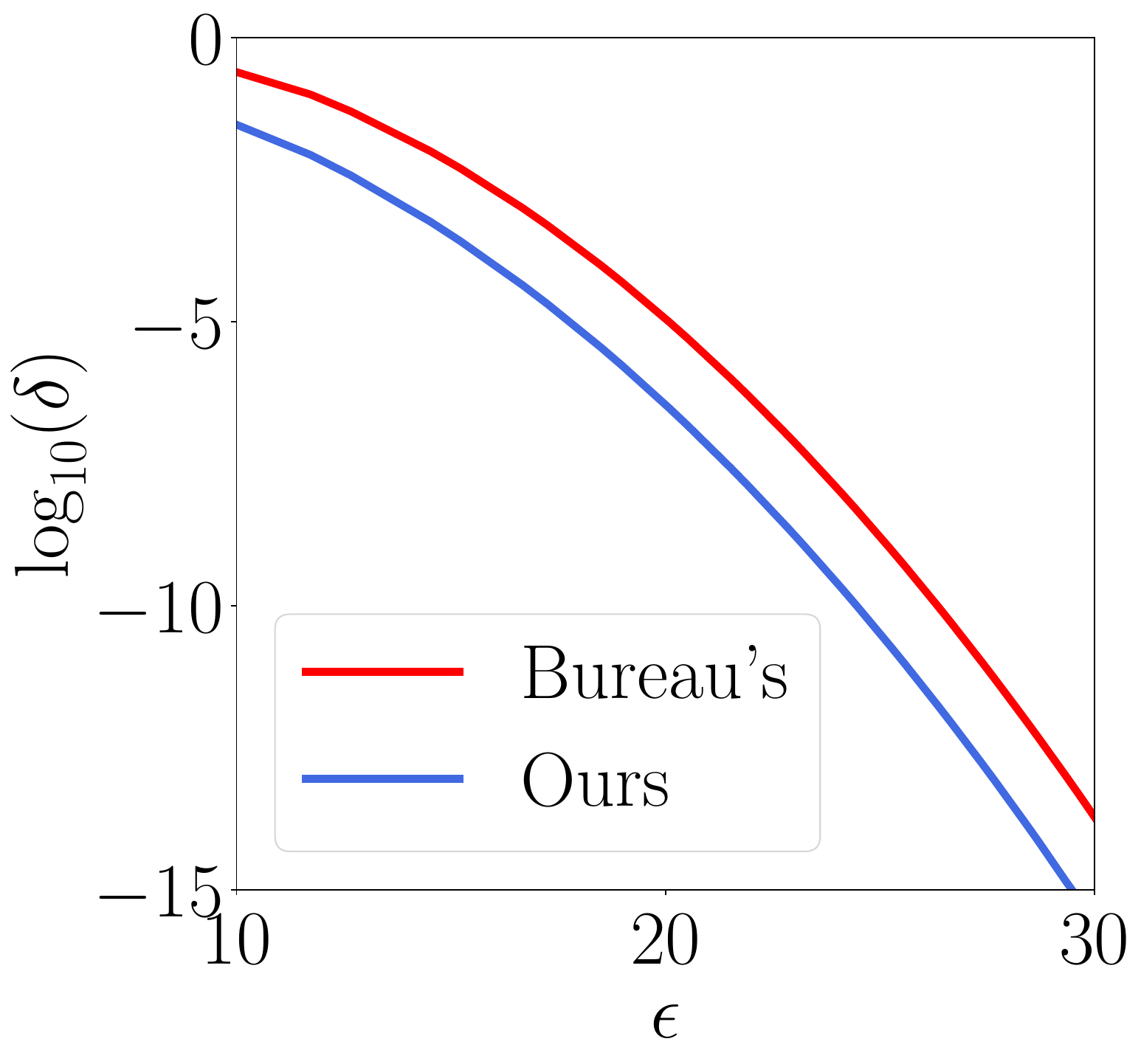}  & \epsilonlm{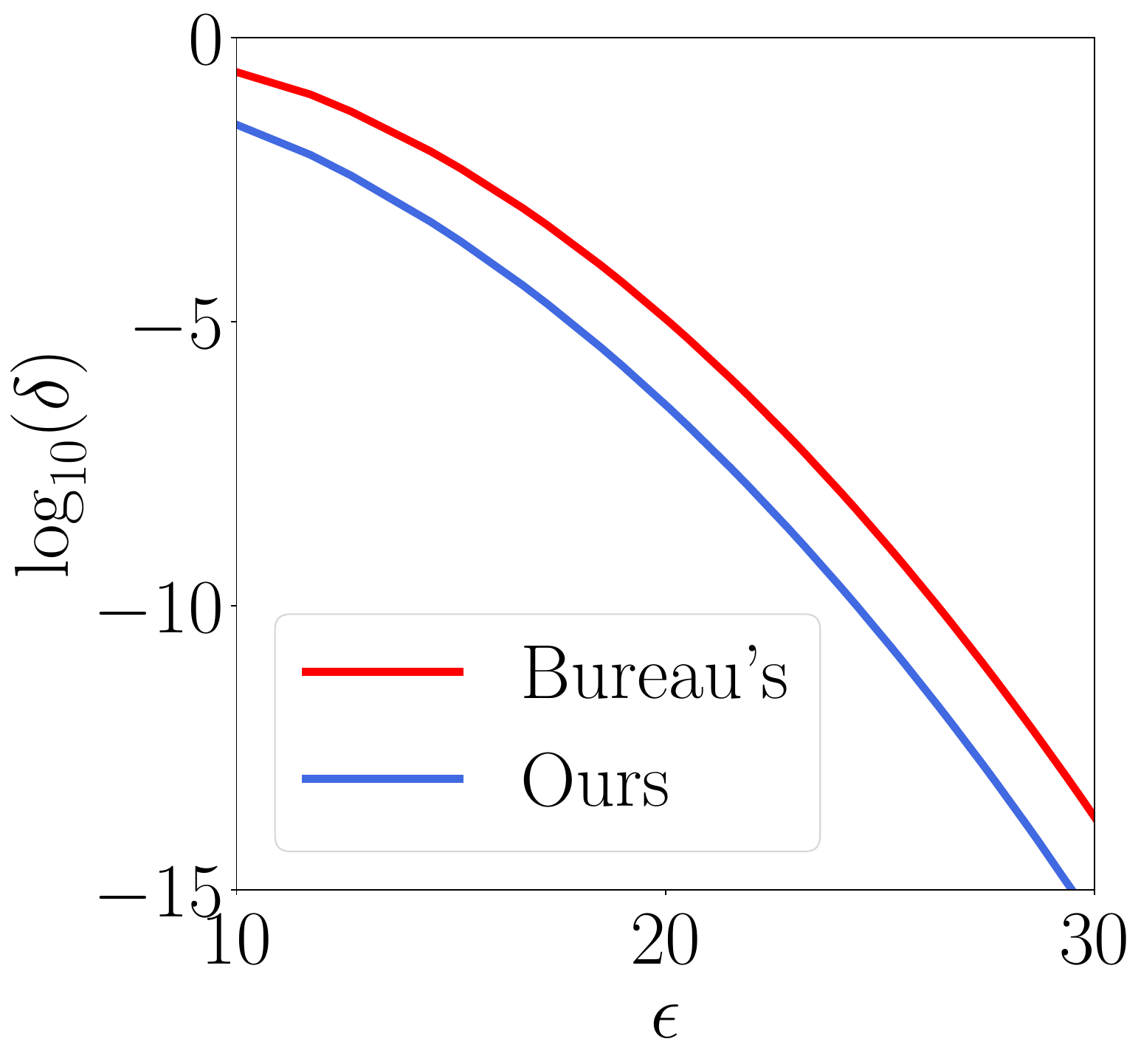}  \\
        \epsilonlm{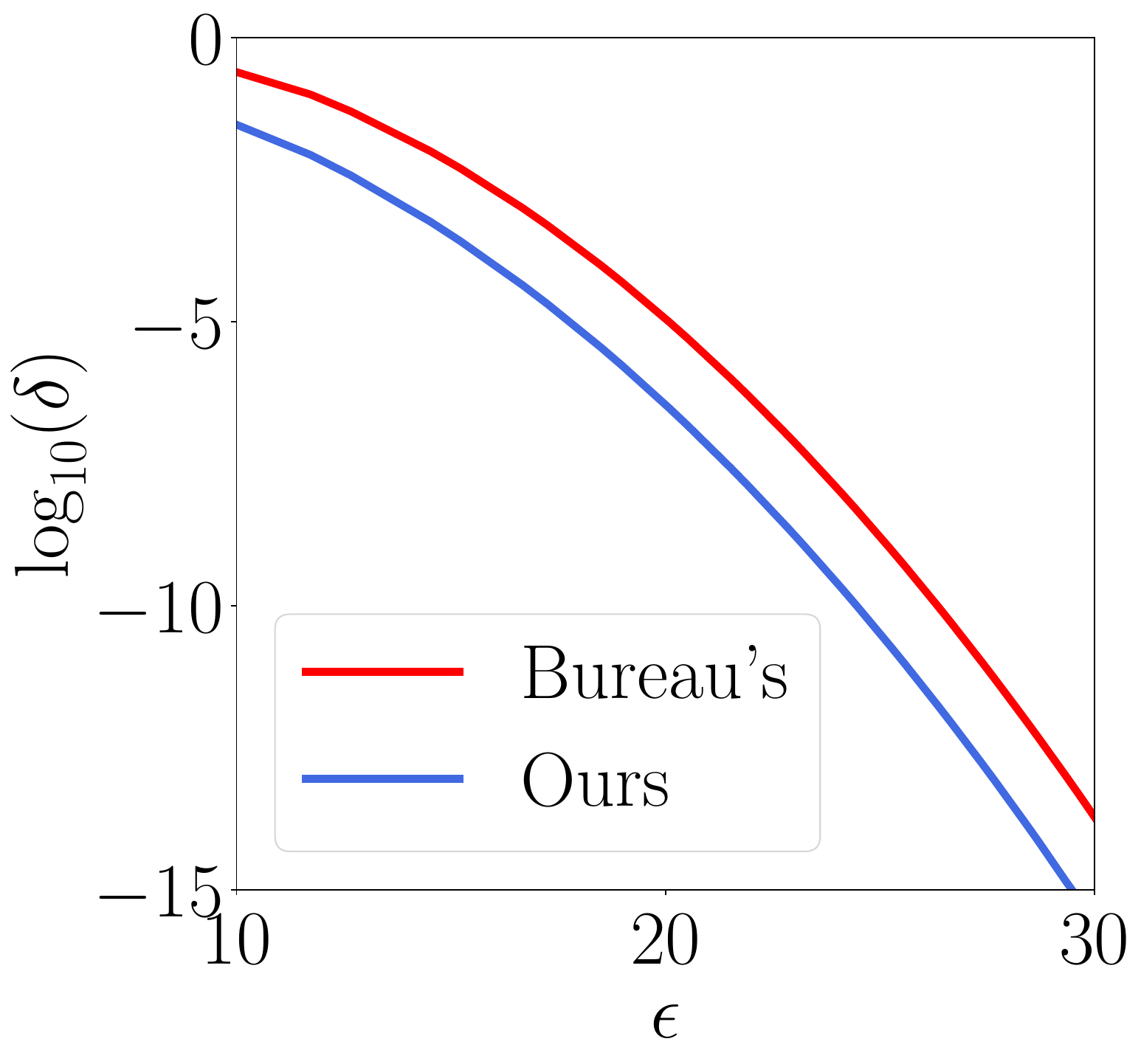}  & \epsilonlm{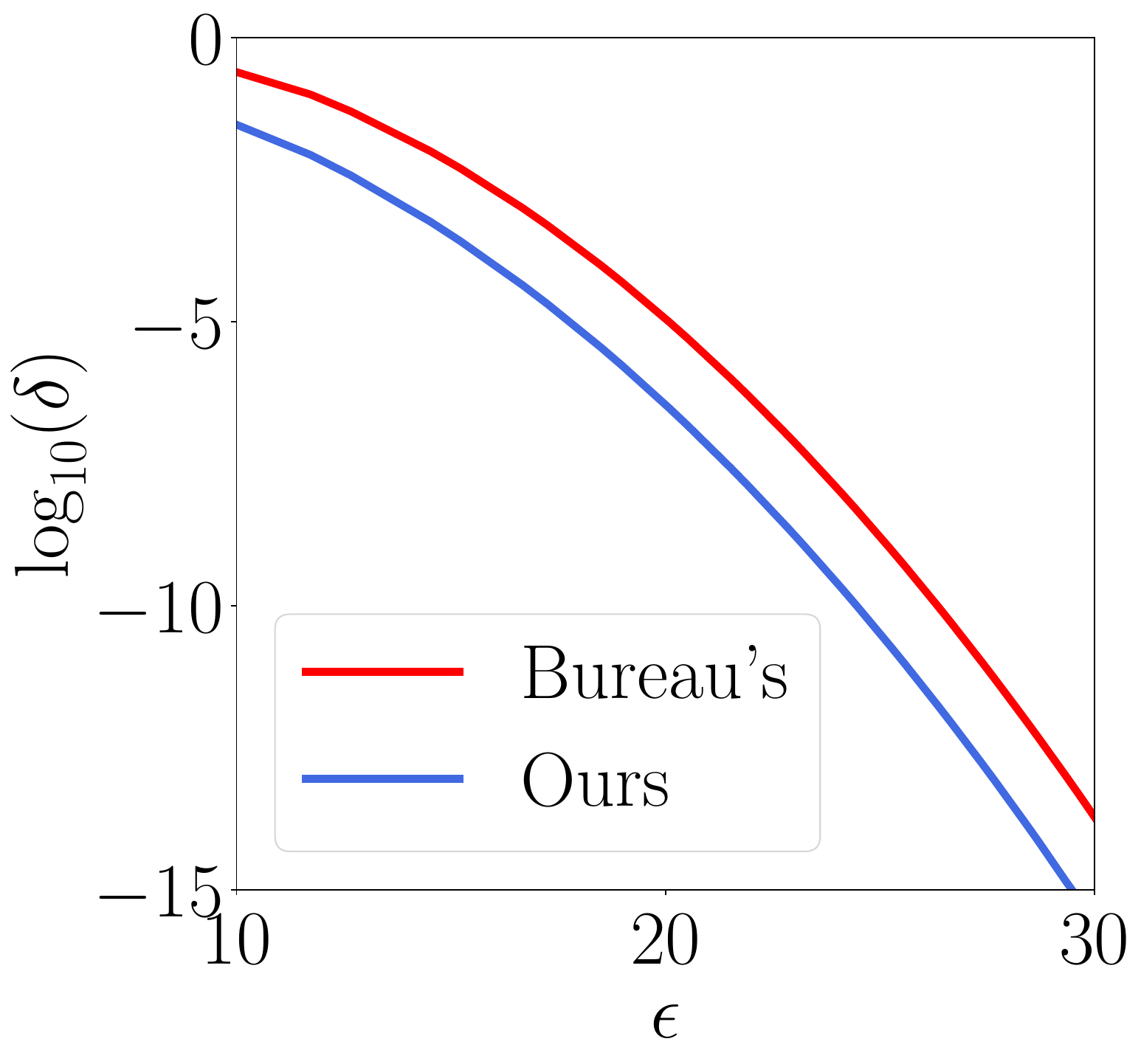}  & \epsilonlm{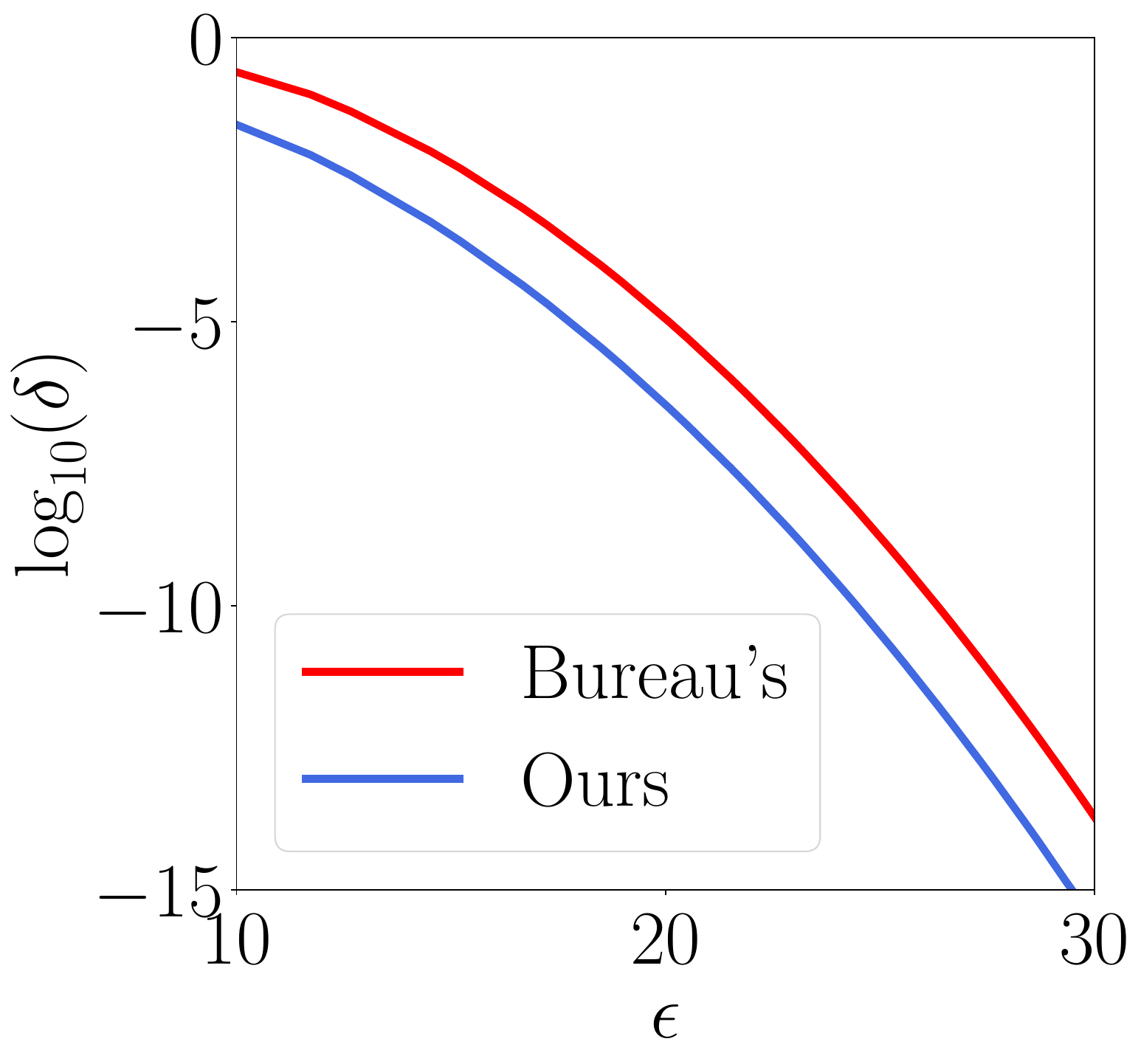}  & \epsilonlm{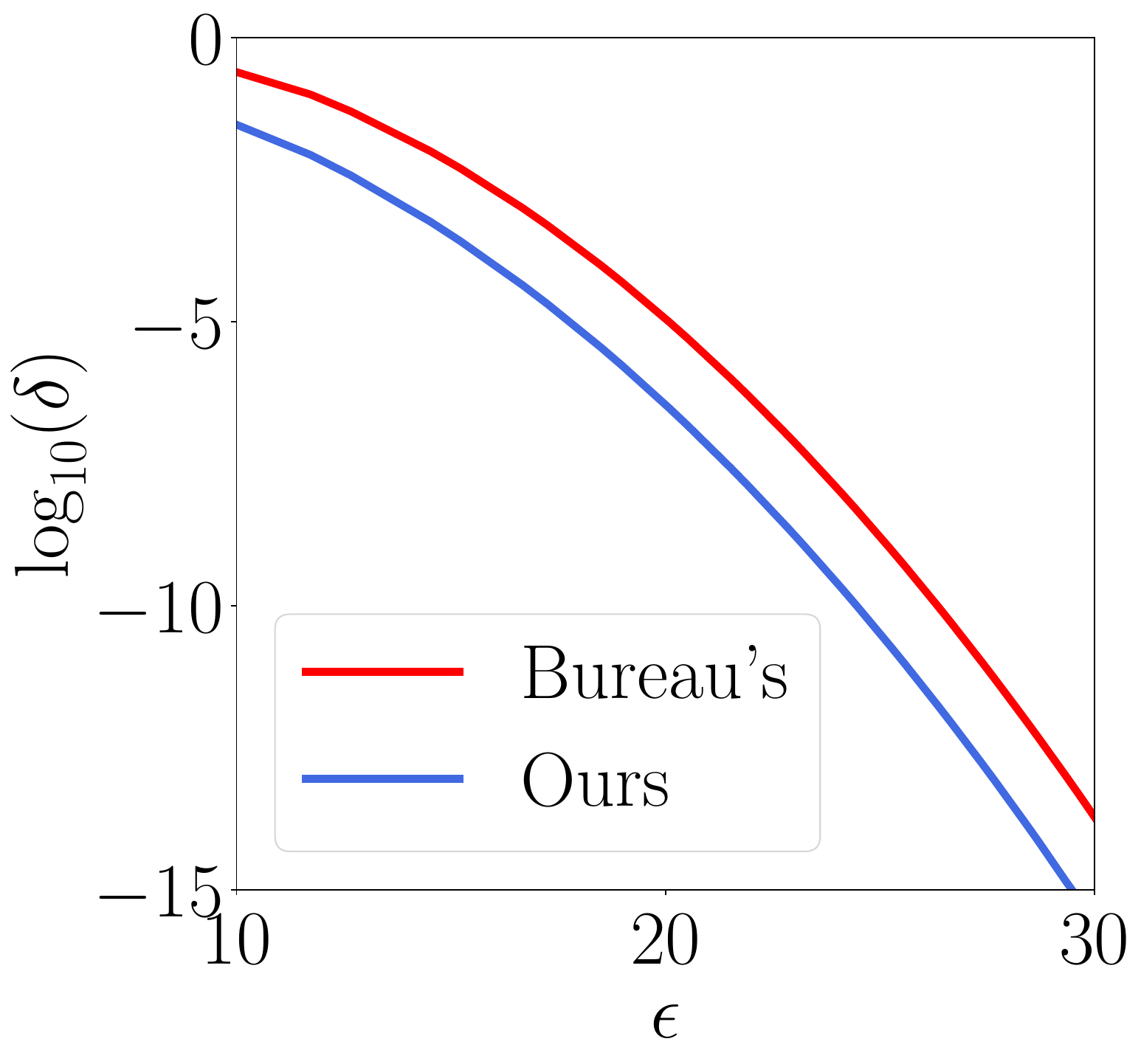} & \epsilonlm{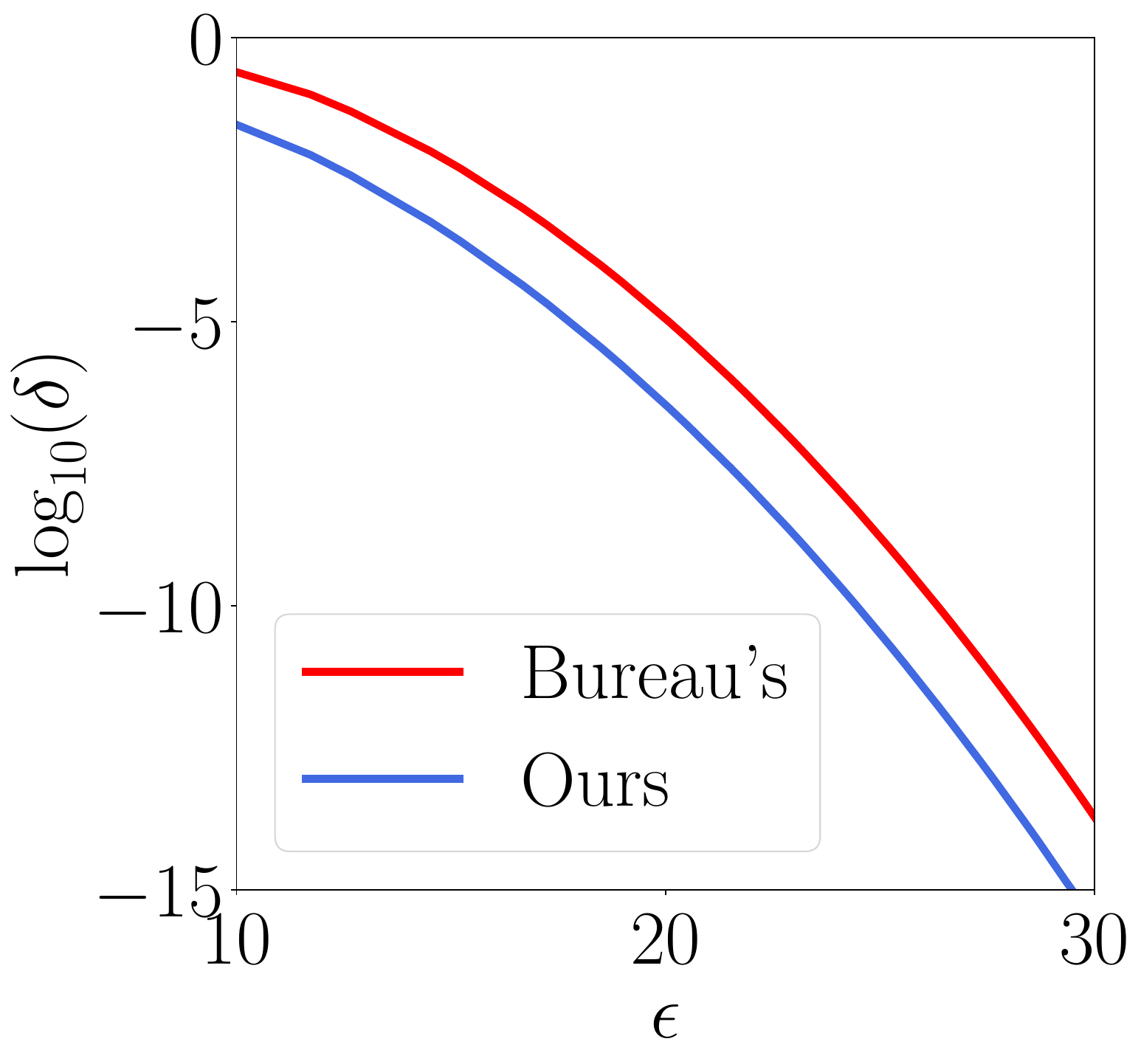} & \epsilonlm{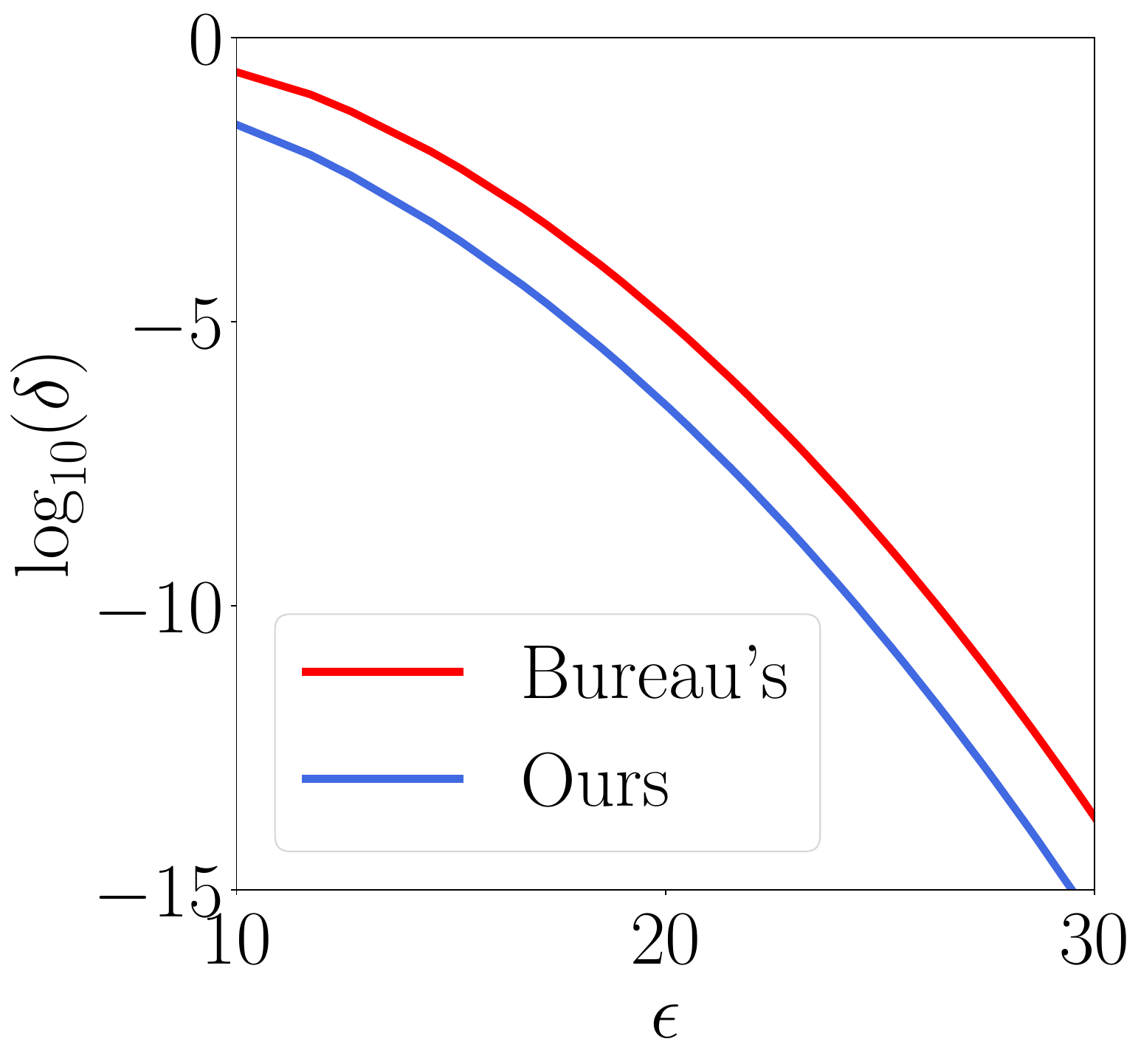} \\
        \epsilonsm{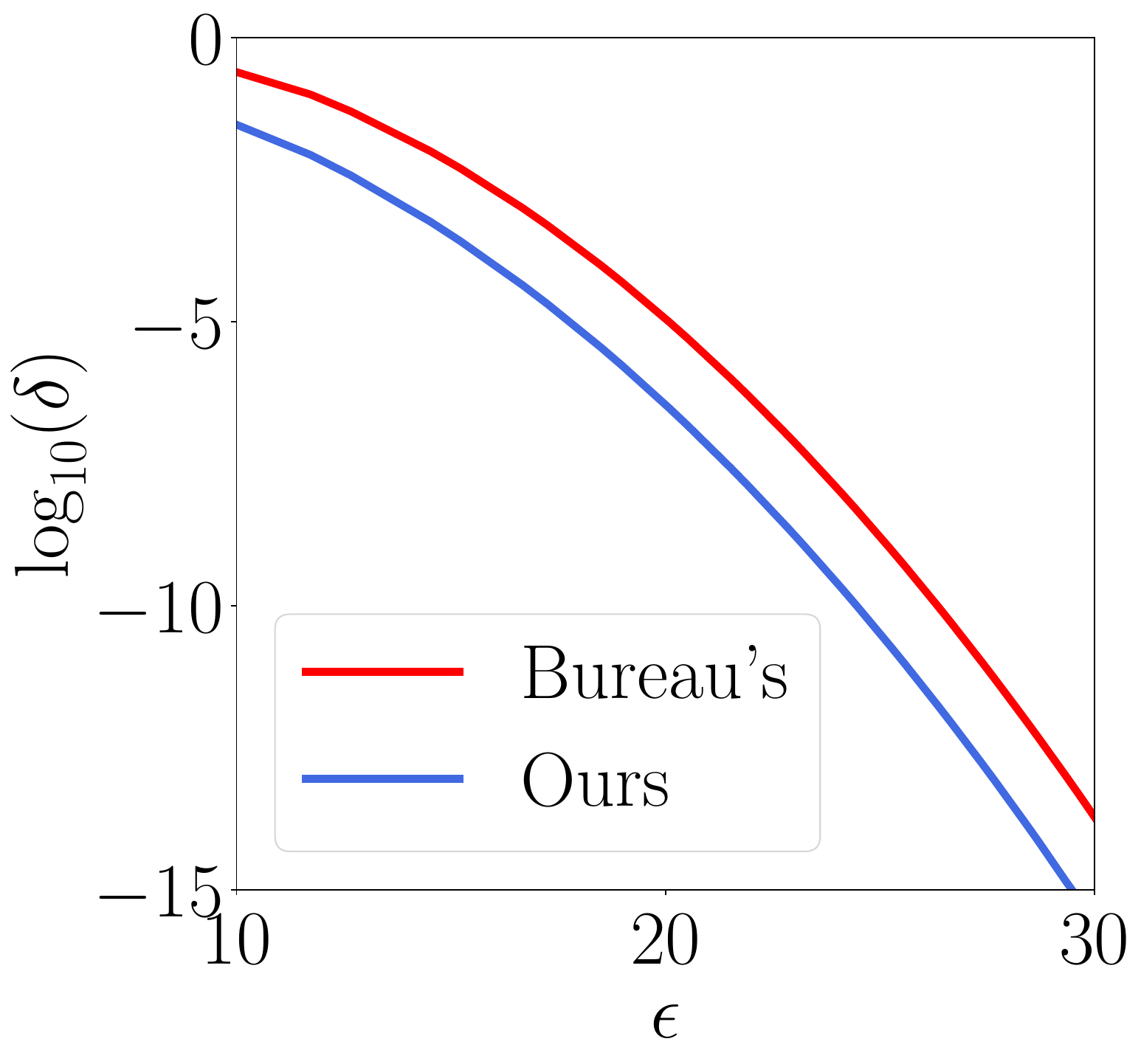} & \epsilonsm{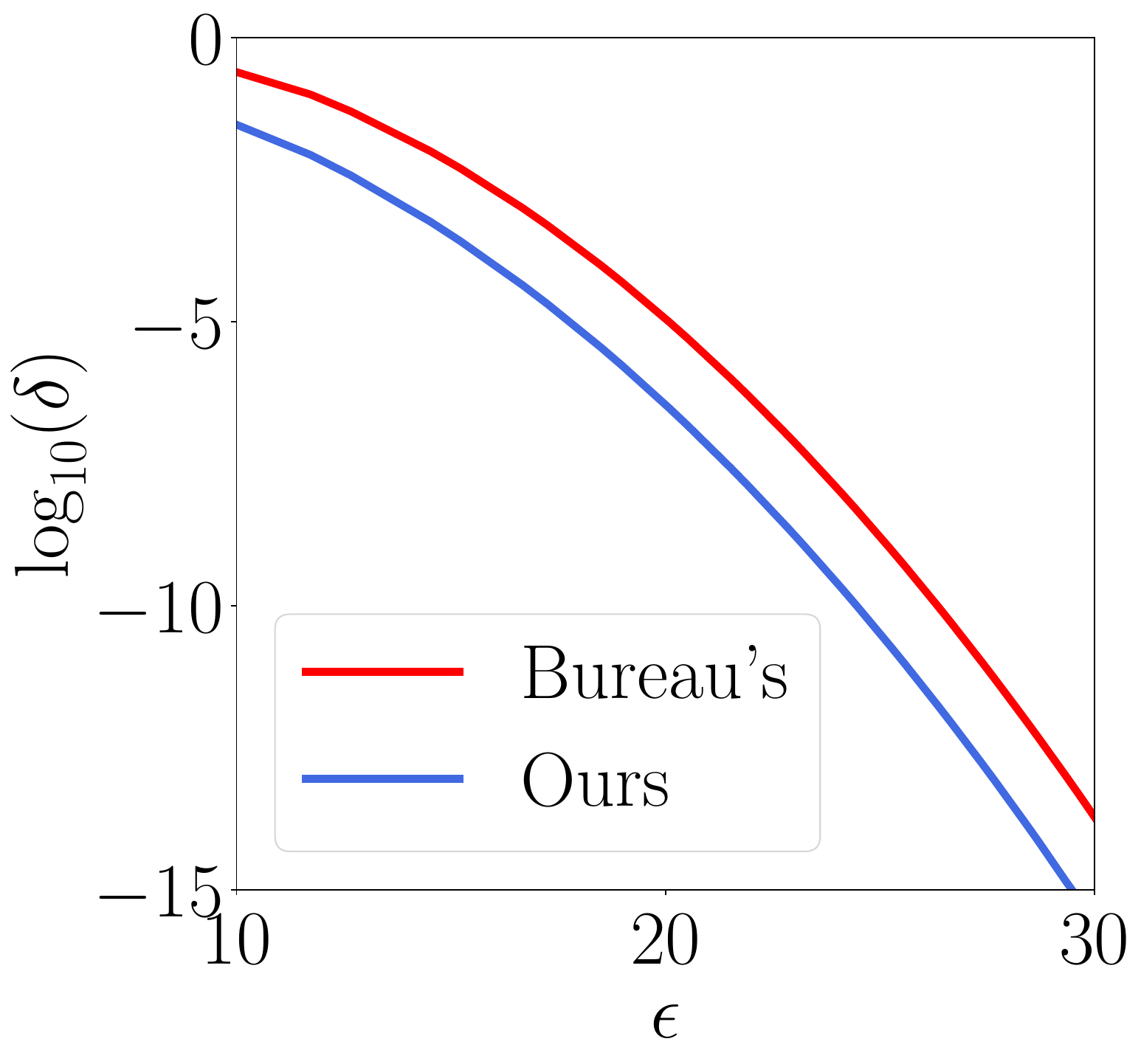} & \epsilonsm{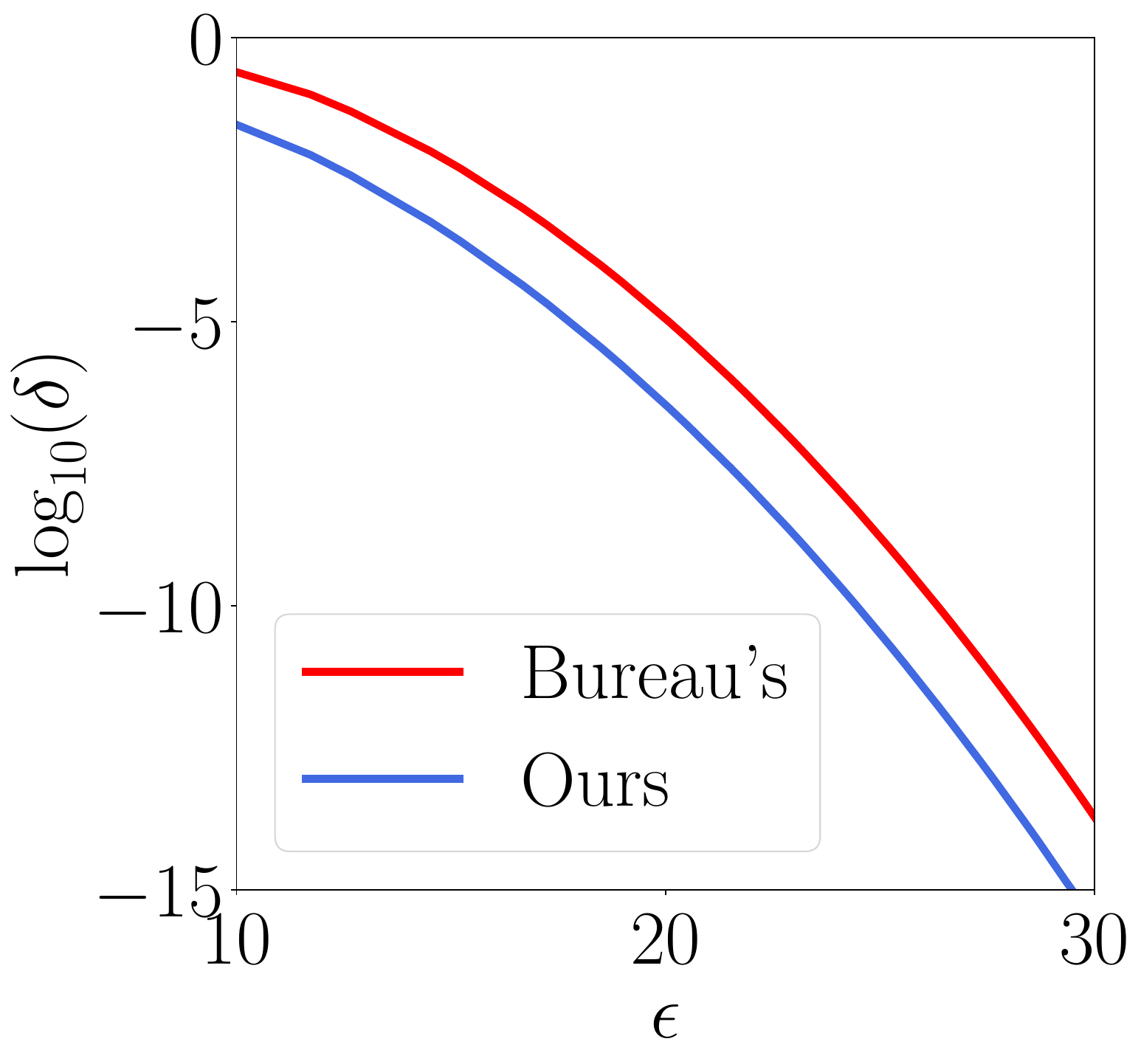} & \epsilonsm{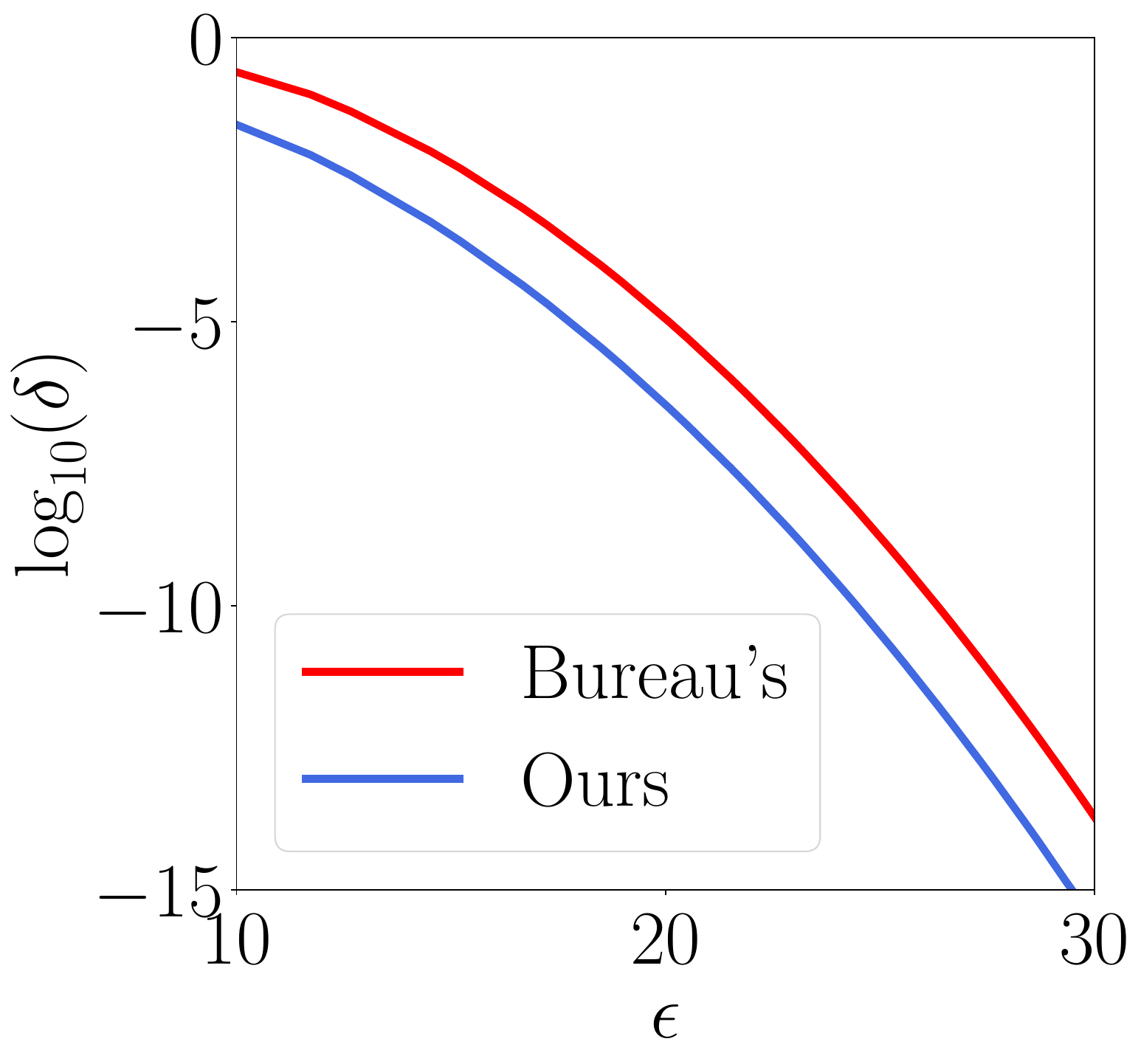} & \epsilonsm{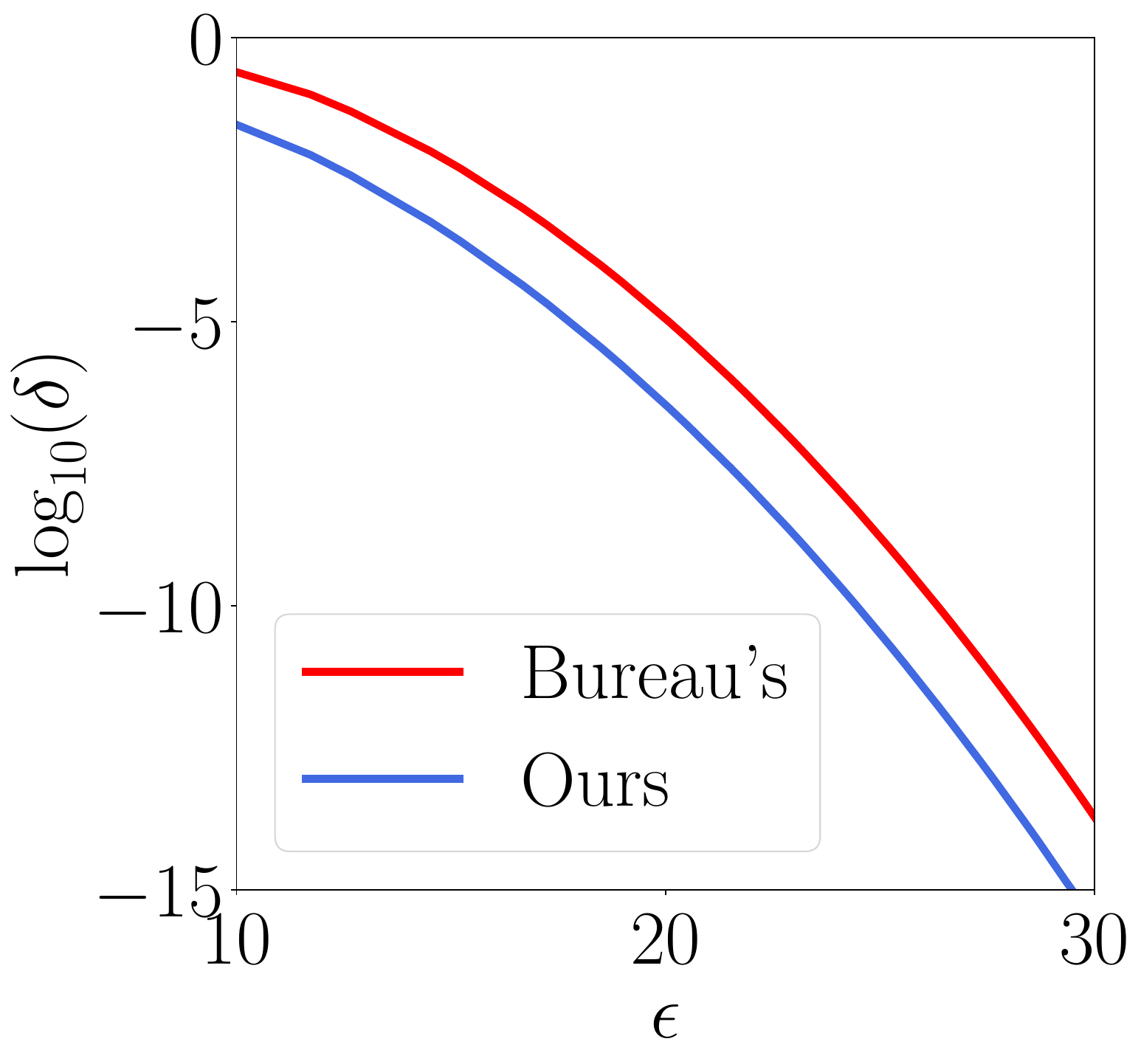} & \epsilonsm{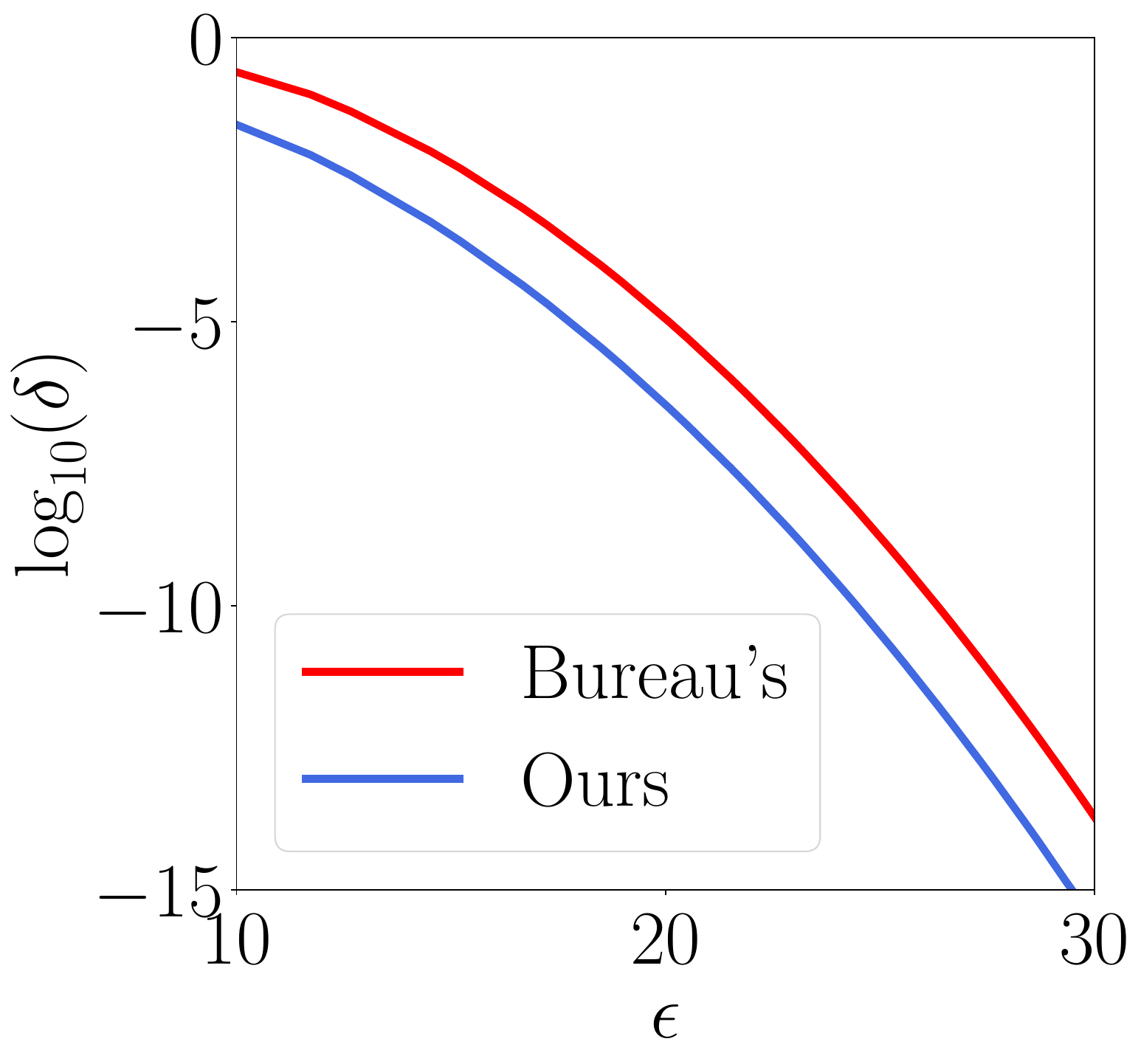} \\
        \epsilonsm{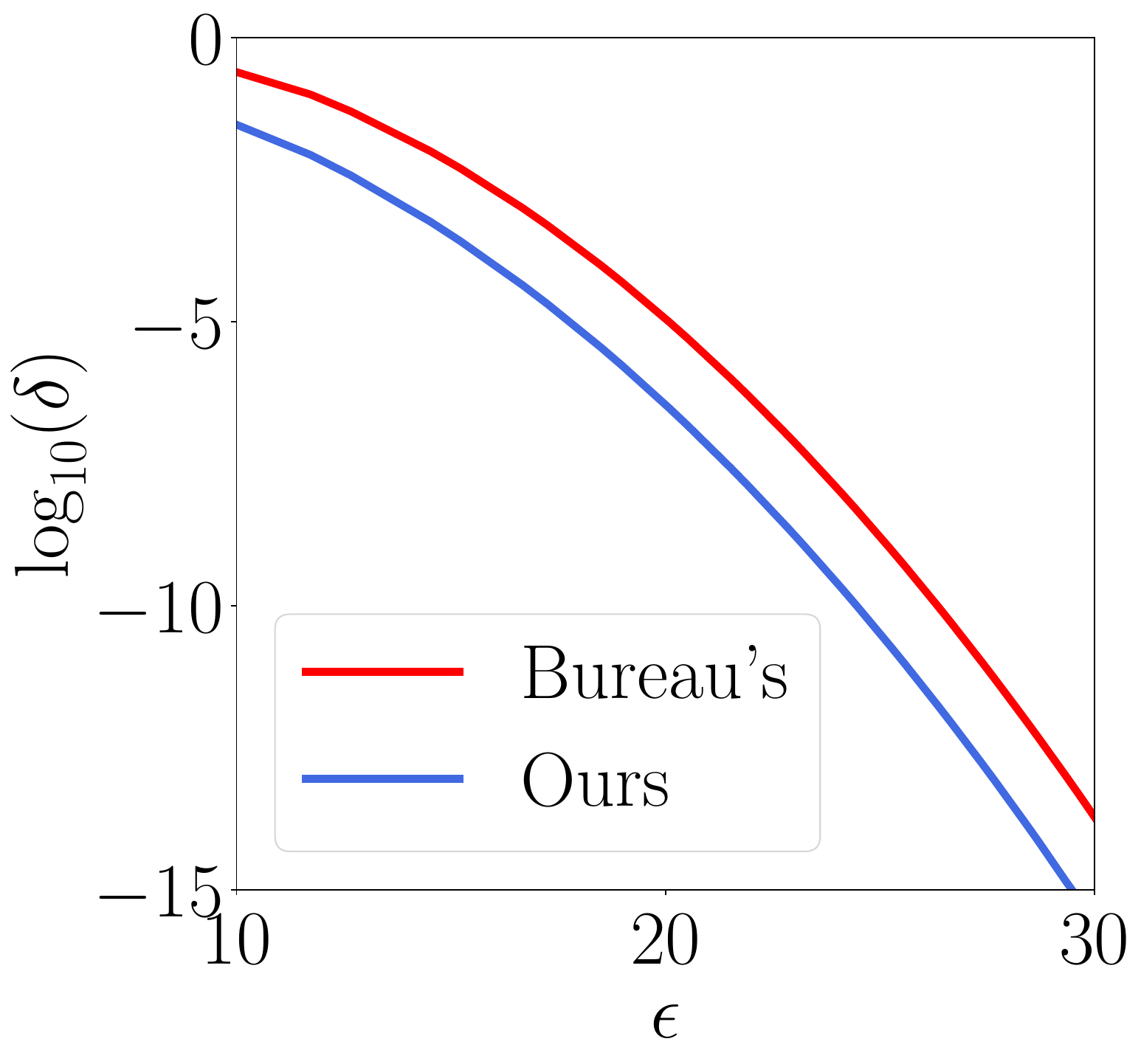} & \epsilonsm{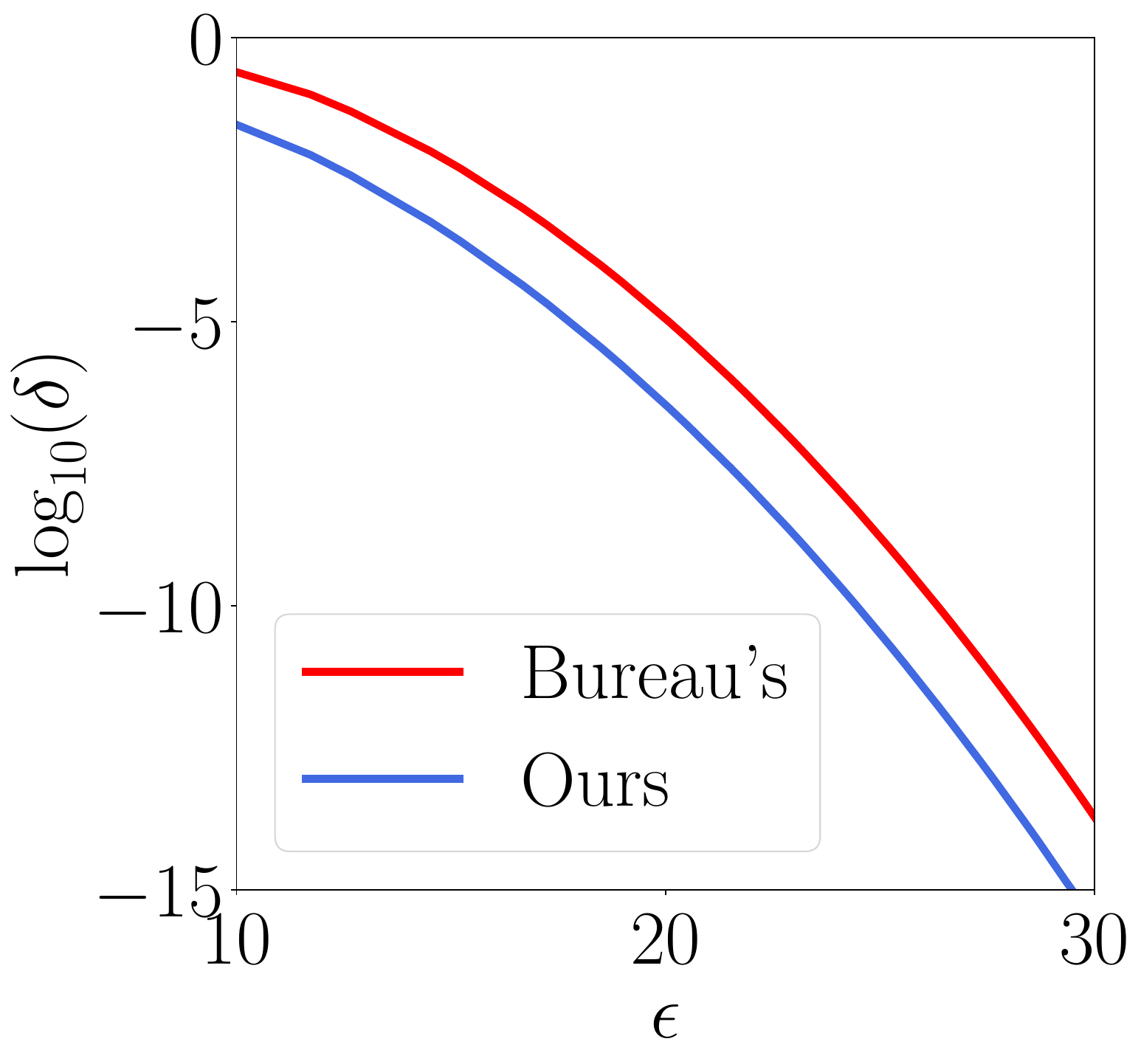} & \epsilonsm{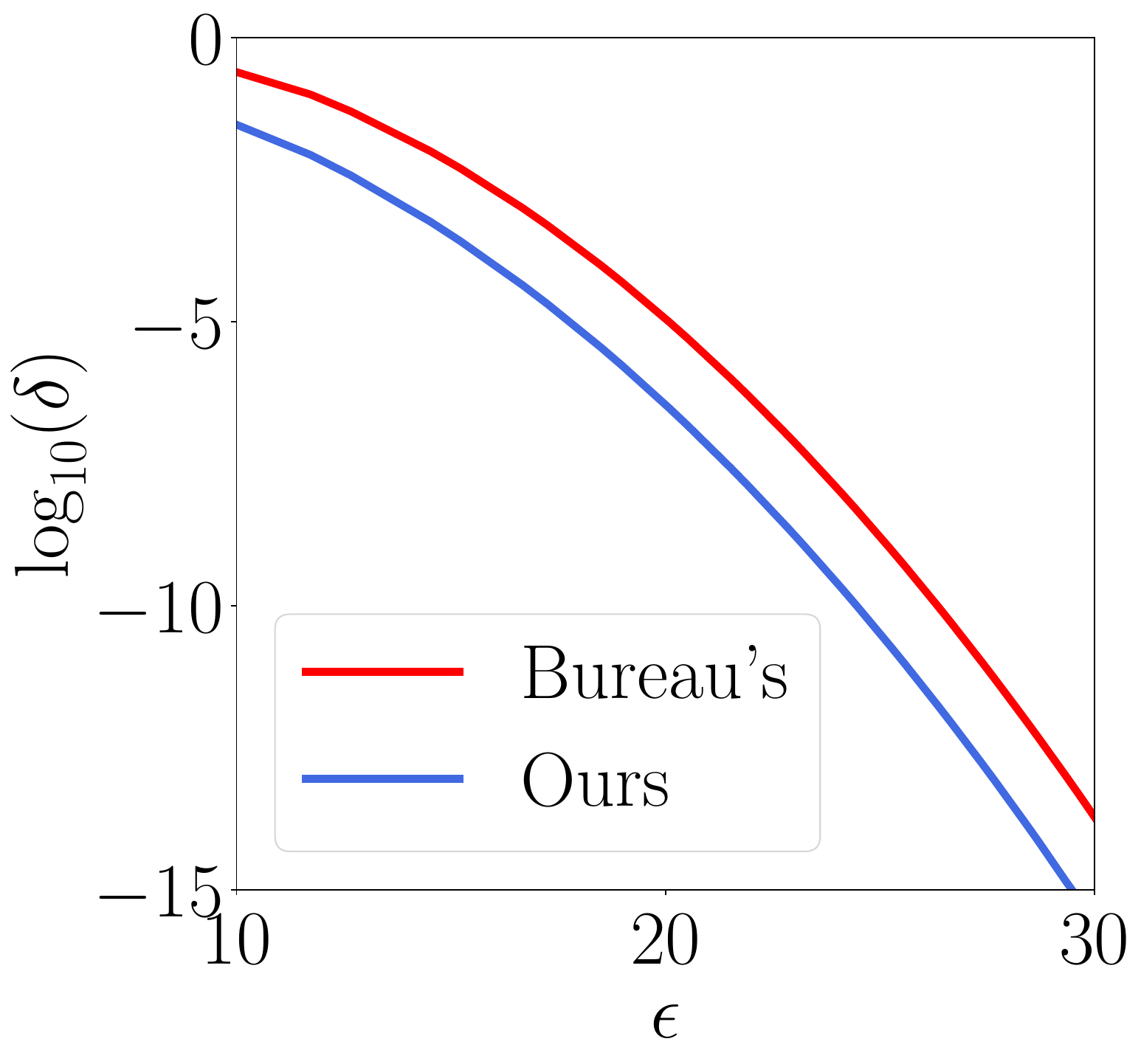} & \epsilonsm{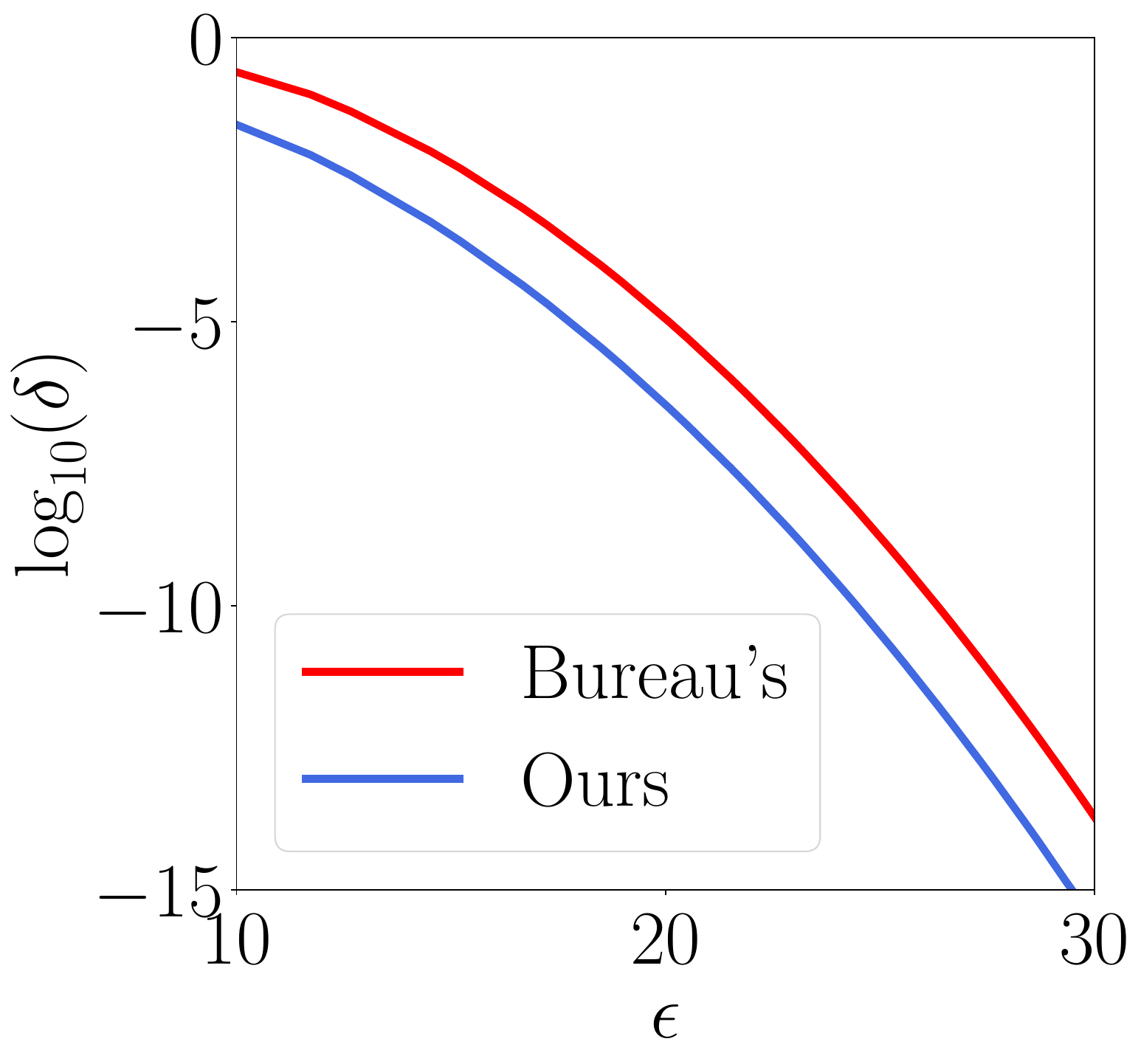} & \epsilonsm{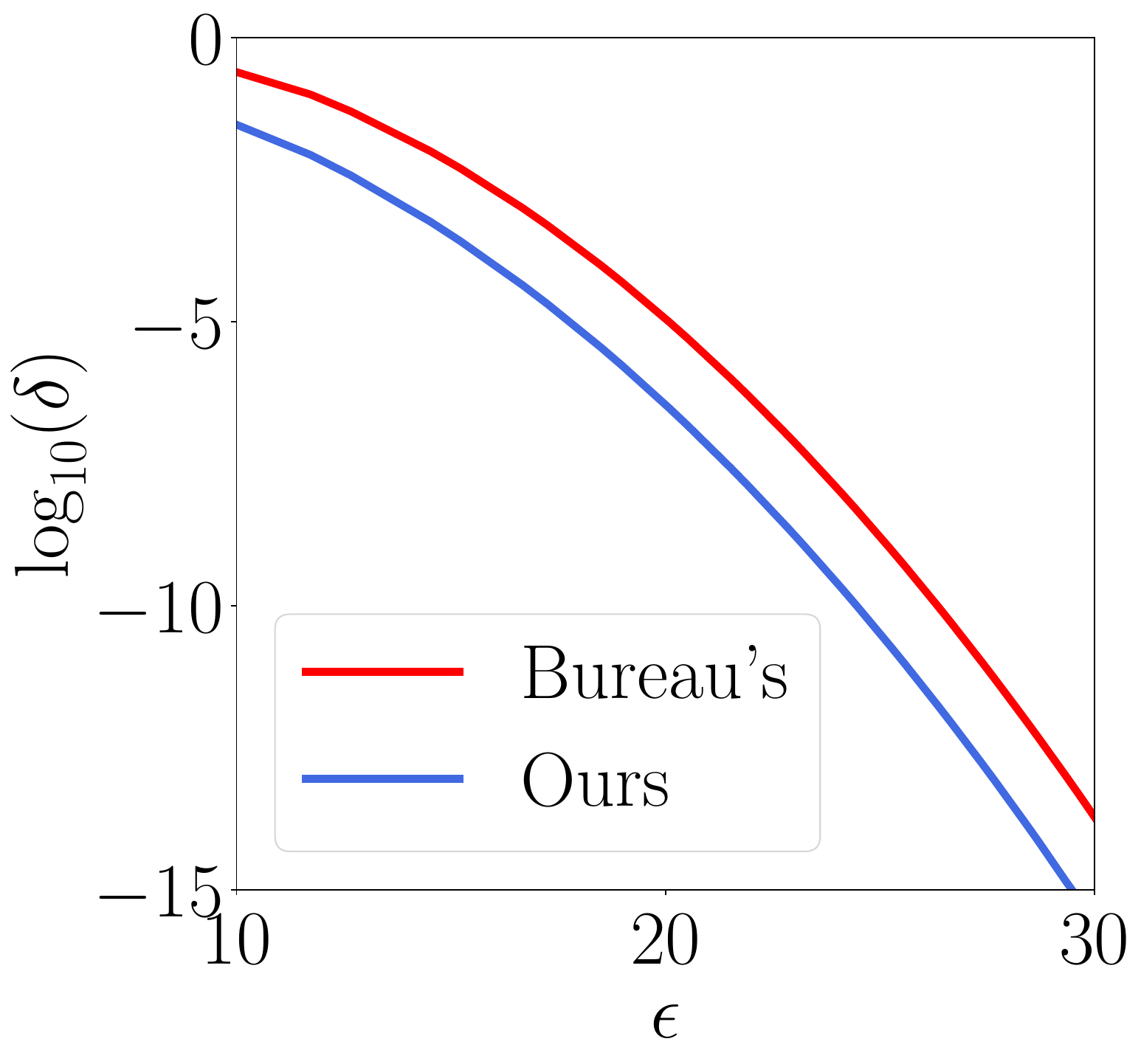} & \epsilonsm{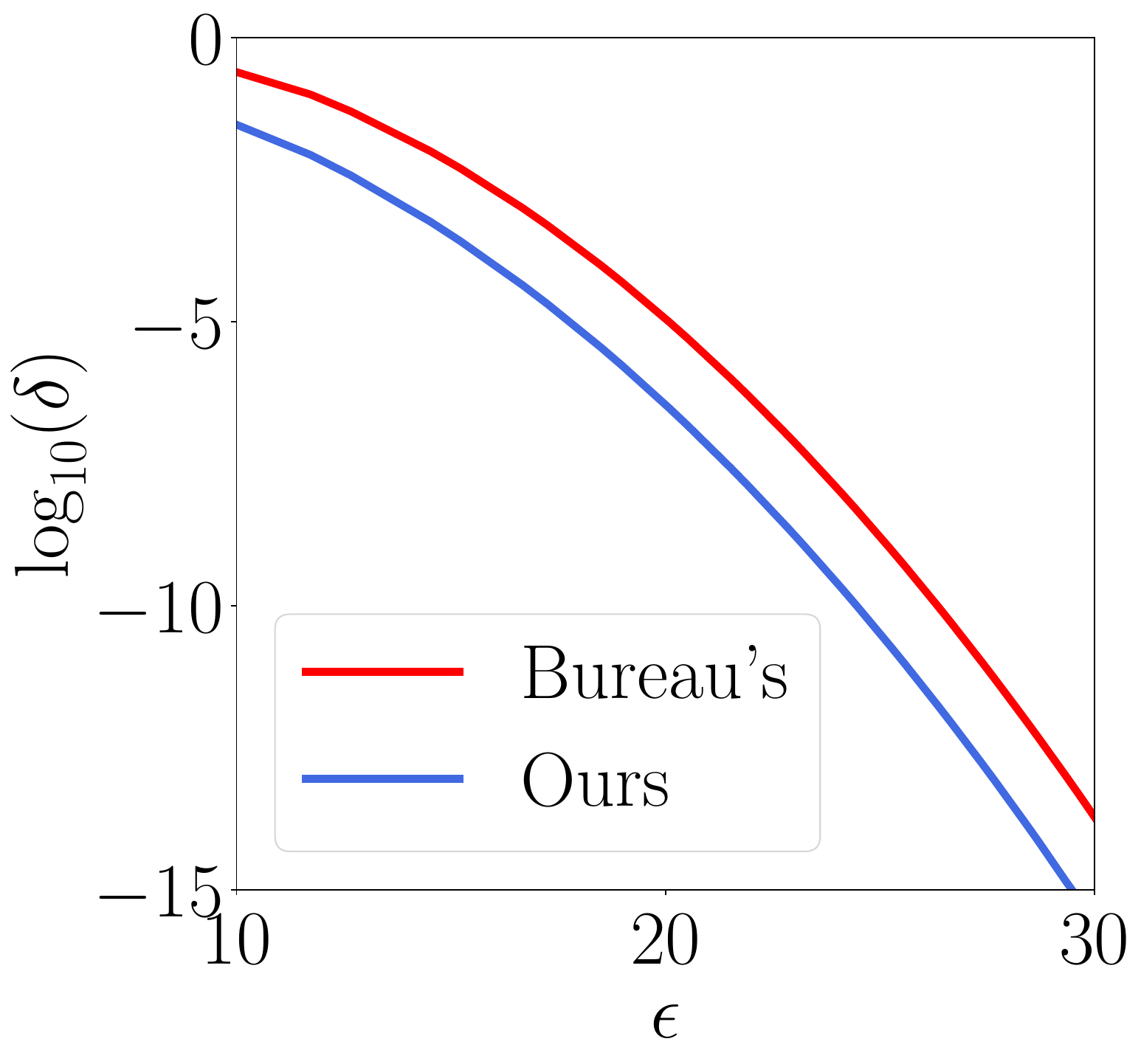} \\
        \epsilonsm{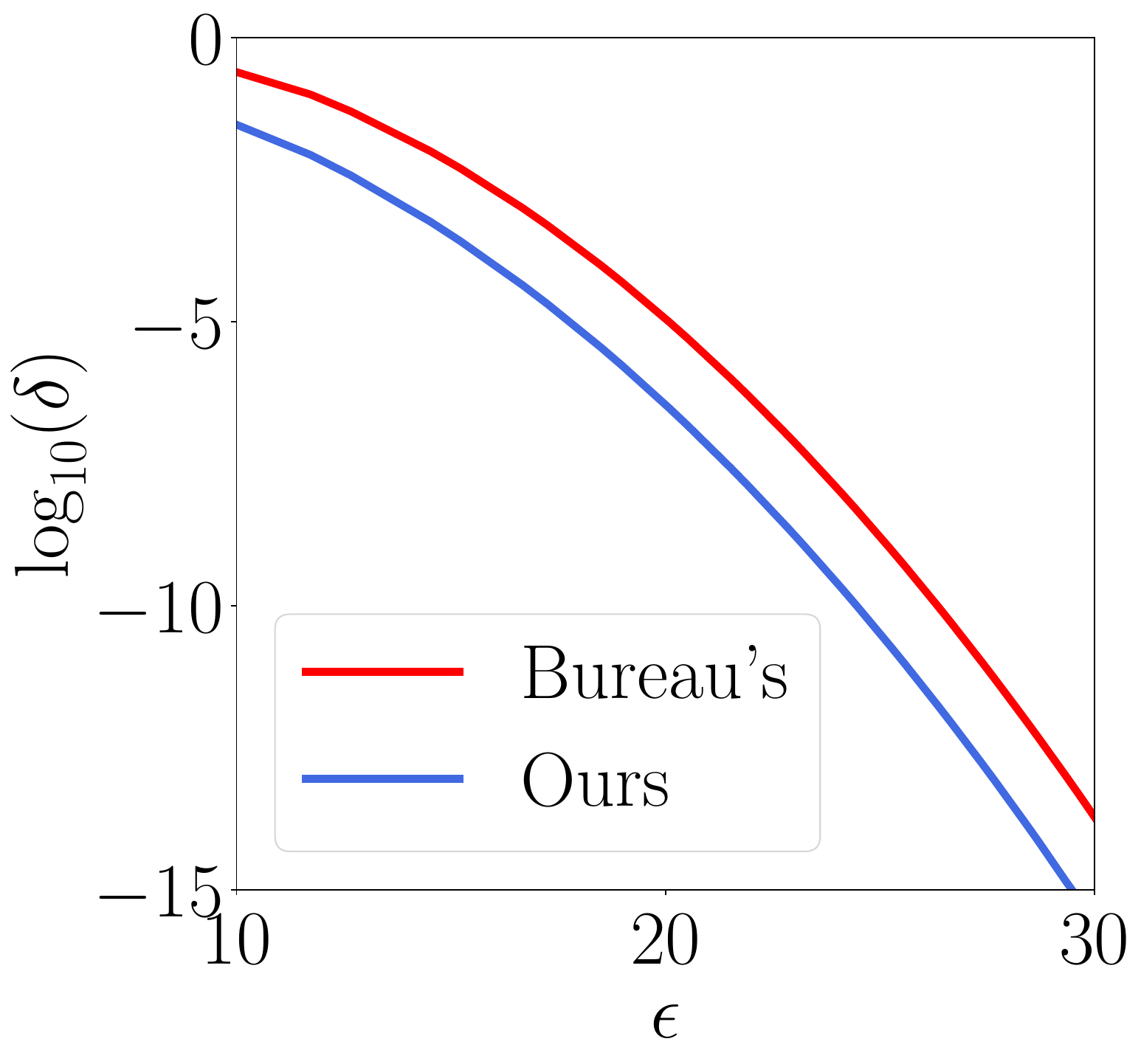} & \epsilonsm{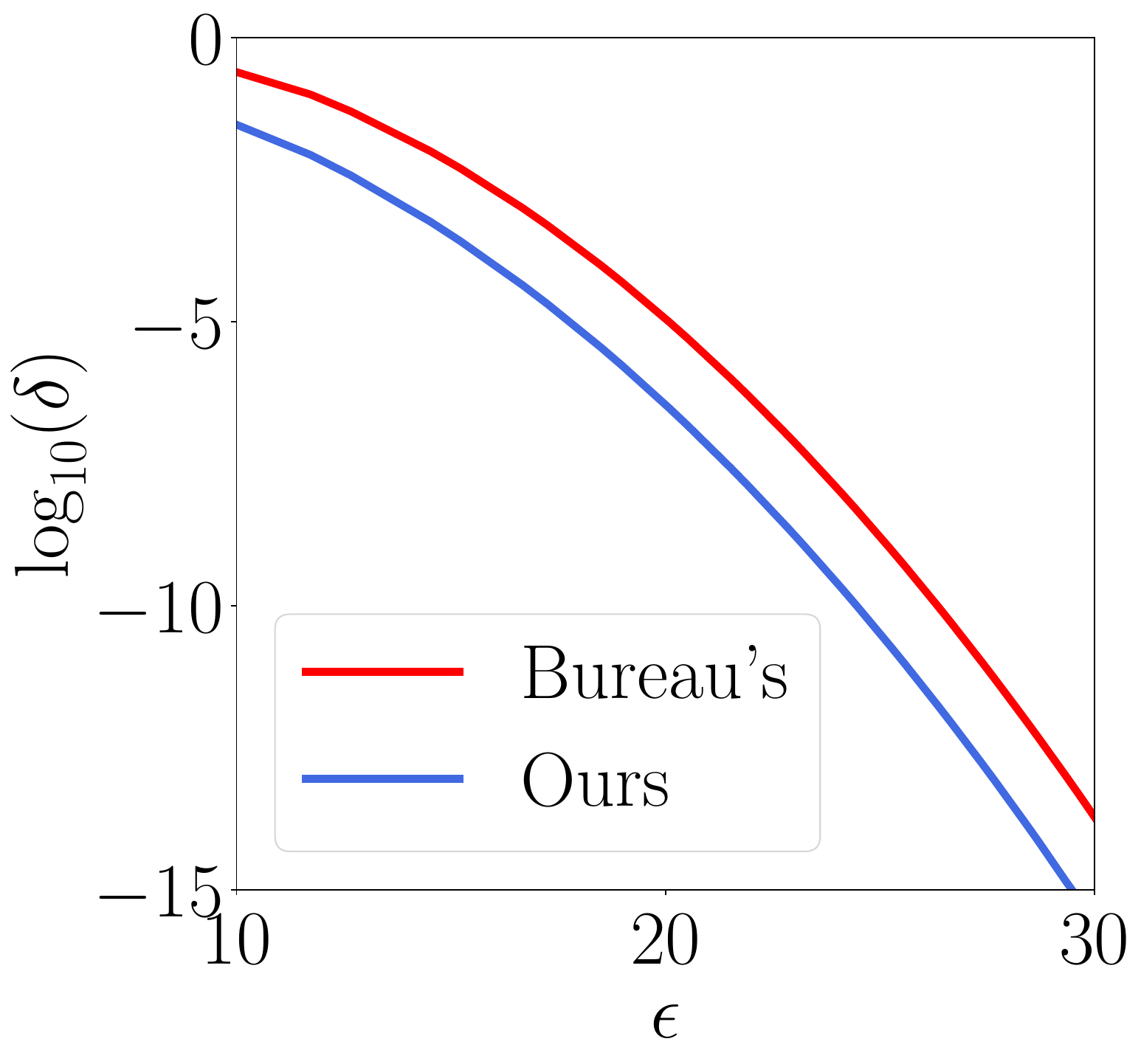} & \epsilonsm{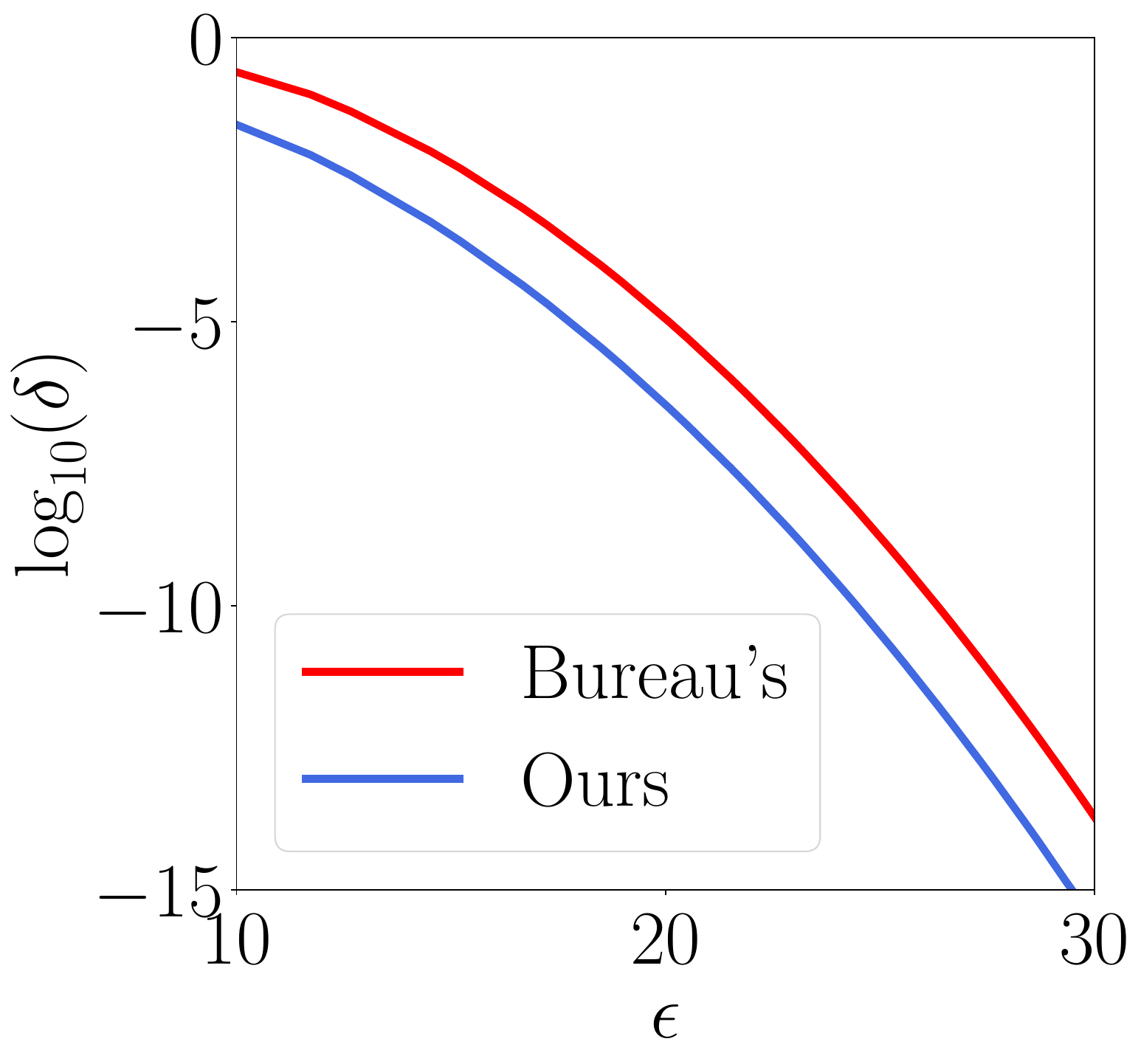} & \epsilonsm{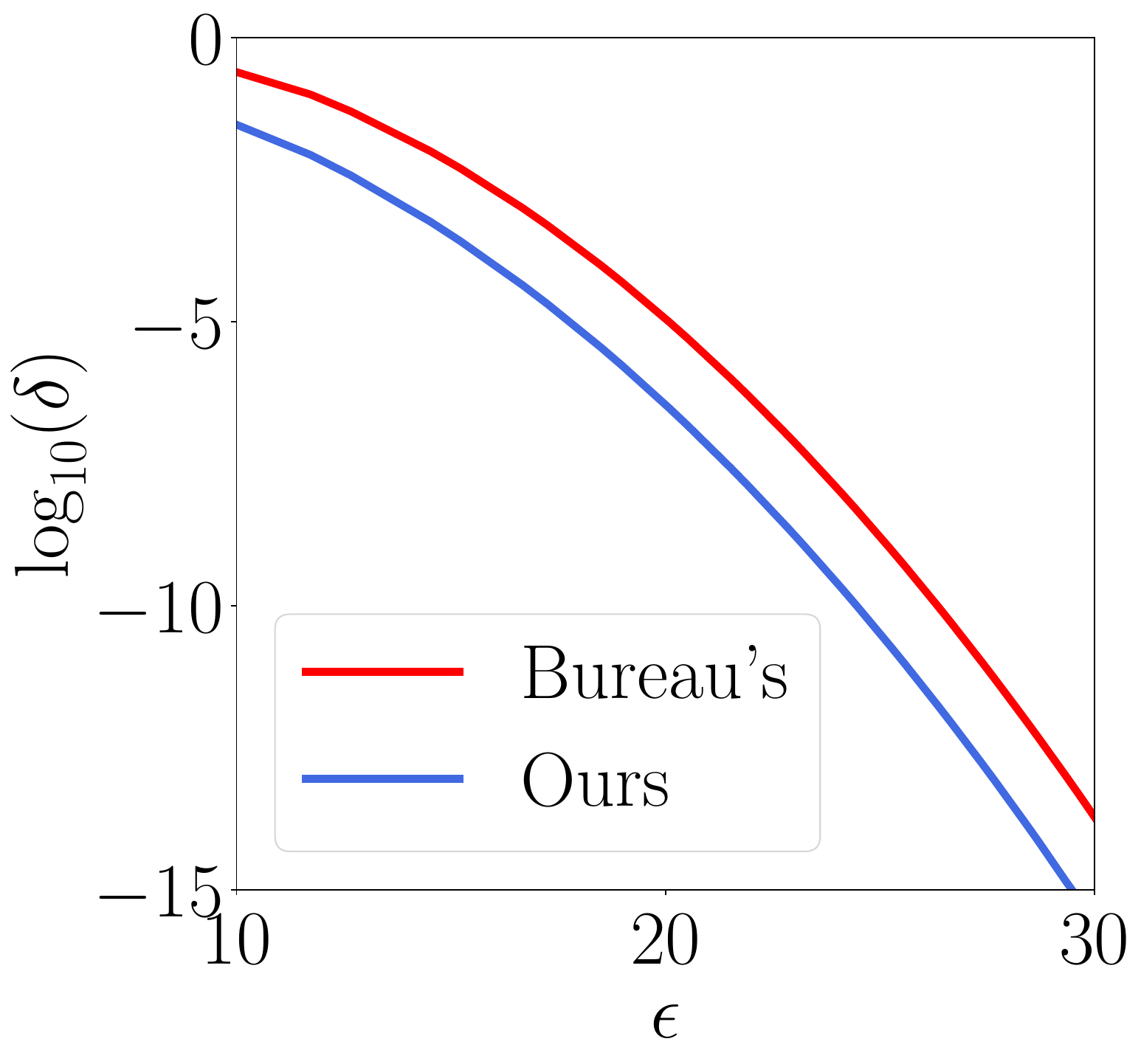} & \epsilonsm{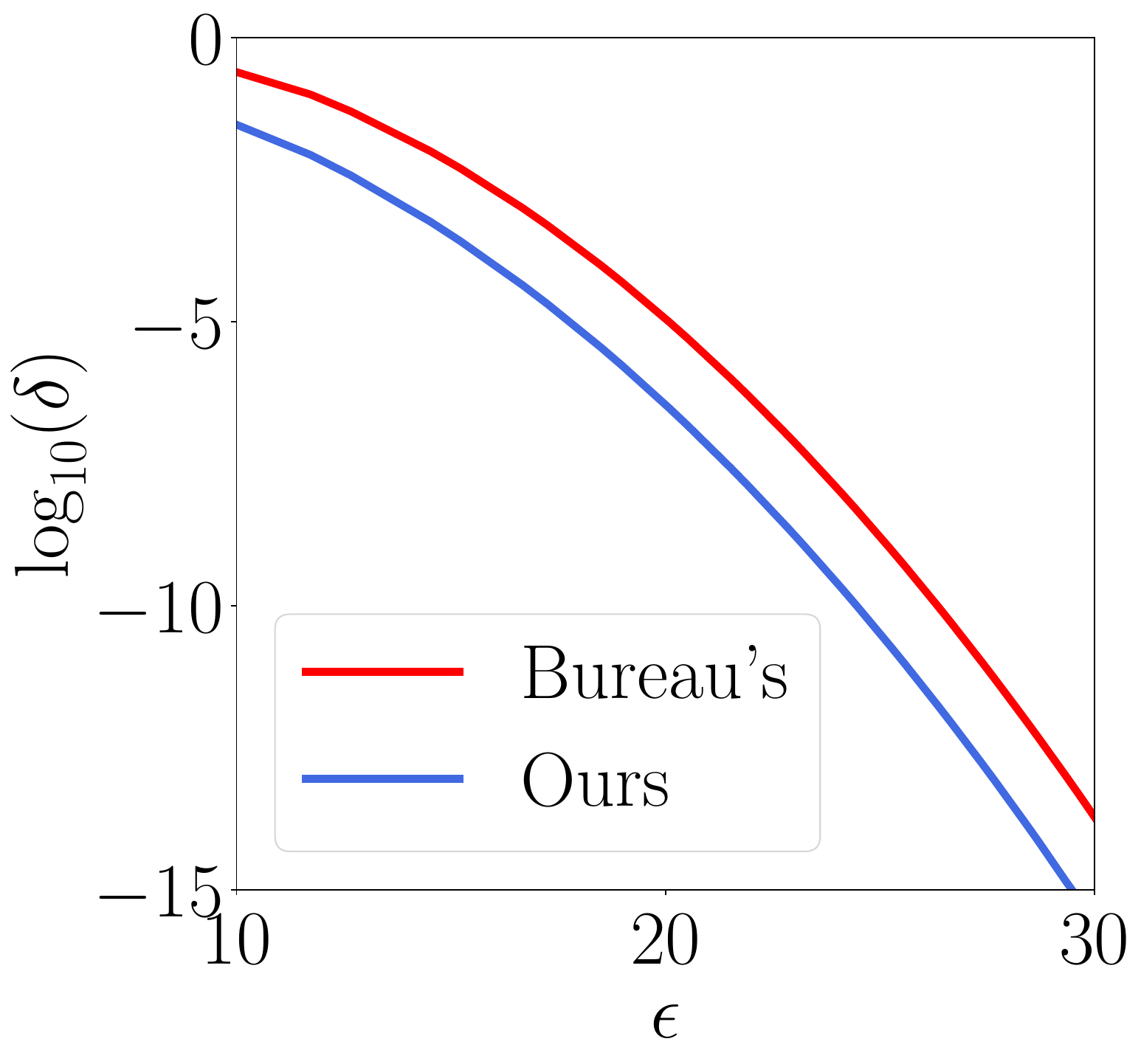} & \epsilonsm{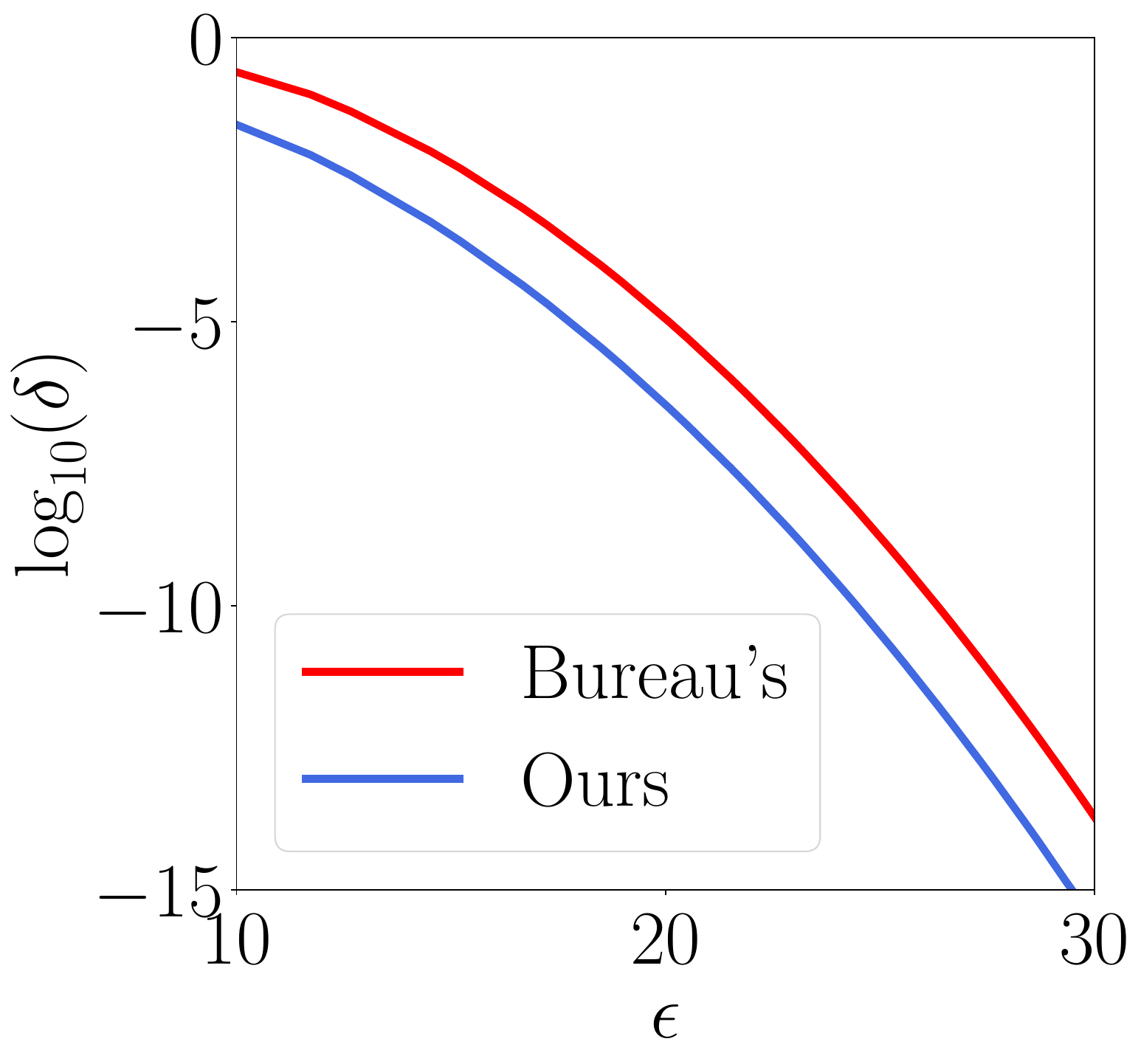} \\
        \epsilonsm{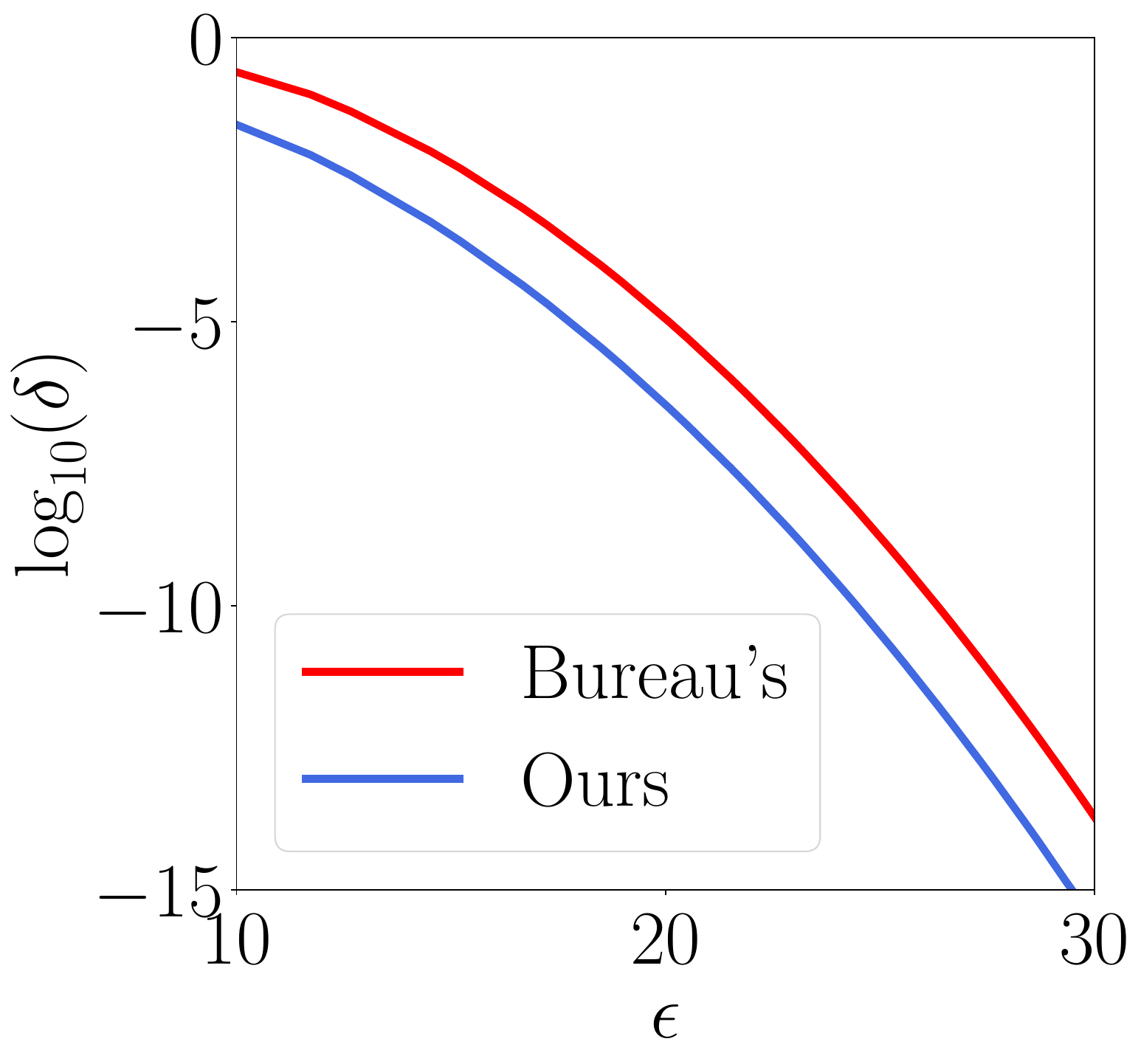} & \epsilonsm{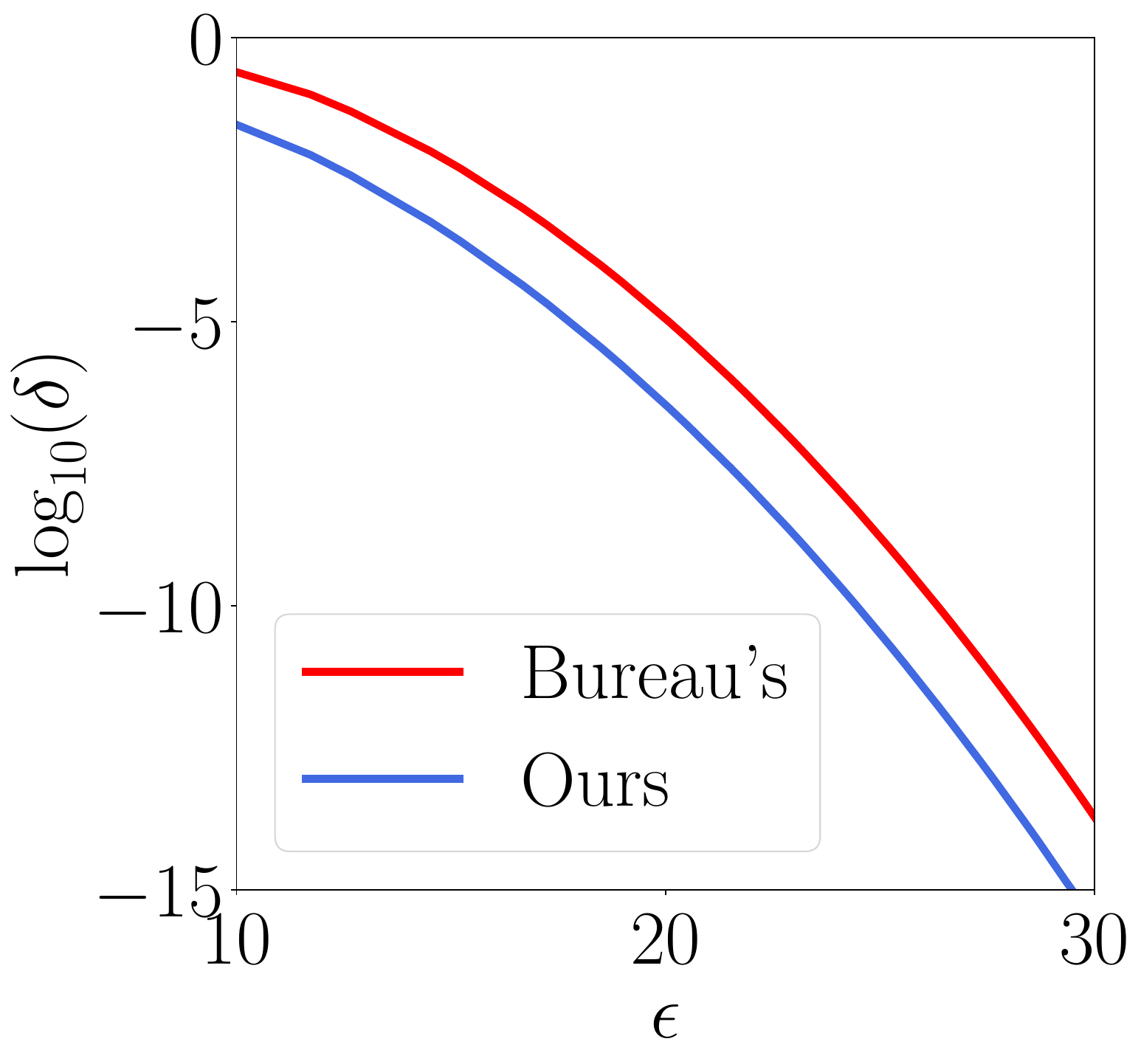} & \epsilonsm{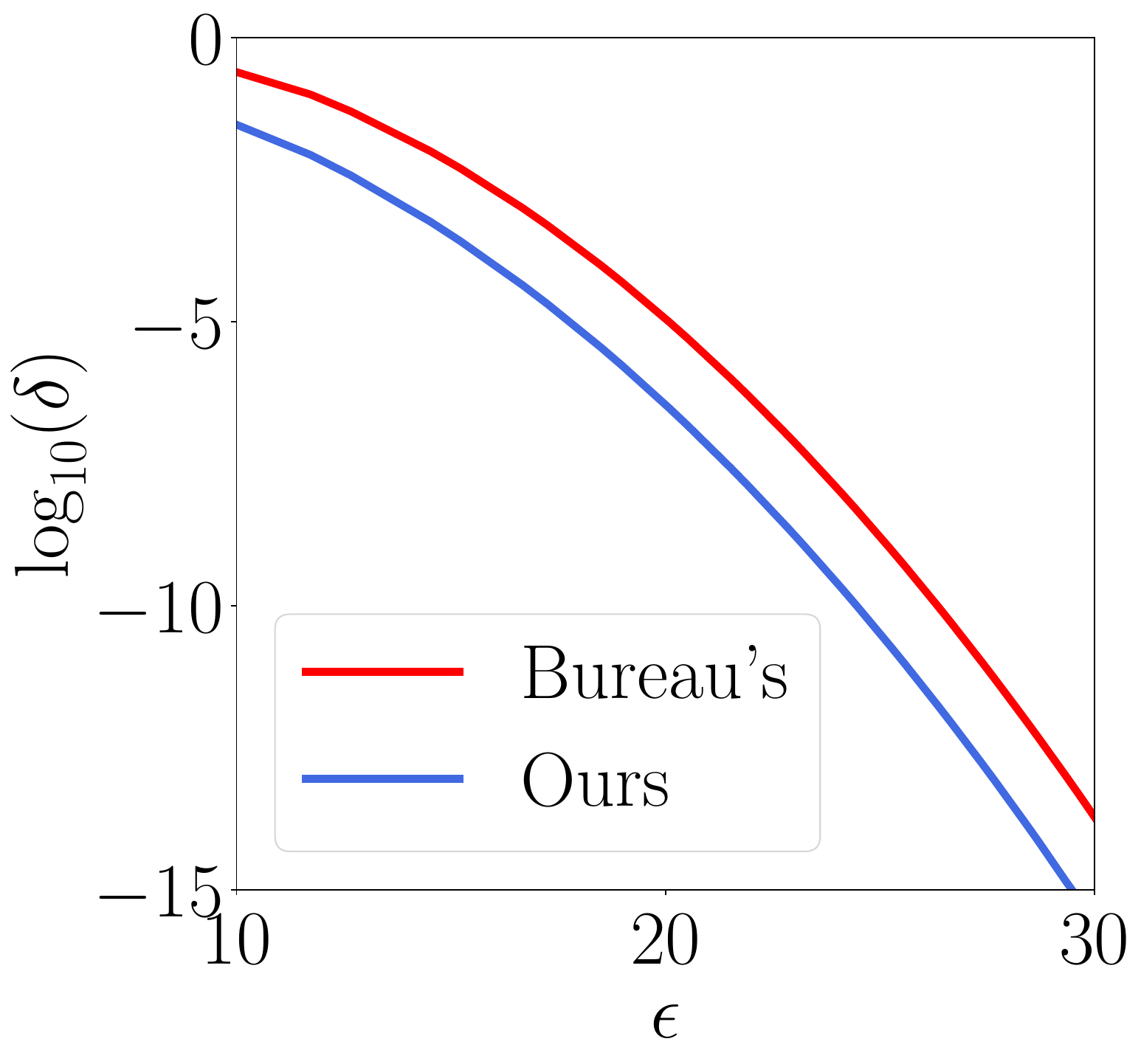} & \epsilonsm{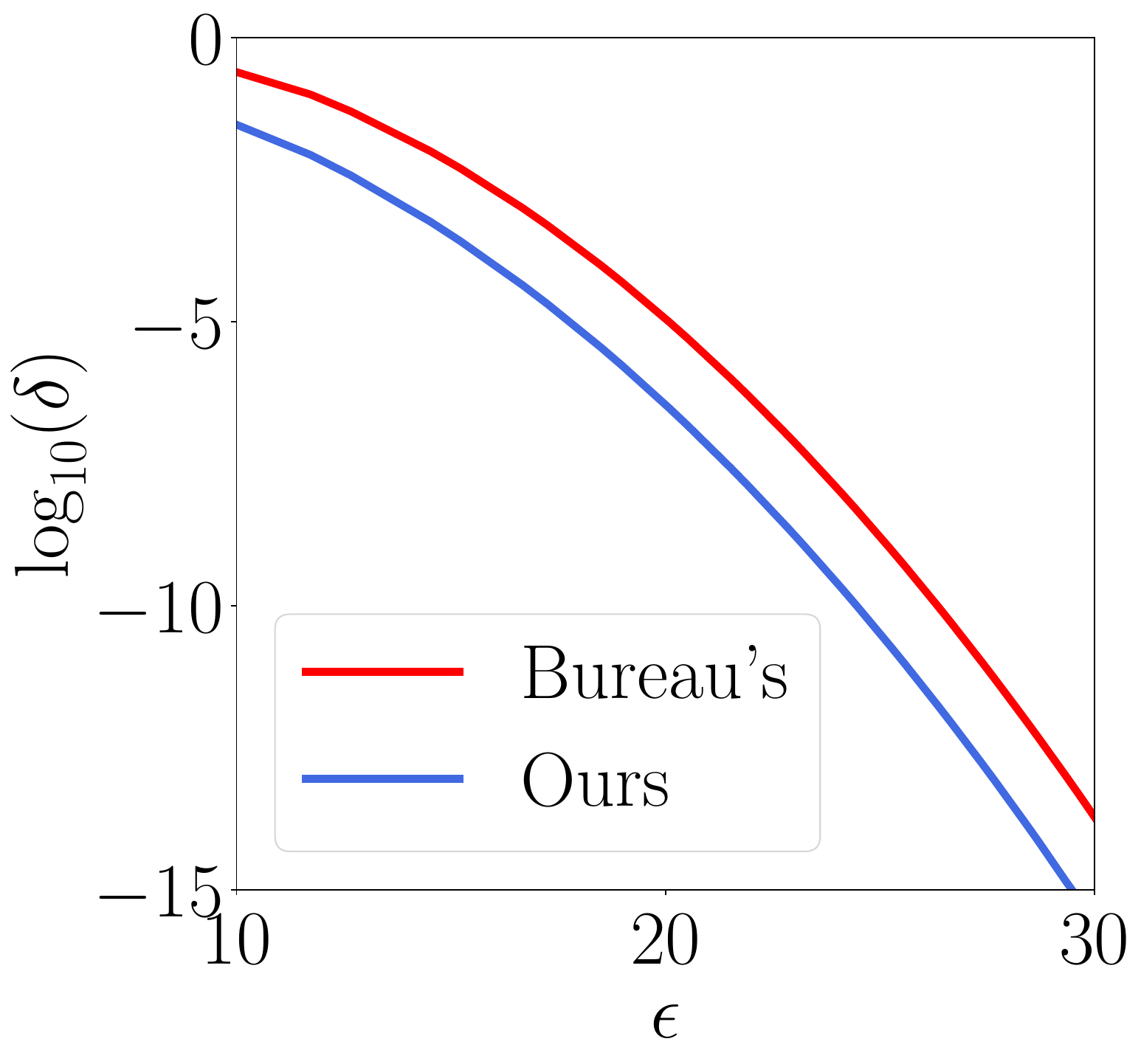} & \epsilonsm{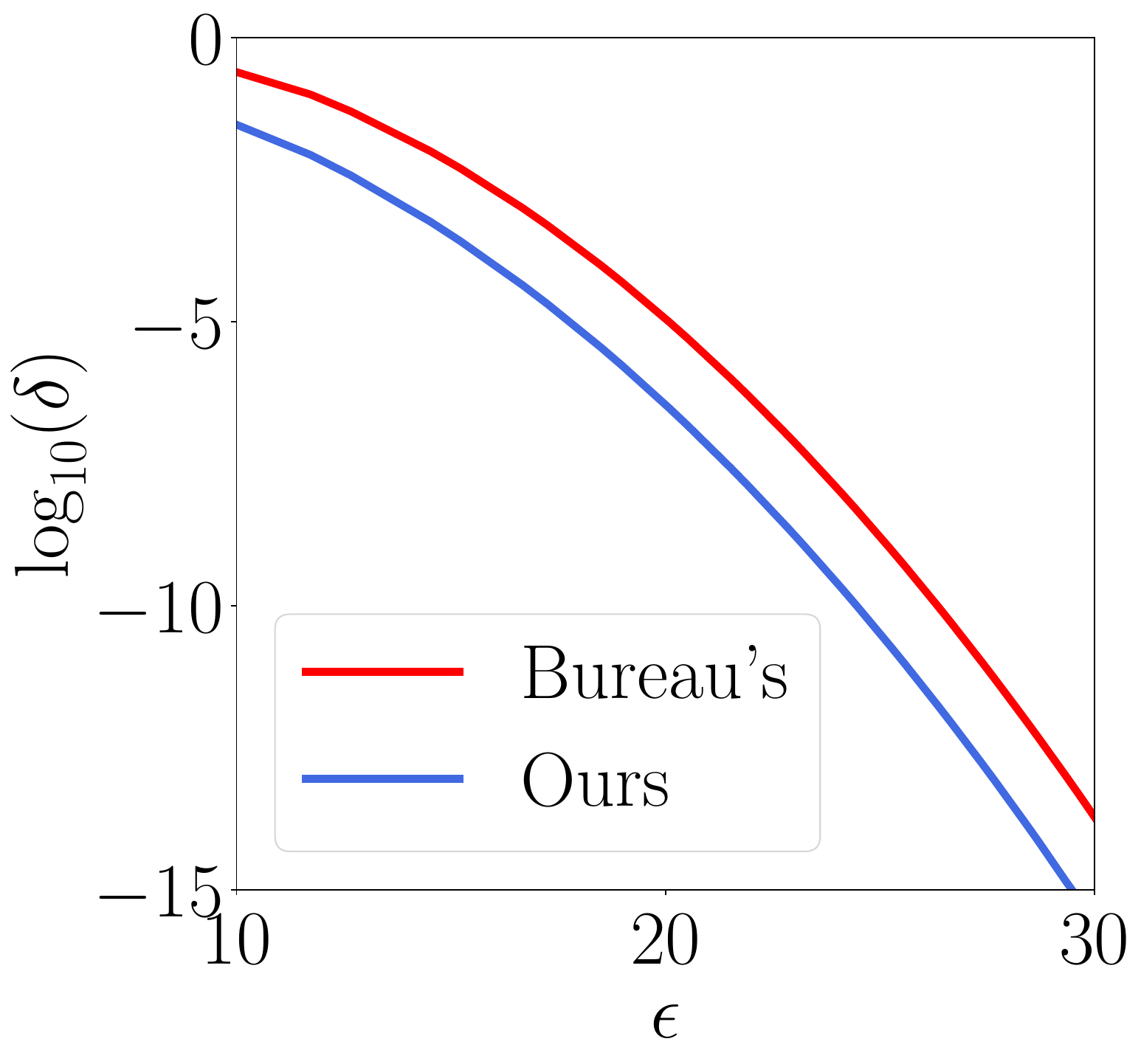} & \epsilonsm{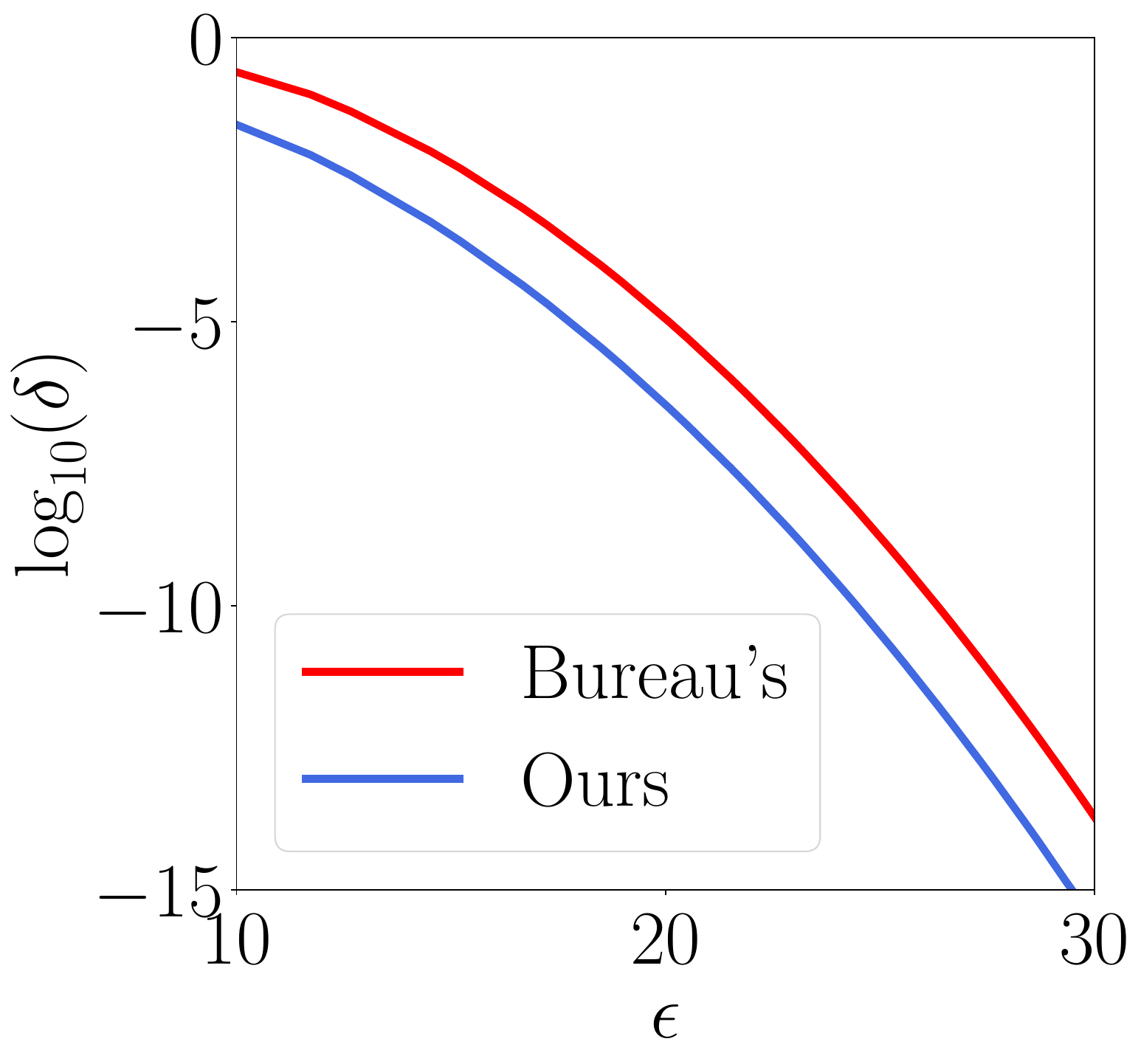} \\
        \epsilonsm{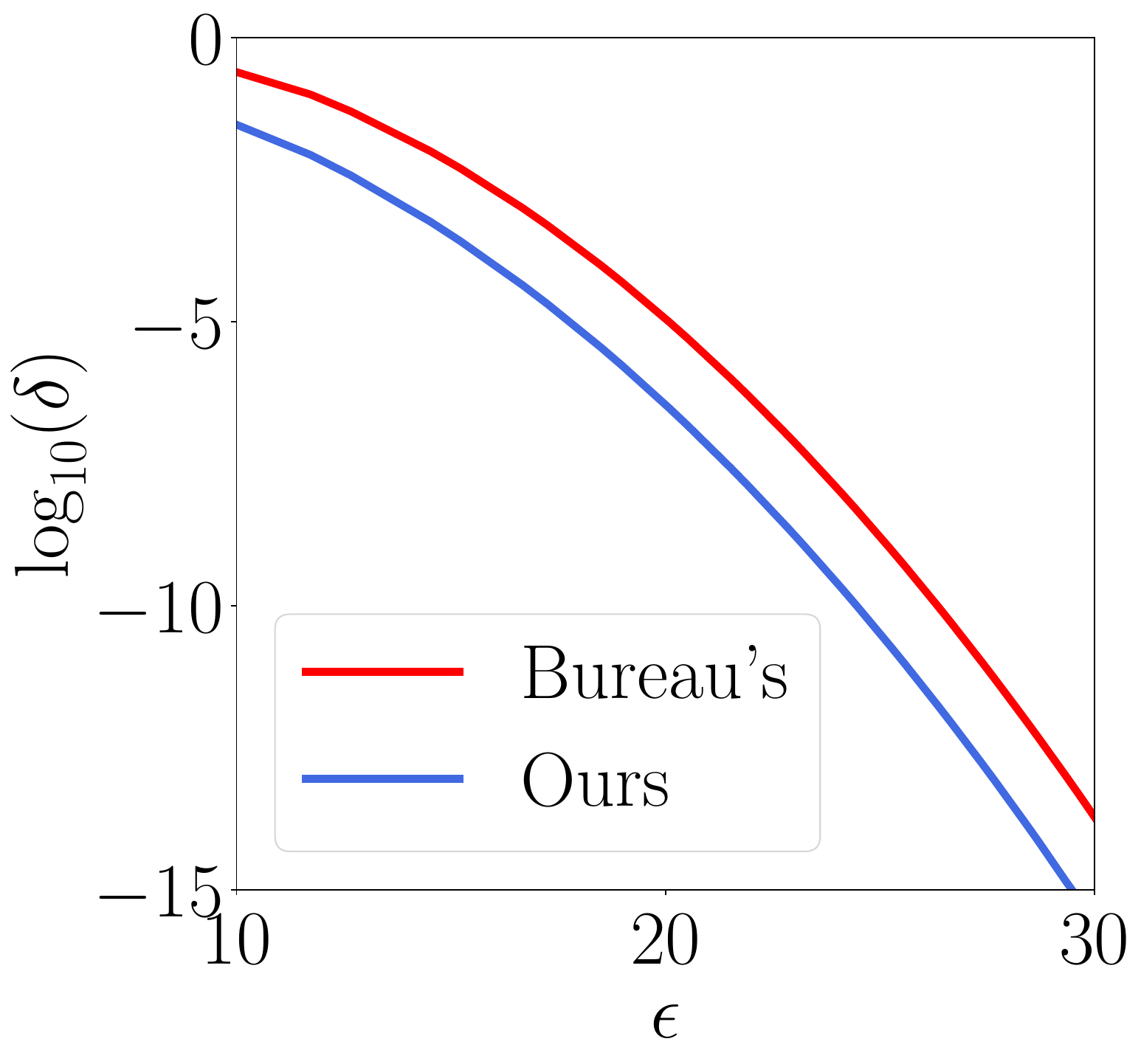} & \epsilonsm{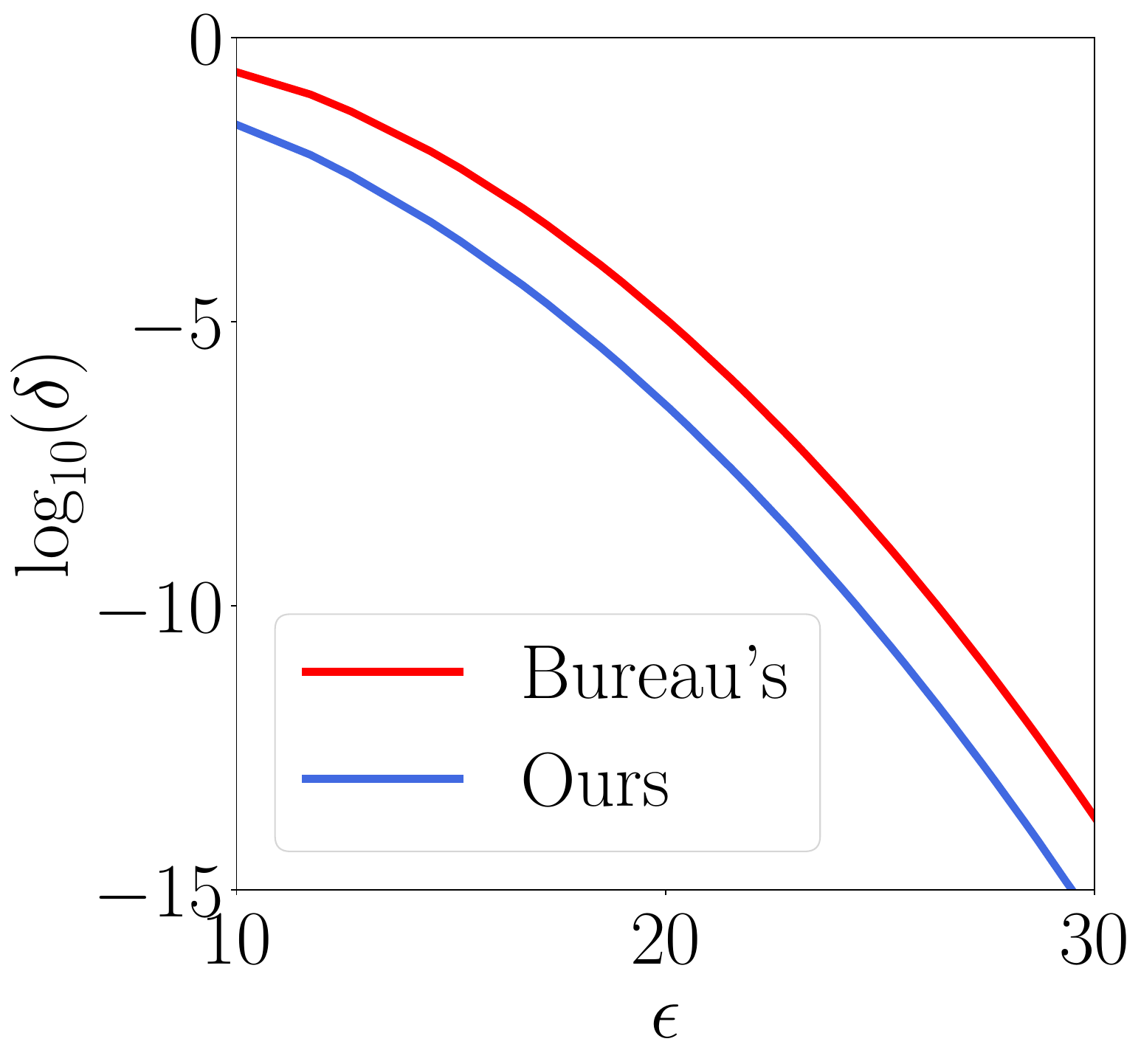} & \epsilonsm{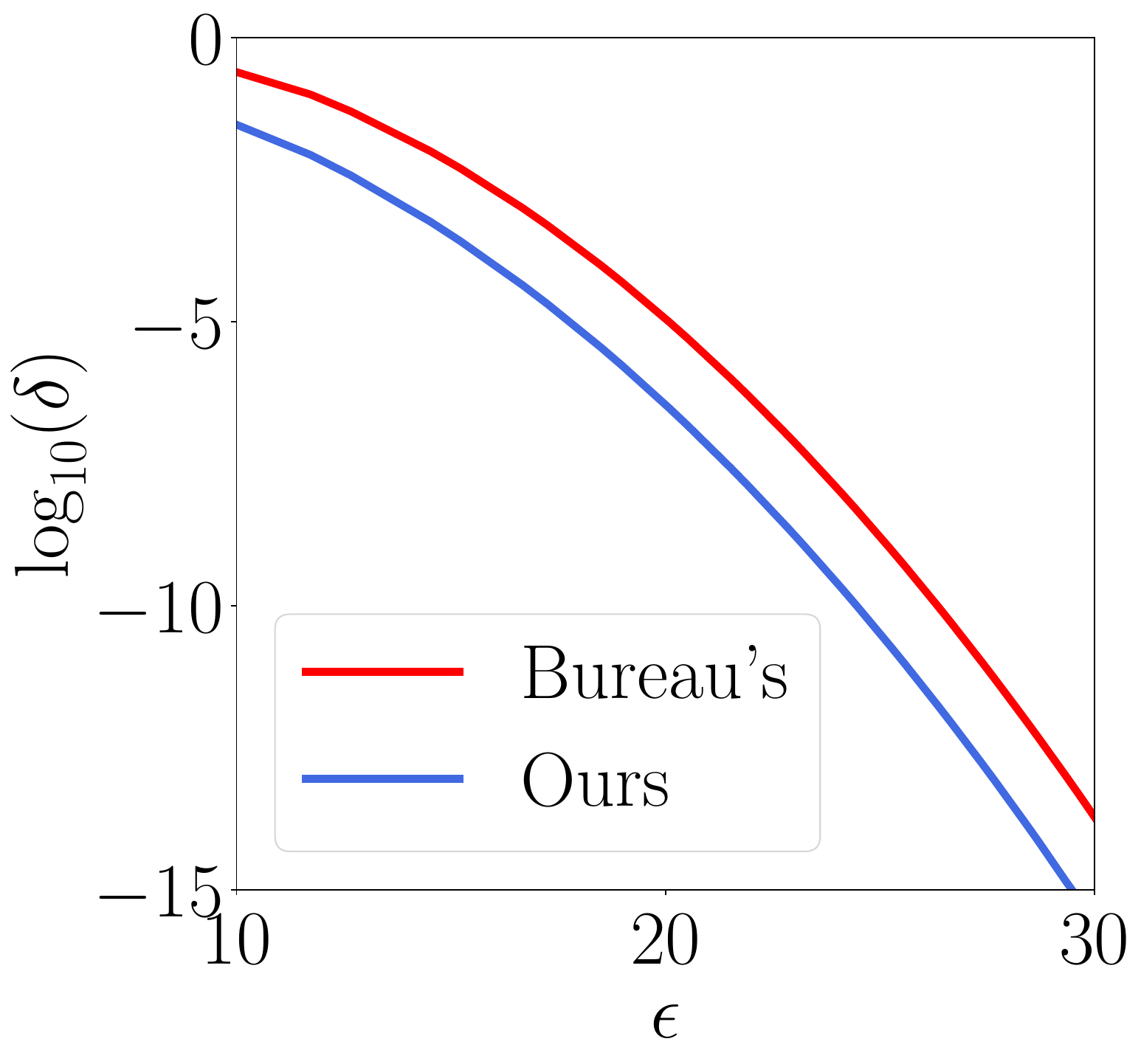} &  \multicolumn{3}{c}{\multirow[b]{2}{*}{\raisebox{0.5\cellh}{\epsilonbigfig{13}}}} \\
        \epsilonsm{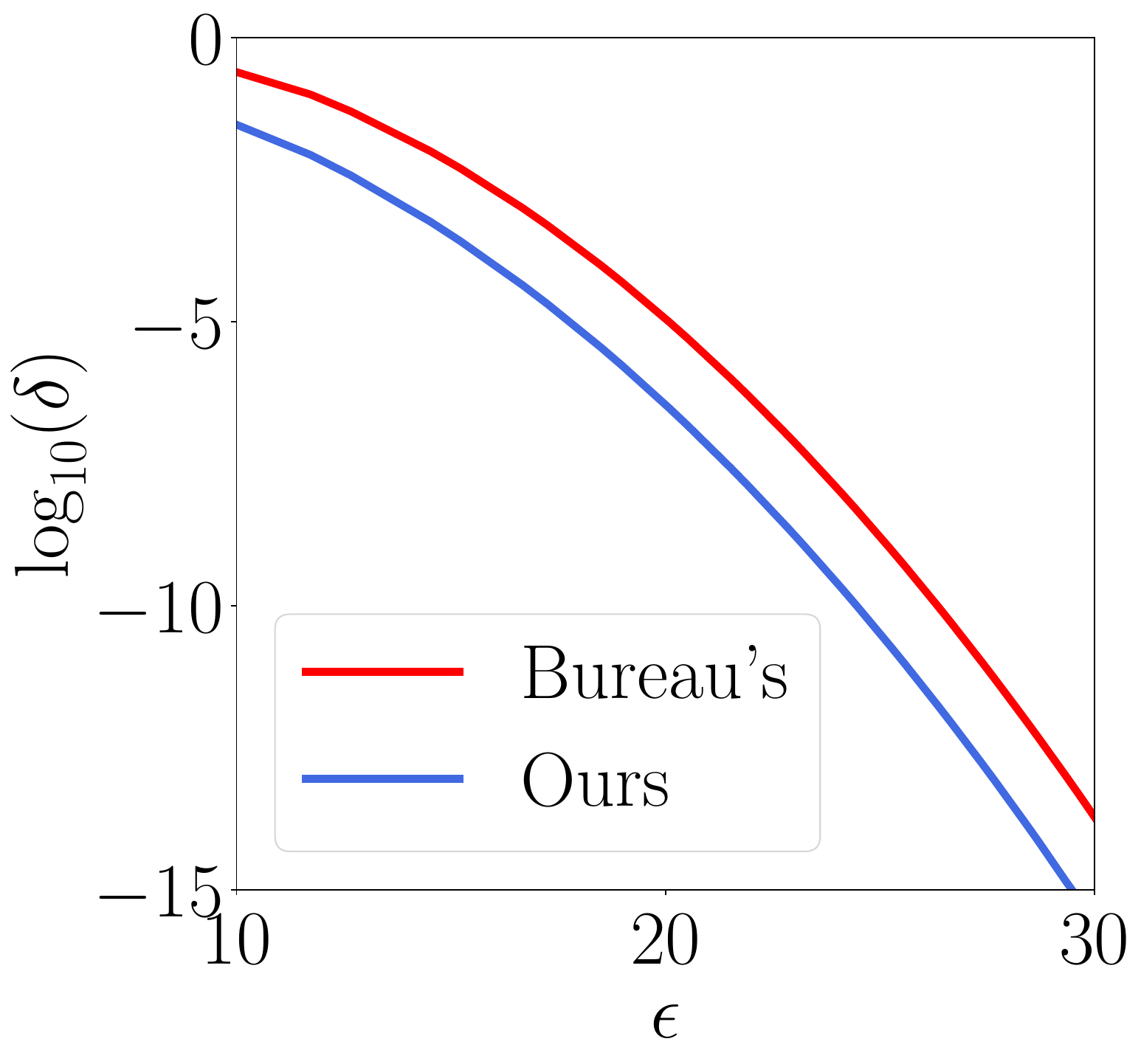} & \epsilonsm{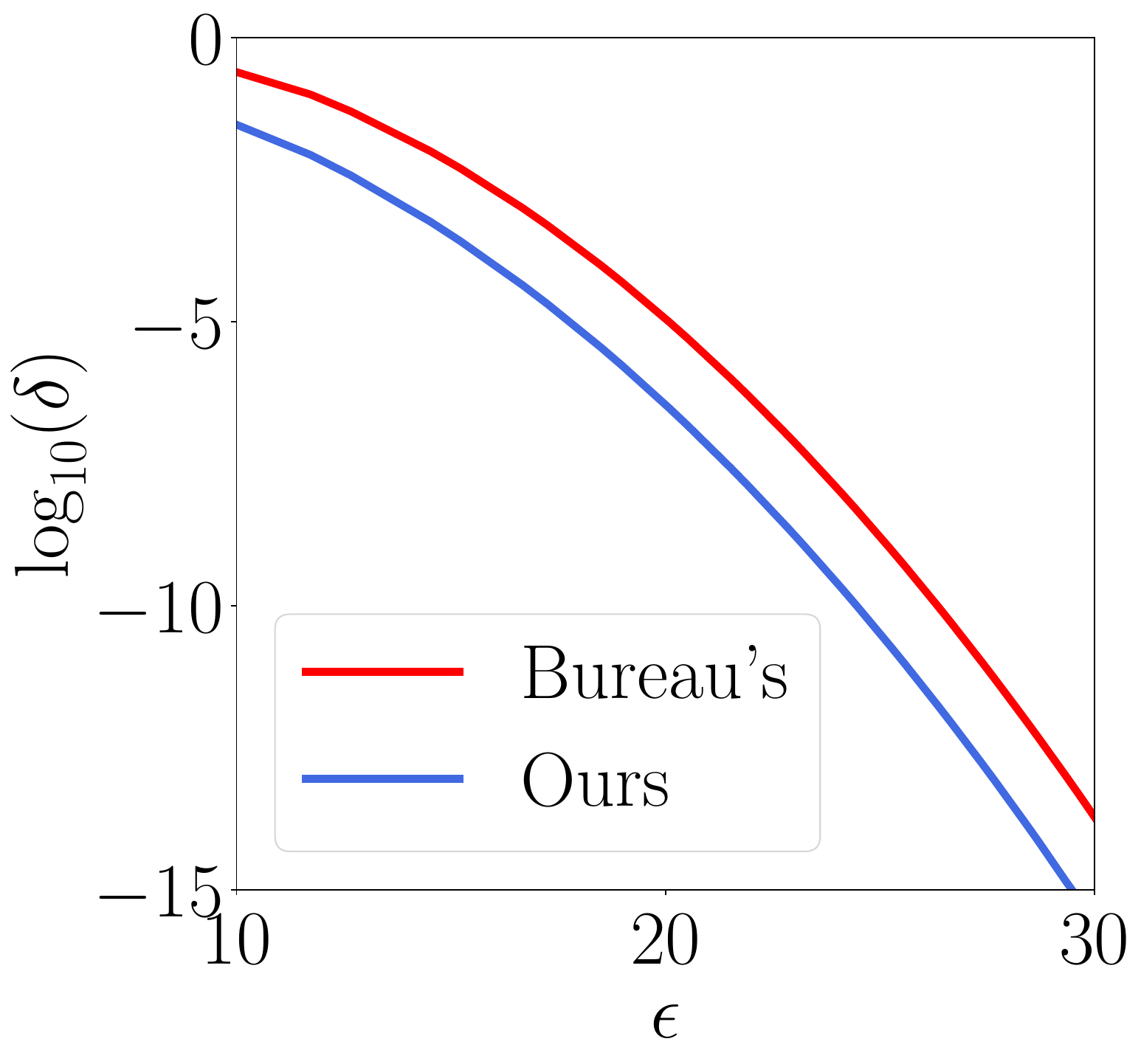} & \epsilonsm{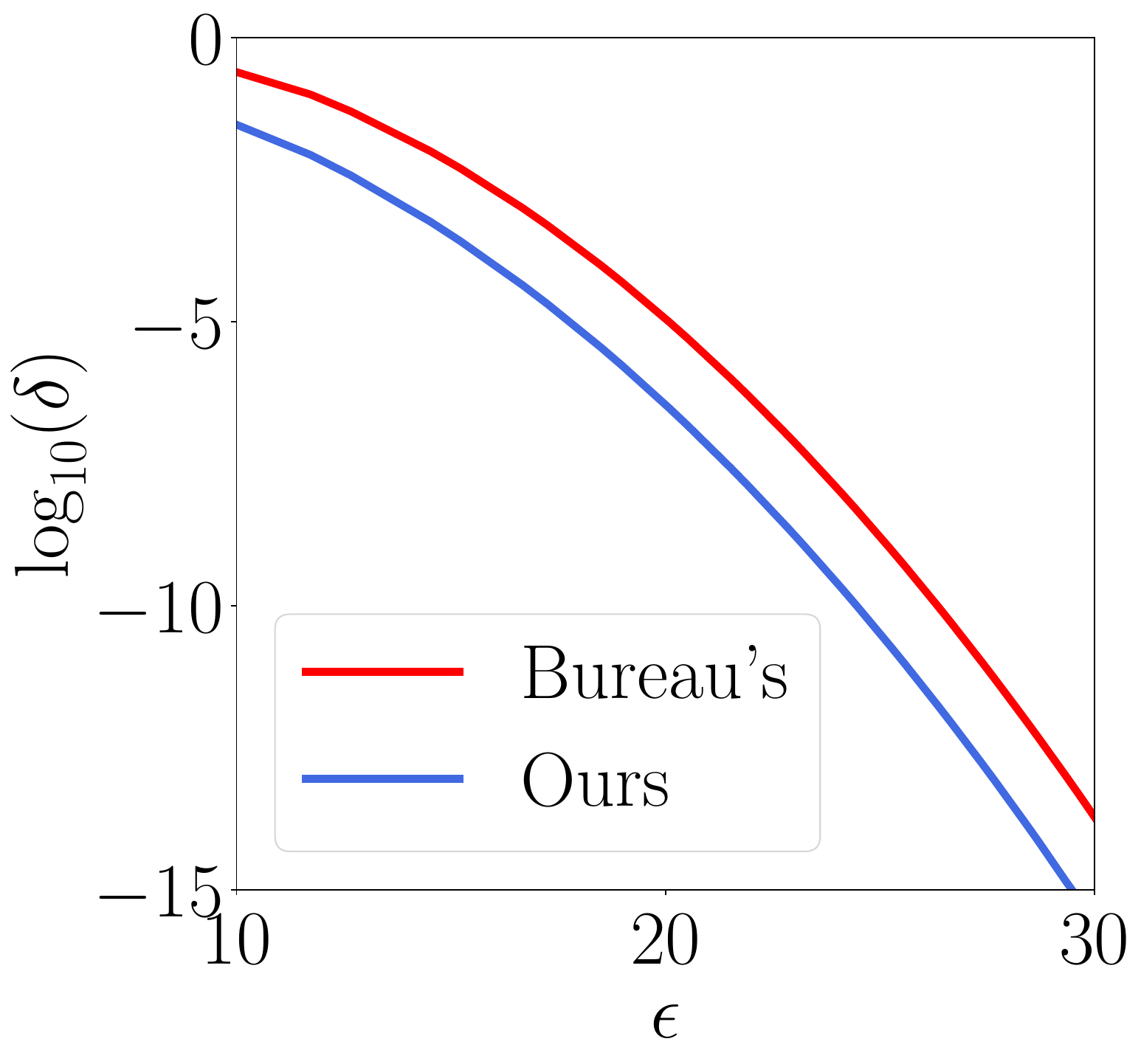} &  \multicolumn{3}{c}{}  \\
    \end{tabular}
    \vspace{-3mm}
    \caption{The $(\varepsilon, \delta)$-DP curves for the composed mechanisms $[\widetilde{\mathbf{M}}_0, \widetilde{\mathbf{M}}_l]$ for all $l$, where $\widetilde{\mathbf{M}}_0$ corresponds to the privacy budget allocation in Table \ref{tab:PLB_typical}. Although the curves appear highly similar, they differ by amounts significantly larger than the numerical error introduced by floating-point arithmetic or the \texttt{mpmath} package, as detailed in Figure \ref{fig:sensitivity-privacy}. The enlarged panel in the lower-right corner displays the privacy guarantee of the composed mechanism $[\widetilde{\mathbf{M}}_0, \widetilde{\mathbf{M}}_0]$.}
    \label{fig:eps_delta_paths_bypass}
\end{figure}

\subsection{Practical advice on privacy budget tuning}
\label{sec:speedup}

In practice, the U.S. Census Bureau considers multiple privacy budget allocations prior to deployment and evaluates their performance on downstream tasks \citep{censusDemonstrationData, censusjsm, censusAnnouncing2030, censusDecennialCensus}. However, determining the privacy budget based on the overall privacy guarantee can be computationally expensive, as it requires large-scale parallel computation. We therefore provide practical guidance to avoid repeatedly recomputing the full privacy accounting.
As shown in Figures \ref{fig:trade_off_paths_bypass} and \ref{fig:eps_delta_paths_bypass}, the privacy guarantees of different composed mechanisms $[\widetilde{\mathbf{M}}_k, \widetilde{\mathbf{M}}_l]$ are not identical but are highly similar. Motivated by this observation, we recommend using the composed mechanism $[\widetilde{\mathbf{M}}_0, \widetilde{\mathbf{M}}_0]$, with the privacy budget allocation given in Table \ref{tab:PLB_typical}, as a proxy during the privacy budget tuning stage.\footnote{More precisely, we recommend using a privacy budget allocation in which no geographic level is bypassed.} After the tuning process is completed, the exact overall privacy guarantee can then be computed to ensure full mathematical rigor.

To make this recommendation more precise, we first quantify how the privacy guarantee of the composed mechanism $[\widetilde{\mathbf{M}}_0, \widetilde{\mathbf{M}}_0]$ differs from the overall privacy guarantee. We measure the relative difference between the two as
\begin{align*}
    \frac{f_{00} - (\min_{k,l} f_{kl})^{**}}{f_{00}}, \quad \text{and} \quad \frac{\varepsilon_{00} - \max_{k,l} \varepsilon_{kl}}{\varepsilon_{00}}. 
\end{align*} 
Figure \ref{fig:sensitivity-privacy} shows that, although the overall privacy level differs slightly from the value computed using the allocation in Table \ref{tab:PLB_typical}, the discrepancy is empirically confined to a very small range. Specifically, the overall privacy guarantee is observed to uniformly lie within
\begin{equation}
\begin{split}
\label{eqn:sensitivity_bound}
    &\max_{k,l} \varepsilon_{kl} \in \left[\varepsilon_{00}, (1 + 0.005) \varepsilon_{00} \right]\\
    & (\min_{k,l} f_{kl})^{**} \in \left[(1 - 1.0 \times 10^{-4})f_{00}, f_{00} \right].
\end{split}
\end{equation}
Although \eqref{eqn:sensitivity_bound} is not a formal theoretical guarantee, it provides strong empirical evidence that the overall privacy level is relatively insensitive to the choice among different privacy budget allocations. This observation motivates the following practical strategy for privacy budget tuning. In practice, the Census Bureau may use the privacy budget allocation in which no geographic level is bypassed, the composed mechanism $[\widetilde{\mathbf{M}}_0, \widetilde{\mathbf{M}}_0]$, as a proxy during the tuning phase. After tuning is complete, a final evaluation of the full overall privacy guarantee can be performed, which is expected to yield very similar performance.

\begin{figure}[!htp] 
\centering 
\begin{subfigure}[b]{0.42\textwidth}
\includegraphics[height=\textwidth]{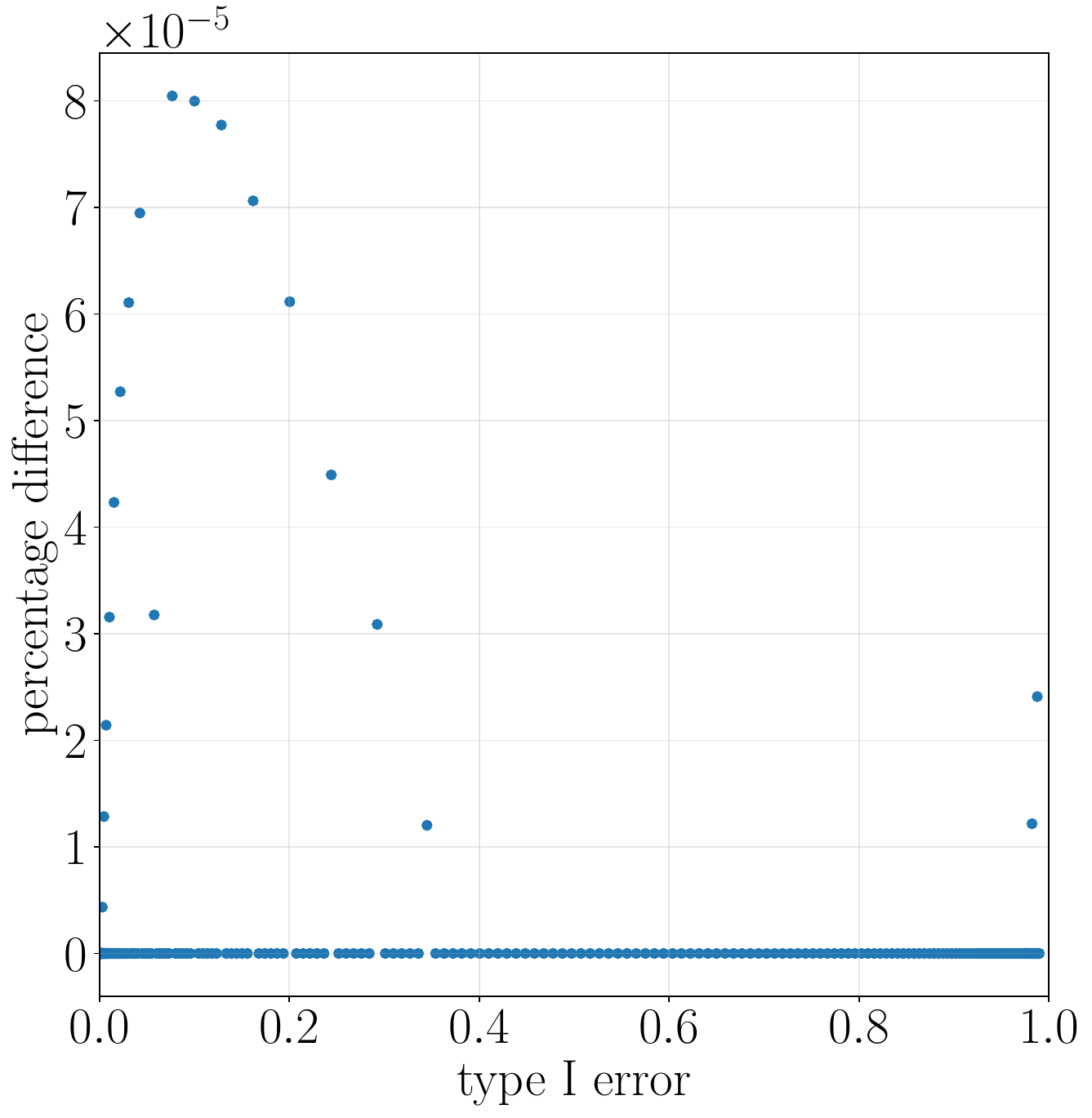} 
\end{subfigure}
\hspace{0.1\textwidth}
\begin{subfigure}[b]{0.42\textwidth}
\includegraphics[height=\textwidth]{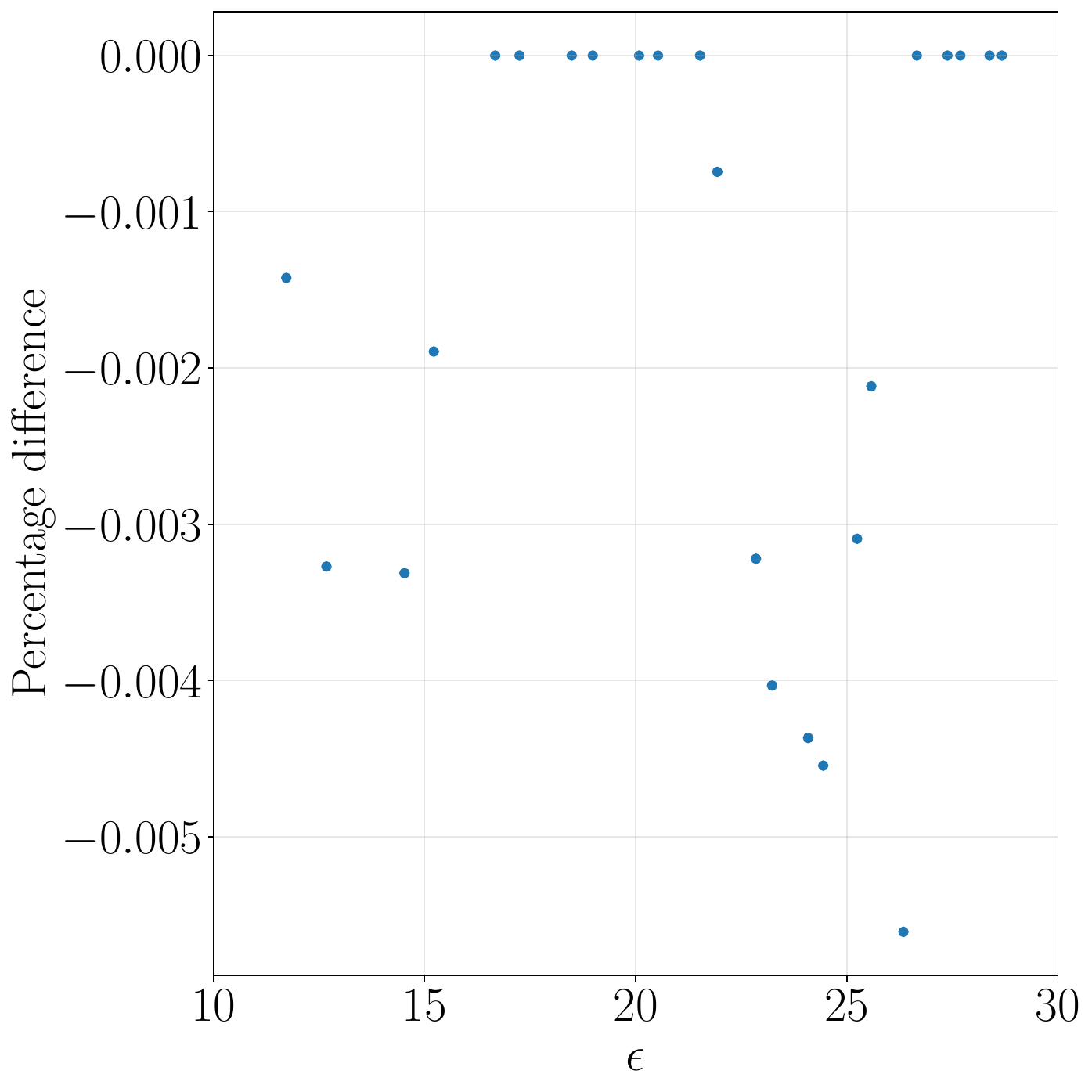}
\end{subfigure}
\caption{Percentage difference between the privacy level of the composed mechanism $[ \widetilde{\mathbf{M}}_0, \widetilde{\mathbf{M}}_0 ]$ and the overall privacy level in terms of $f$-DP (left) and $(\varepsilon, \delta)$-DP (right).} 
\label{fig:sensitivity-privacy} 
\end{figure}

\section{Approximation Gap from the Continuous Gaussian}
\label{sec:theory}

In this section, we discuss why exact privacy accounting for compositions of the discrete Gaussian mechanism, both in this paper and in prior work \citep{su20242020}, differs substantially from that of the (continuous) Gaussian mechanism \citep{dwork2014algorithmic, balle2018improving, Dong2022Gaussian}.
Propositions \ref{prop:eps_delta_curve},  \ref{prop:f_dp_curve} and Theorem 8 in \cite{balle2018improving} show that privacy accounting for compositions of discrete and continuous Gaussian mechanisms is mathematically equivalent to evaluating the tail probability of weighted convolutions of discrete and continuous Gaussian distributions, respectively. Empirically, Figure \ref{fig:discrete-continuous} illustrates the probability mass function of the convolution of two discrete Gaussian distributions and its continuous Gaussian counterpart, revealing a pronounced discrepancy between the discrete convolution and the smooth Gaussian density curve.

To gain further intuition, we examine the difference from the perspective of the central limit theorem. Even under i.i.d.\ compositions and equal weight, the convolution of discrete Gaussian variables produces a distribution that differs from its continuous Gaussian counterpart. For i.i.d.\ continuous Gaussian variables, the normalized sum has density
\begin{align*}
    \text{d}\, \PP_{X_i \sim \mathcal{N}(0,\sigma^2)}\left( \frac{1}{\sqrt{\var \left(\sum_{i=1}^m X_i \right)}} \sum_{i=1}^m X_i = y \right) = \frac{\ex^{-y^2/2}}{\sqrt{2 \pi}}.
\end{align*}
In contrast, a second-order CLT refinement—known as the Edgeworth expansion \citep{hall2013bootstrap, buhler2018explicit, derumigny2024explicit}—yields the following approximation for the $m$-fold convolution of discrete Gaussian variables:
\begin{align*}
    \PP_{X_i \sim \mathcal{N}_{\ZZ}(0,\sigma^2)} \left( \frac{1}{\sqrt{\var \left(\sum_{i=1}^m X_i \right)}} \sum_{i=1}^m X_i = y \right) \approx \frac{1}{\sqrt{\var \left(\sum_{i=1}^m X_i \right)}} \cdot \frac{\ex^{-y^2/2}}{\sqrt{2 \pi}}  \bigg(1 +  \frac{\lambda_{4}}{24 m} (3 - 6 y^2 + y^4) \bigg), 
\end{align*}
where the additional term $\frac{\lambda_4}{24m}(3 - 6y^2 + y^4)$ captures the discrepancy between the discrete and continuous Gaussian mechanisms.
For the noise parameters used in the 2020 Census DHC File, the magnitude of $\lambda_4/(24m)$ is not negligible. The illustrative result in Fact \ref{fact:lambda_4} shows that $\lambda_4/(24m) \gg \Delta$ under some of the adopted noise parameters. Theorem \ref{thm:edgeworth} formalizes this observation by establishing a rigorous uniform error bound of order $m^{-3/2}$.

Let $X_i \overset{\text{i.i.d.}}{\sim} \mathcal{N}_{\ZZ}(0,\sigma^2)$ and define $S_m = \frac{1}{B_m}\sum_{i=1}^m X_i,$ where $B_m = \sqrt{\var(\sum_{i=1}^m X_i)}$. 
As illustrated in Fact A.3 of \cite{su20242020}, $\var(X_i)$ is often close to, but not identical to, the noise parameter $\sigma^2$.
Let $\kappa_r=\kappa_r(X_i)$ denote the $r$-th cumulant of $X_i$, and define
\begin{align*}
    \lambda_{r} =\ & \kappa_r/\kappa_2^{r/2},
    \quad K_{r} = \EE |X_i|^r/\kappa_2^{r/2}.
\end{align*}
Using these quantities, we can derive an explicit uniform error bound for the Edgeworth expansion.
\begin{theorem}[Uniform error bound for the Edgeworth expansion]
\label{thm:edgeworth}
For any $y$ in the support of $S_m$, define the local error
    \begin{align*} 
        R_{m}(y) := \left| \PP \left( S_{m} = y \right) - \frac{1}{B_m} \cdot \frac{\ex^{-y^2/2}}{\sqrt{2 \pi}}  \bigg(1 +  \frac{\lambda_{4}}{24 m} (3 - 6 y^2 + y^4) \bigg)\right|.
    \end{align*}
Then, there exists a constant $C>0$ such that
    \begin{align} \label{eqn:edgeworth_order}
        \sup_{y\in\mathbb{Z}/B_m} R_m(y) \leq \frac{C}{m^{3/2}}.
    \end{align}
More precisely, let $U^{m}(c)$ and $U_m(\zeta)$ be defined in \eqref{eqn: U^m bound} and \eqref{eqn:U_m_bound}. For any $c$ satisfying $U^{m}(c)<1$, we have, 
    \begin{align}
    \begin{split} \label{eqn:edgeworth_bound}
        \sup_{y\in\mathbb{Z}/B_m} R_{m} \left( y \right) 
        \leq\ & m^{-1/2} \cdot \frac{1}{\pi \kappa_2^{1/2}} \int_{c \cdot \sqrt{\log(m)}}^{\infty} \ex^{-\zeta^2/2} \left(1 + \frac{\lambda_{4} \zeta^4}{24 m} \right) d\zeta 
        \\
        &+ m^{-1/2} \cdot \frac{1}{\pi \kappa_2^{1/2}} \int_{0}^{c \cdot \sqrt{\log(m)}} \exp \left( - \frac{\zeta^2}{2} \right) \times \bigg\{ \Lambda'_5 + \Lambda'_6 + \Lambda'_7 \bigg\}  d\zeta\\
        &+ \exp\left(- \frac{c^2}{2}  \log(m)  + \frac{\cos(\theta_0) - 1 + \theta_0^2/2}{2 \theta_0^3} \cdot 8 c^3 K_3 \cdot \left(\frac{\log^3 (m)}{m} \right)^{1/2} \right) ,
    \end{split}
    \end{align}
    where $\Lambda'_5, \Lambda'_6, \Lambda'_7$ are given by 
    \begin{align*}
        \Lambda'_5 =\ & \frac{\zeta^4}{8m} \cdot \left(\frac{1}{(1 - U^{m}(c))^2} + 1 \right) + \frac{K_{6}}{6!} \frac{\zeta^6}{m^2} + \frac{1}{(1 - U^{m}(c))^2} \cdot \bigg\{\frac{K_{6} \zeta ^6}{48 m^2} + \frac{K_{8} \zeta ^8}{640 m^3} + \frac{K_{10} \zeta ^{10}}{17280 m^4} + \frac{K_{12} \zeta ^{12}}{1036800 m^{5}} \bigg\}\\
        \Lambda'_6 =\ & \frac{1}{2} \left(\frac{K_{4}}{4!} \frac{\zeta^4}{m} +  \frac{K_{6}}{6!} \frac{\zeta^6}{m^2} + \frac{U_m(\zeta) }{2(1 - U^{m}(c))^2} \right)^2\\
        \Lambda'_7 =\ & \frac{1}{6} \exp \left(\frac{c^4 K_{4}}{4!} \frac{\log(m)^2}{m} + \frac{c^6 K_{6}}{6!} \frac{\log(m)^3}{m^2} + \frac{U_m (c \sqrt{\log(m)}{})}{2(1 - U^{m}(c))^2} \right) \times \left(\frac{K_{4}}{4!} \frac{\zeta^4}{m} + \frac{K_{6}}{6!} \frac{\zeta^6}{m^2} + \frac{U_m(\zeta)}{2(1 - U^{m}(c))^2} \right)^3,
    \end{align*}
    and $\theta_0$ is the unique root of the equation
    \begin{align} \label{eqn: theta_0}
        \theta^2 + 2 \theta \sin(\theta) + 6 (\cos(\theta) - 1) = 0, 
    \end{align}
    within the interval $(0, 2\pi)$.
\end{theorem}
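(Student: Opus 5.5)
We would follow the classical Fourier-analytic proof of the local Edgeworth expansion for lattice variables, with constants kept explicit. Since $X_i$ is integer-valued and symmetric, $S_m$ is supported on the lattice $\mathbb{Z}/B_m$, and the inversion formula already used in Step 2 gives, for $y = x/B_m$,
\begin{align*}
\PP(S_m = y) = \frac{1}{2\pi}\int_{-\pi}^{\pi} \ex^{-i\theta x} f_{X}(\theta)^m\, d\theta = \frac{1}{2\pi B_m}\int_{-\pi B_m}^{\pi B_m} \ex^{-i\zeta y}\, f_X(\zeta/B_m)^m\, d\zeta .
\end{align*}
The Edgeworth term is an exact Fourier integral over all of $\RR$:
\begin{align*}
\frac{\ex^{-y^2/2}}{\sqrt{2\pi}}\Big(1+\tfrac{\lambda_4}{24m}(3-6y^2+y^4)\Big) = \frac{1}{2\pi}\int_{\RR} \ex^{-i\zeta y}\, \ex^{-\zeta^2/2}\Big(1+\tfrac{\lambda_4\zeta^4}{24m}\Big)\, d\zeta .
\end{align*}
Here we use that the fourth Hermite polynomial is the inverse transform of $\zeta^4 \ex^{-\zeta^2/2}$, and that by symmetry $\kappa_3 = \kappa_5 = 0$, so there is no $m^{-1/2}$ term. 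Subtracting the two integrals and bounding $|\ex^{-i\zeta y}| \le 1$ gives a bound uniform in $y$. Using $B_m = \sqrt{m\kappa_2}$, the prefactor $1/(2\pi B_m)$ combined with evenness of the integrands becomes $m^{-1/2}/(\pi\kappa_2^{1/2})$, as in \eqref{eqn:edgeworth_bound}.

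We would split the $\zeta$-range at $T = c\sqrt{\log m}$ into three pieces.

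\emph{Gaussian tail.} For $|\zeta| > T$ the Edgeworth integrand contributes exactly the first line of \eqref{eqn:edgeworth_bound}. Its size is $O(m^{-c^2/2}\,\mathrm{poly}\log m)$.

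\emph{Characteristic-function tail.} For $T < |\zeta| \le \pi B_m$ we bound $|f_X(\zeta/B_m)|^m$. Write $\theta = \zeta/B_m$; by Lemma~\ref{lemma:char_property}, $f_X$ is monotone on $(0,\pi)$. We would expand $\log f_X(\theta)$ around $0$ via $\cos(\theta x) = 1 - \theta^2x^2/2 + r(\theta x)$. The remainder is controlled through the function $(\cos\theta - 1 + \theta^2/2)/\theta^3$. This function is maximised over $(0,2\pi)$ at the root $\theta_0$ of \eqref{eqn: theta_0}, which is precisely its critical-point equation. This gives $|f_X(\theta)| \le \exp\big(-\theta^2\kappa_2/2 + C_0 K_3\kappa_2^{3/2}|\theta|^3\big)$. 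By monotonicity it suffices to evaluate at the endpoint $\theta = T/B_m$, and raising to the $m$-th power gives the third line of \eqref{eqn:edgeworth_bound}. The factor $8c^3$ comes from the endpoint bookkeeping and the doubling from evenness. We expect this piece to be the most delicate: the cubic-remainder estimate must hold uniformly over the whole range, including $|\theta|$ near $\pi$, and periodicity and monotonicity have to be combined carefully.

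\emph{Central range.} The core is $|\zeta| \le T$, where we compare $f_X(\zeta/B_m)^m = \exp\big(m\log f_X(\zeta/\sqrt{m\kappa_2})\big)$ with $\ex^{-\zeta^2/2}\big(1 + \lambda_4\zeta^4/(24m)\big)$. First we Taylor-expand $f_X$ to order $12$ using the absolute moments $K_r$. Then we take the logarithm via $\log(1-u) = -u - u^2/2 - \cdots$. The quantity $U_m(\zeta)$ bounds the residual $u$, and $U^m(c) = U_m(T)$ bounds it on the whole range, so $1/(1-U^m(c))^2$ controls the log-series remainder. This yields $m\log f_X = -\zeta^2/2 + \lambda_4\zeta^4/(24m) + E$. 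The leftover terms are collected in $\Lambda'_5$; for instance, the $\zeta^4/(8m)$ piece arises from squaring the quadratic term.

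Next we write $\ex^{w} = 1 + w + w^2/2 + (w^3/6)\,\ex^{\xi}$ with $w = \lambda_4\zeta^4/(24m) + E$. The second-order term gives $\Lambda'_6$. The cubic Lagrange remainder gives $\Lambda'_7$, with its exponential prefactor evaluated at the worst point $\zeta = T$. Every term in $\Lambda'_5,\Lambda'_6,\Lambda'_7$ is $O(m^{-1}\mathrm{poly}(\zeta))$ against the Gaussian weight, so this piece is $O(m^{-3/2})$.

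Finally, choosing $c > \sqrt{2}$, with $c$ also large enough that $U^m(c) < 1$ for large $m$, makes both tail pieces $o(m^{-3/2})$. Summing the three pieces gives \eqref{eqn:edgeworth_order}.
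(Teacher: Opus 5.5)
Your architecture is the paper's: Fourier inversion on $[-\pi B_m,\pi B_m]$ against the Hermite-corrected Gaussian on $\RR$, and a split at $T=c\sqrt{\log m}$. In the centre you use a Taylor--log--exp expansion with Lagrange remainders, giving $\Lambda'_5,\Lambda'_6,\Lambda'_7$, with the exponential prefactor of $\Lambda'_7$ taken at $\zeta=T$. On the rest you use a cubic cosine inequality at $\theta_0$ plus monotonicity of $f_X$ on $(0,\pi)$.

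The step that fails is your final choice of $c$. The characteristic-function tail has no $m^{-1/2}$ prefactor. When you bound $|f_{S_m}|$ on $[T,\pi B_m]$ by its value at $T$, the interval length $\approx\pi B_m$ cancels the $1/(\pi B_m)$. So this piece is $\exp(-\tfrac{c^2}{2}\log m+o(1))=m^{-c^2/2}(1+o(1))$, exactly the third line of \eqref{eqn:edgeworth_bound}. For $\sqrt2<c<\sqrt3$ this is much larger than $m^{-3/2}$, so \eqref{eqn:edgeworth_order} does not follow.

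The threshold $\sqrt2$ is correct only for the Gaussian tail, which does carry the factor $m^{-1/2}$; your stated size $O(m^{-c^2/2}\,\mathrm{poly}\log m)$ for that piece drops it. Take $c>\sqrt3$, e.g.\ $c=2$ as the paper does. Then $U^m(c)<1$ holds for all large $m$, since $U^m(c)\to0$ for fixed $c$. Because $U^m(c)$ is increasing in $c$, it is $m$, not $c$, that must be large.

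Two smaller points. First, in the tail you apply $\cos x\le 1-x^2/2+b_0|x|^3$, with $b_0=(\cos\theta_0-1+\theta_0^2/2)/\theta_0^3$, directly to $f_X(\theta)=\EE\cos(\theta X)$. The paper instead symmetrizes, $|f_X(\theta)|^2=\EE\cos(\theta(X-\tilde X))$. Your version is legitimate only because $f_X>0$, which follows from Poisson summation, \eqref{eqn:char_func}. You need this both to turn an upper bound on $f_X$ into one on $|f_X|$ and to make $|f_X|^m$ monotone, so state it. With it, your cubic coefficient is $b_0c^3K_3$, a quarter of the paper's $\tfrac{b_0}{2}\cdot 8c^3K_3$, so your bound implies the third line. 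The $8$ is not endpoint bookkeeping: it is $\EE|X-\tilde X|^3\le 8\,\EE|X|^3$ from the symmetrization, and the $\tfrac12$ comes from taking a square root.

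Second, in the centre $U^m(c)$ is not $U_m(T)$. $U^m(c)$ bounds the per-summand quantity $|f_X(\zeta/B_m)-1|$ uniformly on $[0,T]$, which is what gives $|1+\theta_2U|^{-2}\le(1-U^m(c))^{-2}$. $U_m(\zeta)$ instead bounds $\sum_j|f_X(\zeta/B_m)-1|^2$. Also, $f_X$ is expanded only to sixth order; the constants $K_8,K_{10},K_{12}$ come from squaring that expansion and using $\EE|X|^k\,\EE|X|^l\le\EE|X|^{k+l}$.
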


The proof of Theorem \ref{thm:edgeworth} is postponed to Section \ref{sec:proof_theory}. 
We now briefly explain why the right-hand side of Equation~\eqref{eqn:edgeworth_bound} is of order $O(m^{-3/2})$, as stated in Equation~\eqref{eqn:edgeworth_order}.
Recall that Equation~\eqref{eqn:edgeworth_bound} holds for any $c$ satisfying
\begin{align} 
     U^{m}(c) =\ & c^2 \cdot \frac{\log(m)}{2m} + c^4 \cdot \frac{K_4}{4!} \left(\frac{\log(m)}{m} \right)^2 + c^6 \cdot \frac{K_6}{6!} \left(\frac{\log(m)}{m}\right)^3 < 1
\end{align}
When $m$ is sufficiently large, we may choose $c = 2$. With this choice, we have
\begin{align*}
    &\frac{1}{\pi \kappa_2^{1/2}} \int_{c \cdot \sqrt{\log(m)}}^{\infty} \ex^{-\zeta^2/2} \left(1 + \frac{\lambda_{4} \zeta^4}{24 m} \right) d\zeta = o(m^{-3/2})\\
    &\exp\left(- \frac{c^2}{2} \log(m)  + \frac{\cos(\theta_0) - 1 + \theta_0^2/2}{2 \theta_0^3} \cdot 8 c^3 K_3 \cdot \left(\frac{\log^3 (m)}{m} \right)^{1/2} \right) = o(m^{-3/2}) \\
    & \Lambda'_5 = O\left(\frac{1}{m}\right), \quad 
    \Lambda'_6 = O\left(\frac{1}{m^2}\right), \quad \Lambda'_7 = O\left(\frac{1}{m^3}\right).
\end{align*}
Combining these bounds, we conclude that $R_{m}(y)$ is bounded above by $O(m^{-3/2})$.

\section{Discussion}
\label{sec:discussion}




In this paper, we provide, for the first time, an exact characterization of the overall privacy guarantee of the 2020 Census DHC File in terms of $f$-DP and its corresponding $(\varepsilon,\delta)$-DP profile.
Compared with the U.S.\ Census Bureau’s current upper-bounding approach, these curves offer a more direct and policy-relevant interpretation of re-identification risk and are easier to communicate in empirical research and policymaking across the social sciences \citep{Sullivan2020Coming}, political science \citep{ansolabehere2008end,cohen2021census}, and economics \citep{autor2003rise,bureau2023guidance}, while achieving the optimal privacy--utility trade-off with the exact computation of the composition bounds.
Achieving exact privacy accounting for the discrete Gaussian mechanism presents two key challenges: analytic composition methods developed for the continuous Gaussian setting \citep{kairouz2021distributed,zhu2022optimal} do not extend to the discrete case, and exact $(\varepsilon,\delta)$-DP and $f$-DP curves require extremely high numerical precision together with large-scale parallel composition (e.g., 946 mechanisms for the 2020 Census DHC File), making existing numerical approaches \citep{Koskela2020computing,gopi2021numerical,su20242020} computationally impractical. 

To address this, we develop a new framework that integrates numerical analysis and number-theoretic ideas.
Our acceleration rests on three numerical innovations tailored to the heterogeneity and extreme-precision regime of the discrete Gaussian mechanism. We recast exact privacy accounting into a discrete Fourier–based numerical integration problem, and then explicitly exploit the exponential convergence of the trapezoidal rule for complex-analytic periodic integrands. To make this exponential regime usable in the fully heterogeneous setting, we map rational weights onto a common integer lattice, enabling sharp sub-Gaussian truncation bounds. Finally, inspired by sieve methods in number theory, we prune quadrature nodes by rigorously discarding regions where the oscillatory integrand must be negligible, reducing the required evaluations from $\sim 10^6$ nodes down to as few as $203$ in representative DHC settings; combined with practical tuning guidance, this makes exact privacy characterization feasible at scale.


More broadly, our contribution is a fast and numerically reliable framework for exact privacy accounting in large-composition settings. This makes accurate privacy characterization practically feasible and enables policymakers to treat privacy cost as a tunable resource when allocating budget across releases. Beyond the discrete Gaussian mechanism and the Census setting, several directions merit further study. For instance, developing a direct analytical construction of the sieve (e.g., via lattice reduction) could eliminate the current sequential search and further accelerate the algorithm. In addition, although the Census Bureau’s current allocation weights are rational, extending the method to handle irrational weights and almost-periodic integrands could enable more flexible privacy budget allocations. Relatedly, integrating the framework with other privacy accounting paradigms may provide accurate privacy budgets with tighter privacy--utility trade-offs at scale. Finally, it may be fruitful to explore whether these techniques can inform the design of improved integer-valued mechanisms for official statistics.

{\small
\section*{Acknowledgments}

This work was supported in part by a Meta Faculty Research Award, and Wharton AI for Business.
Chendi Wang was supported by the Fundamental Research Funds for the Central Universities (No.\ 20720251050) and NSFC Basic Science Center Project for Econometric Modeling and Economic Policy Studies (No.\ 71988101).

\bibliographystyle{abbrvnat}
\bibliography{ref}
}

\clearpage
\appendix

\begin{center}
  {\LARGE Supplementary Materials for\par}
  \vspace{0.6em}
  \begin{minipage}{\textwidth}
    \centering\LARGE
    \scititle
  \end{minipage}
  \vspace{0.6em}
\end{center}

\section{Preliminaries}

\subsection{Useful facts for discrete Gaussian distributions}
\label{sec:useful_facts}
We introduce some useful facts for discrete Gaussian distributions in this section, including the sub-Gaussian tail bound for discrete Gaussian distributions and properties of the characteristic functions.
The proof of this section can be found in \cite{Canonne2020discrete, su20242020}.

\paragraph{Properties of discrete Gaussian distributions.} 
The following proposition on the moment generating function and tail bound shows that the discrete Gaussian is sub-Gaussian.
\begin{proposition}[Lemma 16 \& Corollary 17 in \cite{Canonne2020discrete}]
\label{prop:subgaussian}
    A discrete Gaussian random variable $X \sim \mathcal{N}_{\ZZ}(0,\sigma^2)$ is sub-Gaussian with variance proxy $\sigma^2.$ For any $t \in \RR$, we have
    \begin{align*}
        \EE \ex^{t X} \leq \ex^{t^2 \sigma^2/2}.
    \end{align*}
    Moreover, we have the tail bound
    \begin{align*}
        \PP(X > t) \leq \ex^{-t^2/2 \sigma^2}. 
    \end{align*}
\end{proposition}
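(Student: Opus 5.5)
The claim has two parts: an MGF bound $\EE \ex^{tX}\le \ex^{t^2\sigma^2/2}$, and the tail bound that follows from it by a Chernoff argument. I will prove the MGF bound directly via Poisson summation, following \cite{Canonne2020discrete}. Write $Z=\sum_{x\in\ZZ}\ex^{-x^2/2\sigma^2}$. Completing the square in the exponent gives
\begin{align*}
\EE \ex^{tX} = \frac{1}{Z}\sum_{x\in\ZZ}\ex^{-x^2/2\sigma^2+tx} = \ex^{t^2\sigma^2/2}\cdot\frac{\sum_{x\in\ZZ}\ex^{-(x-t\sigma^2)^2/2\sigma^2}}{\sum_{x\in\ZZ}\ex^{-x^2/2\sigma^2}}.
\end{align*}
The problem therefore reduces to a single inequality for the shifted theta sum $g(s)=\sum_{x\in\ZZ}\ex^{-(x-s)^2/2\sigma^2}$: we need $g(s)\le g(0)$ for every real $s$.

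\textbf{Step 1 (the main point).} Apply Poisson summation to the Gaussian $h(y)=\ex^{-(y-s)^2/2\sigma^2}$, whose Fourier transform is $\hat h(k)=\sqrt{2\pi\sigma^2}\,\ex^{-2\pi^2\sigma^2k^2}\ex^{-2\pi i k s}$. This gives
\begin{align*}
g(s)=\sqrt{2\pi\sigma^2}\sum_{k\in\ZZ}\ex^{-2\pi^2\sigma^2k^2}\cos(2\pi k s).
\end{align*}
Every coefficient $\ex^{-2\pi^2\sigma^2k^2}$ is positive and $|\cos(2\pi ks)|\le1$. Hence $g(s)\le\sqrt{2\pi\sigma^2}\sum_k \ex^{-2\pi^2\sigma^2 k^2}=g(0)$. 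I expect this step to be the only real obstacle. A naive term-by-term comparison of the two theta sums fails, and positivity of the Fourier coefficients is exactly what is needed. Poisson summation is justified here because a Gaussian is a Schwartz function, so the sums converge absolutely.

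\textbf{Step 2.} Substituting $s=t\sigma^2$ into the identity above yields $\EE \ex^{tX}\le \ex^{t^2\sigma^2/2}$ for all real $t$. This is exactly the statement that $X$ is sub-Gaussian with variance proxy $\sigma^2$.

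\textbf{Step 3 (tail bound).} For $t\le 0$ the bound is trivial, since its right-hand side is at least $1$ when $t=0$ and the claim is vacuous otherwise in the relevant range. For $t>0$, Markov's inequality gives, for any $\lambda>0$,
\begin{align*}
\PP(X>t)\le \ex^{-\lambda t}\,\EE \ex^{\lambda X}\le \exp(-\lambda t+\lambda^2\sigma^2/2).
\end{align*}
Choosing $\lambda=t/\sigma^2$ makes the exponent equal to $-t^2/2\sigma^2$, which is the stated tail bound.
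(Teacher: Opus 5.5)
Your argument is the same one the paper relies on. The paper gives no proof of its own and defers to \cite{Canonne2020discrete}, whose argument is exactly yours: complete the square, reduce to $\sum_{x\in\ZZ}\ex^{-(x-s)^2/2\sigma^2}\le\sum_{x\in\ZZ}\ex^{-x^2/2\sigma^2}$, prove that by Poisson summation and positivity of the coefficients $\ex^{-2\pi^2\sigma^2k^2}$, then apply Chernoff with $\lambda=t/\sigma^2$. Steps~1 and~2, and the $t>0$ case of Step~3, are correct.

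The one thing to fix is your treatment of $t\le 0$ in Step~3. For $t<0$ the right-hand side $\ex^{-t^2/2\sigma^2}$ is strictly less than $1$, so the claim is neither trivial nor vacuous; it is in fact false. For example, take $\sigma^2=0.1$ and $t=-1/2$. Then $\PP(X>t)=\PP(X\ge 0)=\tfrac12\bigl(1+p_\sigma(0)\bigr)>\tfrac12$, while $\ex^{-t^2/2\sigma^2}=\ex^{-1.25}\approx 0.29$. More generally, as $t\to-\infty$ the left side tends to $1$ and the right side to $0$. By symmetry, the correct statement for $t<0$ is the reverse-type bound $\PP(X>t)=1-\PP(X\ge -t)\ge 1-\ex^{-t^2/2\sigma^2}$. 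So the displayed tail bound holds only for $t\ge 0$, and the paper only ever uses it with a positive threshold $U$ in Proposition~\ref{prop:choice_U}. Restrict that part of the claim to $t\ge0$ and say the statement needs this qualification; do not assert that the negative case is trivial.
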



The following proposition records the numerical values of $\lambda_4$ defined in Section \ref{sec:theory} for some examples of noise parameters $\sigma^2$.  
\begin{fact} \label{fact:lambda_4}
    The $\lambda_4$ defined in Section \ref{sec:theory} takes the following values.  
    For $\sigma^2 = 0.1$, $\lambda_4 \approx 72.21.$
    For $\sigma^2 = 0.2$, $\lambda_4 \approx 4.12.$
    For $\sigma^2 = 0.4$, $\lambda_4 \approx 0.19.$
    For $\sigma^2 = 1.0$, $\lambda_4 \approx 8.34 \times 10^{-6}.$
    For $\sigma^2 = 2.0$, $\lambda_4 \approx 8.92 \times 10^{-14}.$
    For $\sigma^2 = 4.0$, $\lambda_4 \approx 2.55 \times 10^{-30}.$
    For $\sigma^2 = 10.0$, $\lambda_4 < 10^{-50}.$
\end{fact}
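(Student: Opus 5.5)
Fact~\ref{fact:lambda_4} is a numerical statement, so my plan is to give an exact formula for $\lambda_4=\kappa_4/\kappa_2^2$ of $X\sim\mathcal N_{\ZZ}(0,\sigma^2)$ and then evaluate it with certified error control for the seven listed values of $\sigma^2$. Since $X$ is symmetric, all odd moments vanish, and
\[
\kappa_2=\EE X^2,\qquad \kappa_4=\EE X^4-3(\EE X^2)^2,
\]
with $\EE X^{r}=\sum_{x\in\ZZ}x^{r}\ex^{-x^2/2\sigma^2}\big/\sum_{x\in\ZZ}\ex^{-x^2/2\sigma^2}$.

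\emph{Small noise} ($\sigma^2\in\{0.1,0.2,0.4,1.0\}$). Here I would compute these sums directly in \texttt{mpmath} at high working precision, truncating at $|x|\le T$. The discarded tail is controlled by Proposition~\ref{prop:subgaussian}, or simply by comparing $\sum_{|x|>T}x^4\ex^{-x^2/2\sigma^2}$ with a geometric series. The resulting intervals for $\kappa_2$ and $\kappa_4$ give an interval for $\lambda_4$. That interval should be tight enough to confirm the quoted values, for example $\lambda_4\approx72.21$ at $\sigma^2=0.1$, where the mass is essentially concentrated on $\{0,\pm1\}$.

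\emph{Large noise} ($\sigma^2\in\{2,4,10\}$). Here direct summation suffers from cancellation, since $\kappa_4$ is astronomically small, so I would instead use Poisson summation. It gives
\[
\sum_x \ex^{itx}\ex^{-x^2/2\sigma^2}=\sigma\sqrt{2\pi}\sum_{k\in\ZZ}\ex^{-\sigma^2(t+2\pi k)^2/2}.
\]
Hence the cumulant generating function is
\[
\log \EE \ex^{itX}=-\frac{\sigma^2t^2}{2}+\log\frac{1+\sum_{k\ne0}\ex^{-2\pi^2k^2\sigma^2}\ex^{-2\pi k\sigma^2 t}}{1+\sum_{k\ne0}\ex^{-2\pi^2k^2\sigma^2}}.
\]
The first term contributes nothing to $\kappa_4$. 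So $\kappa_4$ equals the fourth $t$-derivative at $0$ of the second term, whose leading contribution comes from $k=\pm1$:
\[
\kappa_4\approx 2(2\pi\sigma^2)^4\ex^{-2\pi^2\sigma^2},
\]
with corrections of relative order $\ex^{-2\pi^2\sigma^2}$. Likewise $\kappa_2=\sigma^2\bigl(1+O(\sigma^2\ex^{-2\pi^2\sigma^2})\bigr)$, so
\[
\lambda_4\approx 32\pi^4\sigma^4\ex^{-2\pi^2\sigma^2}.
\]
Checking this against the claimed values:
\begin{itemize}
\item at $\sigma^2=4$ this gives about $5\times10^{4}\cdot\ex^{-78.96}\approx 2.5\times10^{-30}$, consistent with the fact;
\item at $\sigma^2=2$ it gives about $8.9\times10^{-14}$;
\item at $\sigma^2=10$ it gives about $3\times10^{5}\cdot\ex^{-197.4}<10^{-50}$.
\end{itemize}
To make these rigorous, I would bound the remainder of the series in $k$, and of the logarithm's Taylor expansion, explicitly by a constant multiple of $\ex^{-4\pi^2\sigma^2}$ times a polynomial in $\sigma^2$.

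The main obstacle is precision: for $\sigma^2\ge2$, naive floating-point evaluation of $\EE X^4-3(\EE X^2)^2$ loses all significant digits. That is exactly why I switch to the Poisson-dual representation in that regime. The only genuinely analytic step is bounding the dual-series tail, which is routine since the terms decay like $\ex^{-2\pi^2k^2\sigma^2}$.
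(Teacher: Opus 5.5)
Your proposal is correct, and the numbers check out. The leading term $32\pi^4\sigma^4\mathrm{e}^{-2\pi^2\sigma^2}$ gives $8.339\times10^{-6}$, $8.924\times10^{-14}$ and $2.555\times10^{-30}$ at $\sigma^2=1,2,4$. Direct summation gives $\lambda_4\approx 1/p-3\approx 72.207$ at $\sigma^2=0.1$ (with $p$ the probability that $|X|=1$), $4.123$ at $0.2$, and $0.190$ at $0.4$.

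The paper itself offers no derivation. The Fact is a list of numerically evaluated values, and the appendix simply defers to prior work. The values presumably come from direct high-precision evaluation of the moments; the entry ``$<10^{-50}$'' at $\sigma^2=10$ looks like a working-precision floor. Your Poisson-dual treatment of the large-noise regime is therefore a genuinely different route. It gives you three things the brute-force route does not:
\begin{itemize}
\item an explanation of why $\lambda_4$ collapses super-exponentially once $\sigma^2\gtrsim 1$, which is exactly the contrast the discussion in Section~\ref{sec:theory} relies on;
\item a much sharper value at $\sigma^2=10$, about $6\times10^{-81}$;
\item immunity to cancellation.
\end{itemize}
The brute-force route's only advantage is simplicity.

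Two small remarks. First, the split at $\sigma^2=1$ is only a convenience. There the asymptotic formula already has relative error about $16\pi^2\mathrm{e}^{-2\pi^2}\approx 4\times10^{-7}$.

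Second, the remainder bound you call routine becomes essentially free once you notice an exact identity. Your second logarithmic term is exactly the cumulant generating function of $Y\sim\mathcal{N}_{\mathbb{Z}}(0,1/(4\pi^2\sigma^2))$ evaluated at $-2\pi\sigma^2 t$. This gives $\kappa_4(X)=(2\pi\sigma^2)^4\kappa_4(Y)$ and $\kappa_2(X)=\sigma^2-(2\pi\sigma^2)^2\kappa_2(Y)$. Both $\kappa_2(Y)$ and $\kappa_4(Y)$ are small-noise sums that can be evaluated directly, with tail bounds, and with no cancellation problem.
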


\paragraph{Characteristic functions of discrete Gaussian distribution.}
The characteristic function of $S_m$ defined in Section \ref{sec:theory} can be represented as follows.
\begin{align} \label{eqn:char_func}
\begin{split}
    f_{S_m}(t) = \EE \ex^{it S_m} =\ & \left( \frac{\sum_{u=- \infty}^{\infty} \ex^{- u^2/2 \sigma^2} \ex^{i \cdot t/B_m \cdot u}}{\sum_{u=- \infty}^{\infty} \ex^{- u^2/2 \sigma^2}} \right)^m\\
    \overset{(a)}{=}\ & \left( \frac{\sum_{u=- \infty}^{\infty} \ex^{- \sigma^2 (t/B_m - 2 \pi u)^2/2}}{\sum_{u=- \infty}^{\infty} \ex^{- 2 \sigma^2 \pi^2 u^2}} \right)^m,
\end{split}
\end{align}
where Equality (a) holds due to Poisson summation formula.
For $X \sim \mathcal{N}_{\ZZ}(0, \sigma^2)$, characteristic functions $f_{X}(\zeta)$ have the following properties.
\begin{lemma}[Proposition A.6 in \cite{su20242020}]
    We have the following observation on $f_{X}(\zeta)$:
    \begin{enumerate}
    \item $|f_{X}(\zeta)| \leq 1$, $f_{X}(\zeta)$ achieves its maximum at $\zeta=0$ with a maximum value of $f_{X}(\zeta) = 1$. 
    \item $f_{X}(\zeta)$ is periodic with period $2 \pi$, therefore $f_{X}(a_i L \zeta)$ is periodic with period $2 \pi/a_i L$.
    \item $f_{X}(\zeta)$ is strictly increasing on $(- \pi, 0)$ and is strictly decreasing on $(0, \pi)$.
    \end{enumerate}
\end{lemma}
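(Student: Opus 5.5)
The plan is to write $f_X$ explicitly as a real, even theta-type series and read off all three properties from it, using the Jacobi triple product for the monotonicity claim. Set $q = \ex^{-1/(2\sigma^2)} \in (0,1)$ and $Z = \sum_{u\in\ZZ} q^{u^2}$. Then
\[
f_X(\zeta) = \frac{1}{Z}\sum_{u\in\ZZ} q^{u^2}\ex^{iu\zeta} = \frac{1}{Z}\Bigl(1 + 2\sum_{u\ge 1} q^{u^2}\cos(u\zeta)\Bigr).
\]
The second form uses the symmetry $u\mapsto -u$, so $f_X$ is real-valued and even. The series converges absolutely and uniformly because $\sum q^{u^2}<\infty$, which justifies every termwise manipulation below.

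For item 1, the triangle inequality gives $|f_X(\zeta)| \le Z^{-1}\sum_u q^{u^2} = 1$, and at $\zeta=0$ every term equals $q^{u^2}$, so $f_X(0)=1$. For item 2, each term satisfies $\ex^{iu(\zeta+2\pi)} = \ex^{iu\zeta}$ because $u\in\ZZ$, so $f_X$ has period $2\pi$. Substituting $a_iL\zeta$ for $\zeta$ then gives period $2\pi/(a_iL)$.

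For item 3, I would invoke the Jacobi triple product identity:
\[
\sum_{u\in\ZZ} q^{u^2} z^{u} = \prod_{n\ge1}(1-q^{2n})(1+q^{2n-1}z)(1+q^{2n-1}z^{-1}).
\]
With $z=\ex^{i\zeta}$, the product of the last two factors is $1+2q^{2n-1}\cos\zeta + q^{4n-2}$. This yields
\[
Z\, f_X(\zeta) = \prod_{n\ge1}(1-q^{2n})\bigl(1+2q^{2n-1}\cos\zeta+q^{4n-2}\bigr).
\]
Each factor is strictly positive, since $1+2r\cos\zeta+r^2 \ge (1-r)^2>0$ for $r=q^{2n-1}<1$. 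Each factor is also strictly decreasing in $\zeta$ on $(0,\pi)$, because $\cos$ is strictly decreasing there and $q^{2n-1}>0$. The product converges absolutely to a positive limit since $\sum_n q^{2n-1}<\infty$.

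To conclude monotonicity, I would take logarithms: $\log(Zf_X(\zeta)) = \sum_n \log(1-q^{2n}) + \sum_n \log(1+2q^{2n-1}\cos\zeta + q^{4n-2})$. This is a convergent sum of functions, each non-increasing on $(0,\pi)$, and the $n=1$ term is strictly decreasing. Hence $f_X$ is strictly decreasing on $(0,\pi)$. Evenness then makes it strictly increasing on $(-\pi,0)$. As a by-product, $f_X>0$ everywhere, and the maximum value $1$ is attained only at $\zeta\in 2\pi\ZZ$.

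The main obstacle is item 3, since termwise differentiation of the cosine series does not give a sign. The triple product avoids this entirely. If one prefers to avoid it, the fallback is the Poisson-summation form \eqref{eqn:char_func}, $f_X(\zeta)\propto \sum_u \ex^{-\sigma^2(\zeta-2\pi u)^2/2}$. Pairing the terms $u$ and $1-u$ about $\zeta=\pi$ shows the derivative is negative on $(0,\pi)$, but that bookkeeping is messier.
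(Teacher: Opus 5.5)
Your argument is correct, and it takes a different and more self-contained route than the paper. The paper gives no argument for this lemma. It defers to Proposition A.6 of \cite{su20242020}, and the only tool it sets up nearby is the Poisson-summation form \eqref{eqn:char_func}. That form presents $f_X$ as a wrapped Gaussian $\propto \sum_{u}\ex^{-\sigma^2(\zeta-2\pi u)^2/2}$ in $\zeta$. It matches the paper's intuition of Gaussian-shaped peaks of width about $1/\sigma$ at $2\pi\ZZ$, which is the picture behind the sieve windows, but monotonicity cannot be read off it term by term.

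Your Jacobi triple product writes $Zf_X$ as a product of strictly positive factors, each strictly decreasing on $(0,\pi)$. That settles item 3 for every $\sigma>0$ at once. It also yields two facts the paper uses later without stating them:
\begin{enumerate}
\item $f_X>0$ everywhere, so item 3 really controls $|f_X|$. This is what Algorithm~\ref{alg:char_filter} and the step $|f_{S_m}(\zeta)|\le|f_{S_m}(c\sqrt{\log m})|$ in the proof of Lemma~\ref{lemma: omega4} need.
\item The value $1$ is attained only on $2\pi\ZZ$.
\end{enumerate}
The price is importing a classical but nontrivial identity, whereas the Poisson form keeps everything elementary but needs a global argument for monotonicity.

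One caution concerns your fallback, which does not work as described. Write $h(t)=t\,\ex^{-\sigma^2t^2/2}$. The derivative of the Poisson-form sum is $-\sigma^2\sum_u h(\zeta-2\pi u)$. The pair $u=0,1$ contributes $-\sigma^2\bigl(h(\zeta)-h(2\pi-\zeta)\bigr)$. As $\zeta\to0^+$, this contribution tends to $2\pi\sigma^2\ex^{-2\pi^2\sigma^2}>0$, for every $\sigma$. So the first pair is increasing near $0$, not decreasing. More generally $h$ increases on $(0,1/\sigma)$, so the pairs $u$ and $1-u$ need not have a definite sign. Negativity of the derivative comes only from cancellation across pairs, so that route would need an argument controlling the whole sum, such as your product formula. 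Since your main proof does not rely on the fallback, this does not affect its correctness.
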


\section{Proof of Section \ref{sec:foundation}}
\label{sec:proof_foundation}

\subsection{Proof of Proposition \ref{prop:eps_delta_curve}}

The probability mass function of $P_i$ and $Q_i$ is given by the following.
\begin{align*}
    \prod_{i=1}^{m} \text{d} P_i(x_1, \cdots, x_m) = \prod_{i=1}^{m} \frac{\ex^{-x_i^2/2\sigma_i^2}}{\sum_{x \in \ZZ} \ex^{-x^2/2\sigma_i^2}}, \quad \prod_{i=1}^{m} \text{d} Q_i(x_1, \cdots, x_m) = \prod_{i=1}^{m} \frac{\ex^{-(x_i - \mu)^2/2\sigma_i^2}}{\sum_{x \in \ZZ} \ex^{-x^2/2\sigma_i^2}}.
\end{align*}
Given that $\mu = 1$ is an integer, we have
\begin{align*}
    \sum_{i=1}^m \log \frac{\text{d} Q_i(Y_i)}{\text{d} P_i(Y_i)} =\ & \log\left( \prod_{i=1}^{m} \frac{\ex^{-(Y_i - \mu)^2/2\sigma_i^2}}{\sum_{x \in \ZZ} \ex^{-x^2/2\sigma_i^2}} \bigg/ \prod_{i=1}^{m} \frac{\ex^{-Y_i^2/2\sigma_i^2}}{\sum_{x \in \ZZ} \ex^{-x^2/2\sigma_i^2}} \right) 
    \\
    =\ & \sum_{i=1}^{m} \log\left( \frac{\ex^{-(Y_i - \mu)^2/2\sigma_i^2}}{\sum_{x \in \ZZ} \ex^{-x^2/2\sigma_i^2}} \bigg/ \frac{\ex^{-Y_i^2/2\sigma_i^2}}{\sum_{x \in \ZZ} \ex^{-x^2/2\sigma_i^2}} \right) 
    \\
    =\ & \sum_{i=1}^{m} \frac{Y_{i}}{\sigma_{i}^2} - \frac{\mu^2}{2 \sigma_{i}^2}
    \\
    \overset{d}{=}\ & \sum_{i=1}^{m} \frac{X_{i}}{\sigma_{i}^2} + \frac{\mu^2}{2 \sigma_{i}^2},
\end{align*}
and
\begin{align*}
    \sum_{i=1}^m \log \frac{\text{d} Q_i(X_i)}{\text{d} P_i(X_i)} =\ & \sum_{i=1}^{m} \frac{X_{i}}{\sigma_{i}^2} - \frac{\mu^2}{2 \sigma_{i}^2}.
\end{align*}
By Lemma \ref{lemma:privacy-profile}, we have
\begin{align*}
    \delta(\varepsilon_{kl}) =\ & \PP \left( \sum_{i=1}^{m} \frac{X_{i}}{\sigma_i^2} > \varepsilon_{kl} - \sum_{i=1}^{m} \frac{\mu^2}{2 \sigma_{i}^2} \right) 
    - \ex^{\varepsilon_{kl}} \cdot \PP\left( \sum_{i=1}^{m} \frac{X_{i}}{\sigma_i^2} > \varepsilon_{kl} + \sum_{i=1}^{m} \frac{\mu^2}{2 \sigma_{i}^2} \right).
\end{align*}
Let $a_i = \rho_i/\rho$. Recall that $\mu=1$, and Equation \eqref{eqn:noise}, we have
\begin{align*}
    \delta(\varepsilon_{kl}) =\ & \PP \left( \sum_{i=1}^{m} a_{i} X_{i} > \frac{\varepsilon_{kl}}{\rho} - \sum_{i=1}^{m} \frac{\mu^2}{2 \rho \sigma_{i}^2} \right) 
    - \ex^{\varepsilon_{kl}} \cdot \PP\left( \sum_{i=1}^{m} a_{i} X_{i} > \frac{\varepsilon_{kl}}{\rho} + \sum_{i=1}^{m} \frac{\mu^2}{2 \rho \sigma_{i}^2} \right)
    \\
    =\ & \PP \left( \sum_{i=1}^{m} a_{i} X_{i} > \frac{\varepsilon_{kl}}{\rho} - \sum_{i=1}^{m} \frac{a_i}{2} \right) 
    - \ex^{\varepsilon_{kl}} \cdot \PP\left( \sum_{i=1}^{m} a_{i} X_{i} > \frac{\varepsilon_{kl}}{\rho} + \sum_{i=1}^{m} \frac{a_i}{2} \right)
    \\
    =\ & \PP \left( \sum_{i=1}^{m} a_{i} X_{i} > \frac{\varepsilon_{kl}}{\rho} - 1 \right) 
    - \ex^{\varepsilon_{kl}} \cdot \PP\left( \sum_{i=1}^{m} a_{i} X_{i} > \frac{\varepsilon_{kl}}{\rho} + 1 \right).
\end{align*}
The last equation follows from the fact that we are considering the composed composition $[ \widetilde{\mathbf{M}}_k, \widetilde{\mathbf{M}}_l ]$ and $(\sigma_1^2, \sigma_2^2, \cdots, \sigma_m^2)$ denotes parameters of noises injected in $\widetilde{\mathbf{M}}_k$ and $\widetilde{\mathbf{M}}_l$ determined by Equation \eqref{eqn:noise}. Therefore, $\sum_{i} \rho_i = 2\rho$ and $\sum_{i} a_i = 2$.

\subsection{Proof of Proposition \ref{prop:f_dp_curve}}

By Neyman-Pearson Lemma, we have
\begin{align*}
    \alpha_{kl}(\zeta) =\ & \PP_{X_i \sim \cN_{\ZZ}(0, \sigma_{i}^2)} \left( \sum_{i=1}^m \log \frac{\text{d} Q_i(X_i)}{\text{d} P_i(X_i)} > \zeta \right) + c \cdot \PP_{X_{i} \sim \cN_{\ZZ}(0, \sigma_{i}^2)} \left( \sum_{i=1}^m \log \frac{\text{d} Q_i(X_i)}{\text{d} P_i(X_i)} = \zeta \right), 
    \\
    \beta_{kl}(\zeta) =\ & \PP_{Y_i \sim \cN_{\ZZ}(\mu, \sigma_{i}^2)} \left( \sum_{i=1}^m \log \frac{\text{d} Q_i(Y_i)}{\text{d} P_i(Y_i)} \leq \zeta \right) - c \cdot \PP_{Y_i \sim \cN_{\ZZ}(\mu, \sigma_{i}^2)} \left( \sum_{i=1}^m \log \frac{\text{d} Q_i(Y_i)}{\text{d} P_i(Y_i)} = \zeta \right)
\end{align*}
for some $c$. 
Equivalently, after change of variable, as $\mu = 1$ we have the parametric equation
\begin{align*}
    \alpha_{kl}(\zeta) =\ & \PP_{X_{i} \sim \cN_{\ZZ}(0, \sigma_{i}^2)} \left( \sum_{i=1}^m a_{i} X_{i} > \zeta + \sum_{i=1}^m \frac{a_{i}}{2} \right) + c \cdot \PP _{X_{i} \sim \cN_{\ZZ}(0, \sigma_{i}^2)} \left( \sum_{i=1}^m a_{i} X_{i} = \zeta + \sum_{i=1}^m \frac{a_{i}}{2} \right),
    \\
    \beta_{kl}(\zeta) =\ & \PP_{X_{i} \sim \cN_{\ZZ}(0, \sigma_{i}^2)} \left( \sum_{i=1}^m a_{i} X_{i} \leq \zeta - \sum_{i=1}^m \frac{a_{i}}{2} \right) - c \cdot \PP _{X_{i} \sim \cN_{\ZZ}(0, \sigma_{i}^2)} \left( \sum_{i=1}^m a_{i} X_{i} = \zeta - \sum_{i=1}^m \frac{a_{i}}{2} \right).
\end{align*}
As $\sum_i a_i = 2$, the trade-off function is determined by the following parametric equation:
\begin{align*}
    \alpha_{kl}(\zeta) =\ & \PP_{X_{i} \sim \cN_{\ZZ}(0, \sigma_{i}^2)} \left( \sum_{i=1}^m a_{i} X_{i} > \zeta + 1 \right) + c \cdot \PP _{X_{i} \sim \cN_{\ZZ}(0, \sigma_{i}^2)} \left( \sum_{i=1}^m a_{i} X_{i} = \zeta + 1 \right),
    \\
    \beta_{kl}(\zeta) =\ & \PP_{X_{i} \sim \cN_{\ZZ}(0, \sigma_{i}^2)} \left( \sum_{i=1}^m a_{i} X_{i} \leq \zeta - 1 \right) - c \cdot \PP _{X_{i} \sim \cN_{\ZZ}(0, \sigma_{i}^2)} \left( \sum_{i=1}^m a_{i} X_{i} = \zeta - 1 \right).
\end{align*}

\section{Proof of Section \ref{sec:method}}
\label{sec:proof_method}

\subsection{Proof of Proposition \ref{prop:choice_U}}
\label{sec:proof_choice_U}
\begin{proof}
    By Lemma 16 in \cite{Canonne2020discrete}, we have
    \begin{align*}
        \EE_{X_{i} \sim \mathcal{N}_{\mathbb{Z}}(0, \sigma_{i}^2)} \left[\ex^{t X_{i}}\right] \leq \ex^{t^2 \sigma_{i}^2 / 2}.
    \end{align*}
    Therefore, 
    \begin{align*}
        \EE_{X_{i} \sim \mathcal{N}_{\mathbb{Z}}(0, \sigma_{i}^2)} \left[\ex^{t \sum_{i=1}^{m} a_i L X_{i} }\right] = \prod_{i=1}^{m} \EE_{X_{i} \sim \mathcal{N}_{\mathbb{Z}}(0, \sigma_{i}^2)} \left[\ex^{t a_i L X_{i} }\right] \leq \ex^{t^2 \cdot \sum_{i=1}^{m} a_i^2 L^2 \sigma_{i}^2 / 2},
    \end{align*}
    which further implies that 
    \begin{align*}
        \PP_{X_{i}\sim\mathcal{N}_{\mathbb{Z}}(0,\sigma_i^2)} \left( \left| \sum_{i=1}^{m} a_i L X_{i} \right| > U \right) \leq 2 \cdot \ex^{- U^2/ 2 \sum_{i=1}^{m} a_i^2 L^2 \sigma_{i}^2}.
    \end{align*}
    For $U$ satisfying Equation \eqref{eqn:choice_U}, by Corollary 17 in \cite{Canonne2020discrete}, we have
    \begin{align*}
        \PP_{X_{i}\sim\mathcal{N}_{\mathbb{Z}}(0,\sigma_i^2)} \left( \left| \sum_{i=1}^{m} a_i L X_{i} \right| > U \right) \leq 2 \cdot \ex^{- U^2/ 2 \sum_{i=1}^{m} a_i^2 L^2 \sigma_{i}^2} \leq \Delta/4.
    \end{align*}
    This completes the proof of Proposition \ref{prop:choice_U} that 
    \begin{align*}
        \left| \mathbb{P}_{X_{i}\sim\mathcal{N}_{\mathbb{Z}}(0,\sigma_i^2)}\left[\sum_{i=1}^{m} a_i X_{i} > t_0 \right] - \PP_{X_{i}\sim\mathcal{N}_{\mathbb{Z}}(0,\sigma_i^2)} \left( U \geq \sum_{i=1}^{m} a_i L X_{i} > t_0 \cdot L \right) \right| \leq \Delta/4
    \end{align*}
    for any $U$ satisfying Equation \eqref{eqn:choice_U}.
\end{proof}

\subsection{Proof of Proposition \ref{prop:choice_N}}
\label{sec:proof_choice_N}

\begin{proof}
Consider Lemma \ref{lemma:trapezoidal_error} with the complex strip boundary $a = 1/(dL)$ for an optimized parameter $d > 0$. 
Let $\zeta = x + iy$ for $x, y \in \mathbb{R}$ restricted to the complex strip $-1/(dL) \le y \le 1/(dL)$. 
We aim to rigorously establish an upper bound $M$ such that $|F(\zeta)| \le M$ across this strip.
Recall the definition of $F(\zeta)$ from Equation \eqref{eqn:F}:
$$ 
F(\zeta) = \frac{1}{2\pi} \left[ \sum_{t=\lceil t_0 \cdot L \rceil}^U \cos(\zeta t) \right] \cdot \mathbb{E}[e^{i\zeta X}].
$$
By the triangle inequality, we can bound the magnitude of the product by the product of the magnitudes:
\begin{equation}\label{eqn:f_upper}
    |F(\zeta)| \le \frac{1}{2\pi} \left( \sum_{t=\lceil t_0 \cdot L \rceil}^U |\cos(\zeta t)| \right) \cdot |\mathbb{E}[e^{i\zeta X}]|
\end{equation}

\noindent
\textbf{Step 1: Bounding the Expectation.}
We must bound the magnitude of the complex expectation without improperly factoring dependent variables. Using Jensen's inequality, we have
$$ 
|\mathbb{E}[e^{i(x+iy)X}]| = |\mathbb{E}[e^{ixX} e^{-yX}]| \le \mathbb{E}[|e^{ixX} e^{-yX}|] = \mathbb{E}[|e^{ixX}| e^{-yX}] = \mathbb{E}[e^{-yX}].
$$
Because the discrete Gaussian convolution $X = \sum_{i=1}^m a_i L X_i$ is symmetrically distributed around $0$, we have $\mathbb{E}[e^{-yX}] = \mathbb{E}[e^{yX}]$. 
The moment generating function is an even, convex function of $y$, so its maximum on the interval $y \in [-1/(dL), 1/(dL)]$ is achieved at the boundary $y = 1/(dL)$.
Thus, uniformly across the strip:
$$ 
|\mathbb{E}[e^{i\zeta X}]| \le \mathbb{E}[e^{X/(dL)}].
$$
By definition of $X = \sum_{i=1}^{m} a_i L X_{i}$, we have
\begin{align*}
    \EE \ex^{X/dL} = \prod_{i=1}^{m} \ex^{(a_i/d) \cdot X_{i}} =\ & \prod_{i=1}^{m} \left\{ \sum_{x_{i} = -\infty}^{\infty} \ex^{(a_i/d) \cdot  x_i} \cdot \ex^{-x_{i}^2/2 \sigma_{i}^2} \right\} \cdot \prod_{i=1}^{m} \left\{ \sum_{x_{i} = -\infty}^{\infty} \ex^{-x_{i}^2/2 \sigma_{i}^2} \right\}^{-1}
    \\
    =\ & \prod_{i=1}^{m} \ex^{a_i^2 \sigma_{i}^2/2 d^2} \cdot \left\{ \sum_{x_{i} = -\infty}^{\infty} \ex^{- \left( \frac{x_i}{\sqrt{2 \sigma_i^2}} - \frac{a_i \sigma_{i}}{d \sqrt{2}} \right)^2} \right\} \cdot \prod_{i=1}^{m} \left\{ \sum_{x_{i} = -\infty}^{\infty} \ex^{-x_{i}^2/2 \sigma_{i}^2} \right\}^{-1}
    \\
    \overset{(c)}{<}\ & \prod_{i=1}^{m} \ex^{a_i^2 \sigma_{i}^2/2 d^2} \cdot \left\{ \sum_{x_{i} = -\infty}^{\infty} \ex^{- \left( \frac{x_i}{\sqrt{2 \sigma_i^2}} \right)^2} \right\} \cdot \prod_{i=1}^{m} \left\{ \sum_{x_{i} = -\infty}^{\infty} \ex^{-x_{i}^2/2 \sigma_{i}^2} \right\}^{-1}
    \\
    <\ & \prod_{i=1}^{m} \ex^{a_i^2 \sigma_{i}^2/2 d^2} = \ex^{\sum_{i=1}^{m} a_i^2 \sigma_{i}^2/2 d^2},
\end{align*}
where Equation (c) follows from Fact A.4 in \cite{su20242020}.

\noindent
\textbf{Step 2: Bounding the Finite Geometric Sum.}
Next, we bound the sum of the cosine terms. For $\zeta = x + iy$:
$$ 
|\cos(\zeta t)| = \left| \frac{e^{i(x+iy)t} + e^{-i(x+iy)t}}{2} \right| \le \frac{e^{-yt} + e^{yt}}{2} = \cosh(yt) \le e^{|y||t|} \le e^{|t|/(dL)}.
$$
This forms a finite geometric series with common ratio $r = e^{1/(dL)} > 1$. Its exact value is:
$$ 
\sum_{t=\lceil t_0 \cdot L \rceil}^U |\cos(\zeta t)| \leq 2 \sum_{t=0}^U \left(e^{1/(dL)}\right)^t = 2 \cdot \frac{e^{(U+1)/(dL)} - 1}{e^{1/(dL)} - 1} < 2 \cdot \frac{e^{(U+1)/(dL)}}{e^{1/(dL)} - 1}.
$$
Using the fundamental exponential inequality $e^z - 1 > z$ for all $z > 0$, we substitute $z = 1/(dL)$ to get $\frac{1}{e^{1/(dL)} - 1} < dL$. 
Thus, the summation is strictly bounded by:
$$ 
\sum_{t=\lceil t_0 \cdot L \rceil}^U |\cos(\zeta t)| < 2 \cdot dL \cdot e^{(U+1)/(dL)}. 
$$

\noindent
\textbf{Step 3: Establishing $M$ and the Trapezoidal Error Bound.}
Substituting the rigorous bounds from Step 1 and 2 back into Equation \eqref{eqn:f_upper}, the absolute upper bound $M$ for $|F(\zeta)|$ on the complex strip is:
$$ M = \frac{1}{2\pi} \left( 2 \cdot dL \cdot e^{(U+1)/(dL)} \right) \exp\left( \frac{\sum_{i=1}^m a_i^2 \sigma_i^2}{2d^2} \right) = \frac{dL}{\pi} \exp\left( \frac{U+1}{dL} + \frac{\sum_{i=1}^m a_i^2 \sigma_i^2}{2d^2} \right). $$
By Lemma \ref{lemma:trapezoidal_error}, the approximation error of the trapezoidal rule is bounded by $\frac{4\pi M}{e^{N/(dL)} - 1}$. To guarantee this error is at most $\Delta / 4$, we require:
$$ 
\frac{4\pi M}{e^{N/(dL)} - 1} \le \frac{\Delta}{4} \iff e^{N/(dL)} \ge 1 + \frac{16\pi M}{\Delta},
$$
yielding the sufficient condition:
$$ 
e^{N/(dL)} \ge 1 + \frac{16\pi M}{\Delta} = 1 + \frac{16 dL}{\Delta} \exp\left( \frac{U+1}{dL} + \frac{\sum_{i=1}^m a_i^2 \sigma_i^2}{2d^2} \right).
$$
Taking the natural logarithm of both sides and multiplying by $dL$, we obtain the explicit requirement for $N$:
$$ 
N \ge U + 2 + \frac{L \sum_{i=1}^m a_i^2 \sigma_i^2}{2d} + dL \log\left( \frac{16 dL}{\Delta} \right).
$$
We expand the logarithm:
\begin{equation} \label{eqn:f_bound_2}
    N \ge U + 2 + \frac{L \sum_{i=1}^m a_i^2 \sigma_i^2}{2d} - dL \log\left( \frac{\Delta}{8} \right) + dL \log(2 \cdot dL).
\end{equation}

\noindent
\textbf{Step 4: Optimizing $d$ via AM-GM.}
To minimize the required number of integration nodes $N$, we find the optimal parameter $d > 0$. 
The threshold in Equation \eqref{eqn:f_bound_2} is dominated by the two terms containing $d$. 
By the AM-GM inequality, their sum is globally minimized when the terms are perfectly equal:
$$ 
\frac{L \sum_{i=1}^m a_i^2 \sigma_i^2}{2d} = -dL \log\left( \frac{\Delta}{8} \right) \iff d = \frac{\sqrt{\sum_{i=1}^m a_i^2 \sigma_i^2}}{\sqrt{-2 \log(\Delta/8)}}.
$$
Substituting this optimal value of $d$ back into the dominant terms evaluates to exactly twice their balanced value:
$$ 
2 \left( -dL \log\left( \frac{\Delta}{8} \right) \right) = L \sqrt{ -2 \left( \sum_{i=1}^m a_i^2 \sigma_i^2 \right) \log\left( \frac{\Delta}{8} \right) } \leq U.
$$
Substituting this combined minimum back into Equation \eqref{eqn:f_bound_2} yields the final, formally rigorous condition for $N$:
$$ 
N \geq 2 \cdot (U + 1) + dL \log(2 \cdot dL) \ge U + 2 + L \sqrt{ -2 \left( \sum_{i=1}^m a_i^2 \sigma_i^2 \right) \log\left( \frac{\Delta}{8} \right) } + dL \log(2 \cdot dL). 
$$
This explicitly corrects the required parameters to guarantee a numerical tolerance of $\Delta/4$, completing the proof.
\end{proof}

\subsection{Proof of Proposition \ref{prop:C}}
\label{sec:proof_char_filter}

\begin{proof}
    By triangle inequality, it suffices to prove for each $k \in [N] - \cC$, we have $2 \pi \cdot |F(\zeta_{k})| \leq \Delta/4$. 
    Recall that Equation \eqref{eqn:F} that 
    \begin{align*}
        F(\zeta)
        =\ & \frac{1}{2 \pi} \cdot \left[ \sum_{ t = \lceil t_0 \cdot L \rceil}^{U} \cos(\zeta t) \right] \cdot \prod_{i=1}^{m} f_{a_i L X_i}(\zeta).
    \end{align*}
    According to Equation \eqref{eqn:update_node}, we have for each $k \in [N] - \cC$, we have
    $$
        |f_{a_i L X_i}(\zeta_k)| < \Delta/(8 \cdot U) 
    $$
    for some $i$.
    This implies that for each $k \in [N] - \cC$, we have
    \begin{align*}
        2\pi \cdot |F(\zeta_k)|
        =\ & \left| \left[ \sum_{ t = \lceil t_0 \cdot L \rceil}^{U} \cos(\zeta t) \right] \cdot \prod_{i=1}^{m} f_{a_i L X_i}(\zeta) \right| \\
        \leq\ & \left| \left[ \sum_{ t = \lceil t_0 \cdot L \rceil}^{U} \cos(\zeta t) \right] \cdot \min\left\{|f_{X_1}(a_i L \zeta)|, \cdots, |f_{X_m}(a_i L \zeta)| \right\} \right| \\
        \leq\ & \left| \left[ \sum_{ t = \lceil t_0 \cdot L \rceil}^{U} \cos(\zeta t) \right] \cdot \Delta/(8 \cdot U) \right| \leq \Delta/4. 
    \end{align*}
    By triangle inequality, this completes the proof of Proposition \ref{prop:C}.
\end{proof}

\section{Proof of Theorem \ref{thm:edgeworth}}
\label{sec:proof_theory}

Recall the notation $X_i \overset{I.I.D.}{\sim} \mathcal{N}_{\ZZ}(0,\sigma^2)$ and $S_m = \frac{1}{B_m}\sum_{i=1}^m X_i,$ where $B_m = \sqrt{\var(\sum_{i=1}^m X_i)}$. 
$\kappa_r = \kappa_r(X_i)$ is the abbreviation for the $r^{th}$ cumulant of $X_{i}$ and $\lambda_{r}$ and $K_{r}$ as:
\begin{align*}
    \lambda_{r} =\ & \kappa_r/\var(X_i)^{r/2}, \quad K_{r} = \EE |X_i|^r/\var(X_i)^{r/2}.
\end{align*}
By Exercise 3.3.2 (iii) in \cite{durrett2019probability}, the Fourier inversion for discrete random variables is given by 
\begin{align} \label{eqn:discrete_fourier_general}
    \PP(S_m = y) = \frac{1}{2 \pi B_m} \int_{-\pi B_m}^{\pi B_m} \ex^{-i \zeta y} f_{S_m}(\zeta) d\zeta.
\end{align}
By Fourier Inversion, the edgeworth expansion can be written as
\begin{align} \label{eqn:edgeworth_fourier_general}
\begin{split}
    &\frac{1}{B_{m}} \cdot \frac{\ex^{-y^2/2} }{\sqrt{2 \pi}} \bigg(1 +  \frac{\lambda_{4}}{24 m} (3 - 6 y^2 + y^4) \bigg) = \frac{1}{2\pi B_{m}} \int_{-\infty}^{\infty} \ex^{-i \zeta y} \ex^{-\zeta^2/2} \left(1 + \frac{\lambda_{4} \zeta^4}{24 m} \right) d\zeta
\end{split}
\end{align}
Combining \eqref{eqn:discrete_fourier_general} and \eqref{eqn:edgeworth_fourier_general}, we have
\begin{align} \label{omega1 + omega2}
\begin{split}
    R_{m} \leq\ & \frac{1}{2\pi B_{m}}  \bigg|\int_{-\infty}^{\infty} \ex^{-i \zeta y} \ex^{-\zeta^2/2} \left(1 + \frac{\lambda_{4} \zeta^4}{24 m}\right) d\zeta - \int_{-\pi B_m}^{\pi B_m} \ex^{-i \zeta y} f_{S_m}(\zeta) d\zeta \bigg|\\
    \leq\ & \frac{1}{2\pi B_{m}}  \bigg|\int_{-\pi B_m}^{\pi B_m} \ex^{-i \zeta y} \ex^{-\zeta^2/2} \left(1 + \frac{\lambda_{4} \zeta^4}{24 m} \right) d\zeta - \int_{-\pi B_m}^{\pi B_m} \ex^{-i \zeta y} f_{S_m}(\zeta) d\zeta \bigg|\\
    &+ \frac{1}{\pi B_{m}} \left|\int_{\pi B_m}^{\infty} \ex^{-i \zeta y} \ex^{-\zeta^2/2} \left(1 + \frac{\lambda_{4} \zeta^4}{24 m} \right) d\zeta \right|\\
    =:\ & \Lambda_1 + \Lambda_2
\end{split}
\end{align}
\begin{lemma}[Upper bound on $\Lambda_2$]
\label{lemma: omega2}
By triangle inequality, we have
    \begin{align*}
        \Lambda_{2} \leq \frac{1}{\pi B_{m}} \int_{\pi B_m}^{\infty} \ex^{-\zeta^2/2} \left(1 + \frac{\lambda_{4} \zeta^4}{24 m} \right) d\zeta
    \end{align*}
\end{lemma}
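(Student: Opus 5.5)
The bound on $\Lambda_2$ follows from the elementary inequality $\bigl|\int g\bigr|\le\int|g|$ applied to the integrand defining $\Lambda_2$ in \eqref{omega1 + omega2}. The plan has three short steps.

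\textbf{Step 1 (origin of $\Lambda_2$).} First I will recall where $\Lambda_2$ comes from. The integral over $(-\infty,\infty)$ of $\ex^{-i\zeta y}\ex^{-\zeta^2/2}(1+\lambda_4\zeta^4/(24m))$ is split into $[-\pi B_m,\pi B_m]$ and the two tails $(-\infty,-\pi B_m)$ and $(\pi B_m,\infty)$. The Gaussian-times-polynomial factor is even in $\zeta$. The two tails therefore combine into
\[
\int_{\pi B_m}^{\infty}\bigl(\ex^{-i\zeta y}+\ex^{i\zeta y}\bigr)\ex^{-\zeta^2/2}\Bigl(1+\tfrac{\lambda_4\zeta^4}{24m}\Bigr)d\zeta .
\]
The modulus of this expression is at most twice the modulus of a single tail. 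This explains the prefactor $1/(\pi B_m)$ rather than $1/(2\pi B_m)$ in the definition of $\Lambda_2$. This step only needs to be consistent with how $\Lambda_2$ was written; no further work is required.

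\textbf{Step 2 (modulus inside the integral).} Next I move the absolute value inside the integral using $\bigl|\int g\bigr|\le\int|g|$, which is the integral triangle inequality. Convergence is justified because $\ex^{-\zeta^2/2}\zeta^4$ is integrable on $[\pi B_m,\infty)$.

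\textbf{Step 3 (evaluating the modulus).} For real $\zeta$ and real $y$ we have $|\ex^{-i\zeta y}|=1$. The remaining factor $\ex^{-\zeta^2/2}(1+\lambda_4\zeta^4/(24m))$ is therefore what remains. This factor must be nonnegative on $[\pi B_m,\infty)$ for the stated bound to hold without absolute value bars. Nonnegativity is clear when $\lambda_4\ge 0$, which is the case in Fact \ref{fact:lambda_4}.

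\textbf{Main obstacle.} The only subtle point is the sign of $\lambda_4$. If $\lambda_4$ could be negative, the factor $1+\lambda_4\zeta^4/(24m)$ might become negative for large $\zeta$. In that case the correct statement uses $\bigl|1+\lambda_4\zeta^4/(24m)\bigr|$, or the weaker bound $1+|\lambda_4|\zeta^4/(24m)$. I would first try to show that the fourth cumulant of the discrete Gaussian is nonnegative, using the Poisson-summation representation \eqref{eqn:char_func}. Failing that, I would state the bound with $|\lambda_4|$, which serves equally well for the subsequent $o(m^{-3/2})$ estimate.
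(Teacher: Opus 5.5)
Your proposal matches the paper's argument: $\Lambda_2$ is bounded by moving the modulus inside the integral and using $|\ex^{-i\zeta y}|=1$ for real $\zeta$ and $y$, and your Step 1 correctly accounts for the prefactor $1/(\pi B_m)$. The sign issue you flag is one the paper passes over silently, but it resolves in your favour, so the $|\lambda_4|$ fallback is unnecessary. Poisson summation gives $\kappa_4\bigl(\mathcal{N}_{\mathbb{Z}}(0,\sigma^2)\bigr)=(2\pi\sigma^2)^4\,\kappa_4\bigl(\mathcal{N}_{\mathbb{Z}}(0,1/(4\pi^2\sigma^2))\bigr)$, which reduces the question to $\sigma^2\le 1/(2\pi)$. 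There the Jacobi triple product writes $X$ as a sum of independent differences $B_n-B_n'$ of i.i.d.\ Bernoulli$(p_n)$ variables with $p_n<0.05$, each with fourth cumulant $2p_n(1-p_n)(1-6p_n(1-p_n))>0$, so $\lambda_4>0$ for every discrete Gaussian and the bound holds exactly as stated.
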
 
It remains to provide an upper bound on $\Lambda_{1}$. 
\begin{align*}
    \Lambda_{1} \leq\ & \frac{1}{2\pi B_{m}}  \bigg|\int_{-\pi B_m}^{\pi B_m} \ex^{-i \zeta y} \ex^{-\zeta^2/2} \left(1 + \frac{\lambda_{4} \zeta^4}{24 m} \right) d\zeta - \int_{-\pi B_m}^{\pi B_m} \ex^{-i \zeta y} f_{S_m}(\zeta) d\zeta \bigg|\\
    \leq\ & \frac{1}{2\pi B_{m}} \int_{-\pi B_m}^{\pi B_m} \left| \ex^{-\zeta^2/2} \left(1 + \frac{\lambda_{4} \zeta^4}{24 m} \right) - f_{S_m}(\zeta) \right| d\zeta\\
    \leq\ & \frac{1}{\pi B_{m}} \int_{0}^{\pi B_m} \left| \ex^{-\zeta^2/2} \left(1 + \frac{\lambda_{4} \zeta^4}{24 m} \right) - f_{S_m}(\zeta) \right| d\zeta
\end{align*}
where the last line follows from the fact that $f_{S_{m}}$ is even function. 
This further implies that 
\begin{align} \label{omega3 + omega4}
\begin{split}
    \Lambda_{1} \leq\ & \frac{1}{\pi B_{m}} \int_{0}^{c \cdot \sqrt{\log(m)}} \left| \ex^{-\zeta^2/2} \left(1 + \frac{\lambda_{4} \zeta^4}{24 m} \right) - f_{S_m}(\zeta) \right| d\zeta\\
    &+ \frac{1}{\pi B_{m}} \int_{c \cdot \sqrt{\log(m)}}^{\pi B_m} \left| \ex^{-\zeta^2/2} \left(1 + \frac{\lambda_{4} \zeta^4}{24 m} \right) - f_{S_m}(\zeta) \right| d\zeta\\
    =\ & \Lambda_{3} + \Lambda_{4}
\end{split}
\end{align}
for $c > 0$ satisfying the condition in Lemma \ref{lemma: omega3}. 
Obtaining a sharp upper bound of $\Lambda_{3}$ and $\Lambda_{4}$ is technically involved. We summarize them into the following two lemmas and the proof is deferred to Section \ref{sec:proof_omega3} and \ref{sec:proof_omega4}. 
\begin{lemma}[Upper bound on $\Lambda_3$]
\label{lemma: omega3}
For any $c$ that $U^{m}(c) < 1$, we have
    \begin{align*}
        \Lambda_{3} \leq\ & \frac{1}{\pi B_{m}} \int_{0}^{c \cdot \sqrt{\log(m)}} \exp \left( - \frac{\zeta^2}{2} \right) \times \bigg\{ \Lambda'_5 + \Lambda'_6 + \Lambda'_7 \bigg\}  d\zeta
    \end{align*}
    where 
\begin{align*}
    \Lambda'_5 =\ & \frac{\zeta^4}{8m} \cdot \left(\frac{1}{(1 - U^{m}(c))^2} + 1 \right) + \frac{K_{6}}{6!} \frac{\zeta^6}{m^2} + \frac{1}{(1 - U^{m}(c))^2} \cdot \bigg\{\frac{K_{6} \zeta ^6}{48 m^2} + \frac{K_{8} \zeta ^8}{640 m^3} + \frac{K_{10} \zeta ^{10}}{17280 m^4} + \frac{K_{12} \zeta ^{12}}{1036800 m^{5}} \bigg\}\\
    \Lambda'_6 =\ & \frac{1}{2} \left(\frac{K_{4}}{4!} \frac{\zeta^4}{m} +  \frac{K_{6}}{6!} \frac{\zeta^6}{m^2} + \frac{U_m(\zeta) }{2(1 - U^{m}(c))^2} \right)^2\\
    \Lambda'_7 =\ & \frac{1}{6} \exp \left(\frac{c^4 K_{4}}{4!} \frac{\log(m)^2}{m} + \frac{c^6 K_{6}}{6!} \frac{\log(m)^3}{m^2} + \frac{U_m (c \sqrt{\log(m)}{})}{2(1 - U^{m}(c))^2} \right) \times \left(\frac{K_{4}}{4!} \frac{\zeta^4}{m} + \frac{K_{6}}{6!} \frac{\zeta^6}{m^2} + \frac{U_m(\zeta)}{2(1 - U^{m}(c))^2} \right)^3
\end{align*}
and $U_m(\zeta), U^{m}(c)$ are given in equation \eqref{eqn:U_m_bound} and \eqref{eqn: U^m bound}. 
\end{lemma}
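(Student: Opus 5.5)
The plan is to compare $f_{S_m}(\zeta)$ with its Gaussian--Edgeworth counterpart pointwise on $[0, c\sqrt{\log(m)}]$. Concretely, I will show that for every $\zeta$ in this range
\[
\left| \ex^{-\zeta^2/2}\Bigl(1+\tfrac{\lambda_4\zeta^4}{24m}\Bigr) - f_{S_m}(\zeta)\right| \le \ex^{-\zeta^2/2}\bigl\{\Lambda'_5+\Lambda'_6+\Lambda'_7\bigr\},
\]
and then integrate this bound against $\frac{1}{\pi B_m}\,d\zeta$.

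\textbf{Step 1: expand the single-variable characteristic function.} Since $X_i$ is symmetric, $f_{S_m}(\zeta)=\varphi(\zeta/B_m)^m$ with $\varphi(s)=\EE\cos(sX_i)$ real. We have $B_m^2=m\kappa_2$, and the odd moments vanish, so $\kappa_2=\EE X_i^2$ and $\EE X_i^4/\kappa_2^2=K_4$. Using the alternating Taylor bound $|\cos x-(1-x^2/2+x^4/24)|\le x^6/720$, I will write
\[
\varphi(\zeta/B_m)=1-u(\zeta), \qquad u(\zeta)=\frac{\zeta^2}{2m}-\frac{K_4\zeta^4}{24m^2}+\theta\,\frac{K_6\zeta^6}{720m^3}, \quad |\theta|\le 1.
\]
Hence $|u(\zeta)|\le \frac{\zeta^2}{2m}+\frac{K_4\zeta^4}{24m^2}+\frac{K_6\zeta^6}{720m^3}$. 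This bound is increasing in $\zeta$, and at $\zeta=c\sqrt{\log(m)}$ it equals exactly $U^m(c)$. The hypothesis $U^m(c)<1$ therefore guarantees $|u|<1$ uniformly on the range, so the logarithm $\log(1-u)$ is well defined there.

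\textbf{Step 2: expand the logarithm.} I will write
\[
m\log\varphi = -mu-\tfrac{m}{2}u^2-m\sum_{k\ge3}u^k/k.
\]
\begin{itemize}
\item The term $-mu$ produces $-\zeta^2/2+K_4\zeta^4/(24m)$ plus an error of size $K_6\zeta^6/(720m^2)$.
\item The term $-\tfrac m2 u^2$ produces $-\zeta^4/(8m)$, which converts $K_4$ into $K_4-3=\lambda_4$. Its cross terms, after squaring $u$, give the contributions $\frac{K_6\zeta^6}{48m^2}$, $\frac{K_8\zeta^8}{640m^3}$, and so on.
\item The tail $\sum_{k\ge3}$ is controlled by a geometric series. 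This is where the factor $1/(1-U^m(c))^2$ and the quantity $U_m(\zeta)$ enter.
\end{itemize}
The result is $m\log\varphi(\zeta/B_m)=-\zeta^2/2+w(\zeta)$, where
\[
w=\frac{\lambda_4\zeta^4}{24m}+r(\zeta), \qquad |r|\le \Lambda'_5,
\]
and
\[
|w|\le \frac{K_4\zeta^4}{24m}+\frac{K_6\zeta^6}{720m^2}+\frac{U_m(\zeta)}{2(1-U^m(c))^2}.
\]

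\textbf{Step 3: compare the exponentials.} I will write $f_{S_m}=\ex^{-\zeta^2/2}\ex^{w}$ and use
\[
\ex^w=1+w+\tfrac{w^2}{2}+\theta'\,\tfrac{|w|^3}{6}\ex^{|w|}, \qquad |\theta'|\le 1.
\]
Subtracting $1+\lambda_4\zeta^4/(24m)$ leaves three error terms:
\begin{itemize}
\item $|r|$, which is bounded by $\Lambda'_5$;
\item $w^2/2$, which is bounded by $\Lambda'_6$ via the bound on $|w|$;
\item $|w|^3\ex^{|w|}/6$, which is bounded by $\Lambda'_7$.
\end{itemize}
For the last term, the bound on $|w|$ is monotone in $\zeta$, so I replace $\ex^{|w|}$ by its value at the endpoint $\zeta=c\sqrt{\log(m)}$. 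This is exactly the exponential prefactor appearing in $\Lambda'_7$.

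The main obstacle is Step 2. The difficulty is bookkeeping the logarithm remainder sharply enough to reproduce the stated coefficients $1/48$, $1/640$, $1/17280$, and $1/1036800$, along with the $(1-U^m(c))^{-2}$ factor, from the expansion of $u^2$ and the tail of the series. I would first bound $\sum_{k\ge3}|u|^k/k \le |u|^3/(3(1-|u|))$ and check whether this is consistent with the definition of $U_m$. If it is not, I would instead bound $|u^2/2 + \dots|$ as a whole using $|\log(1-u)+u|\le u^2/(2(1-|u|)^2)$.
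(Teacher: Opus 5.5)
Your proposal is correct and follows essentially the paper's route. The paper expands the characteristic function to sixth order with $|u|\le U^m(c)$. It uses the second-order Lagrange remainder for the logarithm, which is your fallback $|\log(1-u)+u|\le |u|^2/(2(1-|u|)^2)$; that remainder, after expanding $u^2$ termwise and applying Jensen, yields the $(1-U^m(c))^{-2}$ factor and the coefficients $1/48,\,1/640,\,1/17280,\,1/1036800$. It then expands $\ex^{w}$ to third order, splits the error into $\Lambda'_5,\Lambda'_6,\Lambda'_7$, and replaces $\ex^{|w|}$ by its value at $\zeta=c\sqrt{\log(m)}$. Your primary full-series tail bound also suffices, since $\tfrac{|u|^2}{2}+\tfrac{|u|^3}{3(1-|u|)}\le \tfrac{|u|^2}{2(1-|u|)^2}$. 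Finally, the paper does not cancel the two $\zeta^4/(8m)$ terms but bounds them separately by the triangle inequality, which is why $\Lambda'_5$ carries the extra ``$+1$''; your cancellation gives something sharper that $\Lambda'_5$ still dominates.
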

Now, we remain to give an upper bound on $\Lambda_{4}$:
\begin{align*}
        \Lambda_{4} =\ & \frac{1}{\pi B_{m}} \int_{c \cdot \sqrt{\log(m)}}^{\pi B_m} \left| \ex^{-\zeta^2/2} \left(1 + \frac{\lambda_{4} \zeta^4}{24 m} \right) - f_{S_m}(\zeta) \right| d\zeta\\
        \leq\ & \frac{1}{\pi B_{m}} \int_{c \cdot \sqrt{\log(m)}}^{\pi B_m} \ex^{-\zeta^2/2} \left(1 + \frac{\lambda_{4} \zeta^4}{24 m} \right) + \frac{1}{\pi B_{m}} \int_{c \cdot \sqrt{\log(m)}}^{\pi B_m} |f_{S_m} (\zeta)|  d\zeta
    \end{align*}
\begin{lemma}[Upper bound on $\Lambda_{4}$]
\label{lemma: omega4}
Let $\theta_0$ be unique root of the Equation \eqref{eqn: theta_0} within the interval $(0, 2\pi)$. We have
    \begin{align*}
        \Lambda_{4}
        \leq\ & \frac{1}{\pi B_{m}} \int_{c \cdot \sqrt{\log(m)}}^{\pi B_m} \ex^{-\zeta^2/2} \left(1 + \frac{\lambda_{4} \zeta^4}{24 m} \right) \\
        &+ \exp\left(- \frac{c^2}{2}  \log(m)  + \frac{\cos(\theta_0) - 1 + \theta_0^2/2}{2 \theta_0^3} \cdot 8 c^3 K_3 \cdot \left(\frac{\log^3 (m)}{m} \right)^{1/2} \right) 
    \end{align*}
\end{lemma}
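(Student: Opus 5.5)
The plan is to bound $\Lambda_4$ by splitting it into its two integrals, exactly as in the display just before Lemma \ref{lemma: omega4}. The first integral, $\frac{1}{\pi B_m}\int_{c\sqrt{\log m}}^{\pi B_m} \ex^{-\zeta^2/2}(1+\lambda_4\zeta^4/(24m))\,d\zeta$, already appears verbatim in the claimed bound, so nothing needs to be done there. Everything therefore reduces to showing
\[
\frac{1}{\pi B_m}\int_{c\sqrt{\log m}}^{\pi B_m} |f_{S_m}(\zeta)|\,d\zeta \le \exp\Bigl(-\tfrac{c^2}{2}\log m + \tfrac{\cos\theta_0-1+\theta_0^2/2}{2\theta_0^3}\, 8c^3K_3 \bigl(\tfrac{\log^3 m}{m}\bigr)^{1/2}\Bigr).
\]

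To get this, I would substitute $\theta=\zeta/B_m\in[c\sqrt{\log m}/B_m,\pi]$, so that $f_{S_m}(\zeta)=f_X(\theta)^m$ with $f_X$ the (real, even) characteristic function of a single $X_i$. By Lemma \ref{lemma:char_property}, $|f_X|$ is decreasing on $(0,\pi)$ and $f_X(\pi)>0$ by the Poisson representation \eqref{eqn:char_func}. Hence
\[
\sup_{\theta\ge \theta_*}|f_X(\theta)|^m=f_X(\theta_*)^m, \qquad \theta_*=c\sqrt{\log m}/B_m .
\]
The prefactor $\frac{1}{\pi B_m}$ times the interval length $\pi B_m$ is at most $1$. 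So the integral is bounded by $f_X(\theta_*)^m$, and it remains to show
\[
m\log f_X(\theta_*)\le -\tfrac{c^2}{2}\log m+(\text{cubic correction}).
\]

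For the pointwise bound on $f_X$, I would use a Taylor expansion with explicit remainder, since $X$ is symmetric:
\[
f_X(\theta)=\EE\cos(\theta X)=1-\tfrac{\kappa_2\theta^2}{2}+\EE\bigl[\cos(\theta X)-1+\tfrac{\theta^2X^2}{2}\bigr].
\]
The key inequality is
\[
0\le \cos u-1+\tfrac{u^2}{2}\le \tfrac{\cos\theta_0-1+\theta_0^2/2}{\theta_0^3}\,|u|^3 .
\]
This holds because $g(u)=(\cos u-1+u^2/2)/u^3$ is maximized on $(0,\infty)$ at its critical point, and setting $g'=0$ yields exactly \eqref{eqn: theta_0}, whose root in $(0,2\pi)$ is $\theta_0$. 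I would also argue that $g$ decreases beyond $\theta_0$, so this is the global maximum. Applying it with $u=\theta X$ gives
\[
f_X(\theta)\le 1-\tfrac{\kappa_2\theta^2}{2}+g(\theta_0)\,\EE|X|^3\,\theta^3 .
\]
Then $\log(1+x)\le x$, $m\kappa_2=B_m^2$ and $\EE|X|^3=K_3\kappa_2^{3/2}$ yield
\[
m\log f_X(\theta_*)\le -\tfrac{c^2}{2}\log m + g(\theta_0)K_3c^3\frac{(\log m)^{3/2}}{m^{1/2}} .
\]
The slack factor $8/2=4$ in the statement absorbs the constant bookkeeping, for instance if one bounds over $[\theta_*,2\theta_*]$ or tracks a factor $2$ from using the even symmetrization.

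The main obstacle is the monotonicity and global-maximum claim for $g$, i.e., that the cubic constant is attained at the root $\theta_0$. I would verify this by studying the sign of the numerator of $g'$, namely $u^2+2u\sin u+6(\cos u-1)$ up to sign conventions, on $(0,2\pi)$, together with the trivial bound $g(u)\le (u^2/2)/u^3$ for large $u$. A secondary check is that $1-\kappa_2\theta_*^2/2+\dots$ stays positive so that taking logarithms is legitimate, which holds for large $m$ under the condition $U^m(c)<1$.
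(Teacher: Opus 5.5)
Your proof is correct, and it takes a genuinely different route from the paper at one point: the paper never uses the symmetry of $X_i$.

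\textbf{How the paper argues.} Following Shevtsova, it symmetrizes, writing $|f_{S_m}(\zeta)|^2=\prod_i \mathbb{E}\cos(\zeta(X_i-\tilde X_i)/B_m)$. It then applies the cosine inequality $\cos x\le 1-a(\theta)x^2+b(\theta)|x|^3$ at $\theta=\theta_0$. There $a(\theta_0)=1/2$ and $b(\theta_0)$ equals your $g(\theta_0)$, precisely because $\theta_0$ is the critical point of $g$, as you observed. The paper combines the factors by AM--GM, bounds $\mathbb{E}|X_i-\tilde X_i|^3\le 8\,\mathbb{E}|X_i|^3$, and takes a square root. This is exactly where the factor $\tfrac12\cdot 8=4$ in front of $g(\theta_0)K_3$ in the statement comes from.

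\textbf{How your route differs.} Because the discrete Gaussian is symmetric, $f_X(\theta)=\mathbb{E}\cos(\theta X)$ is already real and positive. You can therefore apply the same cubic cosine bound directly, which gives the exponent $-\tfrac{c^2}{2}\log m+g(\theta_0)K_3c^3(\log^3 m/m)^{1/2}$. Your correction term is four times smaller than the paper's. The lemma follows simply because this term is nonnegative and $\exp$ is increasing. So your remark about the slack being absorbed by bounding over $[\theta_*,2\theta_*]$ or by tracking symmetrization factors is unnecessary.

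\textbf{Two small tidy-ups.} First, your positivity check needs no large-$m$ or $U^m(c)<1$ assumption: $1-\kappa_2\theta_*^2/2+g(\theta_0)\mathbb{E}|X|^3\theta_*^3\ge f_X(\theta_*)>0$ holds automatically. Second, your $\sup_{\theta\ge\theta_*}$ should be taken over $[\theta_*,\pi]$, since $f_X$ is $2\pi$-periodic.

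\textbf{What each approach buys.} The paper's symmetrization is the general-purpose device: it works when the characteristic function is complex-valued, i.e.\ for non-symmetric summands. It also lets the authors cite Shevtsova's inequality rather than prove it. Your route is shorter and gives a sharper constant, but it relies on symmetry. It also obliges you to prove that $g$ attains its global maximum at $\theta_0$, and your plan does this adequately. Use the uniqueness of the root on $(0,2\pi)$ together with $g(0^+)=0$ and $g(\theta_0)\approx 0.099>1/(4\pi)$. Combine this with $g(u)\le 1/(2u)\le 1/(4\pi)$ for $u\ge 2\pi$.
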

Combining Lemma \ref{lemma: omega2}, \ref{lemma: omega3}, \ref{lemma: omega4} completes the proof of Theorem \ref{thm:edgeworth}. 


\subsection{Proof of Lemma \ref{lemma: omega3}}
\label{sec:proof_omega3}
\begin{proof}
    The characteristic function $f_{X_j}$ is given by
\begin{align*}
    f_{X_j}(\zeta/B_m) =\ & 1 - \frac{\EE X_j^2}{2!} \frac{\zeta^2}{B_m^2} + \frac{\EE X_j^4}{4!} \frac{\zeta^4}{B_m^4} - \frac{\EE X_j^6}{6!} \frac{\zeta^6}{B_m^6} + \cdots   
\end{align*}
Applying a Taylor-Lagrange expansion, there exists a complex number $\theta_{1,j,m}$ satisfying $|\theta_{1,j,m}| < 1$ such that 
\begin{align*}
    U_{j,m}(\zeta) :=\ & f_{X_j}(\zeta/B_m) - 1\\
    =\ & - \frac{\EE X_j^2}{2!} \frac{\zeta^2}{B_m^2}+ \frac{\EE X_j^4}{4!} \frac{\zeta^4}{B_m^4}- \frac{\theta_{1,j,m} \EE X_j^6}{6!} \frac{\zeta^6}{B_m^6}
\end{align*}
The triangle inequality gives the following upper bound on $|U_{j,m}(\zeta)|$ as follows: 
\begin{align}\label{eqn:abs_U_m}
\begin{split}
    |U_{j,m}(\zeta)| \leq\ & \frac{\EE X_j^2}{2!} \frac{\zeta^2}{B_m^2}+ \frac{\EE X_j^4}{4!} \frac{\zeta^4}{B_m^4}+ \frac{\EE X_j^6}{6!} \frac{\zeta^6}{B_m^6}
\end{split}
\end{align}
Denote $U^{m}(c)$ to be the upper bound of $|U_{j,m}(\zeta)|$ for all $\zeta \in \left[ 0, c \sqrt{\log(m)} \right]$.
\begin{align} \label{eqn: U^m bound}
\begin{split}
U^{m}(c) :=\ & c^2 \cdot \frac{\log(m)}{2m} + c^4 \cdot \frac{K_4}{4!} \left(\frac{\log(m)}{m} \right)^2 + c^6 \cdot \frac{K_6}{6!} \left(\frac{\log(m)}{m}\right)^3 > |U_{j,m}(\zeta)|
\end{split}
\end{align}
We choose $c$ so that the right hand side of Equation \eqref{eqn:abs_U_m} is smaller than 1. 
Specifically, we define $c$ satisfying $U^{m}(c) < 1$.
This ensures existence of a complex number $\theta_{2,j,m}$ such that $|\theta_{2,j,m}| < 1$ and 
\begin{align*}
    \log(f_{X_j}(\zeta/B_m)) =\ & \log(1 + U_{j,m}(\zeta))\\
    =\ & U_{j,m}(\zeta) - \frac{U_{j,m}(\zeta)^2}{2 (1 + \theta_{2,j,m}(\zeta) U_{j,m}(\zeta))^2}
\end{align*}
Summing of all $j$, we obtain
\begin{align*}
    f_{S_m}(\zeta) =\ & \exp \left(\sum_{j=1}^{m} U_{j,m}(\zeta) - \frac{U_{j,m}(\zeta)^2}{2 (1 + \theta_{2,j,m}(\zeta) U_{j,m}(\zeta))^2} \right)\\
    =\ & \exp \left( \sum_{j=1}^{m} - \frac{\EE X_j^2}{2!} \frac{\zeta^2}{B_m^2} + \frac{\EE X_j^4}{4!} \frac{\zeta^4}{B_m^4} - \frac{\theta_{1,j,m} \EE X_j^6}{6!} \frac{\zeta^6}{B_m^6} - \frac{U_{j,m}(\zeta)^2}{2 (1 + \theta_{2,j,m}(\zeta) U_{j,m}(\zeta))^2} \right)\\
    =\ & \exp \left( - \frac{\zeta^2}{2} \right) \times \exp \left( \sum_{j=1}^{m} \frac{\EE X_j^4}{4!} \frac{\zeta^4}{B_m^4} - \frac{\theta_{1,j,m} \EE X_j^6}{6!} \frac{\zeta^6}{B_m^6} - \frac{U_{j,m}(\zeta)^2}{2 (1 + \theta_{2,j,m}(\zeta) U_{j,m}(\zeta))^2} \right)
\end{align*}
Applying Taylor expansion to $\exp(x)$, there exists a complex number $\theta_{3,m}$ with 
\begin{align} \label{eqn:theta_3_m}
\begin{split}
    |\theta_{3,m}| \leq\ & \sup_{\zeta \in [0, c \sqrt{\log(m)}]} \exp \left(\sum_{j=1}^{m} \frac{\EE X_j^4}{4!} \frac{\zeta^4}{B_m^4} + \frac{\EE X_j^6}{6!} \frac{\zeta^6}{B_m^6} + \frac{|U_{j,m}(\zeta)|^2}{2 |1 + \theta_{2,j,m}(\zeta) U_{j,m}(\zeta)|^2} \right)\\
    =\ & \sup_{\zeta \in [0, c \sqrt{\log(m)}]} \exp \left(\frac{K_{4}}{4!} \frac{\zeta^4}{m} + \frac{K_{6}}{6!} \frac{\zeta^6}{m^2} + \sum_{j=1}^{m} \frac{|U_{j,m}(\zeta)|^2}{2 |1 + \theta_{2,j,m}(\zeta) U_{j,m}(\zeta)|^2} \right) 
\end{split}
\end{align}
such that 
\begin{align*}
    f_{S_m}(\zeta) 
    =\ & \exp \left( - \frac{\zeta^2}{2} \right) \times \Bigg\{1 + \sum_{j=1}^{m} \frac{\EE X_j^4}{4!} \frac{\zeta^4}{B_m^4} - \frac{\theta_{1,j,m} \EE X_j^6}{6!} \frac{\zeta^6}{B_m^6} - \frac{U_{j,m}(\zeta)^2}{2 (1 + \theta_{2,j,m}(\zeta) U_{j,m}(\zeta))^2} \\
    &+ \frac{1}{2}\left(\sum_{j=1}^{m} \frac{\EE X_j^4}{4!} \frac{\zeta^4}{B_m^4} - \frac{\theta_{1,j,m} \EE X_j^6}{6!} \frac{\zeta^6}{B_m^6} - \frac{U_{j,m}(\zeta)^2}{2 (1 + \theta_{2,j,m}(\zeta) U_{j,m}(\zeta))^2}  \right)^2\\
    &+ \frac{\theta_{3,m}}{6} \left(\sum_{j=1}^{m} \frac{\EE X_j^4}{4!} \frac{\zeta^4}{B_m^4} - \frac{\theta_{1,j,m} \EE X_j^6}{6!} \frac{\zeta^6}{B_m^6} - \frac{U_{j,m}(\zeta)^2}{2 (1 + \theta_{2,j,m}(\zeta) U_{j,m}(\zeta))^2} \right)^3\Bigg\} \\
    =\ & \exp \left( - \frac{\zeta^2}{2} \right) \times \Bigg\{1 + \frac{\lambda_{4}}{4! m} \zeta^4 + \sum_{j=1}^{m} \frac{3 \kappa_2{}^2}{4!} \frac{\zeta^4}{B_m^4} - \frac{\theta_{1,j,m} \EE X_j^6}{6! } \frac{\zeta^6}{B_m^6} - \frac{U_{j,m}(\zeta)^2}{2 (1 + \theta_{2,j,m}(\zeta) U_{j,m}(\zeta))^2}\\
    &+ \frac{1}{2}\left(\sum_{j=1}^{m} \frac{\EE X_j^4}{4!} \frac{\zeta^4}{B_m^4} - \frac{\theta_{1,j,m} \EE X_j^6}{6! } \frac{\zeta^6}{B_m^6} - \frac{U_{j,m}(\zeta)^2}{2 (1 + \theta_{2,j,m}(\zeta) U_{j,m}(\zeta))^2} \right)^2\\
    &+ \frac{\theta_{3,m}}{6} \left(\sum_{j=1}^{m} \frac{\EE X_j^4}{4!} \frac{\zeta^4}{B_m^4} - \frac{\theta_{1,j,m} \EE X_j^6}{6! } \frac{\zeta^6}{B_m^6} - \frac{U_{j,m}(\zeta)^2}{2 (1 + \theta_{2,j,m}(\zeta) U_{j,m}(\zeta))^2} \right)^3\Bigg\}
\end{align*}
Using the triangle inequality, we have
\begin{align} 
\begin{split}
    &\bigg|f_{S_m} - \ex^{-\zeta^2/2} \left(1 + \frac{\lambda_{4} \zeta^4}{24 m}\right)  \bigg|\\
    \leq\ & \exp \left( - \frac{\zeta^2}{2} \right) \times \Bigg\{\Bigg| \sum_{j=1}^{m} \frac{3 \kappa_2{}^2}{4!} \frac{\zeta^4}{B_m^4} - \frac{\theta_{1,j,m} \EE X_j^6}{6! } \frac{\zeta^6}{B_m^6} - \frac{U_{j,m}(\zeta)^2}{2 (1 + \theta_{2,j,m}(\zeta) U_{j,m}(\zeta))^2} \Bigg|\\
    &+ \frac{1}{2}\left(\sum_{j=1}^{m} \frac{\EE X_j^4}{4!} \frac{\zeta^4}{B_m^4} + \frac{\EE X_j^6}{6! } \frac{\zeta^6}{B_m^6} + \frac{|U_{j,m}(\zeta)|^2}{2 |1 + \theta_{2,j,m}(\zeta) U_{j,m}(\zeta)|^2} \right)^2\\
    &+ \frac{|\theta_{3,m}|}{6} \left(\sum_{j=1}^{m} \frac{\EE X_j^4}{4!} \frac{\zeta^4}{B_m^4} + \frac{\EE X_j^6}{6! } \frac{\zeta^6}{B_m^6} + \frac{|U_{j,m}(\zeta)|^2}{2 |1 + \theta_{2,j,m}(\zeta) U_{j,m}(\zeta)|^2} \right)^3 \Bigg\}\\
    =:\ & \exp \left( - \frac{\zeta^2}{2} \right) \times \bigg\{ \Lambda_5 + \Lambda_6 + \Lambda_7 \bigg\}
\end{split}
\end{align}
In order to give explicit upper bound on $\Lambda_{5}, \Lambda_{6}$ and $\Lambda_{7}$, \eqref{eqn:abs_U_m} implies that 
\begin{align*}
    \sum_{j=1}^{m} |U_{j,m}(\zeta)|^2 \leq\ & \sum_{j=1}^{m} \left( \frac{\EE X_j^2}{2!} \frac{\zeta^2}{B_m^2}+ \frac{\EE X_j^4}{4!} \frac{\zeta^4}{B_m^4}+ \frac{\EE X_j^6}{6!} \frac{\zeta^6}{B_m^6} \right)^2\\
    \leq\ & \sum_{j=1}^{m} \frac{(\EE |X_j|^2)^2 \zeta ^4}{4 B_m^4} +\frac{(\EE |X_j|^4)^2 \zeta ^8}{576 B_m^8} +\frac{(\EE |X_j|^6)^2 \zeta ^{12}}{518400 B_m^{12}} +\frac{(\EE |X_j|^2) (\EE |X_j|^4) \zeta ^6}{24 B_m^6}\\
    &+\frac{(\EE |X_j|^2) (\EE |X_j|^6) \zeta ^8}{720 B_m^8} +\frac{(\EE |X_j|^4) (\EE |X_j|^6) \zeta ^{10}}{8640 B_m^{10}}
\end{align*}
By Jensen's inequality, for any $k,l \leq 6$, 
\begin{align} \label{eqn:jensen_ineq}
\begin{split}
   \sum_{j=1}^{m} \frac{1}{B_m^{k+l}} (\EE |X_j|^k) (\EE |X_j|^l) \leq\ & \sum_{j=1}^{m} \frac{1}{B_m^{k+l}} (\EE |X_j|^{k+l})^{k/k+l} (\EE |X_j|^{k+l})^{l/k+l}\\
   =\ & \frac{K_{k+l,m}}{m^{(k+l-2)/2}} 
\end{split}
\end{align}
Therefore, Equation \eqref{eqn:jensen_ineq} allows us to conclude
\begin{equation} \label{eqn:U_m_bound}
\begin{split}
    \sum_{j=1}^{m} |U_{j,m}(\zeta)|^2
    \leq\ & \frac{K_{4} \zeta ^4}{4 m}
    +\frac{K_{6} \zeta ^6}{24 m^2} 
    +\frac{K_{8} \zeta ^8}{720 m^3}+\frac{K_{8} \zeta ^8}{576 m^3} +\frac{K_{10} \zeta ^{10}}{8640 m^4}
    +\frac{K_{12} \zeta ^{12}}{518400 m^{5}}\\
    =\ & \frac{K_{4} \zeta ^4}{4 m}
    +\frac{K_{6} \zeta ^6}{24 m^2} 
    +\frac{K_{8} \zeta ^8}{320 m^3} 
    +\frac{K_{10} \zeta ^{10}}{8640 m^4}
    +\frac{K_{12} \zeta ^{12}}{518400 m^{5}}\\
    :=\ & U_m(\zeta)
\end{split}
\end{equation}
Equation \eqref{eqn:U_m_bound} allows us to bound $\Lambda_{5}, \Lambda_{6}$ and $\Lambda_{7}$ as follows: 
\begin{align*}
    \Lambda_5 =\ & \Bigg| \sum_{j=1}^{m} \frac{3 \kappa_2{}^2}{4!} \frac{\zeta^4}{B_m^4} - \frac{\theta_{1,j,m} \EE X_j^6}{6! } \frac{\zeta^6}{B_m^6} - \frac{U_{j,m}(\zeta)^2}{2 (1 + \theta_{2,j,m}(\zeta) U_{j,m}(\zeta))^2} \Bigg|\\
    \leq\ & \frac{\zeta^4}{8m} + \sum_{j=1}^{m} \frac{U_{j,m}(\zeta)^2}{2 (1 + \theta_{2,j,m}(\zeta) U_{j,m}(\zeta))^2}  + \frac{K_{6}}{6!} \frac{\zeta^6}{m^2}\\
    \leq\ & \frac{\zeta^4}{8m} + \sum_{j=1}^{m} \frac{1}{2 (1 + \theta_{2,j,m}(\zeta) U_{j,m}(\zeta))^2} \times \bigg(  \frac{(\EE |X_j|^2)^2 \zeta ^4}{4 B_m^4} +\frac{(\EE |X_j|^4)^2 \zeta ^8}{576 B_m^8} +\frac{(\EE |X_j|^6)^2 \zeta ^{12}}{518400 B_m^{12}}\\
    &+\frac{(\EE |X_j|^2) (\EE |X_j|^4) \zeta ^6}{24 B_m^6} +\frac{(\EE |X_j|^2) (\EE |X_j|^6) \zeta ^8}{720 B_m^8} +\frac{(\EE |X_j|^4) (\EE |X_j|^6) \zeta ^{10}}{8640 B_m^{10}}\bigg) + \frac{K_{6}}{6!} \frac{\zeta^6}{m^2}\\
    \leq\ & \frac{\zeta^4}{8m} + \sum_{j=1}^{m} \frac{1}{2 (1 + \theta_{2,j,m}(\zeta) U_{j,m}(\zeta))^2} \times \bigg(  \frac{(\EE |X_j|^2)^2 \zeta ^4}{4 B_m^4} +\frac{(\EE |X_j|^4)^2 \zeta ^8}{576 B_m^8} +\frac{(\EE |X_j|^6)^2 \zeta ^{12}}{518400 B_m^{12}}\\
    &+\frac{(\EE |X_j|^2) (\EE |X_j|^4) \zeta ^6}{24 B_m^6} +\frac{(\EE |X_j|^2) (\EE |X_j|^6) \zeta ^8}{720 B_m^8} +\frac{(\EE |X_j|^4) (\EE |X_j|^6) \zeta ^{10}}{8640 B_m^{10}}\bigg) + \frac{K_{6}}{6!} \frac{\zeta^6}{m^2}\\
    \leq\ & \left(1 + \sum_{j=1}^{m} \frac{1}{(1 + \theta_{2,j,m}(\zeta) U_{j,m}(\zeta))^2} \frac{1}{m} \right) \frac{\zeta^4}{8m}\\
    &+ \sum_{j=1}^{m} \frac{1}{2 (1 + \theta_{2,j,m}(\zeta) U_{j,m}(\zeta))^2} \cdot \Bigg( \frac{(\EE |X_j|^4)^2 \zeta ^8}{576 B_m^8} +\frac{(\EE |X_j|^6)^2 \zeta ^{12}}{518400 B_m^{12}}\\
    &+\frac{(\EE |X_j|^2) (\EE |X_j|^4) \zeta ^6}{24 B_m^6} +\frac{(\EE |X_j|^2) (\EE |X_j|^6) \zeta ^8}{720 B_m^8} +\frac{(\EE |X_j|^4) (\EE |X_j|^6) \zeta ^{10}}{8640 B_m^{10}} \Bigg) + \frac{K_{6}}{6!} \frac{\zeta^6}{m^2}
\end{align*}
Recall Equation \eqref{eqn: U^m bound} and the fact that $|\theta_{2,j,m}| < 1$, we conclude
\begin{align} \label{omega5}
\begin{split}
    \Lambda_{5}(\zeta) \leq\ & \left(\frac{1}{(1 - U^{m}(c))^2} + 1  \right) \frac{\zeta^4}{8m}\\
    &+ \frac{1}{2} \sum_{j=1}^{m} \frac{1}{(1 - U^{m}(c))^2} \cdot \Bigg( \frac{(\EE |X_j|^4)^2 \zeta ^8}{576 B_m^8} +\frac{(\EE |X_j|^6)^2 \zeta ^{12}}{518400 B_m^{12}}\\
    &+\frac{(\EE |X_j|^2) (\EE |X_j|^4) \zeta ^6}{24 B_m^6} +\frac{(\EE |X_j|^2) (\EE |X_j|^6) \zeta ^8}{720 B_m^8} +\frac{(\EE |X_j|^4) (\EE |X_j|^6) \zeta ^{10}}{8640 B_m^{10}} \Bigg) \Bigg| + \frac{K_{6}}{6!} \frac{\zeta^6}{m^2}\\
    \leq\ & \frac{\zeta^4}{8m} \cdot \left(\frac{1}{(1 - U^{m}(c))^2} + 1 \right) + \frac{K_{6}}{6!} \frac{\zeta^6}{m^2}\\
    &+ \frac{1}{(1 - U^{m}(c))^2} \cdot \bigg\{\frac{K_{6} \zeta ^6}{48 m^2} + \frac{K_{8} \zeta ^8}{640 m^3} + \frac{K_{10} \zeta ^{10}}{17280 m^4} + \frac{K_{12} \zeta ^{12}}{1036800 m^{5}} \bigg\}
\end{split}
\end{align}
Similarly, we also give explicit upper bound of $\Lambda_{6}$ and $\Lambda_{7}$. 
\begin{align} \label{omega6}
\begin{split}
    \Lambda_6 =\ & \frac{1}{2}\left(\sum_{j=1}^{m} \frac{\EE X_j^4}{4!} \frac{\zeta^4}{B_m^4} + \frac{\EE X_j^6}{6! } \frac{\zeta^6}{B_m^6} + \frac{|U_{j,m}(\zeta)|^2}{2 |1 + \theta_{2,j,m}(\zeta) U_{j,m}(\zeta)|^2} \right)^2\\
    \leq\ & \frac{1}{2} \left(\frac{K_{4}}{4!} \frac{\zeta^4}{m} + \frac{K_{6}}{6!} \frac{\zeta^6}{m^2} + \frac{1}{2(1 - U^{m}(c))^2} \sum_{j=1}^{m} |U_{j,m}(\zeta)|^2 \right)^2\\
    \leq\ & \frac{1}{2} \left(\frac{K_{4}}{4!} \frac{\zeta^4}{m} +  \frac{K_{6}}{6!} \frac{\zeta^6}{m^2} + \frac{U_m(\zeta) }{2(1 - U^{m}(c))^2} \right)^2
\end{split}
\end{align}
with $U_m(\zeta)$ defined in \eqref{eqn:U_m_bound}.
To have an upper bound for $\Lambda_{7}$, we first need an upper bound for $|\theta_{3,m}|$. Recall Equation \eqref{eqn:theta_3_m} that 
\begin{align*}
\begin{split}
    |\theta_{3,m}| \leq\ & \sup_{\zeta \in [0, c \sqrt{\log(m)}]} \exp \left(\frac{K_{4}}{4!} \frac{\zeta^4}{m} + \frac{K_{6}}{6!} \frac{\zeta^6}{m^2} + \sum_{j=1}^{m} \frac{|U_{j,m}(\zeta)|^2}{2 |1 + \theta_{2,j,m}(\zeta) U_{j,m}(\zeta)|^2} \right)\\
    \leq\ & \exp \left(\frac{c^4 K_{4}}{4!} \frac{\log(m)^2}{m} + \frac{c^6 K_{6}}{6!} \frac{\log(m)^3}{m^2} + \frac{\sup_{\zeta \in [0, c \sqrt{\log(m)}]} U_m(\zeta)}{2(1 - U^{m}(c))^2} \right)\\
    =\ & \exp \left(\frac{c^4 K_{4}}{4!} \frac{\log(m)^2}{m} + \frac{c^6 K_{6}}{6!} \frac{\log(m)^3}{m^2} + \frac{U_m (c \sqrt{\log(m)}{})}{2(1 - U^{m}(c))^2} \right)
\end{split}
\end{align*}
Having the upper bound on $|\theta_{3,m}|$, $\Lambda_{7}$ yields the following upper bound. 
\begin{align} \label{omega7}
\begin{split}
    \Lambda_{7} =\ & \frac{|\theta_{3,m}|}{6} \left(\sum_{j=1}^{m} \frac{\EE X_j^4}{4!} \frac{\zeta^4}{B_m^4} + \frac{\EE X_j^6}{6! } \frac{\zeta^6}{B_m^6} + \frac{|U_{j,m}(\zeta)|^2}{2 |1 + \theta_{2,j,m}(\zeta) U_{j,m}(\zeta)|^2} \right)^3\\
    \leq\ & \frac{1}{6} \exp \left(\frac{c^4 K_{4}}{4!} \frac{\log(m)^2}{m} + \frac{c^6 K_{6}}{6!} \frac{\log(m)^3}{m^2} + \frac{U_m (c \sqrt{\log(m)}{})}{2(1 - U^{m}(c))^2} \right) \\
    &\times \left(\sum_{j=1}^{m} \frac{\EE X_j^4}{4!} \frac{\zeta^4}{B_m^4} + \frac{\EE X_j^6}{6! } \frac{\zeta^6}{B_m^6} + \frac{|U_{j,m}(\zeta)|^2}{2 |1 + \theta_{2,j,m}(\zeta) U_{j,m}(\zeta)|^2} \right)^3\\
    \leq\ & \frac{1}{6} \exp \left(\frac{c^4 K_{4}}{4!} \frac{\log(m)^2}{m} + \frac{c^6 K_{6}}{6!} \frac{\log(m)^3}{m^2} + \frac{U_m (c \sqrt{\log(m)}{})}{2(1 - U^{m}(c))^2} \right) \\
    & \qquad \qquad \qquad \qquad \times \left(\frac{K_{4}}{4!} \frac{\zeta^4}{m} + \frac{K_{6}}{6!} \frac{\zeta^6}{m^2} + \frac{U_m(\zeta)}{2(1 - U^{m}(c))^2} \right)^3
\end{split}
\end{align}
Combining \eqref{omega5}, \eqref{omega6} and \eqref{omega7}, we conclude that 
\begin{align*}
    &\bigg|f_{S_m} - \ex^{-\zeta^2/2} \left(1 + \frac{\lambda_{4} \zeta^4}{24 m}\right)  \bigg|
    \leq \exp \left( - \frac{\zeta^2}{2} \right) \times \bigg\{ \Lambda'_5 + \Lambda'_6 + \Lambda'_7 \bigg\}
\end{align*}
where 
\begin{align*}
    \Lambda'_5 =\ & \frac{\zeta^4}{8m} \cdot \left(\frac{1}{(1 - U^{m}(c))^2} + 1 \right) + \frac{K_{6}}{6!} \frac{\zeta^6}{m^2} + \frac{1}{(1 - U^{m}(c))^2} \cdot \bigg\{\frac{K_{6} \zeta ^6}{48 m^2} + \frac{K_{8} \zeta ^8}{640 m^3} + \frac{K_{10} \zeta ^{10}}{17280 m^4} + \frac{K_{12} \zeta ^{12}}{1036800 m^{5}} \bigg\}\\
    \Lambda'_6 =\ & \frac{1}{2} \left(\frac{K_{4}}{4!} \frac{\zeta^4}{m} +  \frac{K_{6}}{6!} \frac{\zeta^6}{m^2} + \frac{U_m(\zeta) }{2(1 - U^{m}(c))^2} \right)^2\\
    \Lambda'_7 =\ & \frac{1}{6} \exp \left(\frac{c^4 K_{4}}{4!} \frac{\log(m)^2}{m} + \frac{c^6 K_{6}}{6!} \frac{\log(m)^3}{m^2} + \frac{U_m (c \sqrt{\log(m)}{})}{2(1 - U^{m}(c))^2} \right) \times \left(\frac{K_{4}}{4!} \frac{\zeta^4}{m} + \frac{K_{6}}{6!} \frac{\zeta^6}{m^2} + \frac{U_m(\zeta)}{2(1 - U^{m}(c))^2} \right)^3
\end{align*}
This completes the proof of this lemma. 
\end{proof}

\subsection{Proof of Lemma \ref{lemma: omega4}}
\label{sec:proof_omega4}
\begin{proof}

We would like to give an upper bound on $|f_{S_{m}}(t)|$. The idea is inspired by Theorem 2.2 in \cite{shevtsova2012moment}.

\begin{lemma}[\cite{shevtsova2012moment}]
\label{lemma: cos}
Let $\theta_0$ be unique root of the Equation \eqref{eqn: theta_0} within the interval $(0, 2\pi)$. For any $x \in \RR$, and $\theta_0 < \theta \leq 2 \pi$, then
\begin{align*}
    \cos(x) \leq 1 - a(\theta) x^2 + b(\theta) |x|^3
\end{align*}
where 
\begin{align*}
    a(\theta) =\ & 3 \cdot \frac{1 - \cos(\theta)}{\theta^2} - \frac{\sin(\theta)}{\theta}\\
    b(\theta) =\ & 2 \cdot \frac{1 - \cos(\theta)}{\theta^3} - \frac{\sin(\theta)}{\theta^2}
\end{align*}
\end{lemma}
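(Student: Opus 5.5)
By evenness of both sides it suffices to treat $x \ge 0$. Fix $\theta\in(\theta_0,2\pi]$ and set
\begin{align*}
g(x) = 1 - a(\theta)x^2 + b(\theta)x^3 - \cos x, \qquad x \ge 0 .
\end{align*}
The goal is $g \ge 0$. The plan is to exploit that $a(\theta),b(\theta)$ are exactly the coefficients making the cubic osculate $\cos$ at $x=\theta$.

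\textbf{Step 1: tangency at $0$ and at $\theta$.} A direct substitution gives $g(\theta)=0$ and $g'(\theta)=0$. Indeed, in $g(\theta)$ the terms $\theta\sin\theta$ cancel and $1-3(1-\cos\theta)+2(1-\cos\theta)-\cos\theta=0$; similarly $g'(\theta)=-2a\theta+3b\theta^2+\sin\theta=0$. Trivially $g(0)=g'(0)=0$, and $g''(0)=1-2a(\theta)$.

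\textbf{Step 2: sign of $g''(0)$ and of $b$.} Multiplying $1-2a(\theta)$ by $\theta^2$ gives exactly $\theta^2+2\theta\sin\theta+6(\cos\theta-1)$, the left side of \eqref{eqn: theta_0}. So $\theta_0$ is precisely the point where $a(\theta)=1/2$. I will check (by the sign of this function near $2\pi$ and uniqueness of the root) that it is positive on $(\theta_0,2\pi]$, hence $g''(0)>0$.

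For $b$, write
\begin{align*}
\theta^3 b(\theta) = 2(1-\cos\theta)-\theta\sin\theta = 2\sin(\theta/2)\bigl(2\sin(\theta/2)-\theta\cos(\theta/2)\bigr).
\end{align*}
This is $\ge 0$ on $(0,2\pi]$: use $\tan u\ge u$ on $(0,\pi/2)$, and note $\cos u\le 0$ on $[\pi/2,\pi]$. So $b>0$.

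\textbf{Step 3: zero counting on $[0,\theta]$ (the main obstacle).} Suppose $g(x_1)<0$ for some $x_1\in(0,\theta)$. Since $g''(0)>0$, $g$ is positive just to the right of $0$, and it has a double zero at $\theta$. Rolle's theorem applied repeatedly then forces many sign changes of $g''(x)=-2a+6bx+\cos x$ on $[0,\theta]$.

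On the other hand, $g'''(x)=6b-\sin x$ has at most two zeros on $[0,2\pi]$, since $\sin$ is unimodal on $[0,\pi]$ and negative on $(\pi,2\pi)$. Hence $g''$ has at most three zeros there, and $g$ is correspondingly limited. I will do the bookkeeping of zeros with multiplicity: two at $0$, two at $\theta$, plus at least two more created by the negative excursion. This yields six zeros of $g$, hence four of $g''$, a contradiction. The delicate point is verifying the count carefully, including the configuration where $6b>1$ (then $g'''>0$ everywhere, which is even easier).

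\textbf{Step 4: the range $x\ge\theta$.} Here I would show $g'\ge 0$, which together with $g(\theta)=0$ gives $g\ge0$. The same Rolle bookkeeping extended to $[\theta,2\pi]$ handles the first stretch. Beyond $2\pi$, $g''(x)\ge -2a+6bx-1$ is increasing. I will argue $g'$ cannot return below zero once $g''$ is positive, using $g'(\theta)=0$ and the convexity of the cubic part. If this last estimate proves awkward, the fallback is a direct check that $1-ax^2+bx^3\ge 1\ge\cos x$ once $x\ge a/b$, with the interval $[\theta,a/b]$ covered by the zero-count argument.
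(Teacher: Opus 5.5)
The paper gives no proof of this lemma; it imports it from Shevtsova (2012). Your osculating-cubic argument would therefore be a self-contained substitute, and Steps 1--2 are correct. One small exception: $b(2\pi)=0$, not $b>0$. The case $\theta=2\pi$ is just $\cos x\le 1$ and should be set aside, since there $g'''=-\sin x$ has three zeros on $[0,2\pi]$.

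\textbf{Step 4 fails as written.} The claim that $g'\ge 0$ on $[\theta,\infty)$ is false. For $\theta$ near $2\pi$ the cubic is nearly flat: with $\epsilon=2\pi-\theta$ one has $a\approx\epsilon/(2\pi)$ and $b\approx\epsilon/(4\pi^2)$. So $g'(x)=-2ax+3bx^2+\sin x$ oscillates with $\sin x$. Concretely, for $\theta=6.2$ one gets $g'(7\pi/2)\approx-0.51$.

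Your fallback is the right idea: for $x\ge a/b$, $1-ax^2+bx^3=1+x^2(bx-a)\ge 1\ge\cos x$. But it only closes if $[\theta,a/b]\subseteq[0,2\pi]$. Beyond $2\pi$, $g'''=6b-\sin x$ acquires new zeros and your zero count no longer applies, and you never show $a\le 2\pi b$. The missing observation is that $g(2\pi)=4\pi^2(2\pi b-a)$, so $a\le 2\pi b$ is exactly $g(2\pi)\ge 0$. You should therefore run the zero count on all of $[0,2\pi]$ rather than on $[0,\theta]$; then the range $x\ge 2\pi$ is trivial.

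\textbf{The count in Step 3 also needs repair.} A negative excursion guarantees only one new zero $z_1\in(0,x_1)$. If $g<0$ on $(z_1,\theta)$ and $g''(\theta)<0$, you get just five zeros on $[0,\theta]$, hence three for $g''$. That is compatible with your bound of at most three, so no contradiction arises.

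Use instead the following sharper count, noting that $0<6b<1$ in this range:
\begin{enumerate}
\item Since $g''(0)=1-2a>0$ and $g'''>0$ on $[0,\arcsin 6b)$, $g''$ stays positive past the first zero of $g'''$.
\item A third zero of $g''$ in $[0,2\pi]$ would then force a third zero of $g'''$, which is impossible. So $g''$ has at most two zeros on $[0,2\pi]$.
\item Hence $g$ has at most four zeros there, counted with multiplicity. These are used up by the double zeros at $0$ and $\theta$.
\item Therefore $g>0$ on $(0,2\pi]\setminus\{\theta\}$.
\end{enumerate}
This gives the inequality on $[0,2\pi]$. Via $g(2\pi)\ge0$, it also gives the inequality on $[2\pi,\infty)$.
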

Now, let $\Tilde{X}_{i}$ be an independent copy of $X_{i}$, then we have: 
\begin{align*}
    |f_{S_{m}}(\zeta)|^2 =\ & \prod_{i=1}^{m} \left|f_{X_{i}} \left(\frac{\zeta}{B_{m}} \right)\right|^2 = \prod_{i=1}^{m} \EE \cos \frac{\zeta(X_{i} - \Tilde{X}_{i})}{B_{m}}
\end{align*}
By Lemma \ref{lemma: cos} and relation $\EE (X_{i} - \Tilde{X}_{i})^2 = 2 \kappa_{2}$, we obtain
\begin{align*}
    |f_{S_{m}}(\zeta)|^2 \leq\ & \prod_{i=1}^{m} \left(1 - a(\theta) \frac{\zeta^2 \EE (X_{i} - \Tilde{X}_{i})^2 }{B_{m}^2} + b(\theta) \frac{|\zeta|^3 \EE |X_{i} - \Tilde{X}_{i}|^3 }{B_{m}^3} \right)\\
    \leq\ & \prod_{i=1}^{m} \left(1 - a(\theta) \frac{2 \zeta^2 \kappa_{2}}{B_{m}^2} + b(\theta) \frac{|\zeta|^3 \EE |X_{i} - \Tilde{X}_{i}|^3 }{B_{m}^3} \right)
\end{align*}
By AM-GM inequality, we have
\begin{align*}
    |f_{S_{m}}(\zeta)|^2 \leq\ & \left(1 - \frac{1}{m} \sum_{i=1}^{m} a(\theta) \frac{2 \zeta^2 \kappa_{2}}{B_{m}^2} + b(\theta) \frac{|\zeta|^3 \EE |X_{i} - \Tilde{X}_{i}|^3 }{B_{m}^3} \right)^m\\
    \leq\ & \left(1 + \frac{2}{m} \left(- a(\theta) \zeta^2  + b(\theta) |\zeta|^3 \frac{\sum_{i=1}^{m} \EE |X_{i} - \Tilde{X}_{i}|^3 }{2 B_{m}^3} \right) \right)^m    
\end{align*}
This implies that 
\begin{align*}
    |f_{S_{m}}(\zeta)|^2 \leq\ & \exp\left(- 2a(\theta) \zeta^2  + b(\theta) |\zeta|^3 \frac{\sum_{i=1}^{m} \EE |X_{i} - \Tilde{X}_{i}|^3 }{B_{m}^3}  \right)
\end{align*}
For convenience, we choose a specific $\theta = \theta_0$. We have upper bound
\begin{align*}
    |f_{S_{m}}(\zeta)|^2 \leq\ & \exp\left(- \zeta^2  + \frac{\cos(\theta_0) - 1 + \theta_0^2/2}{\theta_0^3} \cdot \frac{\sum_{i=1}^{m} \EE |X_{i} - \Tilde{X}_{i}|^3 }{B_{m}^3} \cdot |\zeta|^3\right)
\end{align*}
By Lemma \ref{lemma:char_property}, the characteristic function is strictly decreasing between $c \sqrt{\log (m)}$ and $\pi B_{m}$. Therefore, for any $\zeta \in [c \sqrt{\log (m)}, \pi B_{m}]$, we have
\begin{align*}
    |f_{S_{m}} (\zeta)| \leq |f_{S_{m}} (c \sqrt{\log (m)})| \leq \exp\left(- \frac{c^2}{2} \log(m)  + \frac{\cos(\theta_0) - 1 + \theta_0^2/2}{2 \theta_0^3} \cdot \frac{\sum_{i=1}^{m} \EE |X_{i} - \Tilde{X}_{i}|^3 }{B_{m}^3} \cdot c^3 (\log (m))^{3/2} \right)
\end{align*}
This further implies that 
\begin{align*}
        \Lambda_{4} \leq\ & \frac{1}{\pi B_{m}} \int_{c \cdot \sqrt{\log(m)}}^{\pi B_m} \ex^{-\zeta^2/2} \left(1 + \frac{\lambda_{4} \zeta^4}{24 m} \right) \\
        &+ \exp\left(- \frac{c^2}{2}  \log(m)  + \frac{\cos(\theta_0) - 1 + \theta_0^2/2}{2 \theta_0^3} \cdot \frac{\sum_{i=1}^{m} \EE |X_{i} - \Tilde{X}_{i}|^3 }{B_{m}^3} \cdot c^3 (\log (m))^{3/2} \right) \\
        =\ & \frac{1}{\pi B_{m}} \int_{c \cdot \sqrt{\log(m)}}^{\pi B_m} \ex^{-\zeta^2/2} \left(1 + \frac{\lambda_{4} \zeta^4}{24 m} \right) \\
        &+ \exp\left(- \frac{c^2}{2}  \log(m)  + \frac{\cos(\theta_0) - 1 + \theta_0^2/2}{2 \theta_0^3} \cdot \frac{8 m \EE |X_i|^3}{\kappa_2^{3/2} m^{3/2}} \cdot c^3 (\log (m))^{3/2} \right) \\
        =\ & \frac{1}{\pi B_{m}} \int_{c \cdot \sqrt{\log(m)}}^{\pi B_m} \ex^{-\zeta^2/2} \left(1 + \frac{\lambda_{4} \zeta^4}{24 m} \right) \\
        &+ \exp\left(- \frac{c^2}{2}  \log(m)  + \frac{\cos(\theta_0) - 1 + \theta_0^2/2}{2 \theta_0^3} \cdot 8 c^3 K_3 \cdot \left(\frac{\log^3 (m)}{m} \right)^{1/2} \right) 
\end{align*}
This gives an upper bound on $\Lambda_{4}$. 

\end{proof}

\section{Supplementary Figures and Tables}
\label{sec:supp_fig}

The supplemental figures include 946 figures with corresponding raw data tables, which is the privacy accounting of all the composed mechanisms $[ \widetilde{\mathbf{M}}_k, \widetilde{\mathbf{M}}_l ]$. 
All figures follow the same pattern as in Figures \ref{fig:trade_off_paths_bypass} and \ref{fig:eps_delta_paths_bypass}. 
For brevity, we provide the Figures and Tables in the folder ``results'' in GitHub repository (see \url{https://github.com/BuxinSu/Exact-Privacy-Accounting-for-2020-U.S.-Census.git}). We also include all privacy budget allocations $\rho_i$ used in the 2020 Census DHC File in the folder ``privacy budget allocation''.

\end{document}